\providecommand{\LyX}{\texorpdfstring%
  {L\kern-.1667em\lower.25em\hbox{Y}\kern-.125emX\@}
  {LyX}}
\newcounter{draft}\setcounter{draft}{0}
\newcounter{alteFormelNr}\newcounter{alteFootnoteNr}
\newcounter{remark}
\newcommand{\rem}[1]{%
\ifthenelse{\thedraft=2}{%
\addtocounter{remark}{1}%
\setcounter{alteFormelNr}{\value{equation}}%
\setcounter{alteFootnoteNr}{\value{footnote}}%
\marginpar[$\boldsymbol{{}_{\theremark}==>}$]{$\boldsymbol{<==_{\theremark}}$}%
{\sc\tt  <<~#1~>>$_{\theremark}$ }%
\setcounter{equation}{\value{alteFormelNr}}%
\setcounter{footnote}{\value{alteFootnoteNr}}%
}{}%
}
\newcommand{\rembreak}{\ifthenelse{\thedraft=2}{\\}{}}
\newcommand{\frem}[1]{
\addtocounter{remark}{1}%
\ifthenelse{\thedraft=2}{%
\setcounter{alteFormelNr}{\value{equation}}%
{\sc\tt  <<~#1~>>\ensuremath{_{\theremark}} }%
\setcounter{equation}{\value{alteFormelNr}}%
}{}%
}
\numberwithin{equation}{section}
\begin{document}
\global\long\def\bs#1{\boldsymbol{#1}}
\global\long\def\mf#1{\mathfrak{#1} }
\global\long\def\mc#1{\mathcal{#1}}
\global\long\def\ip{\imath}
\global\long\def\partiell#1#2{\frac{\partial#1}{\partial#2}}
\global\long\def\partl#1{\frac{\partial}{\partial#1}}
\global\long\def\total#1#2{\frac{d#1}{d#2}}
\global\long\def\totl#1{\frac{d}{d#1}}
\global\long\def\funktional#1#2{\frac{\delta#1}{\delta#2}}
\global\long\def\funktl#1{\frac{\delta}{\delta#1}}
\global\long\def\de{\bs{{\rm d}}\hspace{-0.01cm}}
\global\long\def\De{\bs{{\rm D}}\!}
\global\long\def\cov{\textrm{D}\!}
\global\long\def\es{\bs{{\rm s}}\hspace{-0.01cm}}
\global\long\def\pe{\bs{\partial}}
\global\long\def\ola#1{\overleftarrow{#1}}
\global\long\def\partialr{\overleftarrow{\partial}}
\global\long\def\partr#1{\frac{\partialr}{\partial#1}}
\global\long\def\totr#1{\frac{\ola d}{d#1}}
\global\long\def\funktr#1{\frac{\ola{\delta}}{\delta#1}}
\global\long\def\ket#1{|#1 \rangle}
\global\long\def\bra#1{\langle#1 |}
\global\long\def\Ket#1{\left| #1 \right\rangle }
\global\long\def\Bra#1{\left\langle #1 \right| }
 \global\long\def\braket#1#2{\langle#1 |#2 \rangle}
\global\long\def\Braket#1#2{\Bra{#1 }\left. #2 \right\rangle }
\global\long\def\abs#1{\mid#1 \mid}
\global\long\def\Abs#1{\left| #1 \right| }
\global\long\def\erw#1{\langle#1\rangle}
\global\long\def\Erw#1{\left\langle #1 \right\rangle }
\global\long\def\bei#1#2{\left. #1 \right| _{#2 }}
\global\long\def\dann{\Rightarrow}
\global\long\def\q#1{\underline{#1 }}
\global\long\def\hoch#1{^{#1 }}
\global\long\def\tief#1{_{#1 }}
\renewcommand{\hoch}[1]{{}^{#1}}\renewcommand{\tief}[1]{{}_{#1}}\global\long\def\lqn#1{\lefteqn{#1}}
\global\long\def\slqn#1{\lefteqn{{\scriptstyle #1}}}
\global\long\def\os#1#2{\overset{#1}{#2}}
\global\long\def\us#1#2{\underset{#2}{#1}}
\global\long\def\ous#1#2#3{\underset{#3}{\os{#1}{#2}}}
\global\long\def\zwek#1#2{\begin{array}{c}
 #1\\
#2 
\end{array}}
\global\long\def\drek#1#2#3{\begin{array}{c}
 #1\\
#2\\
#3 
\end{array}}
\global\long\def\tr{\textrm{tr}\,}
\global\long\def\Tr{\textrm{Tr}\,}
\global\long\def\Det{\textrm{Det}\,}
\global\long\def\diag{\textrm{diag}\,}
\global\long\def\Diag{\textrm{Diag}\,}
\global\long\def\sgn{\textrm{sgn\,}}
\global\long\def\one{1\!\!1}
\global\long\def\fussend{\diamond}
\global\long\def\eps{\varepsilon}
\global\long\def\feps{{\bs{\eps}}}
\global\long\def\tet{\bs{\theta}}
\global\long\def\Tet{\bs{\Theta}}
\global\long\def\dali{\Box}

\global\long\def\ws#1{\q{#1}}
\global\long\def\wsm{\ws m}
\global\long\def\wsn{\ws n}
\global\long\def\wsl{\ws l}
\global\long\def\wsk{\ws k}
\global\long\def\wsa{\ws a}
\global\long\def\wsb{\ws b}
\global\long\def\wsc{\ws c}
\global\long\def\wsd{\ws d}
\global\long\def\sgn{\textrm{sgn\,}}
\global\long\def\one{1\!\!1}
\global\long\def\fussend{\diamond}
\global\long\def\eps{\varepsilon}
\global\long\def\dali{\Box}
\global\long\def\lcconst{c}
\global\long\def\allfields#1{\phi_{\textrm{all}}^{\mc{#1}}}
\global\long\def\lc#1{\us{#1}{\llcorner}}
\global\long\def\rc#1{\us{#1}{\lrcorner}}
\global\long\def\asympartial{\partial}
\global\long\def\asymD{{\rm D}}
\global\long\def\syml{\langle}
\global\long\def\symr{\rangle}
\global\long\def\asyml{\lfloor}
\global\long\def\asymr{\rfloor}
\global\long\def\asymdelta{\check{\delta}}
\global\long\def\norm#1{{\parallel#1 \parallel}}
\global\long\def\Norm#1{\left\Vert #1 \right\Vert }
\newcommand{\monthof}[1]{\protect\ifthenelse{#1=12}{December}{\ifthenelse{#1=01}{January}{\ifthenelse{#1=02}{February}{\ifthenelse{#1=03}{March}{\ifthenelse{#1=04}{April}{\ifthenelse{#1=05}{May}{\ifthenelse{#1=06}{June}{\ifthenelse{#1=07}{July}{\ifthenelse{#1=08}{August}{\ifthenelse{#1=09}{September}{\ifthenelse{#1=10}{October}{\ifthenelse{#1=11}{November}{??}}}}}}}}}}}}}
\def\parsedate #1:20#2#3#4#5#6#7#8\empty{\monthof{#4#5} #6#7, 20#2#3} \def\moddate#1{\expandafter\parsedate\pdffilemoddate{#1}\empty} 

\global\long\def\auxi{\mf{\xi}}
\global\long\def\auxii{\mf{\zeta}}
\global\long\def\auxiii{\mf{\eta}}
\global\long\def\rest{\pi}
\global\long\def\Auxi{\tilde{\auxi}}
\global\long\def\Auxii{\tilde{\auxii}}
\global\long\def\fa{\tilde{f}}
\global\long\def\Hauxii{\check{\zeta}}
\global\long\def\Hauxi{\check{\xi}}
\global\long\def\fah{\check{f}}

\title{On Projections to the Pure Spinor Space}

\author{\begin{picture}(0,0)\unitlength=1mm\put(80,50){DISTA-2011}\put(80,45){DFTT 25/2011}
\end{picture} P.A. Grassi$^{*}$ and S. Guttenberg$^{\dagger,\ddagger}$\vspace{0.5cm}
\\
{\footnotesize }{\small $^{*}$DISTA, Università del Piemonte Orientale,\vspace{-.15cm}}\\
{\small Via Teresa Michel 11, Alessandria, 15121, Italy, \vspace{-.15cm}}\\
{\small and INFN - Sezione di Torino.\vspace{.2cm}}\\
{\small $^{\dagger}$Dipartimento di Fisica Teorica, Università di
Torino,\vspace{-.15cm}}\\
{\small Via P. Giuria 1, I-10125 Torino, Italy \vspace{-.15cm}}\\
{\small{} and INFN - Sezione di Torino.\vspace{.2cm}}\\
{\small $^{\ddagger}$CAMGSD, Departamento de Matemática, \vspace{-.15cm}}\\
{\small Instituto Superior Técnico,\vspace{-.15cm}}\\
{\small Av. Rovisco Pais 1,\vspace{-.15cm}}\\
{\small 1049\textendash{}001 Lisboa, Portugal. \vspace{.2cm}}\texttt{\small }~\\
\texttt{\small pietro.grassi@mfn.unipmn.it}\texttt{\emph{\small ,
}}\texttt{\small sgutten@math.ist.utl.pt}}

\date{{}}
\maketitle
\begin{abstract}
A family of covariant non-linear projections from the space of SO(10)
Weyl spinors onto the space of pure SO(10) Weyl spinors is presented.
The Jacobian matrices of these projections are related to a linear
projector which was previously discussed in pure spinor string literature
and which maps the antighost to its gauge invariant part. Only one
representative of the family leads to a Hermitian Jacobian matrix
and can itself be derived from a scalar potential. Comments on the
SO(1,9) case are given as well as on the non-covariant version of
the projection map. The insight is applied to the ghost action of
pure spinor string theory, where the constraints on the fields can
be removed using the projection, while introducing new gauge symmetries.
This opens the possibility of choosing different gauges which might
help to clarify the origin of the pure spinor ghosts. Also the measure
of the pure spinor space is discussed from the projection point of
view. The appendix contains the discussion of a toy model which served
as a guideline for the pure spinor case.
\end{abstract}
\newpage \tableofcontents{}\vspace{.5cm}
\newtheorem{prop}{Proposition}

\section{Introduction}

\label{sec:Introduction} In recent years, pure spinors in 10 dimensions
have become very important for string theory. This is mainly due to
the invention by Berkovits \cite{Berkovits:2000fe} of a string theory
sigma-model where a pure spinor ghost field $\lambda^{\alpha}$ plays
a prominent role. But already before, pure spinors where used to describe
10-dimensional supersymmetric quantum field theories \cite{Nilsson:1985cm,Howe:1991mf,Howe:1991bx}.

In spite of its remarkable success, the pure spinor sigma model is
still conceptually not completely understood. Due to the presence
of the quadratic pure spinor constraint on the ghost fields, it is
highly challenging to derive the model from first principles as a
gauge-fixing of a classical theory. In particular the diffeomorphism
b-ghost appears only as a non-trivial composite field. 

There have been many efforts to remove the ghost-constraints by either
adding more ghosts like in \cite{Nh:2001ug} or \cite{Aisaka:2002sd}
(with finitely many ghosts) or like in \cite{Chesterman:2002ey} or
\cite{Lee:2005jy} (with infinitely many ghosts). Other efforts were
relating the pure spinor string to the superembedding-formalism \cite{SorokinMatone:2002ft}
or on the operator level directly to the Green Schwarz formalism \cite{Berkovits:2004tw,Berkovits:2007wz}.
In \cite{Gaona:2005yw} a connection to the pure spinor string was
obtained by working out the BFT conversion of second class constraints
(of the Green Schwarz action) into first class constraints. In addition
in \cite{Aisaka:2005vn} as well as recently in \cite{Oda:2011kh}
the pure spinor string was derived from classical ghost-free actions,
and an attempt to derive the b-ghost from first principles by coupling
to worldsheet-gravity was given in \cite{Hoogeveen:2007tu}. Although
all these descriptions gave important insight and might eventually
give a complete understanding, the picture so far remains a bit unsatisfactory.

At classical level, one natural way to remove constraints is to use
a projection. To be more precise, replace in the action the constrained
variables by projections of unconstrained ones and consider the result
to be the new action of the unconstrained variables. A priori it is
not clear whether this can solve some of the problems, since the free
action would become non-free and the constraint would be replaced
by another gauge symmetry. However, the latter might open the way
to switch to different formulations by choosing different gauges.
This was one of the motivations to look for a projection.

A linear projector has already appeared in the pure spinor string
literature, in particular within the so-called Y-formalism in \cite{Oda:2001zm}
and recently also in a covariant version in \cite{Berkovits:2010zz}.
This projector is extracting the gauge invariant part of the antighost
field. The transpose of this projector maps generic spinor variations
to variations which are consistent with the pure spinor constraint.
Therefore a projection of a general spinor to a pure spinor corresponds
to the integration of this linearized projection. In this article
we will present the full non-linear projection. 

The article is organized as follows. In section \ref{sec:nonlin}
we introduce in equation (\ref{ProjGeneral}) a family of nonlinear
projection maps from the space of SO(10) Weyl spinors to SO(10) pure
Weyl spinors and study some of its properties. Also the viewpoint
of the projection being part of a variable transformation $\rho^{\alpha}\mapsto(\lambda^{\alpha},\auxii^{a})$
is presented on page \pageref{Proj-inv} and the non-covariant version
of the projection using a reference spinor is discussed on page \pageref{Pnoncov}.
In section \ref{sec:linearized} on page \pageref{sec:linearized}
we calculate the Jacobian matrix of the projection which provides
the push-forward map for the vectors of the tangent spaces of the
previously mentioned spaces. In particular, the variations of spinors
are mapped with this linearized map. It is shown that on the constraint
surface it reduces to a linear projector whose transpose is known
to extract the gauge invariant part from the antighost of pure spinor
string theory. In section \ref{sec:projectorsOnSurface} on page \pageref{sec:projectorsOnSurface}
we collect some properties of this and a few other projection matrices
on the constraint surface, mainly to have them available as a toolbox
for the subsequent section. In section \ref{sec:hermitean} on page
\pageref{sec:hermitean} a case is discussed where the Jacobian matrix
is Hermitian which leads to several appealing properties. In particular
it turns out that the projection then can be derived from a potential.
In section \ref{sec:U5} on page \pageref{sec:U5} we discuss the
non-covariant version of the projection map for a particular reference
spinor within the U(5)-covariant parametrization of SO(10) spinors.

In section \ref{sec:ghost-action} on page \pageref{sec:ghost-action}
we finally apply the mathematical insight to the pure spinor string.
In subsection \ref{sec:ghost-action-1} we review the non-minimal
pure spinor string and in particular the projection of the antighost
field $\omega_{z\alpha}$ to its gauge invariant part. In addition
we also introduce gauge invariant projections for the non-minimal
fields $\bar{\omega}_{z}^{\alpha}$ and $\bs s_{z}^{\alpha}$ which
had not yet been presented in the literature to our knowledge. In
subsection \ref{sec:ghost-action-2} on page \pageref{sec:ghost-action-2}
we replace the pure spinor $\lambda^{\alpha}$ by the projection image
of an unconstrained spinor $\rho^{\alpha}$. The resulting constraint-free
action is not very appealing by itself, but it is conceptionally interesting
as it comes with an additional gauge symmetry which allows to look
for different gauges than the pure spinor constraint. And in subsection
\ref{sec:ghost-action-3} on page \pageref{sec:ghost-action-3} we
quickly review the form of the action in the U(5) formalism in order
to see how the resulting gauge invariant antighost-combination corresponds
to the previously discussed projection. 

In section \ref{sec:integration-measure} on page \pageref{sec:integration-measure}
finally we regard the pure spinor space as being embedded in $\mathbb{C}^{16}$
and calculate the transformation of the holomorphic volume form of
this ambient space under the aforementioned variable transformation
$\rho^{\alpha}\mapsto(\lambda^{\alpha},\auxii^{a})$. We discuss the
relation of the result to the pure spinor holomorphic volume form
known from the literature. 

The appendix \ref{app:toy-model} on page \pageref{app:toy-model}
contains the discussion of a toy model that served as a guide line
to derive our projection map. Appendices \ref{app:proof1}-\ref{app:proof3}
starting from page \pageref{app:proof1} contain the detailed proofs
of three propositions of the main part.

\section{Nonlinear projection to the pure spinor space}

\label{sec:nonlin}An SO(10) or SO(1,9) Weyl spinor with complex components
$\lambda^{\alpha}$ and ${\scriptstyle \alpha\in\{1,\ldots,16\}}$
is called a pure spinor%
\footnote{In fact, in general dimensions $d$ a pure spinor in even dimensions
is defined to be annihilated by a maximally isotropic subspace of
the Clifford vector space\frem{correct name?} spanned by the Dirac
gamma matrices. So roughly speaking it is annihilated by {}``half
of the gamma matrices'', meaning by $d/2$ linear combinations of
gamma matrices. If one thinks of pure spinors as being a Clifford
vacuum, then these $d/2$ generators are just the annihilators while
the remaining ones will be the generators in the Clifford representation
of spinors. In other words pure spinors provide possible vacua for
a Clifford representation. It is well known that in 10 dimensions
the above definition of pure spinors is equivalent to the quadratic
constraint (\ref{ps-constraint}).\frem{reference}$\quad\fussend$ %
} if it obeys the quadratic constraints 
\begin{equation}
\lambda^{\alpha}\gamma_{\alpha\beta}^{c}\lambda^{\beta}=0\label{ps-constraint}
\end{equation}
where the Latin index ${\scriptstyle c\in\{1,\ldots,10\}}$ for SO(10)
or ${\scriptstyle c\in\{0,1\ldots,9\}}$ for $SO(1,9)$. The matrices
$\gamma_{\alpha\beta}^{c}$ are the off-diagonal chiral blocks of
the SO(10) or SO(1,9) Dirac gamma matrices in the standard Weyl representation%
\footnote{\label{fn:10dgamma}In the standard Weyl-representation of SO(10)
spinors one has 
\begin{eqnarray*}
\Gamma^{a\,\q{\alpha}}\tief{\q{\beta}} & = & \left(\begin{array}{cc}
0 & \gamma^{a\,\alpha\beta}\\
\gamma_{\alpha\beta}^{a} & 0
\end{array}\right),\quad\q{\alpha}\in\{1,\ldots,32\},\quad\alpha\in\{1,\ldots,16\},\quad a\in\{1,\ldots,10\}
\end{eqnarray*}
with numerically $\gamma_{\alpha\beta}^{a}=\gamma^{a\,\alpha\beta}$
real and symmetric for $a\in\{1,\ldots,9\}$ and $\gamma^{10\,\alpha\beta}=-\gamma_{\alpha\beta}^{10}=i\delta_{\alpha\beta}$
symmetric and imaginary. The index $a$ is pulled with the SO(10)
metric $\delta_{ab}$. If instead we think of SO(1,9) spinors, we
have to use 
\[
\Gamma^{0}\equiv-i\Gamma^{10}=\left(\begin{array}{cc}
0 & \one\\
-\one & 0
\end{array}\right)
\]
Then the index is pulled with $\eta_{ab}=\diag(-1,1,\ldots,1)$. In
any case (so for either all $a\in\{1,\ldots,10\}$ or $a\in\{0,\ldots,9\}$)
we thus have 
\begin{eqnarray*}
(\gamma_{\alpha\beta}^{a})^{*} & = & \gamma_{a}^{\alpha\beta}=\gamma_{a}^{\beta\alpha}
\end{eqnarray*}
So the notation is such that the $\gamma_{\alpha\beta}^{a}$ behave
in contractions as if they were all real (for both, SO(1,9) and SO(10)).
In particular we have 
\[
(\rho\gamma^{a}\rho)^{*}=(\bar{\rho}\gamma_{a}\bar{\rho})\quad\fussend
\]
\frem{Reference for the above representation: \cite[appendix]{Nh:2003cm},\cite[p.175ff]{Guttenberg:2008ic}?}%
}. The space of pure spinors is known to be a $\mathbb{C}^{*}$-fibration
of $SO(10)/U(5)$. Thus pure spinors are a non-linear representation
of SO(10). Therefore any projection to this space will necessarily
be non-linear. In the following proposition we will observe that 
\begin{equation}
P^{\alpha}(\rho,\bar{\rho})\equiv\rho^{\alpha}-\tfrac{1}{2}\frac{(\rho\gamma^{a}\rho)(\bar{\rho}\gamma_{a})^{\alpha}}{(\rho\bar{\rho})+\sqrt{(\rho\bar{\rho})^{2}-\frac{1}{2}(\rho\gamma^{b}\rho)(\bar{\rho}\gamma_{b}\bar{\rho})}}
\end{equation}
 projects a general Weyl spinor to a pure spinor in an SO(10) covariant
way. This fact will certainly not change if the projection is multiplied
by some scalar function $f$ which is 1 on the constraint surface.
As this function will play an important role later on for tuning the
properties of the Jacobian matrices, we will immediately include it
into the discussion and promote $P^{\alpha}$ to a family of projections
$P_{(f)}^{\alpha}$. In addition we will introduce some auxiliary
variables $\auxi$ and $\auxii^{a}$ in \eqref{zxDefandfzero} which
will simplify the expressions and save some space.

\begin{prop}[Covariant projection to the pure spinor space]\label{prop:covproj}Consider
an SO(10) Weyl spinor $\rho^{\alpha}$ with 16 complex components
and the family of $\mbox{maps }P_{(f)}$ 
\[
(\rho^{\alpha},\bar{\rho}_{\beta})\mapsto\left(P_{(f)}^{\alpha}(\rho,\bar{\rho}),\bar{P}_{(f)\beta}(\rho,\bar{\rho})\right)
\]
acting on Weyl spinors via%
\footnote{\label{fn:generality}About the generality of (\ref{ProjGeneral}):
The projection (\ref{ProjGeneral}) is covariant and homogeneous of
degree 1 for every function $f.$ However, it is certainly not the
most general covariant projection with this property. In a more general
ansatz one could imagine terms like $\frac{(\rho\gamma^{[5]}\rho)(\bar{\rho}\gamma_{[5]})^{\alpha}}{(\rho\bar{\rho})}$
(where $\gamma^{[5]}$ is shorthand for $\gamma^{a_{1}\ldots a_{5}}$)
multiplied with a coefficient that depends not only on $\auxi$, but
also on other scale-invariant variables like $\frac{(\rho\gamma^{[5]}\rho)(\bar{\rho}\gamma_{[5]}\bar{\rho})}{(\rho\bar{\rho})^{2}}$
and probably many others. However, there is little motivation to make
such a far more complicated ansatz. It might be interesting only if
one wants to achieve that the projection behaves like the identity
in quadratic contractions with $\gamma^{[5]}$, i.e. $(P(\rho,\bar{\rho})\gamma^{[5]}P(\rho,\bar{\rho}))=(\rho\gamma^{[5]}\rho).\qquad\fussend$ %
}
\begin{eqnarray}
P_{(f)}^{\alpha}(\rho,\bar{\rho}) & \equiv & f\left(\auxi\right)\Bigl(\rho^{\alpha}-\tfrac{1}{2}\frac{\auxii^{a}(\bar{\rho}\gamma_{a})^{\alpha}}{1+\sqrt{1-\auxi}}\Bigr)\label{ProjGeneral}\\
\mbox{with }f(0) & = & 1,\quad\auxii^{a}\equiv\frac{(\rho\gamma^{a}\rho)}{(\rho\bar{\rho})},\quad\auxi\equiv\tfrac{1}{2}\auxii^{a}\bar{\auxii}_{a}\label{zxDefandfzero}
\end{eqnarray}
 with $f$ any complex-valued function defined on the interval $[0,1]$
and obeying $f(0)=1$ and $\gamma_{\alpha\beta}^{a}$ and $\gamma_{a}^{\alpha\beta}$
being the chiral blocks of the Dirac-$\Gamma$-matrices in the 10d
standard Weyl-representation of footnote \ref{fn:10dgamma}. Then
the following statements hold:
\begin{enumerate}
\item $\forall f,\: P_{(f)}$ is a \textbf{projection map} from the space
of Weyl spinors to the space of pure Weyl spinors, i.e. it obeys the
following two properties 
\begin{eqnarray}
P_{(f)}^{\alpha}(\rho,\bar{\rho})\gamma_{\alpha\beta}^{a}P_{(f)}^{\beta}(\rho,\bar{\rho}) & = & 0\quad\forall\rho^{\alpha}\label{proj-prop1}\\
P_{(f)}^{\alpha}(\lambda,\bar{\lambda}) & = & \lambda^{\alpha}\quad\forall\lambda^{\alpha}\mbox{ with }\lambda^{\alpha}\gamma_{\alpha\beta}^{a}\lambda^{\beta}=0\qquad\label{proj-prop2}
\end{eqnarray}
The first one just states that the image of a general Weyl spinor
is a pure Weyl spinor, while the second property is the projection
property. It implies idempotency of the map $P_{(f)}$ (i.e. $P_{(f)}\circ P_{(f)}=P_{(f)}$),
but it is slightly stronger as it also guarantees surjectivity onto
the space of pure Weyl spinors. 
\item $P_{(f)}$ is homogeneous of degree (1,0) for any $f$
\begin{equation}
P_{(f)}^{\alpha}(c\rho,\bar{c}\bar{\rho})=cP_{(f)}^{\alpha}(\rho,\bar{\rho})\quad\forall c\in\mathbb{C}\label{homogenous}
\end{equation}
and its modulus square is given by 
\begin{equation}
P_{(f)}^{\alpha}(\rho,\bar{\rho})\bar{P}_{(f)\alpha}(\rho,\bar{\rho})=2(\rho\bar{\rho})\abs{f\left(\auxi\right)}^{2}\left(\frac{1-\auxi}{1+\sqrt{1-\auxi}}\right)\label{Proj-modulus}
\end{equation}
Although $\auxi$ and $\auxii^{a}$ are not well-defined at the origin
$\abs{\rho}=0$, the projection map is still well-defined there in
the sense that the limit exists if $f$ is continuous on $[0,1]$:
\begin{equation}
P_{(f)}^{\alpha}(0,0)\equiv\lim_{\abs{\rho}\to0}P_{(f)}^{\alpha}(\rho,\bar{\rho})=0\quad(\mbox{for cont. }f)\label{limitToZero}
\end{equation}

\item The zero-locus of the projection is given by 
\begin{equation}
P_{(f)}^{-1}(0)=\left\{ 0\right\} \cup\left\{ \rho^{\alpha}|\,\rho^{\alpha}=\tfrac{1}{2}\auxii^{b}(\gamma_{b}\bar{\rho})^{\alpha}\right\} \cup\left\{ \rho^{\alpha}|f(\auxi)=0\right\} \label{kerP}
\end{equation}
In particular real vectors $\bar{\rho}_{\alpha}=\rho^{\alpha}$ are
in the zero-locus. As this is not an $SO(10)$ invariant statement,
also all vectors obeying an $SO(10)$ rotated version of this reality
condition lie in the zero-locus.
\item The projection $P_{(f)}$ is continuous everywhere if $f$ is continuous,
and it is differentiable everywhere but at the zero-locus subset $\left\{ 0\right\} \cup\left\{ \rho^{\alpha}|\,\auxi=1\right\} $,
if $f$ is differentiable. One can choose $f$ such that it will even
become differentiable at $\auxi=1$, namely for $f(\auxi)=\tilde{f}(\auxi)\left(1-\auxi\right)^{1+r}$
with a differentiable $\tilde{f}$ with $\tilde{f}(0)=1$ and $r\geq0$.
\item The auxiliary variables $\auxi,\auxii^{a}$ obey
\begin{eqnarray}
\hspace{-1cm}\auxi & \in & [0,1]\label{xin01}\\
\hspace{-1cm}\auxi=1 & \!\!\!\!\iff\!\!\!\! & 0\neq\rho^{\alpha}=\alpha^{b}(\gamma_{b}\bar{\rho})^{\alpha}{\scriptstyle \, for\, some}\,\alpha^{b}\in\mathbb{C}\iff0\neq\rho^{\alpha}=\tfrac{1}{2}\auxii^{b}(\gamma_{b}\bar{\rho})^{\alpha}\qquad\quad\label{xoneifreal}\\
\hspace{-1cm}\auxii^{a}\auxii_{a} & = & 0,\quad\auxii^{a}(\gamma_{a}\rho)_{\alpha}=0\label{zsquarezero}\\
\hspace{-1cm}\bei{\auxi}{\rho=\lambda} & = & 0=\bei{\auxii^{a}}{\rho=\lambda}\quad\forall{\scriptstyle pure}\:\lambda^{\alpha}\neq0\label{xzvanishforpure}\\
\hspace{-1cm}\bei{\auxi}{c\rho,\bar{c}\bar{\rho}} & = & \bei{\auxi}{\rho,\bar{\rho}}\quad,\quad\bei{\auxii^{a}}{c\rho,\bar{c}\bar{\rho}}=\tfrac{c}{\stackrel{}{\bar{c}}}\bei{\auxii^{a}}{\rho,\bar{\rho}}\quad\forall c\in\mathbb{C}\label{xzscaling}
\end{eqnarray}

\end{enumerate}
\end{prop}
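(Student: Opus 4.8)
The plan is to reduce everything to two inputs and then assemble the five claims. The first input is the $10$-dimensional $\gamma$-matrix identity $\gamma^a_{\alpha(\beta}\gamma_{|a|\gamma\delta)}=0$: contracting it with $\rho^\beta\rho^\gamma\rho^\delta$ gives $(\gamma^a\rho)_\alpha(\rho\gamma_a\rho)=0$, and contracting once more with $\rho^\alpha$ gives $(\rho\gamma^a\rho)(\rho\gamma_a\rho)=0$; after dividing by the relevant powers of $(\rho\bar\rho)$ these are precisely the relations \eqref{zsquarezero}. The second input is the elementary fact that for $\auxi\in[0,1]$ the number $t:=1/(1+\sqrt{1-\auxi})$ lies in $[\tfrac12,1]$ and is a root of the quadratic $\auxi\,t^2-2t+1$; equivalently $\auxi\,t^2=2t-1$ and $(1-t)^2=t^2(1-\auxi)$. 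Throughout I abbreviate $P_{(f)}^\alpha=f(\auxi)\bigl(\rho^\alpha-\tfrac{t}{2}\auxii^b(\bar\rho\gamma_b)^\alpha\bigr)$.

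I would prove claim (5) first, since the other claims use it. Relation \eqref{zsquarezero} is the first input above. For the scaling \eqref{xzscaling} — and the homogeneity \eqref{homogenous} of claim (2) — it suffices to count degrees: $(\rho\gamma^a\rho)$ has bidegree $(2,0)$ and $(\rho\bar\rho)$ bidegree $(1,1)$, so $\auxii^a\mapsto(c/\bar c)\,\auxii^a$ and $\auxi\mapsto\Abs{c/\bar c}^2\auxi=\auxi$, while $(\bar\rho\gamma_b)^\alpha\mapsto\bar c\,(\bar\rho\gamma_b)^\alpha$, so the two factors in the correction term combine to one overall $c$. For \eqref{xzvanishforpure}, a pure $\lambda$ obeys $\lambda\gamma^a\lambda=0$, hence $\auxii^a|_\lambda=0$ and $\auxi|_\lambda=0$. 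The real content is \eqref{xin01}--\eqref{xoneifreal}. The lower bound $\auxi\geq 0$ is immediate, since $\auxi=\tfrac12\sum_a\Abs{(\rho\gamma^a\rho)}^2/(\rho\bar\rho)^2$ with $(\rho\bar\rho)=\Abs{\rho}^2>0$ for $\rho\neq0$ and $\overline{(\rho\gamma^a\rho)}=(\bar\rho\gamma_a\bar\rho)$ (footnote \ref{fn:10dgamma}). The upper bound $\auxi\leq1$ is equivalent to the sharp inequality $\tfrac12(\rho\gamma^a\rho)(\bar\rho\gamma_a\bar\rho)\leq(\rho\bar\rho)^2$ relating the norm of a Weyl spinor to the norm of its associated (null) vector $(\rho\gamma^a\rho)$; I would prove it, together with the characterisation of the equality locus, by passing to the $U(5)$-covariant parametrization of Section \ref{sec:U5}, in which $\auxi$ becomes an explicit ratio of quadratics whose supremum on the unit sphere is $1$, attained exactly where $\rho^\alpha$ is a $\gamma$-image of $\bar\rho$. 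For the three equivalent descriptions of $\{\auxi=1\}$: given $0\neq\rho^\alpha=\alpha^b(\gamma_b\bar\rho)^\alpha$, contracting with $\gamma$-matrices and using \eqref{zsquarezero} fixes $\alpha^b$ to $\tfrac12\auxii^b$ up to a shift that drops out of $\alpha^b(\gamma_b\bar\rho)^\alpha$, and a direct Fierz computation then gives $\tfrac12\auxii^b(\gamma_b\bar\rho)^\alpha=\rho^\alpha$ and $\auxi=1$; the same computation — carried out carefully at the single imaginary direction $\gamma^{10}$, so that $\auxi=1$ and not $\auxi=0$ is obtained — shows that any $\rho$ obeying the reality condition $\bar\rho_\alpha=\rho^\alpha$, or an $SO(10)$-rotated version of it, lies in this locus.

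With (5) in hand, claim (1) is a direct expansion:
\[
P_{(f)}^\alpha\gamma^a_{\alpha\beta}P_{(f)}^\beta=f(\auxi)^2\Bigl[(\rho\gamma^a\rho)-t\,\auxii^b(\rho\gamma^a\gamma_b\bar\rho)+\tfrac{t^2}{4}\auxii^b\auxii^c(\bar\rho\gamma_b\gamma^a\gamma_c\bar\rho)\Bigr].
\]
Expanding the two- and three-$\gamma$ products and discarding every term containing $\auxii^b\auxii_b$ or $\auxii^b(\gamma_b\rho)$ via \eqref{zsquarezero} (with the accompanying rank-reducing $\gamma$-identities), the bracket collapses to $(\rho\bar\rho)\auxii^a$ times the quadratic $\auxi\,t^2-2t+1$, which vanishes because $t$ is a root of it. For \eqref{proj-prop2}: a pure $\lambda$ has $\auxi=0$, hence $\sqrt{1-\auxi}=1$ and, with $f(0)=1$, $P_{(f)}(\lambda,\bar\lambda)=\lambda$; idempotency and surjectivity onto the pure-spinor space follow formally. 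The modulus \eqref{Proj-modulus} comes from the parallel expansion of $P_{(f)}^\alpha\bar P_{(f)\alpha}$: using \eqref{zsquarezero} and its complex conjugate, the cross term evaluates to $-2t\auxi(\rho\bar\rho)$ and the $t^2$-term to $t^2\auxi(\rho\bar\rho)$ (both times $\Abs{f}^2$), so $P_{(f)}^\alpha\bar P_{(f)\alpha}=\Abs{f}^2(\rho\bar\rho)(1-2t\auxi+t^2\auxi)=2(\rho\bar\rho)\Abs{f}^2(1-\auxi)\,t$, the last step using $\auxi\,t^2=2t-1$. Finally \eqref{limitToZero} follows from \eqref{homogenous}: $P_{(f)}(\eps\rho,\bar\eps\bar\rho)=\eps\,P_{(f)}(\rho,\bar\rho)\to0$ as $\eps\to0$, since $P_{(f)}$ is continuous on the unit sphere.

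Claim (3): \eqref{Proj-modulus} shows that if $P_{(f)}(\rho)=0$ with $f(\auxi)\neq0$ and $\rho\neq0$, then $\auxi=1$, hence $1+\sqrt{1-\auxi}=1$ and the vanishing condition becomes $\rho^\alpha=\tfrac12\auxii^b(\gamma_b\bar\rho)^\alpha$; adjoining the cases $f(\auxi)=0$ and $\rho=0$ gives \eqref{kerP}, and the reality remark is the $\bar\rho_\alpha=\rho^\alpha$ sub-case of \eqref{xoneifreal} extended by $SO(10)$-covariance. Claim (4) is structural: $P_{(f)}$ is built from polynomials in $(\rho,\bar\rho)$, the function $\sqrt{1-\auxi}$, division by $1+\sqrt{1-\auxi}$ and by $(\rho\bar\rho)$, and $f$; by \eqref{xin01} the denominator $1+\sqrt{1-\auxi}\in[1,2]$ never vanishes, so smoothness can fail only at $\rho=0$ (where $(\rho\bar\rho)=0$) and at $\auxi=1$ (where $\sqrt{1-\auxi}$ is not differentiable), and the choice $f(\auxi)=\tilde f(\auxi)(1-\auxi)^{1+r}$ supplies a factor vanishing at $\auxi=1$ that absorbs that singularity; continuity everywhere then follows from continuity on the unit sphere together with \eqref{homogenous}. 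The main obstacle is the upper bound $\auxi\leq1$ with the sharp description of its equality locus: it is the only step that requires a genuine inequality rather than a Fierz manipulation or a degree count, and seems to call for the explicit $U(5)$ reduction (or an equivalent orbit-normal-form analysis of $SO(10)$ acting on $\mathbb{C}^{16}$).
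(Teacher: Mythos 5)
Most of your proposal is sound and runs parallel to the paper's own proof: claim 5 first, \eqref{zsquarezero} from the Fierz identity, the scaling statements by degree counting, and then claims 1--3 by direct expansion. Your device of writing $t=1/(1+\sqrt{1-\auxi})$ and observing that it is a root of $\auxi\,t^{2}-2t+1$ is a clean way to organize the algebra; the computations you sketch for \eqref{proj-prop1} and \eqref{Proj-modulus} do collapse exactly as you say, and your treatment of \eqref{limitToZero} (homogeneity plus boundedness on the sphere) is an acceptable variant of the paper's direct estimate.

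The genuine gap is the upper bound $\auxi\leq1$ in \eqref{xin01}, together with the implication $\auxi=1\Rightarrow\rho^{\alpha}=\tfrac{1}{2}\auxii^{b}(\gamma_{b}\bar{\rho})^{\alpha}$ in \eqref{xoneifreal}. You correctly identify this as the only step that is a real inequality rather than a Fierz manipulation, but you do not prove it: you defer to a $U(5)$ reduction and simply assert that the resulting ratio has supremum $1$ on the unit sphere, attained on the claimed locus. That assertion is precisely the content to be established, and it is not a routine optimization --- the numerator $\tfrac{1}{2}\sum_{a}\Abs{(\rho\gamma^{a}\rho)}^{2}$ is a quartic in $32$ real variables, and the most natural direct attack (the Fierz identity $4\gamma_{\delta\beta}^{a}\gamma_{a}^{\alpha\gamma}=8\delta_{\delta}^{\alpha}\delta_{\beta}^{\gamma}+2\delta_{\beta}^{\alpha}\delta_{\delta}^{\gamma}-\gamma^{ab\,\alpha}{}_{\beta}\gamma_{ba}{}^{\gamma}{}_{\delta}$) only yields the weaker bound $\auxi\leq\tfrac{5}{4}$, as the paper notes in footnote \ref{fn:Fierz-id}. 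The paper's actual argument is indirect and does not pass through $U(5)$: assuming $\auxi>1$, one computes $\Abs{\rho^{\alpha}-\tfrac{1}{2}\auxii^{a}(\gamma_{a}\bar{\rho})^{\alpha}/(1+i\sqrt{\auxi-1})}^{2}=0$, so $\rho$ would be a linear combination of the $(\gamma_{a}\bar{\rho})^{\alpha}$; but the already-established part of \eqref{xoneifreal} shows that any such $\rho$ has $\auxi=1$, a contradiction. The same norm computation, $\Abs{\rho-\tfrac{1}{2}\auxii^{a}(\gamma_{a}\bar{\rho})}^{2}=(\rho\bar{\rho})(1-\auxi)$, is what supplies the missing direction $\auxi=1\Rightarrow\rho=\tfrac{1}{2}\auxii^{a}(\gamma_{a}\bar{\rho})$, which your write-up only covers in the forward direction. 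Until you either carry out the $U(5)$ optimization explicitly or substitute an argument of this kind, claims \eqref{xin01}, \eqref{xoneifreal}, and hence the equality-locus part of \eqref{kerP}, remain unproven.
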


The proof of this proposition is given in appendix \ref{app:proof1}
on page \pageref{app:proof1}. Most of it contains only straightforward
calculations. Only the proof of $\auxi\leq1$ turns out to be quite
tricky.

\paragraph{Projection as part of a reparametrization}

We can regard the projection as part of a variable transformation
\begin{eqnarray}
(\rho^{\alpha},\bar{\rho}_{\alpha}) & \mapsto & (\lambda^{\alpha},\auxii^{a},\bar{\lambda}_{\alpha},\bar{\auxii}_{a})\label{var-trafo}\\
 &  & \mbox{with }\lambda^{\alpha}\equiv P_{(f)}^{\alpha}(\rho,\bar{\rho}),\quad\auxii^{a}\equiv\tfrac{(\rho\gamma^{a}\rho)}{(\rho\bar{\rho})}\nonumber 
\end{eqnarray}
where the variables on the righthand side are constrained by%
\footnote{Proof of $\auxii^{a}\left(\gamma_{a}\lambda\right)_{\alpha}=0$ in
(\ref{ProjZsquareZero}):\vspace{-.5cm} 
\[
\auxii^{a}\left(\gamma_{a}\lambda\right)_{\alpha}\equiv\auxii^{a}\left(\gamma_{a}P_{(f)}(\rho,\bar{\rho})\right)_{\alpha}\stackrel{\eqref{ProjGeneral}}{=}f\left(\auxi\right)\Bigl(\auxii^{a}(\gamma_{a}\rho)_{\alpha}-\tfrac{1}{2}\frac{\overbrace{\auxii^{a}\auxii^{c}(\gamma_{a}\gamma_{c}}^{\auxii^{a}\auxii_{a}}\bar{\rho})^{\alpha}}{1+\sqrt{1-\auxi}}\Bigr)\stackrel{\eqref{zsquarezero}}{=}0\qquad\fussend
\]
} 
\begin{equation}
(\lambda\gamma^{a}\lambda)=0=\auxii^{a}(\lambda\gamma_{a})_{\alpha}=\auxii^{a}\auxii_{a}\quad,\quad\tfrac{1}{2}\auxii^{a}\bar{\auxii}_{a}\leq1\label{ProjZsquareZero}
\end{equation}
One can see that they have effectively the same number of degrees
of freedom by observing that the above reparametrization is invertible
\begin{equation}
\rho^{\alpha}=\frac{1+\sqrt{1-\frac{1}{2}\auxii^{a}\bar{\auxii}_{a}}}{2f(\frac{1}{2}\auxii^{a}\bar{\auxii}_{a})\sqrt{1-\frac{1}{2}\auxii^{a}\bar{\auxii}_{a}}}\lambda^{\alpha}+\frac{1}{4\bar{f}(\frac{1}{2}\auxii^{a}\bar{\auxii}_{a})\sqrt{1-\frac{1}{2}\auxii^{a}\bar{\auxii}_{a}}}\auxii^{a}\left(\bar{\lambda}\gamma_{a}\right)^{\alpha}\label{Proj-inv}
\end{equation}
We have explicitly replaced here $\auxi$ by $\frac{1}{2}\auxii^{a}\bar{\auxii}_{a}$
in order to stress that we have a function of only $\auxii^{a}$ and
$\lambda^{\alpha}$ on the righthand side. Instead for the other direction
in (\ref{var-trafo}) $\lambda^{\alpha}$ and $\auxii^{a}$ have to
be seen as functions of $\rho$ only. In particular all the appearances
of $\auxi$ and $\auxii^{a}$ in the projection $P_{(f)}^{\alpha}(\rho,\bar{\rho})$
are mere placeholders for the $\rho$-expressions given in (\ref{zxDefandfzero}).
The validity of (\ref{Proj-inv}) can easily be checked by plugging
the explicit expression for the projection $P_{(f)}^{\alpha}(\rho,\bar{\rho})$
for $\lambda^{\alpha}$ and the same for its complex conjugate.%
\footnote{In order to prove (\ref{Proj-inv}) we start from the righthand side:
\begin{eqnarray*}
 &  & \frac{1+\sqrt{1-\auxi}}{2f(\auxi)\sqrt{1-\auxi}}P^{\alpha}(\rho,\bar{\rho})+\frac{1}{4\bar{f}(\auxi)\sqrt{1-\auxi}}\auxii^{a}\left(\bar{P}(\rho,\bar{\rho})\gamma_{a}\right)^{\alpha}=\\
 & \stackrel{\mbox{\tiny\eqref{ProjGeneral}}}{=} & \frac{1+\sqrt{1-\auxi}}{2f(\auxi)\sqrt{1-\auxi}}f\left(\auxi\right)\!\left(\!\rho^{\alpha}-\tfrac{1}{2}\frac{\auxii^{a}(\bar{\rho}\gamma_{a})^{\alpha}}{1+\sqrt{1-\auxi}}\!\right)\!+\!\frac{1}{4\bar{f}(\auxi)\sqrt{1-\auxi}}\auxii^{a}\bar{f}\left(\auxi\right)\biggl(\!(\bar{\rho}\gamma_{a})^{\alpha}\!-\!\tfrac{1}{2}\frac{\bar{\auxii}^{b}(\rho\overbrace{\gamma_{b}\gamma_{a}}^{-\gamma_{a}\gamma_{b}\lqn{{\scriptstyle +2\delta_{ab}}}})^{\alpha}}{1+\sqrt{1-\auxi}}\!\biggr)\!=\\
 & = & \frac{1}{2\sqrt{1-\auxi}}\left(1+\sqrt{1-\auxi}-\frac{\auxi}{1+\sqrt{1-\auxi}}\right)\rho^{\alpha}=\\
 & = & \rho^{\alpha}\qquad\fussend
\end{eqnarray*}
} Note that (\ref{Proj-inv}) is singular at $\auxi=1$ (if not cured
by an appropriately chosen $f(\auxi)$) and at the zeros of $f$. 

So to summarize, every Weyl spinor $\rho^{\alpha}$ can be parametrized
by a pure spinor $\lambda^{\alpha}$ and a constrained vector $\auxii^{a}$.
It turns out to be very useful in some calculations to use (\ref{Proj-inv})
and write $\rho^{\alpha}$ as a linear combination of $\lambda^{\alpha}=P_{(f)}^{\alpha}(\rho,\bar{\rho})$
and $\bar{\lambda}_{\alpha}$. 

We can use (\ref{Proj-modulus}) to quickly determine the inverse
transformation of the modulus 
\begin{equation}
(\rho\bar{\rho})=\frac{\left(1+\sqrt{1-\auxi}\right)}{2\abs{f(\auxi)}^{2}\left(1-\auxi\right)}(\lambda\bar{\lambda})\label{Proj-modulus-repeated}
\end{equation}
Let us also finally provide two more useful contractions
\begin{eqnarray}
(\lambda\bar{\rho})\equiv P_{(f)}^{\alpha}(\rho,\bar{\rho})\bar{\rho}_{\alpha} & \stackrel{\eqref{ProjGeneral}}{=} & (\rho\bar{\rho})f(\auxi)\sqrt{1-\auxi}\label{PbarRho}\\
(\lambda\gamma^{c}\rho)\equiv(P_{(f)}(\rho,\bar{\rho})\gamma^{c}\rho) & \ous{\eqref{ProjGeneral}}={\eqref{zsquarezero}} & (\rho\bar{\rho})f(\auxi)\frac{\auxii^{c}\sqrt{1-\auxi}}{1+\sqrt{1-\auxi}}\label{PgammaRho}
\end{eqnarray}

\paragraph{Alternative reparametrization}

Rewriting \eqref{Proj-inv} in the form \\
$\rho^{\alpha}=\frac{1+\sqrt{1-\auxi}}{2f(\auxi)\sqrt{1-\auxi}}\Bigl(\lambda^{\alpha}+\tfrac{1}{2}\frac{f(\auxi)\auxii^{a}}{\bar{f}(\auxi)\left(1+\sqrt{1-\auxi}\right)}\left(\bar{\lambda}\gamma_{a}\right)^{\alpha}\Bigr)$
shows that this inverse variable transformation becomes particularly
simple if one chooses 
\begin{equation}
\Auxii^{a}\equiv\frac{f(\auxi)\auxii^{a}}{\bar{f}(\auxi)\left(1+\sqrt{1-\auxi}\right)}=\tfrac{f(\auxi)}{\os{}{\bar{f}}(\auxi)}\frac{(\rho\gamma^{a}\rho)}{(\rho\bar{\rho})+\sqrt{(\rho\bar{\rho})^{2}-\frac{1}{2}(\rho\gamma^{a}\rho)(\bar{\rho}\gamma_{a}\bar{\rho})}}\label{newz}
\end{equation}
as new variable, so that the inverse transformation up to an overall
prefactor is of the form $\rho^{\alpha}\propto\lambda^{\alpha}+\tfrac{1}{2}\Auxii^{a}\left(\bar{\lambda}\gamma_{a}\right)^{\alpha}$.
Note that, of course, one can also absorb the factor $\frac{1}{2}$
into the definition of $\Auxii^{a}$. However, the way we have defined
it here will guarantee that $\Auxi$, which we are going to introduce
now, lies like $\auxi$ in the interval $[0,1]$. In order to determine
also the overall prefactor after reparametrization, let us calculate
a few more relations between old and new variables. The absolute value
squares are related via
\begin{eqnarray}
\Auxi\equiv\tfrac{1}{2}\Auxii^{a}\bar{\Auxii}_{a} & = & \frac{\auxi}{\left(1+\sqrt{1-\auxi}\right)^{2}}=\frac{1-\sqrt{1-\auxi}}{1+\sqrt{1-\auxi}}\label{newx}
\end{eqnarray}
which implies the relations $(1-\Auxi)=\frac{2\sqrt{1-\auxi}}{1+\sqrt{1-\auxi}}$,
$(1+\Auxi)=\frac{2}{1+\sqrt{1-\auxi}}$ and thus in particular 
\begin{equation}
\sqrt{1-\auxi}=\frac{1-\Auxi}{1+\Auxi}\quad,\quad1-\sqrt{1-\auxi}=\frac{2\Auxi}{1+\Auxi}\quad,\quad1+\sqrt{1-\auxi}=\frac{2}{1+\Auxi}
\end{equation}
The last can be used to invert the relation (\ref{newx}) and obtain
\begin{equation}
\auxi=\frac{4\Auxi}{(1+\Auxi)^{2}}
\end{equation}
Because of $0\leq\auxi\leq1$ also the new variable lives in this
interval
\begin{equation}
0\leq\Auxi\leq1
\end{equation}
and also $\Auxii^{a}$ and its absolute value vanish on the constraint
surface
\begin{equation}
(\rho\gamma^{a}\rho)=0\:\forall a\quad\stackrel{{\rm for\,}\rho^{\alpha}\neq(0,\ldots,0)}{\iff}\quad\Auxii^{a}=0\:\forall a\quad(\Auxi=0)
\end{equation}
Finally we can use the above relations to also write down the inverse
of (\ref{newz})
\begin{equation}
\auxii^{a}=\frac{\bar{\tilde{f}}(\Auxi)}{\tilde{f}(\Auxi)}\frac{2\Auxii^{a}}{(1+\Auxi)}
\end{equation}
with%
\footnote{\label{fn:fprime}For the derivatives this implies
\[
f'(\auxi)=\partiell{\Auxi}{\auxi}\tilde{f}'(\Auxi)=\frac{1}{\sqrt{1-\auxi}\left(1+\sqrt{1-\auxi}\right)^{2}}\tilde{f}'(\Auxi)
\]
I.e. 
\[
\tilde{f}'(\Auxi)=\sqrt{1-\auxi}\left(1+\sqrt{1-\auxi}\right)^{2}f'(\auxi)
\]
or 
\[
f'(\auxi)=\frac{(1+\Auxi)^{3}}{4(1-\Auxi)}\tilde{f}'(\Auxi)\qquad\fussend
\]
} 
\begin{equation}
\tilde{f}(\Auxi)\equiv f(\auxi)
\end{equation}
The Weyl spinor $\rho^{\alpha}$ expressed in terms of the pure spinor
$\lambda^{\alpha}$ and the variable $\auxii^{a}$ as in (\ref{Proj-inv})
can now be written as
\begin{equation}
\rho^{\alpha}=\frac{1}{\tilde{f}(\Auxi)(1-\Auxi)}\left(\lambda^{\alpha}+\tfrac{1}{2}\Auxii^{a}\left(\bar{\lambda}\gamma_{a}\right)^{\alpha}\right)\label{Proj-inv-alt}
\end{equation}
The absolute value squared (\ref{Proj-modulus-repeated}) turns into
\begin{equation}
(\rho\bar{\rho})=\frac{1+\Auxi}{\abs{\tilde{f}(\Auxi)}^{2}(1-\Auxi)^{2}}(\lambda\bar{\lambda})
\end{equation}
Apparently the inverse transformation (\ref{Proj-inv-alt}) becomes
most simple for the special choice $f=h$ where 
\begin{equation}
h(\auxi)\equiv\tilde{h}(\Auxi)\equiv\tfrac{1}{(1-\Auxi)}=\frac{1+\sqrt{1-\auxi}}{2\sqrt{1-\auxi}}\label{hdef-first}
\end{equation}
It is interesting that we will rediscover this function a bit later
in the context of Hermiticity and thus gave it the name $h$ here.
So we simply have 
\begin{equation}
\rho^{\alpha}=\lambda^{\alpha}+\tfrac{1}{2}\Auxii^{a}\left(\bar{\lambda}\gamma_{a}\right)^{\alpha}\quad\mbox{for }f=h.\label{Proj-inv-alt-h}
\end{equation}

\paragraph{Reference spinor }

The projector (\ref{ProjGeneral}) with (\ref{zxDefandfzero}) is
globally well defined, but it is non-holomorphic in $\rho$. This
can be changed, if one simply replaces $\bar{\rho}_{\alpha}$ by some
reference spinor $\bar{\chi}_{\alpha}$ which is not related to $\rho^{\alpha}$
by complex conjugation or in any other way. Almost everything still
works in the same way as before, but it is useful to change the notation
slightly, in order not to get confused:
\begin{eqnarray}
\auxii^{a} & \equiv & \frac{\rho\gamma^{a}\rho}{(\rho\bar{\chi})},\quad\bar{\auxiii}_{a}\equiv\frac{\bar{\chi}\gamma_{a}\bar{\chi}}{(\rho\bar{\chi})},\quad\auxi\equiv\tfrac{1}{2}\auxii^{a}\bar{\auxiii}_{a}\label{xz-noncov}\\
P_{(f,\bar{\chi})}^{\alpha}(\rho) & \equiv & f\left(\auxi\right)\left(\rho^{\alpha}-\tfrac{1}{2}\frac{\auxii^{a}(\bar{\chi}\gamma_{a})^{\alpha}}{1+\sqrt{1-\auxi}}\right)\label{Pnoncov}
\end{eqnarray}
One immediate consequence is that $\auxi$ is not in general real
any longer and we have no guarantee that $\abs{\auxi}\leq1$. If its
absolute value exceeds 1, however, the projection map becomes non-continuous
at the branch-cut of the square root. So one should better restrict
manually to $\abs{\auxi}\leq1$ which means that one cannot use one
global projection map, but needs different projection maps for different
neighbourhoods. 

With the reference spinor $\bar{\chi}_{\alpha}$ being independent
of $\rho^{\alpha}$ , it can itself be chosen to be a pure spinor,
which implies 
\begin{equation}
\auxi=\bar{\auxiii}_{a}=0\quad\mbox{(if }\mbox{\ensuremath{\bar{\chi}}}_{\alpha}\mbox{ is pure)}
\end{equation}
The projection then becomes independent of the function $f$ and reduces
to
\begin{equation}
P_{(\bar{\chi})}^{\alpha}(\rho)\equiv P_{(f,\bar{\chi})}^{\alpha}(\rho)=\rho^{\alpha}-\tfrac{1}{4}\frac{(\rho\gamma^{a}\rho)(\bar{\chi}\gamma_{a})^{\alpha}}{(\rho\bar{\chi})}\quad\mbox{(if }\mbox{\ensuremath{\bar{\chi}}}_{\alpha}\mbox{ is pure)}\label{non-cov-integrated}
\end{equation}
In the subsequent section \ref{sec:linearized} we will discuss the
variation of the general non-linear projection map \eqref{ProjGeneral}
which leads to projection matrices. For the above simplified case
of a projection map with a reference \emph{pure} spinor (\ref{non-cov-integrated})
it is already recognizable that this will yield the non-covariant
projection-matrix defined by Oda and Tonin in \cite[eq.(17)]{Oda:2001zm}
and used by them in \cite[eq.(6)]{Oda:2004bg} in order to extract
the gauge-invariant part of the antighost. We will come back to this
in the remark on page \pageref{rmk:projection-matrix}. \rembreak\rem{Observation:
If one chooses in (\ref{Pnoncov}) $\rho$ and $\bar{\chi}$ both
to be real, the result would be (if $\auxi\leq1$) a real pure spinor
which should not exist, at least if non-vanishing. This seems to suggest
that choosing $\rho$ and $\bar{\chi}=\chi$ both to be real, it forces
$\auxi\geq1$ with $\auxi=1$ only if $\chi=\rho$. Let us try to
verify this guess
\begin{eqnarray}
\auxi & = & \tfrac{1}{2}\auxii^{a}\bar{\mf w}_{a}=\tfrac{1}{2}\frac{(\rho^{\alpha}\gamma_{\alpha\beta}^{a}\rho^{\beta})(\bar{\chi}_{\gamma}\gamma_{a}^{\gamma\delta}\bar{\chi}_{\delta})}{(\rho\bar{\chi})^{2}}
\end{eqnarray}
Using $\bar{\chi}_{\gamma}=\chi^{\gamma}$ and $\gamma_{10}^{\gamma\delta}=-\gamma_{10\,\gamma\delta}$,
we can rewrite this as 
\begin{equation}
\auxi\stackrel{\rho,\chi{\rm \, real}}{=}\tfrac{1}{2}\frac{(\rho^{\alpha}\gamma_{\alpha\beta}^{a}\rho^{\beta})(\chi^{\gamma}\gamma_{a\,\gamma\delta}\chi^{\delta})}{(\rho\chi)^{2}}-\frac{(\rho^{\alpha}\gamma_{\alpha\beta}^{10}\rho^{\beta})(\chi^{\gamma}\gamma_{10\,\gamma\delta}\chi^{\delta})}{(\rho\chi)^{2}}
\end{equation}
Now we can use the Fierz identity for the first term and $\gamma_{\alpha\beta}^{10}=-i\delta_{\alpha\beta}$
(from footnote \ref{fn:10dgamma}) for the second term to arrive at
\begin{equation}
\auxi\stackrel{\rho,\chi{\rm \, real}}{=}-\frac{(\rho^{\alpha}\gamma_{\alpha\beta}^{a}\chi^{\beta})(\rho^{\gamma}\gamma_{a\,\gamma\delta}\chi^{\delta})}{(\rho\chi)^{2}}+\frac{\rho^{2}\chi^{2}}{(\rho\chi)^{2}}
\end{equation}
Let's split the sum over $a$ and use $\gamma_{\alpha\beta}^{10}=-i\delta_{\alpha\beta}$
once more
\begin{equation}
\auxi\stackrel{\rho,\chi{\rm \, real}}{=}1-\frac{\sum_{i=1}^{9}(\rho^{\alpha}\gamma_{\alpha\beta}^{i}\chi^{\beta})(\rho^{\gamma}\gamma_{i\,\gamma\delta}\chi^{\delta})}{(\rho\chi)^{2}}+\frac{\rho^{2}\chi^{2}}{(\rho\chi)^{2}}
\end{equation}
It remains to show that the last term is bigger than the second....
Note that the calculation works the same for SO(1,9) up to this point.

If proven: for real $\rho,\bar{\chi}$ (we still write $\bar{\chi}$
to stress the opposite chirality in the Minkowskian case), we can
replace the projection map (\ref{Pnoncov}) by
\begin{equation}
P_{(f,\bar{\chi})}^{\alpha}(\rho)\equiv f\left(\auxi\right)\left(\rho^{\alpha}-\tfrac{1}{2}\frac{\auxii^{a}(\bar{\chi}\gamma_{a})^{\alpha}}{1+i\sqrt{\auxi-1}}\right)
\end{equation}
This is of particular interest if we don't want to think of $\lambda$
and $\bar{\lambda}$ in the non-minimal formalism as complex conjugates
(as this does not allow an SO(1,9) interpretation of the non-minimal
formalism) but as independent complex Weyl spinors of opposite chirality
which are obtained from two real Majorana-Weyl spinors $\rho$ and
$\bar{\rho}$ via the projection above and a corresponding one with
interchanged role of $\rho$ and $\bar{\rho}$ (and as a matter of
taste $i\to-i$?) }\newpage

\section{Linearized}

\label{sec:linearized}

As we will quite frequently use the image of the projection $P_{(f)}^{\alpha}$
as an argument of another function, let us denote for simplicity like
previously\enlargethispage*{1cm} 
\begin{equation}
\lambda^{\alpha}\equiv P_{(f)}^{\alpha}(\rho,\bar{\rho})
\end{equation}
Variations of $\rho$ live in the tangent space and are mapped via
the push-forward map, i.e. via the Jacobian-matrix defined by the
derivatives of $P_{(f)}^{\alpha}$: 

\begin{prop}\label{prop:linProj}The Jacobian matrices $\Pi_{(f)\bot}$
and $\pi_{(f)\bot}$, defined for differentiable%
\footnote{\label{fn:differentiable}Remember that $f$ is defined on the closed
interval $[0,1]$. With differentiability at $0$ we thus mean the
existence of only a limit from the right ($\auxi>0$) 
\[
f'(0)\equiv\lim_{\auxi\to0^{+}}\frac{f(\auxi)-f(0)}{\auxi}
\]
Similarly differentiability at 1 is understood as the existence of
only a limit from the left ($\auxi<1$) 
\[
f'(1)\equiv\lim_{\auxi\to1^{-}}\frac{f(\auxi)-f(1)}{\auxi-1}\quad\fussend
\]
} $f$ via $\delta P_{(f)}^{\alpha}(\rho,\bar{\rho})=\Pi_{(f)\bot\beta}^{\alpha}(\rho,\bar{\rho})\delta\rho^{\beta}+\rest_{(f)\bot}^{\alpha\beta}(\rho,\bar{\rho})\delta\bar{\rho}_{\beta}$
or equivalently 
\begin{equation}
\left(\!\!\zwek{\delta\lambda^{\beta}}{\delta\bar{\lambda}_{\beta}}\!\!\right)\!\equiv\!\left(\!\!\zwek{\delta P_{(f)}^{\alpha}(\rho,\bar{\rho})}{\delta\bar{P}_{(f)\alpha}(\rho,\bar{\rho})}\!\!\right)\!\equiv\!\left(\!\!\begin{array}{cc}
\Pi_{(f)\bot\beta}^{\alpha}(\rho,\bar{\rho}) & \rest_{(f)\bot}^{\alpha\beta}(\rho,\bar{\rho})\\
\bar{\rest}_{(f)\bot\alpha\beta}(\rho,\bar{\rho}) & \bar{\Pi}_{(f)\bot\alpha}\hoch{\beta}(\rho,\bar{\rho})
\end{array}\!\!\right)\!\!\left(\!\!\zwek{\delta\rho^{\beta}}{\delta\bar{\rho}_{\beta}}\!\!\right)\label{bigMatrix}
\end{equation}

\begin{enumerate}
\item ... are explicitly given by 
\begin{eqnarray}
\lqn{\Pi_{(f)\bot\beta}^{\alpha}(\rho,\bar{\rho})=\partial_{\rho^{\beta}}P_{(f)}^{\alpha}(\rho,\bar{\rho})=}\\
 & = & f\left(\auxi\right)\delta_{\beta}^{\alpha}+f'\left(\auxi\right)\frac{\rho^{\alpha}\bar{\auxii}_{b}(\rho\gamma^{b})_{\beta}}{\rho\bar{\rho}}-2\auxi f'\left(\auxi\right)\frac{\rho^{\alpha}\bar{\rho}_{\beta}}{\rho\bar{\rho}}+\nonumber \\
 &  & -\tfrac{1}{1+\sqrt{1-\auxi}}\left(f\left(\auxi\right)\delta_{b}^{a}+\biggl(\tfrac{f\left(\auxi\right)}{4\sqrt{1-\auxi}\left(1+\sqrt{1-\auxi}\right)}+\tfrac{1}{2}f'\left(\auxi\right)\biggr)\auxii^{a}\bar{\auxii}_{b}\right)\frac{(\gamma_{a}\bar{\rho})^{\alpha}(\rho\gamma^{b})_{\beta}}{\rho\bar{\rho}}+\nonumber \\
 &  & +\left(\tfrac{f\left(\auxi\right)}{2\sqrt{1-\auxi}\left(1+\sqrt{1-\auxi}\right)}+{\scriptstyle \left(1-\sqrt{1-\auxi}\right)}f'\left(\auxi\right)\right)\frac{\auxii^{a}(\gamma_{a}\bar{\rho})^{\alpha}\bar{\rho}_{\beta}}{\rho\bar{\rho}}\label{LinProjGen}\\
\lqn{\rest_{(f)\bot}^{\alpha\beta}(\rho,\bar{\rho})=\partial_{\bar{\rho}_{\beta}}P_{(f)}^{\alpha}(\rho,\bar{\rho})=}\\
 & = & -\tfrac{1}{2}\tfrac{f\left(\auxi\right)}{1+\sqrt{1-\auxi}}\auxii^{a}\gamma_{a}^{\alpha\beta}-2\auxi f'\left(\auxi\right)\frac{\rho^{\alpha}\rho^{\beta}}{\rho\bar{\rho}}+f'\left(\auxi\right)\frac{\rho^{\alpha}\auxii^{b}(\bar{\rho}\gamma_{b})^{\beta}}{\rho\bar{\rho}}+\nonumber \\
 &  & +\left(\tfrac{f\left(\auxi\right)}{2\sqrt{1-\auxi}\left(1+\sqrt{1-\auxi}\right)}+{\scriptstyle \left(1-\sqrt{1-\auxi}\right)}f'\left(\auxi\right)\right)\frac{\auxii^{a}(\gamma_{a}\bar{\rho})^{\alpha}\rho^{\beta}}{\rho\bar{\rho}}+\nonumber \\
 &  & -\tfrac{1}{2\left(1+\sqrt{1-\auxi}\right)}\left(\tfrac{f\left(\auxi\right)}{2\sqrt{1-\auxi}\left(1+\sqrt{1-\auxi}\right)}+f'\left(\auxi\right)\right)\frac{\auxii^{a}(\gamma_{a}\bar{\rho})^{\alpha}\auxii^{b}(\bar{\rho}\gamma_{b})^{\beta}}{\rho\bar{\rho}}\label{linprojgen}
\end{eqnarray}
with still $\auxii^{a}\equiv\frac{(\rho\gamma^{a}\rho)}{(\rho\bar{\rho})},\quad\auxi\equiv\frac{1}{2}\auxii^{a}\bar{\auxii}_{a}$
from (\ref{zxDefandfzero}).
\item ... or equivalently in terms of $\lambda^{\alpha}\equiv P_{(f)}^{\alpha}(\rho,\bar{\rho})$
and $\auxii^{a},\auxi$ by%
\footnote{\label{fn:Pi-intermsofnewz}In terms of the alternative parametrization
$\Auxii^{a}$ from equation (\ref{newz}), equations (\ref{LinProjGenLam})
and (\ref{linprojgenlam}) become 
\begin{eqnarray*}
\Pi_{(f)\bot\beta}^{\alpha}(\rho,\bar{\rho}) & = & \fa(\Auxi)\left(\delta_{\beta}^{\alpha}-\tfrac{(\gamma_{a}\bar{\lambda})^{\alpha}(\lambda\gamma^{a})_{\beta}}{2(\lambda\bar{\lambda})}\right)-\tfrac{1}{2}{\scriptstyle \left(\tilde{f}(\Auxi)-\left(1-\Auxi\right)\tilde{f}'(\Auxi)\right)}\tfrac{\lambda^{\alpha}\bar{\Auxii}_{c}\left(\lambda\gamma^{c}\right)_{\beta}}{(\lambda\bar{\lambda})}+\\
 &  & -\tfrac{\Auxi(1-\Auxi)\tilde{f}'(\Auxi)\lambda^{\alpha}\bar{\lambda}_{\beta}}{(\lambda\bar{\lambda})}-\tfrac{\fa(\Auxi)\bar{\Auxii}_{c}(\gamma^{c}\gamma_{b}\lambda)^{\alpha}\Auxii^{d}(\bar{\lambda}\gamma^{b}\gamma_{d})_{\beta}}{8(\lambda\bar{\lambda})}\\
\rest_{(f)\bot}^{\alpha\beta}(\rho,\bar{\rho}) & = & -\tfrac{\bar{\tilde{f}}(\Auxi)}{2}\left(\Auxii^{a}\gamma_{a}^{\alpha\beta}-\tfrac{\Auxii^{a}(\gamma_{a}\bar{\lambda})^{\alpha}\lambda^{\beta}}{(\lambda\bar{\lambda})}\right)+\Auxi\left(\bar{\fa}(\Auxi)-\tfrac{(1-\Auxi)\bar{\fa}(\Auxi)\tilde{f}'(\Auxi)}{\fa(\Auxi)}\right)\tfrac{\lambda^{\alpha}\lambda^{\beta}}{(\lambda\bar{\lambda})}+\\
 &  & +{\scriptstyle \frac{(1-\Auxi)}{2}}\tfrac{\bar{\tilde{f}}(\Auxi)\tilde{f}'(\Auxi)\lambda^{\alpha}\Auxii^{c}\left(\bar{\lambda}\gamma_{c}\right)^{\beta}}{\tilde{f}(\Auxi)(\lambda\bar{\lambda})}
\end{eqnarray*}
For example for $f=1$ one obtains
\begin{eqnarray*}
\Pi_{(1)\bot\beta}^{\alpha}(\rho,\bar{\rho}) & = & \left(\delta_{\beta}^{\alpha}-\tfrac{(\gamma_{a}\bar{\lambda})^{\alpha}(\lambda\gamma^{a})_{\beta}}{2(\lambda\bar{\lambda})}\right)-\tfrac{\lambda^{\alpha}\bar{\Auxii}_{c}\left(\lambda\gamma^{c}\right)_{\beta}}{2(\lambda\bar{\lambda})}-\tfrac{\bar{\Auxii}_{c}(\gamma^{c}\gamma_{b}\lambda)^{\alpha}\Auxii^{d}(\bar{\lambda}\gamma^{b}\gamma_{d})_{\beta}}{8(\lambda\bar{\lambda})}\\
\rest_{(1)\bot}^{\alpha\beta}(\rho,\bar{\rho}) & = & -\tfrac{1}{2}\left(\Auxii^{a}\gamma_{a}^{\alpha\beta}-\tfrac{\Auxii^{a}(\gamma_{a}\bar{\lambda})^{\alpha}\lambda^{\beta}}{(\lambda\bar{\lambda})}\right)+\tfrac{\Auxi\lambda^{\alpha}\lambda^{\beta}}{(\lambda\bar{\lambda})}\qquad\fussend
\end{eqnarray*}
} 
\begin{eqnarray}
\lqn{\Pi_{(f)\bot\beta}^{\alpha}(\rho,\bar{\rho})=}\nonumber \\
 & = & f(\auxi)\left(\delta_{\beta}^{\alpha}-\tfrac{1}{2}\frac{(\gamma_{a}\bar{\lambda})^{\alpha}(\lambda\gamma^{a})_{\beta}}{(\lambda\bar{\lambda})}\right)+\nonumber \\
 &  & -\left(\tfrac{\bar{f}(\auxi)}{2\left(1+\sqrt{1-\auxi}\right)}-\tfrac{\bar{f}(\auxi)f'\left(\auxi\right)\left(1-\auxi\right)}{f(\auxi)}\right)\frac{\lambda^{\alpha}\bar{\auxii}_{c}\left(\lambda\gamma^{c}\right)_{\beta}}{(\lambda\bar{\lambda})}+\label{LinProjGenLam}\\
 &  & -{\scriptstyle 2\left(1-\auxi\right)\left(1-\sqrt{1-\auxi}\right)}f'\left(\auxi\right)\frac{\lambda^{\alpha}\bar{\lambda}_{\beta}}{(\lambda\bar{\lambda})}-\tfrac{f\left(\auxi\right)}{8\left(1+\sqrt{1-\auxi}\right)^{2}}\frac{\bar{\auxii}_{c}(\gamma^{c}\gamma_{b}\lambda)^{\alpha}\auxii^{d}(\bar{\lambda}\gamma^{b}\gamma_{d})_{\beta}}{(\lambda\bar{\lambda})}\nonumber \\
\lqn{\rest_{(f)\bot}^{\alpha\beta}(\rho,\bar{\rho})=}\nonumber \\
 & = & -\tfrac{f(\auxi)}{2\left(1+\sqrt{1-\auxi}\right)}\left(\auxii^{a}\gamma_{a}^{\alpha\beta}-\frac{\auxii^{a}(\gamma_{a}\bar{\lambda})^{\alpha}\lambda^{\beta}}{(\lambda\bar{\lambda})}\right)+\nonumber \\
 &  & +{\scriptstyle \left(1-\sqrt{1-\auxi}\right)}\left(\tfrac{\bar{f}(\auxi)}{1+\sqrt{1-\auxi}}-\tfrac{2\left(1-\auxi\right)\bar{f}(\auxi)f'\left(\auxi\right)}{f(\auxi)}\right)\frac{\lambda^{\alpha}\lambda^{\beta}}{(\lambda\bar{\lambda})}+\nonumber \\
 &  & +{\scriptstyle \left(1-\auxi\right)}f'\left(\auxi\right)\frac{\lambda^{\alpha}\auxii^{c}\left(\bar{\lambda}\gamma_{c}\right)^{\beta}}{(\lambda\bar{\lambda})}\label{linprojgenlam}
\end{eqnarray}

\item ...\enlargethispage*{1cm} map general variations $\delta\rho$ to
variations $\delta\lambda\equiv\Pi_{(f)\bot}\delta\rho+\rest_{(f)\bot}\delta\bar{\rho}$
which are $\gamma$-\textbf{orthogonal} to the pure spinor $\lambda^{\alpha}\equiv P_{(f)}^{\alpha}(\rho,\bar{\rho})$,
i.e. $(\lambda\gamma^{c}\delta\lambda)=0$, or equivalently 
\begin{eqnarray}
\lambda^{\alpha}\gamma_{\alpha\gamma}^{c}\Pi_{(f)\bot\beta}^{\gamma}(\rho,\bar{\rho}) & = & 0\label{LinProjGamma-orth}\\
\lambda^{\alpha}\gamma_{\alpha\gamma}^{c}\rest_{(f)\bot}^{\gamma\beta}(\rho,\bar{\rho}) & = & 0\qquad\forall\rho\label{linprojgamma-orth}
\end{eqnarray}

\item ... on the constraint surface ($\rho^{\alpha}=\lambda^{\alpha}$,
$\auxi=\auxii^{a}=0$) reduce for all $f$ which are differentiable
at $\auxi=0$ to 
\begin{eqnarray}
\Pi_{\bot\beta}^{\alpha}\equiv\Pi_{(f)\bot\beta}^{\alpha}(\lambda,\bar{\lambda}) & = & \delta_{\beta}^{\alpha}-\tfrac{1}{2}\frac{(\gamma_{a}\bar{\lambda})^{\alpha}(\lambda\gamma^{a})_{\beta}}{(\lambda\bar{\lambda})}\label{Proj-matrix-onshell}\\
\rest_{(f)\bot}^{\alpha\beta}(\lambda,\bar{\lambda}) & = & 0\label{proj-matrix-on-shell}
\end{eqnarray}
with 
\begin{equation}
\left(\Pi_{\bot}\right)^{2}=\Pi_{\bot}\label{PiSquare}
\end{equation}

\item ... obey projection properties in the sense%
\footnote{The equations (\ref{LinProjProp}) and (\ref{linprojprop}) are not
usual projection properties of a linear projection which would be
of the form $\Pi^{2}=\Pi$. Instead they are understood as being part
of a projection $\mc P_{(f)}$ acting on the tangent bundle of the
spinor space: 
\begin{eqnarray*}
\mc P_{(f)}:\quad(\rho,\bar{\rho}) & \stackrel{P_{(f)}}{\mapsto} & (\lambda,\bar{\lambda})\\
\left(\zwek{\delta\rho}{\delta\bar{\rho}}\right) & \mapsto & \left(\begin{array}{cc}
\Pi_{(f)\bot}(\rho,\bar{\rho}) & \rest_{(f)\bot}(\rho,\bar{\rho})\\
\rest_{(f)\bot}(\rho,\bar{\rho}) & \bar{\Pi}_{(f)\bot}(\rho,\bar{\rho})
\end{array}\right)\left(\zwek{\delta\rho}{\delta\bar{\rho}}\right)
\end{eqnarray*}
The projection property $\mc P_{(f)}\circ\mc P_{(f)}=\mc P_{(f)}$
for the tangent bundle map then is equivalent to the three equations
(\ref{LinProjProp}), (\ref{linprojprop}) and $P_{(f)}\circ P_{(f)}=P_{(f)}$
(\ref{proj-prop2}).$\quad\fussend$%
} 
\begin{eqnarray}
\Pi_{(f)\bot}(\lambda,\bar{\lambda})\Pi_{(f)\bot}(\rho,\bar{\rho}) & = & \Pi_{\bot(f)}(\rho,\bar{\rho})\qquad\label{LinProjProp}\\
\Pi_{(f)\bot}(\lambda,\bar{\lambda})\rest_{(f)\bot}(\rho,\bar{\rho}) & = & \rest_{\bot(f)}(\rho,\bar{\rho})\label{linprojprop}
\end{eqnarray}

\item ... have trace
\begin{equation}
\tr\Pi_{(f)\bot}(\rho,\bar{\rho})=\Bigl(11-\tfrac{4(1-\sqrt{1-\auxi})}{1+\sqrt{1-\auxi}}\Bigr)f\left(\auxi\right)-2(1-\auxi)\left(1-\sqrt{1-\auxi}\right)f'\left(\auxi\right)\label{LinProjTrace}
\end{equation}
which in general reduces only on the constraint surface $\rho=\lambda,\:\auxi=0$
to $\tr\Pi_{\bot}=11$.
\end{enumerate}
\end{prop}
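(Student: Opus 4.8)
Every part is obtained by direct computation from the definition (\ref{ProjGeneral}), so I would proceed as follows. For the explicit Jacobian of item~1 I would simply apply the product and chain rule to (\ref{ProjGeneral}), the only inputs being
\[
\partial_{\rho^\beta}\auxii^a=\frac{2(\gamma^a\rho)_\beta}{(\rho\bar\rho)}-\frac{\auxii^a\bar\rho_\beta}{(\rho\bar\rho)},\qquad\partial_{\bar\rho_\beta}\auxii^a=-\frac{\auxii^a\rho^\beta}{(\rho\bar\rho)},
\]
the induced derivatives of $\auxi=\tfrac{1}{2}\auxii^a\bar\auxii_a$, and $\frac{d}{d\auxi}(1+\sqrt{1-\auxi})^{-1}=\frac{1}{2\sqrt{1-\auxi}(1+\sqrt{1-\auxi})^2}$; collecting terms — noting that $\partial_{\bar\rho_\beta}$ also hits the explicit $\bar\rho$ in $(\bar\rho\gamma_a)^\alpha$, which produces the $\auxii^a\gamma_a^{\alpha\beta}$ piece — yields (\ref{LinProjGen}), (\ref{linprojgen}). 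For item~2 I would substitute the inverse reparametrisation (\ref{Proj-inv}), writing $\rho^\alpha$ as a linear combination of $\lambda^\alpha$ and $(\bar\lambda\gamma_a)^\alpha$, and use the contractions (\ref{Proj-modulus}), (\ref{Proj-modulus-repeated}), (\ref{PbarRho}), (\ref{PgammaRho}) together with the constraints (\ref{zsquarezero}), (\ref{ProjZsquareZero}) to turn every $\rho$-bilinear into a $\lambda$-bilinear, obtaining (\ref{LinProjGenLam}), (\ref{linprojgenlam}). These two rewritings, with their many $\gamma$-matrix contractions, are by far the longest and most error-prone steps; everything afterwards is short.

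Item~3 needs no reference to the explicit matrices: differentiate the identity (\ref{proj-prop1}), $P_{(f)}^\alpha\gamma^c_{\alpha\beta}P_{(f)}^\beta\equiv0$, with respect to $\rho^\gamma$ and to $\bar\rho_\gamma$; since $\gamma^c$ is symmetric this gives $\lambda^\alpha\gamma^c_{\alpha\beta}\Pi_{(f)\bot\gamma}^\beta=0$ and $\lambda^\alpha\gamma^c_{\alpha\beta}\rest_{(f)\bot}^{\beta\gamma}=0$, which are (\ref{LinProjGamma-orth}), (\ref{linprojgamma-orth}) and hence $(\lambda\gamma^c\delta\lambda)=0$. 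The first half of item~4 follows by setting $\auxi=\auxii^a=0$ in (\ref{LinProjGenLam}), (\ref{linprojgenlam}): then $\sqrt{1-\auxi}=1$ and $f(0)=1$, while every term other than $\delta^\alpha_\beta-\tfrac{1}{2}(\gamma_a\bar\lambda)^\alpha(\lambda\gamma^a)_\beta/(\lambda\bar\lambda)$ carries an explicit factor $\auxii^a$, $\bar\auxii_a$ or $(1-\sqrt{1-\auxi})$ and drops out, giving (\ref{Proj-matrix-onshell}), (\ref{proj-matrix-on-shell}).

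For item~5 I would apply the chain rule to the composition identity $P_{(f)}^\alpha\bigl(P_{(f)}(\rho,\bar\rho),\overline{P_{(f)}(\rho,\bar\rho)}\bigr)=P_{(f)}^\alpha(\rho,\bar\rho)$, which holds because $P_{(f)}(\rho,\bar\rho)$ is pure and by (\ref{proj-prop2}). Since by the first half of item~4 the full Jacobian at the pure spinor $\lambda=P_{(f)}(\rho,\bar\rho)$ is block-diagonal with diagonal blocks $\Pi_\bot,\bar\Pi_\bot$ and vanishing off-diagonal blocks, the chain rule collapses to $\Pi_\bot\,\Pi_{(f)\bot}(\rho,\bar\rho)=\Pi_{(f)\bot}(\rho,\bar\rho)$ and $\Pi_\bot\,\rest_{(f)\bot}(\rho,\bar\rho)=\rest_{(f)\bot}(\rho,\bar\rho)$, i.e. (\ref{LinProjProp}), (\ref{linprojprop}); specialising to $\rho=\lambda$ yields the remaining claim (\ref{PiSquare}) of item~4 (alternatively this is the standard idempotency of the Oda--Tonin projector, provable directly from the $10$d identity $\gamma^c_{(\alpha\beta}\gamma_{c\,\gamma)\delta}=0$ and $(\lambda\gamma^c\lambda)=0$, cf.\ section~\ref{sec:projectorsOnSurface}).

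Finally, item~6 is the one genuine computation left. Setting $\alpha=\beta$ in (\ref{LinProjGenLam}) and using $\delta^\alpha_\alpha=16$ and $\gamma^a\gamma_a=10\cdot\one$ (so $(\gamma_a\bar\lambda)^\alpha(\lambda\gamma^a)_\alpha=10(\lambda\bar\lambda)$), the first line contributes $11f(\auxi)$; the line proportional to $\lambda^\alpha\bar\auxii_c(\lambda\gamma^c)_\beta$ is killed by $\lambda^\alpha(\lambda\gamma^c)_\alpha=(\lambda\gamma^c\lambda)=0$; the $\lambda^\alpha\bar\lambda_\beta$ term contributes $-2(1-\auxi)(1-\sqrt{1-\auxi})f'(\auxi)$; and for the quartic term the $10$d identity $\gamma^b\gamma_d\gamma^c\gamma_b=4\delta^c_d\,\one+6\,\gamma_d\gamma^c$, then $\gamma^c\gamma_d=2\delta^c_d-\gamma_d\gamma^c$ together with $\auxii^d(\lambda\gamma_d)_\alpha=0$ from (\ref{ProjZsquareZero}) and $\auxii^a\bar\auxii_a=2\auxi$, give $\bar\auxii_c\auxii^d(\gamma^c\gamma_b\lambda)^\alpha(\bar\lambda\gamma^b\gamma_d)_\alpha=32\auxi(\lambda\bar\lambda)$; using $\auxi/(1+\sqrt{1-\auxi})^2=(1-\sqrt{1-\auxi})/(1+\sqrt{1-\auxi})$ this assembles into (\ref{LinProjTrace}), and at $\auxi=0$ it reduces to $\tr\Pi_\bot=16-5=11$. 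The main obstacle is thus not conceptual but the bookkeeping — controlling all the terms in items~1--2 and tracking the $\gamma$-contractions in item~6.
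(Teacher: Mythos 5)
Your proposal is correct and follows essentially the same route as the paper's proof in appendix B.2: direct differentiation using the derivatives of $\auxii^{a}$ and $\auxi$ for items 1--2, varying the purity identity (\ref{proj-prop1}) for item 3, substitution at $\auxi=\auxii^{a}=0$ for item 4, and the chain rule on $P_{(f)}\circ P_{(f)}=P_{(f)}$ with the vanishing off-diagonal block for item 5. The only cosmetic difference is that you take the trace of the $\lambda$-form (\ref{LinProjGenLam}) whereas the paper traces the $\rho$-form (\ref{LinProjGen}) and mentions your computation only as a consistency check; your $\gamma$-contraction giving $32\auxi(\lambda\bar{\lambda})$ is correct.
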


The proof of this proposition is given in appendix \ref{app:proof2}
on page \pageref{app:proof2}.

\rem{There is no $f$ such that $\Pi_{(f)\bot}(\rho,\bar{\rho})=\Pi_{(f)\bot}(P(\rho,\bar{\rho}),\bar{P}(\rho,\bar{\rho}))$,
because the right side is Hermitian, so also the left side should
be. But we'll later see that there is a unique $f$ for a Hermitian
$\Pi$where this relation does not hold.}

\paragraph{Remarks}
\begin{itemize}
\item \label{rmk:projection-matrix}The projection matrix on the constraint
surface (\ref{Proj-matrix-onshell}) is the transpose of the projection
matrix $(1-K)$ given in equations (2.11) and (2.9) of \cite{Berkovits:2010zz}.
For the non-covariant projection \eqref{Pnoncov} we would replace
\eqref{Proj-matrix-onshell} by 
\begin{equation}
\Pi_{\bot\beta}^{\alpha}\equiv\Pi_{(f,\bar{\chi})\bot\beta}^{\quad\alpha}(\lambda)=\delta_{\beta}^{\alpha}-\tfrac{1}{2}\frac{(\gamma_{a}\bar{\chi})^{\alpha}(\lambda\gamma^{a})_{\beta}}{(\lambda\bar{\chi})}
\end{equation}
which is the transpose of $(1-K)$ in equation (17) of \cite{Oda:2001zm}
(see also (7) in \cite{Oda:2004bg}, (14) in \cite{Oda:2005sd} or
more recently (2.12) in \cite{Oda:2007ak}). The relation of this
non-covariant $\Pi_{(f,\bar{\chi})\bot\beta}^{\quad\alpha}(\lambda)$
to (\ref{Pnoncov}) is via the derivative like in the covariant case.
This is particularly obvious in the case where $\bar{\chi}_{\alpha}$
is pure and one can start from (\ref{non-cov-integrated}):
\begin{eqnarray}
\partial_{\rho^{\beta}}P_{(f,\bar{\chi})}^{\alpha}(\rho) & = & \delta_{\beta}^{\alpha}-\tfrac{1}{2}\frac{(\gamma_{a}\bar{\chi})^{\alpha}(\rho\gamma^{a})_{\beta}}{(\rho\bar{\chi})}+\tfrac{1}{4}\frac{(\rho\gamma^{a}\rho)(\bar{\chi}\gamma_{a})^{\alpha}\bar{\chi}_{\beta}}{(\rho\bar{\chi})^{2}}=\qquad\\
 & \stackrel{{\rm if\,}\rho=\lambda}{=} & \delta_{\beta}^{\alpha}-\tfrac{1}{2}\frac{(\gamma_{a}\bar{\chi})^{\alpha}(\lambda\gamma^{a})_{\beta}}{(\lambda\bar{\chi})}
\end{eqnarray}

\item Concerning the 6th statement of the proposal, there exists an $f$
for which $\tr\Pi_{(f)\bot}(\rho,\bar{\rho})=11\quad\forall\rho^{\alpha}$.
This solution is derived in footnote \ref{fn:differentialEqForf}
on page \pageref{fn:differentialEqForf}. It could therefore be that
this solution obeys a strict linear projector property $\Pi_{(f)\bot}(\rho,\bar{\rho})^{2}=\Pi_{(f)\bot}(\rho,\bar{\rho})$
as this is the case for $\Pi_{(1)\bot}(\rho,\bar{\rho})$ within the
toy-model in (\ref{Pi1quad}) on page \pageref{Pi1quad}.
\item The projection matrix maps general variations $\delta\rho^{\alpha}$
to some $\delta\lambda^{\alpha}$ that are $\gamma-$orthogonal to
$\lambda^{\alpha}\equiv P_{(f)}^{\alpha}(\rho,\bar{\rho})$, i.e.
such that $\lambda\gamma^{a}\delta\lambda=0$. As one can easily check,
any pure spinor variable together with a variation with the just mentioned
property $\lambda\gamma^{a}\delta\lambda=0$ obeys 
\begin{equation}
\delta\lambda^{\alpha}=(\Pi_{\bot}\delta\lambda)^{\alpha}\label{deltalambda}
\end{equation}
with $\Pi_{\bot}=\one-\tfrac{1}{2}\frac{(\gamma_{a}\bar{\lambda})\otimes(\lambda\gamma^{a})}{(\lambda\bar{\lambda})}$.
In particular for $\lambda^{\alpha}\equiv P_{(f)}^{\alpha}(\rho,\bar{\rho})$
this follows from \eqref{bigMatrix} together with \eqref{Proj-matrix-onshell}
and \eqref{proj-matrix-on-shell} at $\rho=\lambda$. A variation
of the constrained variable $\lambda$ thus can be written as 
\begin{eqnarray}
\delta\lambda^{\alpha}\partial_{\lambda^{\alpha}} & = & (\Pi_{\bot}\delta\lambda)^{\alpha}\partial_{\lambda^{\alpha}}=\delta\lambda^{\alpha}(\Pi_{\bot}^{T}\partial_{\lambda})_{\alpha}
\end{eqnarray}
The resulting \textbf{covariant derivative} $(\Pi_{\bot}^{T}\partial_{\lambda})_{\alpha}$
contains the transposed projection matrix and it leaves the constraint
invariant
\begin{equation}
(\Pi_{\bot}^{T}\partial_{\lambda})_{\alpha}(\lambda\gamma^{c}\lambda)=0
\end{equation}
In the sigma model the antighost $\omega_{z\alpha}$ plays the role
of the partial derivative $\partial_{\lambda^{\alpha}}$. As was noted
in the above cited references (e.g. \cite{Oda:2004bg} or \cite{Berkovits:2010zz}),
the expression $(\Pi_{\bot}^{T}\omega_{z})_{\alpha}$ is gauge invariant.
Gauge invariance means that via Poisson bracket or commutator the
constraint $\lambda\gamma^{c}\lambda$ (generating the gauge transformation)
annihilates $(\Pi_{\bot}^{T}\omega_{z})_{\alpha}$, or equivalently
the other way round, $(\Pi_{\bot}^{T}\omega_{z})_{\alpha}$ annihilates
the constraint. The latter point of view makes it a covariant derivative.
\end{itemize}

\section{Some natural projection matrices on the constraint surface}

\label{sec:projectorsOnSurface}Let us give in the following an overview
over several projection matrices acting on the tangent or cotangent
space of the constraint surface and list some of their properties.
They will appear frequently in the remaining discussion and therefore
a summary at this point will be very convenient. We are restricting
to the constraint surface, as we will later present a way to express
the linear projectors away from the surface in terms of those at the
surface. 

Let us start with our familiar
\begin{equation}
\Pi_{\bot}=\one-\frac{(\gamma^{a}\bar{\lambda})\otimes(\lambda\gamma_{a})}{2(\bar{\lambda}\lambda)}\quad,\quad\Pi_{\bot}^{\dagger}=\Pi_{\bot}\label{Piperp}
\end{equation}
It projects to an eleven dimensional subspace ($\tr\Pi_{\bot}=11$)
and further obeys 
\begin{eqnarray}
\left(\Pi_{\bot}\right)^{2} & \stackrel{(\ref{PiSurfIdem})}{=} & \Pi_{\bot}\\
\Pi_{\bot}\lambda & = & \lambda\quad,\quad\bar{\lambda}\Pi_{\bot}=\bar{\lambda}\nonumber \\
(\lambda\gamma^{c}\Pi_{\bot}) & = & 0=(\Pi_{\bot}\gamma^{c}\bar{\lambda})
\end{eqnarray}
The last line shows again that it projects to tangent `vectors' (spinors)
that are $\gamma$-orthogonal to $\lambda$. 

The unit matrix minus a projection matrix is always another projection
matrix that maps to the complementary subspace. It is thus not surprising
that
\begin{equation}
\Pi_{\Vert}\equiv\one-\Pi_{\bot}=\frac{(\gamma^{a}\bar{\lambda})\otimes(\lambda\gamma_{a})}{2(\bar{\lambda}\lambda)}\quad,\quad\Pi_{\Vert}^{\dagger}=\Pi_{\Vert}\label{Pipar}
\end{equation}
maps to a 5-dimensional space ($\tr\Pi_{\Vert}=5$) and obeys 
\begin{eqnarray}
\left(\Pi_{\Vert}\right)^{2} & = & \Pi_{\Vert}\label{PiParsquare}\\
\Pi_{\Vert}\lambda & = & 0=\bar{\lambda}\Pi_{\Vert}\\
(\lambda\gamma^{c}\Pi_{\Vert}) & = & (\lambda\gamma^{c})\quad,\quad(\Pi_{\Vert}\gamma^{c}\bar{\lambda})=(\gamma^{c}\bar{\lambda})
\end{eqnarray}
From the last line we would say that $\Pi_{\Vert}$ maps to spinors
which are $\gamma$-`parallel' to $\lambda^{\alpha}$. The matrices
$\Pi_{\bot}$ and $\Pi_{\Vert}$ are of course orthogonal to each
other 
\begin{equation}
\Pi_{\Vert}\Pi_{\bot}=\Pi_{\bot}\Pi_{\Vert}=0\label{PiparPiperp}
\end{equation}
Another projection matrix that will frequently appear is 
\begin{eqnarray}
\Pi_{\lambda} & \equiv & \frac{\lambda\otimes\bar{\lambda}}{(\bar{\lambda}\lambda)}\quad,\quad\Pi_{\lambda}^{\dagger}=\Pi_{\lambda}\label{Pilam}
\end{eqnarray}
It maps to the 1-dimensional subspace ($\tr\Pi_{\lambda}=1$) spanned
by $\lambda$ itself. Apart from that it shares several properties
with $\Pi_{\bot}$: 
\begin{eqnarray}
(\Pi_{\lambda})^{2} & = & \Pi_{\lambda}\\
\Pi_{\lambda}\lambda & = & \lambda\quad,\quad\bar{\lambda}\Pi_{\lambda}=\bar{\lambda}\\
(\lambda\gamma^{c}\Pi_{\lambda}) & = & 0=(\Pi_{\lambda}\gamma^{c}\bar{\lambda})
\end{eqnarray}
And then there is of course $\one-\Pi_{\lambda}$ with trace $15$
and sharing many properties with $\Pi_{\Vert}$:
\begin{eqnarray}
(\one-\Pi_{\lambda})\lambda & = & 0=\bar{\lambda}(\one-\Pi_{\lambda})\\
(\lambda\gamma^{c}(\one-\Pi_{\lambda})) & = & (\lambda\gamma^{c})\quad,\quad((\one-\Pi_{\lambda})\gamma^{c}\bar{\lambda})=(\gamma^{c}\bar{\lambda})
\end{eqnarray}
Finally we could take the transpose or complex conjugate of each
of the above projection matrices. As they are all Hermitian, the result
is the same: 
\begin{eqnarray}
\bar{\Pi}_{\bot}=\Pi_{\bot}^{T} & = & \one-\frac{(\gamma^{a}\lambda)\otimes(\bar{\lambda}\gamma_{a})}{2(\bar{\lambda}\lambda)}\\
\bar{\Pi}_{\Vert}=\Pi_{\Vert}^{T} & = & \frac{(\gamma^{a}\lambda)\otimes(\bar{\lambda}\gamma_{a})}{2(\bar{\lambda}\lambda)}\\
\bar{\Pi}_{\lambda}=\Pi_{\lambda}^{T} & = & \frac{\bar{\lambda}\otimes\lambda}{(\bar{\lambda}\lambda)}
\end{eqnarray}
They have the same trace as the original one and all the other properties
are obtained by interchanging $\lambda$ and $\bar{\lambda}$. 

It will be useful to see that one can express $\Pi_{\lambda}$ in
terms of $\bar{\Pi}_{\Vert}$ and vice versa%
\footnote{\label{fn:PiparPilamProofs}(\ref{PiparInTermsOfPilam}) is obvious
from the definitions (\ref{Pipar}) and (\ref{Pilam}) of $\Pi_{\Vert}$
and $\Pi_{\lambda}$. Contracting now (\ref{PiparInTermsOfPilam})
from the right with $\gamma_{b}\auxii^{b}$, one obtains 
\[
(\Pi_{\Vert}\gamma_{b})\auxii^{b}=\tfrac{1}{2}(\gamma^{a}\bar{\Pi}_{\lambda}\underbrace{\gamma_{a}\gamma_{b}}_{-\gamma_{b}\gamma_{a}\lqn{{\scriptstyle +2\eta_{ab}}}})\auxii^{b}=-\tfrac{1}{2}(\gamma^{a}\underbrace{\bar{\Pi}_{\lambda}\gamma_{b}\gamma_{a})\auxii^{b}}_{=0\quad\mbox{\tiny\eqref{ProjZsquareZero}}}+\auxii^{a}(\gamma_{a}\bar{\Pi}_{\lambda})
\]
Together with its complex conjugate version, this gives (\ref{PiparZisZPibarlam}).
Equation (\ref{ZbarPiparIsPibarlamZbar}) is then simply obtained
by Hermitian conjugation, using that the projection matrices and the
gamma-matrices are all Hermitian. Finally, contracting the second
equation of (\ref{PiparZisZPibarlam}) from the left with $\tfrac{1}{4\auxi}\auxii^{b}\gamma_{b}$,
we obtain 
\[
\tfrac{1}{4\auxi}\auxii^{b}(\gamma_{b}\bar{\Pi}_{\Vert}\gamma^{a})\bar{\auxii}_{a}=\tfrac{1}{4\auxi}\auxii^{b}\bar{\auxii}_{a}(\underbrace{\gamma_{b}\gamma^{a}}_{-\gamma^{a}\gamma_{b}\lqn{{\scriptstyle +2\delta_{b}^{a}}}}\Pi_{\lambda})=-\tfrac{1}{4\auxi}\bar{\auxii}_{a}\underbrace{\auxii^{b}(\gamma^{a}\gamma_{b}\Pi_{\lambda}}_{=0\quad\mbox{\tiny\eqref{zsquarezero}}})+\underbrace{\tfrac{1}{2\auxi}\auxii^{a}\bar{\auxii}_{a}}_{=1}\Pi_{\lambda}
\]
which proves (\ref{PilamInTermsOfPipar}).$\quad\fussend$%
}:
\begin{eqnarray}
\Pi_{\Vert} & = & \tfrac{1}{2}(\gamma^{a}\bar{\Pi}_{\lambda}\gamma_{a})\label{PiparInTermsOfPilam}\\
\Pi_{\lambda} & = & \tfrac{1}{4\auxi}\auxii^{b}(\gamma_{b}\bar{\Pi}_{\Vert}\gamma^{c})\bar{\auxii}_{c}\label{PilamInTermsOfPipar}
\end{eqnarray}
As intermediate steps between the above equations, we also have 
\begin{eqnarray}
(\Pi_{\Vert}\gamma_{a})\auxii^{a} & = & \auxii^{a}(\gamma_{a}\bar{\Pi}_{\lambda})\quad,\quad(\bar{\Pi}_{\Vert}\gamma^{a})\bar{\auxii}_{a}=\bar{\auxii}_{a}(\gamma^{a}\Pi_{\lambda})\label{PiparZisZPibarlam}\\
\bar{\auxii}_{a}(\gamma^{a}\Pi_{\Vert}) & = & (\bar{\Pi}_{\lambda}\gamma^{a})\bar{\auxii}_{a}\quad,\quad\auxii^{a}(\gamma_{a}\bar{\Pi}_{\Vert})=(\Pi_{\lambda}\gamma_{a})\auxii^{a}\label{ZbarPiparIsPibarlamZbar}
\end{eqnarray}
Note that the same relations hold if one replaces $\auxii^{a}$ and
$\bar{\auxii}_{a}$ by $(\gamma^{a}\lambda)_{\alpha}$ and $(\gamma_{a}\bar{\lambda})^{\alpha}$
where the proof works just like the one in footnote \ref{fn:PiparPilamProofs}:
\begin{eqnarray}
(\Pi_{\Vert}\gamma_{a})(\gamma^{a}\lambda)_{\alpha} & = & (\gamma^{a}\lambda)_{\alpha}(\gamma_{a}\bar{\Pi}_{\lambda})\quad,\quad(\bar{\Pi}_{\Vert}\gamma^{a})(\gamma_{a}\bar{\lambda})^{\alpha}=(\gamma_{a}\bar{\lambda})^{\alpha}(\gamma^{a}\Pi_{\lambda})\qquad\label{PiparZisZPibarlam2}\\
(\gamma_{a}\bar{\lambda})^{\alpha}(\gamma^{a}\Pi_{\Vert}) & = & (\bar{\Pi}_{\lambda}\gamma^{a})(\gamma_{a}\bar{\lambda})^{\alpha}\quad,\quad(\gamma^{a}\lambda)_{\alpha}(\gamma_{a}\bar{\Pi}_{\Vert})=(\Pi_{\lambda}\gamma_{a})(\gamma^{a}\lambda)_{\alpha}\qquad\label{ZbarPiparIsPibarlamZbar2}
\end{eqnarray}
In addition one can easily check from the definitions that we have
\begin{eqnarray}
\Pi_{\bot}\Pi_{\lambda} & = & \Pi_{\lambda}\Pi_{\bot}=\Pi_{\lambda}\label{PiperpPilam}\\
\Pi_{\bot}\gamma^{c}\bar{\Pi}_{\lambda} & = & \bar{\Pi}_{\lambda}\gamma^{c}\Pi_{\bot}=\Pi_{\lambda}\gamma^{c}\bar{\Pi}_{\bot}=\bar{\Pi}_{\bot}\gamma^{c}\Pi_{\lambda}=0\label{PiperpGammaPibarlam}
\end{eqnarray}
We will later also need the variation of $\Pi_{\bot}=\one-\tfrac{1}{2}\frac{(\gamma_{a}\bar{\lambda})\otimes(\lambda\gamma^{a})}{(\lambda\bar{\lambda})}$
and some of the above formulas simplify this calculation: 
\begin{equation}
\delta\Pi_{\bot\beta}^{\alpha}=-\tfrac{(\gamma_{a}\bar{\lambda})^{\alpha}}{2(\lambda\bar{\lambda})}\underbrace{\left(\gamma_{\beta\gamma}^{a}-\frac{(\lambda\gamma^{a})_{\beta}\bar{\lambda}_{\gamma}}{(\lambda\bar{\lambda})}\right)}_{\gamma_{\beta\delta}^{a}\left(\one-\Pi_{(\lambda)}\right)^{\delta}\tief{\gamma}}\delta\lambda^{\gamma}-\delta\bar{\lambda}_{\gamma}\underbrace{\left(\gamma_{a}^{\gamma\alpha}-\lambda^{\gamma}\frac{(\gamma_{a}\bar{\lambda})^{\alpha}}{(\lambda\bar{\lambda})}\right)}_{\left(\one-\Pi_{(\lambda)}\right)^{\gamma}\tief{\delta}\gamma_{a}^{\delta\alpha}}\tfrac{(\lambda\gamma^{a})_{\beta}}{2(\lambda\bar{\lambda})}\quad
\end{equation}
Using (\ref{PiparZisZPibarlam2}) and (\ref{ZbarPiparIsPibarlamZbar2})
we obtain 
\begin{eqnarray}
\delta\Pi_{\bot\beta}^{\alpha} & = & -\tfrac{(\gamma_{a}\bar{\lambda})^{\alpha}}{2(\lambda\bar{\lambda})}(\bar{\Pi}_{\bot}\gamma^{a})_{\beta\gamma}\delta\lambda^{\gamma}-\delta\bar{\lambda}_{\gamma}(\gamma_{a}\bar{\Pi}_{\bot})^{\gamma\alpha}\tfrac{(\lambda\gamma^{a})_{\beta}}{2(\lambda\bar{\lambda})}\label{deltaPiperp}
\end{eqnarray}

\section{Hermitian projection matrix and the projection potential}

The matrix $\Pi_{\bot}=\one-\frac{(\gamma^{a}\bar{\lambda})\otimes(\lambda\gamma_{a})}{2(\bar{\lambda}\lambda)}$
(\ref{Piperp}),(\ref{Proj-matrix-onshell}) on the constraint surface
is Hermitian. We will later in the field theory application see that
this is a very useful property. It is therefore natural to ask, whether
this can be realized also off the constraint surface. Off the constraint
surface we have a non-vanishing contribution $\rest_{(f)\bot}$ in
(\ref{bigMatrix}) and therefore should consider the complete matrix
$\left(\begin{array}{cc}
\Pi_{(f)\bot} & \rest_{(f)\bot}\\
\bar{\rest}_{(f)\bot} & \bar{\Pi}_{(f)\bot}
\end{array}\right)$ instead of only the block $\Pi_{(f)\bot}$.

\label{sec:hermitean}\begin{prop}[Hermiticity]\label{prop:hermiticity}Remember
the definition in equation (\ref{hdef-first}) of the function $h$
\begin{equation}
h(\auxi)\equiv\frac{1+\sqrt{1-\auxi}}{2\sqrt{1-\auxi}}\mbox{ for }\auxi\in[0,1[\label{h-def}
\end{equation}
and assume that the function $f$ which defines the projection $P_{(f)}^{\alpha}$
is differentiable%
\footnote{At $0$ differentiability is again understood in the sense of footnote
\ref{fn:differentiable} on page \pageref{fn:differentiable}.$\quad\fussend$%
} in an interval $I\subset[0,b[,\quad b\leq1$ (so in a neighbourhood
of the constraint surface $\auxi=0\in I$) with continuous $f'$ at
least at $0$. Then the following statements hold:
\begin{enumerate}
\item The matrix $\left(\begin{array}{cc}
\Pi_{(f)\bot\beta}^{\alpha}(\rho,\bar{\rho}) & \rest_{(f)\bot}^{\alpha\beta}(\rho,\bar{\rho})\\
\bar{\rest}_{(f)\bot\alpha\beta}(\rho,\bar{\rho}) & \bar{\Pi}_{(f)\bot\alpha}\hoch{\beta}(\rho,\bar{\rho})
\end{array}\right)$ is Hermitian for all $\rho$ where $\auxi\equiv\frac{(\rho\gamma^{a}\rho)(\bar{\rho}\gamma_{a}\bar{\rho})}{2(\rho\bar{\rho})}\in I$
if and only if $f=h$ in $I$. \\
For the blocks of the matrix, this means that 
\begin{eqnarray}
\Pi_{(h)\bot}^{\dagger} & = & \Pi_{(h)\bot}\\
\rest_{(h)\bot}^{T} & = & \rest_{(h)\bot}
\end{eqnarray}

\item There exists a potential 
\begin{equation}
\Phi(\rho,\bar{\rho})\equiv\tfrac{(\rho\bar{\rho})}{2}(1+\sqrt{1-\auxi})\label{Potential}
\end{equation}
such that 
\begin{equation}
P_{(h)}^{\alpha}=\partial_{\bar{\rho}_{\alpha}}\Phi\quad,\quad\bar{P}_{(h)\alpha}=\partial_{\rho^{\alpha}}\Phi
\end{equation}

\item The potential $\Phi$ can be written as
\begin{equation}
\Phi(\rho,\bar{\rho})=P_{(h)}^{\alpha}(\rho,\bar{\rho})\bar{P}_{(h)\alpha}(\rho,\bar{\rho})\label{PhiIsModSqu}
\end{equation}

\end{enumerate}
\end{prop}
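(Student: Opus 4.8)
The plan is to establish the three statements in the order 2, 1, 3, since the potential of item 2 gives the quickest route to the Hermiticity of item 1, while item 3 follows at the very end from a formula already proven. I would first prove item 2 by a direct computation of the Wirtinger derivative $\partial_{\bar\rho_\alpha}\Phi$ of the real scalar $\Phi=\tfrac{(\rho\bar\rho)}{2}(1+\sqrt{1-\auxi})$. Using $\partial_{\bar\rho_\alpha}(\rho\bar\rho)=\rho^\alpha$ and $\partial_{\bar\rho_\alpha}(\bar\rho\gamma_b\bar\rho)=2(\gamma_b\bar\rho)^\alpha$, one finds $\partial_{\bar\rho_\alpha}\auxi=\tfrac{1}{(\rho\bar\rho)}(\auxii^b(\gamma_b\bar\rho)^\alpha-2\auxi\rho^\alpha)$, and feeding this through the chain rule the coefficient of $\rho^\alpha$ collapses, via $\tfrac{1}{2}(1+\sqrt{1-\auxi})+\tfrac{\auxi}{2\sqrt{1-\auxi}}=\tfrac{1+\sqrt{1-\auxi}}{2\sqrt{1-\auxi}}=h(\auxi)$, to exactly $P_{(h)}^\alpha$ as written in \eqref{ProjGeneral} with $f=h$. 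The companion relation $\bar P_{(h)\alpha}=\partial_{\rho^\alpha}\Phi$ is then just the complex conjugate, since $\Phi$ is real.

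Given item 2, the ``if'' direction of item 1 is immediate. Because the four blocks of the matrix in \eqref{bigMatrix} are the second Wirtinger derivatives of the single real scalar $\Phi$ --- explicitly $\Pi_{(h)\bot\beta}^\alpha=\partial_{\rho^\beta}\partial_{\bar\rho_\alpha}\Phi$ and $\rest_{(h)\bot}^{\alpha\beta}=\partial_{\bar\rho_\beta}\partial_{\bar\rho_\alpha}\Phi$ --- the whole matrix is the Hessian of a real function. Commutativity of the mixed partials gives $\rest_{(h)\bot}^{\alpha\beta}=\rest_{(h)\bot}^{\beta\alpha}$ (so $\rest^T=\rest$), while reality of $\Phi$ together with commuting partials gives $(\Pi_{(h)\bot\beta}^\alpha)^*=\partial_{\bar\rho_\beta}\partial_{\rho^\alpha}\Phi=\Pi_{(h)\bot\alpha}^\beta$ (so $\Pi^\dagger=\Pi$); hence the full matrix is Hermitian, which is precisely the block statement asserted.

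For the ``only if'' direction I would work from the explicit form \eqref{linprojgenlam} of $\rest_{(f)\bot}$ and extract its antisymmetric part under $\alpha\leftrightarrow\beta$. The terms $\auxii^a\gamma_a^{\alpha\beta}$ and $\lambda^\alpha\lambda^\beta$ are manifestly symmetric, so the only antisymmetric contributions come from the two structures $\auxii^a(\gamma_a\bar\lambda)^\alpha\lambda^\beta$ (with coefficient $\tfrac{f}{2(1+\sqrt{1-\auxi})(\lambda\bar\lambda)}$) and $\lambda^\alpha\auxii^c(\gamma_c\bar\lambda)^\beta$ (with coefficient $\tfrac{(1-\auxi)f'}{(\lambda\bar\lambda)}$), which are transposes of one another. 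Requiring their antisymmetric combination to vanish forces the first-order ODE $(1-\auxi)f'(\auxi)=\tfrac{f(\auxi)}{2(1+\sqrt{1-\auxi})}$. Together with the normalization $f(0)=1$ this linear ODE has a unique solution, and one checks by direct differentiation (writing $u=\sqrt{1-\auxi}$, so that $h'=\tfrac{1}{4u^3}$) that $f=h$ satisfies both the equation and $h(0)=1$; hence Hermiticity implies $f=h$. I would also record that imposing $\Pi^\dagger=\Pi$ from \eqref{LinProjGenLam} produces only conditions consistent with this same ODE, which is guaranteed a posteriori by the ``if'' direction already established.

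Finally, item 3 is a one-line substitution into the already-proven modulus-square formula \eqref{Proj-modulus}: setting $f=h$ and using $|h|^2=h^2=\tfrac{(1+\sqrt{1-\auxi})^2}{4(1-\auxi)}$ gives $P_{(h)}^\alpha\bar P_{(h)\alpha}=2(\rho\bar\rho)\tfrac{(1+\sqrt{1-\auxi})^2}{4(1-\auxi)}\tfrac{1-\auxi}{1+\sqrt{1-\auxi}}=\tfrac{(\rho\bar\rho)}{2}(1+\sqrt{1-\auxi})=\Phi$. The main obstacle is the ``only if'' direction: it rests on the claim that the two off-diagonal tensor structures above are linearly independent away from the constraint surface, so that their antisymmetric part cannot cancel in any other way, and on the careful bookkeeping needed to confirm that no further antisymmetric pieces hide in \eqref{linprojgenlam}. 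This independence is exactly what promotes the single scalar ODE into a genuine characterization $f=h$.
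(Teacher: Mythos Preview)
Your proof is correct and the overall logic is sound, but you take a genuinely different route from the paper, and it is worth spelling out the contrast.

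For item~2 you both proceed by direct differentiation of $\Phi$; your computation is essentially the same as the paper's. For item~3 you invoke the already-established modulus formula \eqref{Proj-modulus}, whereas the paper recomputes $P_{(h)}^\alpha\bar P_{(h)\alpha}$ from scratch; your shortcut is cleaner.

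The real divergence is in item~1. For the ``if'' direction the paper simply points to the explicit expressions \eqref{HermLinProjLam},\eqref{hermlinprojlam} and reads off Hermiticity and symmetry by inspection. Your argument --- that the full block matrix is the real Hessian $\partial^2\Phi/\partial(\rho,\bar\rho)^2$ and hence automatically Hermitian --- is more conceptual and avoids any tensor bookkeeping. This is a genuine simplification, and it explains \emph{why} Hermiticity goes hand in hand with the existence of a potential rather than treating the two as separate miracles.

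For the ``only if'' direction the paper works with $\Pi_{(f)\bot}-\Pi_{(f)\bot}^\dagger$, takes various contractions (trace, $\bar\lambda\cdot\Pi\cdot\lambda$, etc.) to first establish reality $\bar f=f$ and then extract the ODE \eqref{h-DGL}. You instead read off the antisymmetric part of $\rest_{(f)\bot}$ from \eqref{linprojgenlam}: only the two rank-one pieces $\auxii^a(\gamma_a\bar\lambda)^\alpha\lambda^\beta$ and $\lambda^\alpha\auxii^c(\gamma_c\bar\lambda)^\beta$ contribute, and matching their coefficients gives the same ODE directly, without first proving reality of $f$ (the ODE has real coefficients and $f(0)=1$, so reality of the solution is automatic). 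This is shorter, but as you correctly flag, it hinges on the linear independence of $\lambda^\alpha$ and $\auxii^a(\gamma_a\bar\lambda)^\alpha$. That independence is easy to confirm: if $\auxii^a(\gamma_a\bar\lambda)^\alpha=c\lambda^\alpha$ then contraction with $\bar\lambda_\alpha$ and $(\bar\lambda\gamma_a\bar\lambda)=0$ force $c=0$, while contraction with $\bar\auxii_b(\gamma^b\lambda)_\alpha$ gives $4\auxi(\lambda\bar\lambda)$, which is nonzero off the constraint surface. The paper's contraction strategy sidesteps this issue by reducing everything to scalar equations, at the cost of a longer computation; your approach trades that length for one extra independence check.
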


The proof of this proposition is given in appendix \ref{app:proof3}
on page \pageref{app:proof3}.

\paragraph{Remarks}
\begin{itemize}
\item The Kähler potential on the pure spinor space is given by $(\lambda\bar{\lambda})$.
The pontential (\ref{PhiIsModSqu}) is therefore the pullback of the
Kähler potential into the ambient space along the projection $P_{(h)}^{\alpha}$.
That does not explain, however, why this is at the same time a potential
for the projection itself. 
\item The function $h$ defined in (\ref{h-def}) obviously is divergent
at $\auxi=1$. It thus does not really belong to the class of functions
that we discussed in the previous propositions, as there we assumed
$f$ to be defined on the closed interval $[0,1]$. Statements of
these propositions that where about $\auxi=1$ thus need to be reconsidered.
In particular the projection $P_{(h)}^{\alpha}$ is not defined for
those Weyl spinors $\rho^{\alpha}$ for which $\auxi=1$. Thus $\auxi=1$
also drops out of the zero-locus \eqref{kerP}. In addition $h$ does
not have any zeroes, so the zero-locus is simply given by
\begin{equation}
P_{(h)}^{-1}(0)=\{0\}\label{kerPh}
\end{equation}
where $P_{(h)}^{\alpha}(0,0)$ is defined just via the limit $\lim_{\abs{\rho}\to0}P_{(h)}^{\alpha}(\rho,\bar{\rho})=0$
like in proposition \ref{prop:covproj}. Having no well-defined projection
at $\auxi=1$ seems odd, but a priori it is most important that $P_{(h)}^{\alpha}$
is well-behaved close to the constraint surface where $\auxi=0$ and
which is thus `far away' from the troublesome points. In addition
at least the absolute value of $P_{(h)}^{\alpha}(\rho,\bar{\rho})$
which is given according to \eqref{PhiIsModSqu} by the potential
\eqref{Potential} is well behaved at $\auxi=1$ and converges to
\begin{equation}
\lim_{\auxi\to1}\left(P_{(h)}^{\alpha}(\rho,\bar{\rho})\bar{P}_{(h)\alpha}(\rho,\bar{\rho})\right)=\tfrac{1}{2}(\rho\bar{\rho})\label{ModSqAtOne}
\end{equation}
\rem{There is thus the hope that the path-integral converges, even
\\
when integrating over the whole space of Weyl spinors. I.e. that the
$\auxi=1$ subspace does not contribute. }
\end{itemize}
Let us rewrite some of the formulas for our projection in the specific
case $f(\auxi)=h(\auxi)\equiv\frac{1+\sqrt{1-\auxi}}{2\sqrt{1-\auxi}}$
whose derivative is

\begin{equation}
h'(\auxi)=\tfrac{1}{4}\tfrac{1}{\sqrt{1-\auxi}^{3}}\label{hprime}
\end{equation}
The non-linear projection map (\ref{ProjGeneral}) becomes%
\footnote{\label{fn:anotherHermProp}Another property of the Hermitian projector
(though probably quite meaningless) is that the difference from $\rho^{\alpha}$
to its projection $P_{(h)}^{\alpha}(\rho,\bar{\rho})$ can be nicely
expressed in terms of $\bar{P}_{(h)\alpha}(\rho,\bar{\rho})$:
\[
\rho^{\alpha}-P_{(h)}^{\alpha}(\rho,\bar{\rho})=\tfrac{1}{2\left(1+\sqrt{1-\auxi}\right)}\auxii^{a}\gamma_{a}^{\alpha\beta}\bar{P}_{(h)\beta}(\rho,\bar{\rho})\qquad\fussend
\]
\frem{Also in the toy model we have $\rho^{I}-P_{(h)}^{I}\propto\bar{P}_{(h)}^{I}$}%
} 
\begin{equation}
P_{(h)}^{\alpha}(\rho,\bar{\rho})\equiv\frac{1+\sqrt{1-\auxi}}{2\sqrt{1-\auxi}}\rho^{\alpha}-\frac{\auxii^{a}(\bar{\rho}\gamma_{a})^{\alpha}}{4\sqrt{1-\auxi}}\label{hermProjector}
\end{equation}
with still (\ref{zxDefandfzero}) 
\begin{equation}
\auxii^{a}\equiv\frac{(\rho\gamma^{a}\rho)}{(\rho\bar{\rho})},\quad\bar{\auxii}_{a}\equiv\frac{(\bar{\rho}\gamma_{a}\bar{\rho})}{(\rho\bar{\rho})},\quad\auxi\equiv\tfrac{1}{2}\auxii^{a}\bar{\auxii}_{a}\label{xzDefRep}
\end{equation}
The modulus squared \eqref{Proj-modulus} is now according to \eqref{PhiIsModSqu}
in the proposition given by $\Phi$ in \eqref{Potential}. 

The linearized tangent space projection matrices (\ref{LinProjGen})
and (\ref{linprojgen}) become 
\begin{eqnarray}
\lqn{\Pi_{(h)\bot\beta}^{\alpha}(\rho,\bar{\rho})=}\nonumber \\
 & = & \tfrac{1+\sqrt{1-\auxi}}{2\sqrt{1-\auxi}}\delta_{\beta}^{\alpha}-\tfrac{1}{2\sqrt{1-\auxi}}\frac{(\gamma_{a}\bar{\rho})^{\alpha}(\gamma^{a}\rho)_{\beta}}{(\rho\bar{\rho})}-\tfrac{1}{8}\tfrac{1}{\sqrt{1-\auxi}^{3}}\frac{\auxii^{a}(\gamma_{a}\bar{\rho})^{\alpha}\bar{\auxii}_{b}(\gamma^{b}\rho)_{\beta}}{(\rho\bar{\rho})}+\nonumber \\
 &  & -\tfrac{1}{2}\tfrac{\auxi}{\sqrt{1-\auxi}^{3}}\frac{\rho^{\alpha}\bar{\rho}_{\beta}}{(\rho\bar{\rho})}+\tfrac{1}{4}\tfrac{1}{\sqrt{1-\auxi}^{3}}\frac{\rho^{\alpha}\bar{\auxii}_{b}(\gamma^{b}\rho)_{\beta}}{(\rho\bar{\rho})}+\tfrac{1}{4}\tfrac{1}{\sqrt{1-\auxi}^{3}}\frac{\auxii^{a}(\gamma_{a}\bar{\rho})^{\alpha}\bar{\rho}_{\beta}}{(\rho\bar{\rho})}\label{HermLinProj}\\
\lqn{\rest_{(h)\bot}^{\alpha\beta}(\rho,\bar{\rho})=}\nonumber \\
 & = & -\tfrac{1}{2}\tfrac{\auxi}{\sqrt{1-\auxi}^{3}}\frac{\rho^{\alpha}\rho^{\beta}}{(\rho\bar{\rho})}+\tfrac{1}{4}\tfrac{1}{\sqrt{1-\auxi}^{3}}\frac{\rho^{\alpha}\auxii^{b}(\gamma_{b}\bar{\rho})^{\beta}}{(\rho\bar{\rho})}+\tfrac{1}{4\sqrt{1-\auxi}^{3}}\frac{\auxii^{a}(\gamma_{a}\bar{\rho})^{\alpha}\rho^{\beta}}{(\rho\bar{\rho})}+\nonumber \\
 &  & -\tfrac{1}{8\sqrt{1-\auxi}^{3}}\frac{\auxii^{a}\auxii^{b}(\gamma_{a}\bar{\rho})^{\alpha}(\gamma_{b}\bar{\rho})^{\beta}}{(\rho\bar{\rho})}-\tfrac{1}{4\sqrt{1-\auxi}}\auxii^{a}\gamma_{a}^{\alpha\beta}\label{hermlinproj}
\end{eqnarray}
\rem{In the toy model (Hermitian case), $\pi_{(h)}$ was proportional
to $\bar{\pi}_{(h)}$. A similar relation holds also in the pure spinor
case: $\tfrac{1}{4\auxi}\bar{\auxii}_{b}\gamma^{b}\pi\gamma^{c}\bar{\auxii}_{c}=\bar{\pi}$
Corresponding calculation is hidden here. }\rembreak  The equation
(\ref{Proj-inv}) for expressing $\rho^{\alpha}$ in terms of $\lambda^{\alpha}\equiv P^{\alpha}(\rho,\bar{\rho})$
and $\bar{\lambda}_{\alpha}\equiv\bar{P}_{\alpha}(\rho,\bar{\rho})$
turns for $f(\auxi)=\frac{1+\sqrt{1-\auxi}}{2\sqrt{1-\auxi}}$ into
\begin{equation}
\rho^{\alpha}=\lambda^{\alpha}+\frac{1}{2\left(1+\sqrt{1-\auxi}\right)}\auxii^{a}\left(\bar{\lambda}\gamma_{a}\right)^{\alpha}\label{hermInverse}
\end{equation}
In the Hermitian case the rewriting of the projection matrices in
terms of $\lambda^{\alpha}$ and its complex conjugate as was done
in general in (\ref{LinProjGenLam}) and (\ref{linprojgenlam}) becomes
particularly useful, as the $\lambda\otimes(\lambda\gamma^{c})$-term
drops (in addition to the already missing $(\gamma^{c}\bar{\lambda})\otimes\bar{\lambda}$-terms)
and one is left with%
\footnote{\label{fn:xzrho-derivsLamHerm}The $\rho$-derivatives of $\auxii^{a}$
and $\auxi$ in (\ref{zbyrho-derivative})-(\ref{xbyrhoderivative}),
rewritten in terms of $\lambda\equiv P_{(f)}(\rho,\bar{\rho})$, as
it is done in the appendix on page \pageref{fn:zx-derivs-lambda}
in footnote \ref{fn:zx-derivs-lambda}, now turn for $f(\auxi)=h(\auxi)\equiv\frac{1+\sqrt{1-\auxi}}{2\sqrt{1-\auxi}}$
into
\begin{eqnarray*}
\partial_{\rho^{\beta}}\auxii^{a} & = & \frac{1}{4(\lambda\bar{\lambda})}\Bigl\{4\left(1+\sqrt{1-\auxi}\right)\left(\lambda\gamma^{a}\right)_{\beta}-\auxii^{a}\bar{\auxii}_{b}\left(\lambda\gamma^{b}\right)_{\beta}+\\
 &  & -2\auxii^{b}\left(\gamma_{b}\gamma^{a}\bar{\lambda}\right)_{\beta}+2\left(1-\sqrt{1-\auxi}\right)\auxii^{a}\bar{\lambda}_{\beta}\Bigr\}\qquad\\
\partial_{\bar{\rho}_{\beta}}\auxii^{a} & = & -\frac{\auxii^{a}}{4(\lambda\bar{\lambda})}\Bigl\{2\left(1+\sqrt{1-\auxi}\right)\lambda^{\beta}+\auxii^{b}\left(\bar{\lambda}\gamma_{b}\right)^{\beta}\Bigr\}\\
\partial_{\rho^{\beta}}\auxi & = & \frac{\sqrt{1-\auxi}}{2(\lambda\bar{\lambda})}\Bigl\{\left(1+\sqrt{1-\auxi}\right)\bar{\auxii}_{c}\left(\gamma^{c}\lambda\right)_{\beta}-2\auxi\bar{\lambda}_{\beta}\Bigr\}\qquad\fussend
\end{eqnarray*}
}$\hoch ,$%
\footnote{\label{fn:Pih-intermsofnewz}In terms of the alternative parametrization
$\Auxii^{a}$ from equation (\ref{newz}), equations (\ref{HermLinProjLam})
and (\ref{hermlinprojlam}) turn into
\begin{eqnarray*}
\Pi_{(h)\bot\beta}^{\alpha}(\rho,\bar{\rho}) & = & \tfrac{1}{1-\Auxi}\left(\delta_{\beta}^{\alpha}-\tfrac{(\gamma_{a}\bar{\lambda})^{\alpha}(\lambda\gamma^{a})_{\beta}}{2(\lambda\bar{\lambda})}\right)-\tfrac{\Auxi}{(1-\Auxi)}\tfrac{\lambda^{\alpha}\bar{\lambda}_{\beta}}{(\lambda\bar{\lambda})}-\tfrac{1}{8(1-\Auxi)}\tfrac{\bar{\Auxii}_{c}(\gamma^{c}\gamma_{b}\lambda)^{\alpha}\Auxii^{d}(\bar{\lambda}\gamma^{b}\gamma_{d})_{\beta}}{(\lambda\bar{\lambda})}\\
\rest_{(h)\bot}^{\alpha\beta}(\rho,\bar{\rho}) & = & -\tfrac{\Auxii^{a}}{2(1-\Auxi)}\left(\gamma_{a}^{\alpha\beta}-\tfrac{(\gamma_{a}\bar{\lambda})^{\alpha}\lambda^{\beta}}{(\lambda\bar{\lambda})}-\tfrac{\lambda^{\alpha}\left(\bar{\lambda}\gamma_{a}\right)^{\beta}}{(\lambda\bar{\lambda})}\right)\qquad\fussend
\end{eqnarray*}
The same result is obtained when using $\tilde{h}(\Auxi)\equiv h(\auxi)=\frac{1}{1-\Auxi}$
(\ref{hdef-first}) and $\tilde{h}'(\Auxi)=\frac{1}{(1-\Auxi)^{2}}$
in the equations of footnote \ref{fn:Pi-intermsofnewz} on page \pageref{fn:Pi-intermsofnewz}.$\quad\fussend$%
}:
\begin{eqnarray}
\Pi_{(h)\bot\beta}^{\alpha}(\rho,\bar{\rho}) & = & \frac{1+\sqrt{1-\auxi}}{2\sqrt{1-\auxi}}\left(\delta_{\beta}^{\alpha}-\frac{1}{2}\frac{(\gamma^{a}\bar{\lambda})^{\alpha}(\gamma_{a}\lambda)_{\beta}}{(\lambda\bar{\lambda})}\right)-\frac{1-\sqrt{1-\auxi}}{2\sqrt{1-\auxi}}\frac{\lambda^{\alpha}\bar{\lambda}_{\beta}}{(\lambda\bar{\lambda})}+\nonumber \\
 &  & -\frac{1}{16\sqrt{1-\auxi}\left(1+\sqrt{1-\auxi}\right)}\frac{\bar{\auxii}_{c}(\gamma^{c}\gamma_{b}\lambda)^{\alpha}\auxii^{d}(\bar{\lambda}\gamma^{b}\gamma_{d})_{\beta}}{(\lambda\bar{\lambda})}\label{HermLinProjLam}\\
\rest_{(h)\bot}^{\alpha\beta}(\rho,\bar{\rho}) & = & -\frac{\auxii^{a}}{4\sqrt{1-\auxi}}\Bigl(\gamma_{a}^{\alpha\beta}-\tfrac{(\gamma_{a}\bar{\lambda})^{\alpha}\lambda^{\beta}}{(\lambda\bar{\lambda})}-\tfrac{\lambda^{\alpha}(\gamma_{a}\bar{\lambda})^{\beta}}{(\lambda\bar{\lambda})}\Bigr)\label{hermlinprojlam}
\end{eqnarray}
It is further convenient in the Hermitian case, to express the projection
matrices off the constraint surface in terms of the projection matrices
$\Pi_{\bot\beta}^{\alpha}$ (\ref{Piperp}) and $\Pi_{(\lambda)\beta}^{\alpha}$
(\ref{Pilam}) defined on the surface:
\begin{eqnarray}
\Pi_{(h)\bot\beta}^{\alpha}(\rho,\bar{\rho}) & = & \frac{1+\sqrt{1-\auxi}}{2\sqrt{1-\auxi}}\Pi_{\bot\beta}^{\alpha}-\frac{1-\sqrt{1-\auxi}}{2\sqrt{1-\auxi}}\Pi_{(\lambda)\beta}^{\alpha}+\nonumber \\
 &  & -\frac{1}{8\sqrt{1-\auxi}\left(1+\sqrt{1-\auxi}\right)}\frac{\bar{\auxii}_{c}(\gamma^{c}\bar{\Pi}_{\Vert}\gamma_{d})_{\beta}\auxii^{d}}{(\lambda\bar{\lambda})}\qquad\label{HermLinProjPi}\\
\rest_{(h)\bot}^{\alpha\beta}(\rho,\bar{\rho}) & = & -\frac{1}{4\sqrt{1-\auxi}}\Bigl((\Pi_{\bot}\gamma_{a})^{\alpha\beta}\auxii^{a}-(\Pi_{(\lambda)}\gamma_{a})^{\alpha\beta}\auxii^{a}\Bigr)=\label{hermlinprojpileft}\\
 & \ous{{\scriptstyle (\ref{PiparZisZPibarlam})}}={{\scriptstyle (\ref{ZbarPiparIsPibarlamZbar})}} & -\frac{1}{4\sqrt{1-\auxi}}\Bigl(\auxii^{a}(\gamma_{a}\bar{\Pi}_{\bot})^{\alpha\beta}-\auxii^{a}(\gamma_{a}\bar{\Pi}_{\lambda})^{\alpha\beta}\Bigr)\label{hermlinprojpiright}
\end{eqnarray}
For the field theory application in section \ref{sec:ghost-action},
one of the most important properties of the Hermitian projection-matrices
will be the fact that the order of the matrix multiplication in the
projection property (\ref{LinProjProp}) will not matter any longer
as these matrices will commute. Starting from the original projection
property{\small 
\begin{equation}
{\scriptstyle \left(\!\!\begin{array}{cc}
\Pi_{\bot\gamma}^{\alpha} & 0\\
0 & \bar{\Pi}_{\bot\alpha}\hoch{\gamma}
\end{array}\!\!\right)\!\!\left(\!\!\begin{array}{cc}
\Pi_{(h)\bot\beta}^{\gamma}(\rho,\bar{\rho}) & \rest_{(h)\bot}^{\gamma\beta}(\rho,\bar{\rho})\\
\bar{\rest}_{(h)\bot\gamma\beta}(\rho,\bar{\rho}) & \bar{\Pi}_{(h)\bot\gamma}\hoch{\beta}(\rho,\bar{\rho})
\end{array}\!\!\right)=\left(\!\!\begin{array}{cc}
\Pi_{(h)\bot\beta}^{\alpha}(\rho,\bar{\rho}) & \rest_{(h)\bot}^{\alpha\beta}(\rho,\bar{\rho})\\
\bar{\rest}_{(h)\bot\alpha\beta}(\rho,\bar{\rho}) & \bar{\Pi}_{(h)\bot\alpha}\hoch{\beta}(\rho,\bar{\rho})
\end{array}\!\!\right)}
\end{equation}
}and taking the Hermitian conjugate on both sides, we arrive (because
of Hermiticity) at{\small  
\begin{equation}
{\scriptstyle \left(\!\!\begin{array}{cc}
\Pi_{(h)\bot\gamma}^{\alpha}(\rho,\bar{\rho}) & \rest_{(h)\bot}^{\alpha\gamma}(\rho,\bar{\rho})\\
\bar{\rest}_{(h)\bot\alpha\gamma}(\rho,\bar{\rho}) & \bar{\Pi}_{(h)\bot\alpha}\hoch{\gamma}(\rho,\bar{\rho})
\end{array}\!\!\right)\!\!\left(\!\!\begin{array}{cc}
\Pi_{\bot\beta}^{\gamma} & 0\\
0 & \bar{\Pi}_{\bot\gamma}\hoch{\beta}
\end{array}\!\!\right)=\left(\!\!\begin{array}{cc}
\Pi_{(h)\bot\beta}^{\alpha}(\rho,\bar{\rho}) & \rest_{(h)\bot}^{\alpha\beta}(\rho,\bar{\rho})\\
\bar{\rest}_{(h)\bot\alpha\beta}(\rho,\bar{\rho}) & \bar{\Pi}_{(h)\bot\alpha}\hoch{\beta}(\rho,\bar{\rho})
\end{array}\!\!\right)}
\end{equation}
}which is equivalent to the following two equations and their complex
conjugates respectively\vspace{-.3cm} 

\begin{eqnarray}
\Pi_{(h)\bot\gamma}^{\alpha}(\rho,\bar{\rho})\Pi_{\bot\beta}^{\gamma} & = & \Pi_{\bot\gamma}^{\alpha}\Pi_{(h)\bot\beta}^{\gamma}(\rho,\bar{\rho})\quad\left(=\Pi_{(h)\bot\gamma}^{\alpha}(\rho,\bar{\rho})\right)\label{Proj-commutativity}\\
\rest_{(h)\bot}^{\alpha\gamma}\bar{\Pi}_{\bot\gamma}\hoch{\beta} & = & \Pi_{\bot\gamma}^{\alpha}\rest_{(h)\bot}^{\gamma\beta}\quad\left(=\rest_{(h)\bot}^{\alpha\beta}\right)\label{proj-commutativity}
\end{eqnarray}
Using (\ref{PiperpPilam}),(\ref{ZbarPiparIsPibarlamZbar}) and (\ref{PiparPiperp})
it is also easy to check these equations explicitly.

\section{Natural projection in the U(5) formalism}

\label{sec:U5}The 10 Dirac gamma matrices $\Gamma^{a}$ can be used
to define 5 pairs of creation and annihilation matrices
\begin{eqnarray}
b^{\mf a} & \equiv & \tfrac{1}{2}\left(\Gamma^{2\mf a-1}-i\Gamma^{2\mf a}\right)\label{annihilator}\\
b_{\mf b}^{\dagger} & \equiv & \tfrac{1}{2}\left(\Gamma_{2\mf b-1}+i\Gamma_{2\mf b}\right),\quad{\scriptstyle \mf a,\mf b\in\{1,\ldots,5\}}\label{creator}\\
\{b^{\mf a},b_{\mf b}^{\dagger}\} & = & \delta_{\mf b}^{\mf a}
\end{eqnarray}
For $SO(1,9)$ one simply replaces $\Gamma^{10}$ by $i\Gamma^{0}$
or $\Gamma_{10}$ by $-i\Gamma_{0}$. The creation matrices can be
used to build a Fock space representation of spinors by acting on
a vacuum spinor $\ket{\Omega}$ which is annihilated by all annihilation
matrices (see e.g. Appendix B.1 of \cite[p.430]{Polchinski:1998rr}
and in particular in the pure spinor context appendix D of \cite{Nh:2003cm}
for a more detailed discussion of this parametrization): 
\begin{eqnarray}
\ket{\tief{\mf a}} & \equiv & b_{\mf a}^{\dagger}\ket{\Omega},\qquad\tbinom{5}{1}=5\mbox{ states}\\
\ket{\tief{\mf a_{1}\mf a_{2}}} & \equiv & b_{\mf a_{1}}^{\dagger}b_{\mf a_{2}}^{\dagger}\ket{\Omega},\qquad\tbinom{5}{2}=10\mbox{ states}\\
 & \ddots\nonumber \\
\ket{\tief{\mf a_{1}\ldots\mf a_{5}}} & \equiv & b_{\mf a_{1}}^{\dagger}\cdots b_{\mf a_{5}}^{\dagger}\ket{\Omega},\qquad\tbinom{5}{5}=1\mbox{ state}\label{alternative-vac}
\end{eqnarray}
Together with the vacuum $\ket{\Omega}$, these are precisely $\sum_{k=0}^{5}\tbinom{5}{k}=2^{5}=32$
states. An arbitrary Dirac spinor $\ket{\Psi}$ can therefore be expanded
in this basis. It is a well known fact that chirality in this picture
corresponds to an even number of creators: 
\begin{eqnarray}
\ket{\Psi} & \equiv & \!\!\underbrace{\Psi^{+}\ket{\Omega}\!+\!\tfrac{1}{2}\Psi^{\mf a_{1}\mf a_{2}}\ket{\tief{\mf a_{1}\mf a_{2}}}+\tfrac{1}{4!}\overbrace{\Psi_{\mf a}\epsilon^{\mf a\mf b_{1}\mf b_{2}\mf b_{3}\mf b_{4}}}^{\Psi^{\mf b_{1}\mf b_{2}\mf b_{3}\mf b_{4}}}\ket{\tief{\mf b_{1}\mf b_{2}\mf b_{3}\mf b_{4}}}}_{{\rm chiral}}+\label{U5-exp}\\
 &  & \!\!+\!\underbrace{\tfrac{1}{5!}\overbrace{\Psi_{+}\epsilon^{\mf b_{1}\mf b_{2}\mf b_{3}\mf b_{4}\mf b_{5}}}^{\Psi^{\mf b_{1}\mf b_{2}\mf b_{3}\mf b_{4}\mf b_{5}}}\ket{\tief{\mf b_{1}\mf b_{2}\mf b_{3}\mf b_{4}\mf b_{5}}}\!+\!\tfrac{1}{3!}\overbrace{\tfrac{1}{2}\Psi_{\mf a_{1}\mf a_{2}}\epsilon^{\mf a_{1}\mf a_{2}\mf b_{1}\mf b_{2}\mf b_{3}}}^{\Psi^{\mf b_{1}\mf b_{2}\mf b_{3}\mf b_{4}}}\ket{\tief{\mf b_{1}\mf b_{2}\mf b_{3}}}\!+\!\Psi^{\mf a}\ket{\tief{\mf a}}}_{{\rm antichiral}}\qquad\nonumber 
\end{eqnarray}
Chiral SO(10) Weyl spinors $\rho^{\alpha}$ can therefore be U(5)-covariantly
parametrized by a U(5)-singlet $\rho^{+}$, a U(5)-bivector $\rho^{\mf a_{1}\mf a_{2}}$
(antisymmetric with ${\scriptstyle \mf a_{1},\mf a_{2}\in\{1,\ldots,5\}}$,
i.e. 10 components) and a U(5) covector $\rho_{\mf a}$ (with 5 components).
The pure spinor constraint $(\lambda\gamma^{a}\lambda)=0$ turns into
$\bra{\lambda}Cb^{\mf a}\ket{\lambda}=0$ and $\bra{\lambda}Cb_{\mf a}^{\dagger}\ket{\lambda}=0$
where $C$ is the charge conjugation matrix. The first one turns out
to be a consequence of the second, while the second can be calculated
to be of the form (see up to a conventional sign \cite{Berkovits:2000fe}
or again appendix D of \cite{Nh:2003cm}) 
\begin{equation}
\lambda^{+}\lambda_{\mf a}=\tfrac{1}{8}\epsilon_{\mf a\mf b_{1}\mf b_{2}\mf b_{3}\mf b_{4}}\lambda^{\mf b_{1}\mf b_{2}}\lambda^{\mf b_{3}\mf b_{4}}\label{U5ps-constraint}
\end{equation}
For $\lambda^{+}\neq0$ one can obviously solve for $\lambda_{\mf a}$.
A natural projection from the space of Weyl spinors $\rho^{\alpha}$
to pure Weyl spinors $\lambda^{\alpha}$ is then simply to replace
the general component $\rho_{\mf a}$ by the solution to the above
equation, i.e. $(\rho^{+},\rho^{\mf a_{1}\mf a_{2}},\rho_{\mf a})\mapsto(\rho^{+},\rho^{\mf a_{1}\mf a_{2}},\tfrac{1}{8\rho^{+}}\epsilon_{\mf a\mf b_{1}\mf b_{2}\mf b_{3}\mf b_{4}}\rho^{\mf b_{1}\mf b_{2}}\rho^{\mf b_{3}\mf b_{4}})$.
The claim is that there is some reference spinor $\bar{\chi}$, such
the previously discussed $P_{(\bar{\chi})}$ of equation (\ref{Pnoncov})
is precisely this projection: 
\begin{equation}
(\rho^{+},\rho^{\mf a_{1}\mf a_{2}},\rho_{\mf a})\stackrel{P_{(\bar{\chi})}}{\mapsto}P_{(\bar{\chi})}(\rho)\stackrel{!}{=}(\rho^{+},\rho^{\mf a_{1}\mf a_{2}},\tfrac{1}{8\rho^{+}}\epsilon_{\mf a\mf b_{1}\mf b_{2}\mf b_{3}\mf b_{4}}\rho^{\mf b_{1}\mf b_{2}}\rho^{\mf b_{3}\mf b_{4}})\equiv(\lambda^{+},\lambda^{\mf a_{1}\mf a_{2}},\lambda_{\mf a})\qquad\label{U5-proj}
\end{equation}
This should determine the reference spinor $\bar{\chi}_{\alpha}$.
As there appear no square roots in the above terms, it is reasonable
to assume that $\bar{\chi}_{\alpha}$ is a pure spinor and the general
form (\ref{Pnoncov}) of the projection reduces to (\ref{non-cov-integrated}),
i.e. 
\begin{equation}
P_{(\bar{\chi})}^{\alpha}(\rho)=\rho^{\alpha}-\tfrac{1}{4}\frac{(\rho\gamma^{a}\rho)(\gamma_{a}\bar{\chi})^{\alpha}}{(\rho\bar{\chi})}\quad({\rm pure}\:\bar{\chi})\label{non-cov-integrated-rep}
\end{equation}
  The antichiral spinor $\bar{\chi}_{\alpha}$ has in general only
a non-vanishing second line in the expansion (\ref{U5-exp}). Note
that the state (\ref{alternative-vac}) which appears in this expansion
is an alternative vacuum if one interchanges the role of annihilators
and creators, as it is annihilated by all $b_{\mf a}^{\dagger}$'s.
Being a vacuum it is automatically a pure spinor. So instead of making
a general ansatz for $\bar{\chi}_{\alpha}$, let us simply try this
alternative vacuum as reference spinor 
\begin{equation}
\left(\bar{\chi}_{+},\bar{\chi}_{\mf{aa}},\bar{\chi}^{\mf a}\right)=\left(1,0,0\right)\quad{\rm (ansatz})\label{ref-spinor-ansatz}
\end{equation}
The translation of (\ref{non-cov-integrated-rep}) into U(5) language
is then rather simple. First we note that due to the definitions (\ref{annihilator}),(\ref{creator})
we have $\gamma^{a}\otimes\gamma_{a}\to2b^{\mf a}\otimes b_{\mf a}^{\dagger}+2b_{\mf a}^{\dagger}\otimes b^{\mf a}$.
Applied to (\ref{non-cov-integrated-rep}), only the second term survives
with 
\begin{eqnarray}
b^{\mf a}\ket{\bar{\chi}} & = & \tfrac{1}{5!}\epsilon^{\mf b_{1}\mf b_{2}\mf b_{3}\mf b_{4}\mf b_{5}}b^{\mf a}\ket{\tief{\mf b_{1}\mf b_{2}\mf b_{3}\mf b_{4}\mf b_{5}}}=\tfrac{1}{4!}\epsilon^{\mf a\mf b_{1}\mf b_{2}\mf b_{3}\mf b_{4}}\ket{\tief{\mf b_{1}\mf b_{2}\mf b_{3}\mf b_{4}}}\\
\bra{\rho}Cb_{\mf a}^{\dagger}\ket{\rho} & = & 2\rho^{+}\rho_{\mf a}-\tfrac{1}{4}\epsilon_{\mf a\mf b_{1}\mf b_{2}\mf b_{3}\mf b_{4}}\rho^{\mf b_{1}\mf b_{2}}\rho^{\mf b_{3}\mf b_{4}}
\end{eqnarray}
So the only non-vanishing component of $b^{\mf a}\ket{\bar{\chi}}$
is $(b^{\mf a}\bar{\chi})_{\mf c}=\delta_{\mf c}^{\mf a}$ while $(b^{\mf a}\bar{\chi})^{+}=(b^{\mf a}\bar{\chi})^{\mf{cd}}=0$.
The projection (\ref{non-cov-integrated-rep}) thus becomes
\begin{eqnarray}
P_{(\bar{\chi})}^{+}(\rho) & = & \rho^{+}\\
P_{(\bar{\chi})}^{\mf{cd}}(\rho) & = & \rho^{\mf{cd}}\\
P_{(\bar{\chi})\mf c}(\rho) & = & \rho_{\mf c}-\tfrac{1}{2}\frac{\bra{\rho}Cb_{\mf a}^{\dagger}\ket{\rho}(b^{\mf a}\bar{\chi})_{\mf c}}{\rho^{+}}=\\
 & = & \rho_{\mf c}-\tfrac{1}{2}\frac{\left(2\rho^{+}\rho_{\mf a}-\tfrac{1}{4}\epsilon_{\mf a\mf b_{1}\mf b_{2}\mf b_{3}\mf b_{4}}\rho^{\mf b_{1}\mf b_{2}}\rho^{\mf b_{3}\mf b_{4}}\right)\delta_{\mf c}^{\mf a}}{\rho^{+}}=\\
 & = & \tfrac{1}{8\rho^{+}}\epsilon_{\mf c\mf b_{1}\mf b_{2}\mf b_{3}\mf b_{4}}\rho^{\mf b_{1}\mf b_{2}}\rho^{\mf b_{3}\mf b_{4}}
\end{eqnarray}
This is indeed the projection that we suggested in (\ref{U5-proj}).

Let us define for double indices the partial derivative such that
the variation comes with a factor $\frac{1}{2}$ which takes into
account the antisymmetry: 
\begin{equation}
\delta=\delta\rho^{+}\partial_{\rho^{+}}+\tfrac{1}{2}\delta\rho^{\mf{ab}}\partial_{\rho^{\mf{ab}}}+\delta\rho_{\mf a}\partial_{\rho_{\mf a}}\label{U5-variation}
\end{equation}
In other words $\partial_{\rho^{\mf{ab}}}$ is twice the naive partial
derivative. Let us think of the spinor index ${\scriptstyle \alpha}$
(in the U(5) basis) to be a collective index $\hoch{\alpha}\in\{\hoch{\mf +},\hoch{\mf a_{1}\mf a_{2}},\tief{\mf a}\}$.
The projection matrix $\Pi_{\bot}$ then becomes 
\begin{eqnarray}
\Pi_{\bot(\bar{\chi})}^{\quad\:\alpha}\tief{\beta}(\rho,\bar{\rho}) & \!\!\!\!= & \!\!\!\!\partiell{(P_{(\bar{\chi})}^{+}(\rho),P_{(\bar{\chi})}^{\mf a_{1}\mf a_{2}}(\rho),P_{(\bar{\chi})\mf a}(\rho))}{(\rho^{+},\rho^{\mf b_{1}\mf b_{2}},\rho_{\mf b})}=\\
 & \!\!\!\!= & \!\!\!\!\left(\!\!\begin{array}{ccc}
1 & 0 & 0\\
0 & \delta_{\mf b_{1}}^{\mf a_{1}}\delta_{\mf b_{2}}^{\mf a_{2}}-\delta_{\mf b_{2}}^{\mf a_{1}}\delta_{\mf b_{1}}^{\mf a_{2}} & 0\\
-\tfrac{1}{8(\rho^{+})^{2}}\epsilon_{\mf a\mf c_{1}\mf c_{2}\mf c_{3}\mf c_{4}}\rho^{\mf c_{1}\mf c_{2}}\rho^{\mf c_{3}\mf c_{4}} & \tfrac{1}{2\rho^{+}}\epsilon_{\mf a\mf b_{1}\mf b_{2}\mf c_{1}\mf c_{2}}\rho^{\mf c_{1}\mf c_{2}} & 0
\end{array}\!\!\right)\quad\label{U5-proj-matrix}
\end{eqnarray}
We will come back to this U(5) form of the projection matrix in subsection
\ref{sec:ghost-action-3} on page \pageref{sec:ghost-action-3}.

\section{Ghost action of the pure spinor string}

\label{sec:ghost-action}We are now ready to apply some of the mathematical
insight to the pure spinor string. Remember that in the introduction
the transpose of the linearized projection was claimed to project
the antighost field of the pure spinor string to its gauge invariant
part. In the following first subsection we will quickly recall the
ghost action of the so-called non-minimal formalism as a functional
of constrained fields and discuss the constrained variation and the
corresponding gauge transformations. The known projector to a gauge
invariant part of the antighost will be extended to the so-called
non-minimal fields. After that, in subsection \ref{sec:ghost-action-2},
we will replace the constrained variables by projections of unconstrained
variables and discuss the variation and the gauge symmetries of the
resulting constraint-free ghost action. And in subsection \ref{sec:ghost-action-3},
we will recall the minimal ghost action in the U(5) parametrization
and quickly review how in this formulation the antighosts automatically
combine to gauge invariant combinations. We will then provide the
explicit reference spinor $\bar{\chi}$ for which the non-covariant
projector $\bar{\Pi}_{\bot(\bar{\chi})}$ yields precisely these expressions.

\subsection{With constrained variables}

\label{sec:ghost-action-1}

The ghost action (left-moving sector) in the non-minimal formalism
\cite{Berkovits:2005bt} of Berkovits' pure spinor string theory is
given by 
\begin{eqnarray}
S_{{\rm gh}}[\lambda,\omega_{z},\bar{\lambda},\bar{\omega}_{z},\bs r,\bs s_{z}] & = & \int d^{2}z\:\Bigl(\bar{\partial}\lambda^{\alpha}\omega_{z\alpha}+\bar{\partial}\bar{\lambda}_{\alpha}\bar{\omega}_{z}^{\alpha}+\bar{\partial}\bs r_{\alpha}\bs s_{z}^{\alpha}\Bigr)\label{Sgh-orig}
\end{eqnarray}
together with the constraints 
\begin{equation}
(\lambda\gamma^{a}\lambda)=(\bar{\lambda}\gamma^{a}\bar{\lambda})=(\bar{\lambda}\gamma^{a}\bs r)=0\label{constraints}
\end{equation}
The variations consistent with these constraints accordingly have
to obey
\begin{equation}
(\delta\lambda\gamma^{a}\lambda)=(\delta\bar{\lambda}\gamma^{a}\bar{\lambda})=(\delta\bar{\lambda}\gamma^{a}\bs r)+(\bar{\lambda}\gamma^{a}\delta\bs r)=0\label{constraintsOnVariations}
\end{equation}
The constraints \eqref{constraints} generate via the Poisson bracket
gauge transformations of the form\vspace{-.3cm} 
\begin{eqnarray}
\delta_{(\mu)}\omega_{z\alpha} & \equiv & \mu_{za}(\gamma^{a}\lambda)_{\alpha}\label{gaugetrafoomegaOnsurf}\\
\delta_{(\bar{\mu},\sigma)}\bar{\omega}_{z}^{\alpha} & \equiv & \bar{\mu}_{z}^{a}(\gamma_{a}\bar{\lambda})^{\alpha}-\bs{\sigma}_{z}^{a}(\gamma_{a}\bs r)^{\alpha}\label{gaugetrafoomegabarOnsurf}\\
\delta_{(\sigma)}\bs s_{z}^{\alpha} & \equiv & \bs{\sigma}_{z}^{a}(\gamma_{a}\bar{\lambda})^{\alpha}\label{gaugetrafosOnsurf}
\end{eqnarray}
where $\mu_{za},\bar{\mu}_{z}^{a}$ are some even and $\bs{\sigma}_{z}^{a}$
are some odd gauge parameters. In equations (6) and (7) of \cite{Oda:2004bg}
within the minimal formalism (in the absence of $\bar{\lambda}_{\alpha},\bar{\omega}_{z}^{\alpha},\bs r_{\alpha}$
and $\bs s_{z}^{\alpha}$) the authors presented a linear projection
of the antighost $\omega_{z\alpha}$ to its gauge invariant part $\tilde{\omega}_{z\alpha}$,
using some reference spinor (see also (14) and (18) in \cite{Oda:2005sd}
or more recently (2.12) and (2.15) in \cite{Oda:2007ak}). Its covariantized
version (where $\bar{\lambda}_{\alpha}$ plays the role of the reference
spinor) is given by 
\begin{equation}
\tilde{\omega}_{z\alpha}\equiv(\bar{\Pi}_{\bot}\omega_{z})_{\alpha}=\omega_{z\alpha}-\tfrac{1}{2}\frac{(\gamma^{a}\lambda)_{\alpha}(\bar{\lambda}\gamma_{a}\omega_{z})}{(\lambda\bar{\lambda})}\label{gaugeinvOmegaOnsurf}
\end{equation}
and was presented in equations (2.9) and (2.11) of \cite{Berkovits:2010zz}.
The gauge transformation of $\bs s_{z}^{\alpha}$ in \eqref{gaugetrafosOnsurf}
is of the same type as the one of $\omega_{z\alpha}$, just that the
role of $\lambda^{\alpha}$ and $\bar{\lambda}_{\alpha}$ is interchanged.
It is therefore easy to guess also a projection to a gauge invariant
part of $\bs s_{z}^{\alpha}$:

\begin{equation}
\tilde{\bs s}_{z}^{\alpha}\equiv(\Pi_{\bot}\bs s_{z})^{\alpha}=\bs s_{z}^{\alpha}-\tfrac{1}{2}\frac{(\gamma_{a}\bar{\lambda})^{\alpha}(\lambda\gamma^{a}\bs s_{z})}{(\lambda\bar{\lambda})}\label{gaugeinvSonSurf}
\end{equation}
For $\bar{\omega}_{z}^{\alpha}$ at least the $\delta_{(\sigma)}$-part
of \eqref{gaugetrafoomegabarOnsurf} is of a slightly different form,
so that the same naive guess as above does not work. However, having
in mind the BRST transformation $\es\bs s_{z}^{\alpha}=\bar{\omega}_{z}^{\alpha}$
and the fact that the BRST differential forms together with the gauge
transformations a closed algebra%
\footnote{\label{fn:BRSTnonminimal}We have not presented the complete action
of the pure spinor string and neither the BRST transformations of
all its fields, but those of the non-minimal sector are given by \cite{Berkovits:2005bt}
\begin{eqnarray*}
\es\bs s_{z}^{\alpha} & = & \bar{\omega}_{z}^{\alpha},\quad\es\bar{\omega}_{z}^{\alpha}=0,\qquad\es\bar{\lambda}_{\alpha}=\bs r_{\alpha},\quad\es\bs r_{\alpha}=0
\end{eqnarray*}
These BRST transformations commute with the $\delta_{(\sigma)}$ gauge
transformations in \eqref{gaugetrafoomegabarOnsurf}-\eqref{gaugetrafosOnsurf}.
\[
[\es,\delta_{(\sigma)}]=0
\]
The commutator with $\delta_{(\bar{\mu})}$ instead is non-vanishing
but produces another $\delta_{(\sigma)}$-transformation:
\begin{eqnarray*}
[\es,\delta_{(\bar{\mu})}]\bar{\omega}_{z}^{\alpha} & = & \bar{\mu}_{z}^{a}(\gamma_{a}\bs r)^{\alpha}\\
{}[\es,\delta_{(\bar{\mu})}]\bs s_{z}^{\alpha} & = & -\bar{\mu}_{z}^{a}(\gamma_{a}\bar{\lambda})^{\alpha}
\end{eqnarray*}
So $[\delta_{(\bar{\mu})},\es]=\bs{\delta}_{(\tilde{\sigma})}$ with
$\tilde{\sigma}_{z}^{a}=\bar{\mu}_{z}^{a}$. The parameter has changed
parity, but that does not change invariance properties at linearized
level. $\quad\fussend$ %
}, a natural guess for the gauge invariant part of $\bar{\omega}_{z\alpha}$
is simply the BRST transformation of $\tilde{\bs s}_{z}^{\alpha}$
\begin{eqnarray}
\tilde{\bar{\omega}}_{z}^{\alpha} & \equiv & \es\tilde{\bs s}_{z}^{\alpha}=\label{tildeomegabarISsOftildes}\\
 & = & (\Pi_{\bot}\bar{\omega}_{z})^{\alpha}+\left(-(\gamma_{a}\bs r)^{\alpha}+(\lambda\bs r)\frac{(\gamma_{a}\bar{\lambda})^{\alpha}}{(\lambda\bar{\lambda})}\right)\frac{(\lambda\gamma^{a}\bs s_{z})}{2(\lambda\bar{\lambda})}=\\
 & = & (\Pi_{\bot}\bar{\omega}_{z})^{\alpha}+\left(-(\gamma_{a}\bs r)^{\alpha}+(\gamma_{a}\bar{\Pi}_{\lambda}\bs r)^{\alpha}\right)\frac{(\lambda\gamma^{a}\bs s_{z})}{2(\lambda\bar{\lambda})}
\end{eqnarray}
Using the identity $(\gamma_{a}\bar{\Pi}_{\lambda})^{\alpha\beta}(\lambda\gamma^{a})_{\gamma}=(\Pi_{\Vert}\gamma_{a})^{\alpha\beta}(\lambda\gamma^{a})_{\gamma}$
of equation \eqref{PiparZisZPibarlam2}, we obtain the linear projection%
\footnote{\label{fn:omegatildeProjprop}The expression for $\tilde{\bar{\omega}}_{z}$
in (\ref{gaugeinvOmegabarOnsurf}) is not everywhere linear in $\bar{\omega}_{z}^{\alpha}$,
but instead some terms are linear in $\es_{z}$. It should thus be
understood as a linear projection from the variables $(\bs s_{z}^{\alpha},\bar{\omega}_{z}^{\alpha})$
to $(\tilde{\bs s}_{z}^{\alpha},\tilde{\bar{\omega}}_{z}^{\alpha})$,
which indeed obeys the projection-property in addition to being gauge-invariant.
The projection property on the constraint surface is inherited from
$\Pi_{\bot}$ (Because of $\Pi_{\bot}^{2}=\Pi_{\bot}$ we have $(\es\Pi_{\bot})\Pi_{\bot}+\Pi_{\bot}(\es\Pi_{\bot})=\es\Pi_{\bot}$):\vspace{-.2cm}
\begin{eqnarray*}
\tilde{\bar{\omega}}_{z} & \stackrel{\eqref{tildeomegabarISsOftildes}}{=} & \Pi_{\bot}\bar{\omega}_{z}+(\es\Pi_{\bot})\bs s_{z}\\
\tilde{\tilde{\bar{\omega}}}_{z} & = & \Pi\tilde{\bar{\omega}}_{z}+(\es\Pi)\tilde{\bs s}_{z}=\\
 & = & \Pi\left(\Pi\bar{\omega}_{z}+(\es\Pi)\bs s_{z}\right)+(\es\Pi)\Pi\bs s_{z}=\\
 & \ous{\Pi^{2}=\Pi}={(\es\Pi)\Pi+\Pi(\es\Pi)=\es\Pi} & \qquad\bar{\omega}_{z}+(\es\Pi)\bs s_{z}=\tilde{\bar{\omega}}_{z}\qquad\fussend
\end{eqnarray*}
}
\begin{equation}
\tilde{\bar{\omega}}_{z}^{\alpha}=(\Pi_{\bot}\bar{\omega}_{z})^{\alpha}-(\Pi_{\bot}\gamma_{a}\bs r)^{\alpha}\frac{(\lambda\gamma^{a}\bs s_{z})}{2(\lambda\bar{\lambda})}\label{gaugeinvOmegabarOnsurf}
\end{equation}
Gauge invariance of this expression now follows from gauge invariance
of $\tilde{\bs s}_{z}^{\alpha}$ and the fact that the BRST differential
builds a closed algebra with the gauge transformations according to
footnote \ref{fn:BRSTnonminimal}:
\begin{eqnarray}
\delta_{(\sigma)}\tilde{\bar{\omega}}_{z}^{\alpha} & = & \delta_{(\sigma)}\es\tilde{\bs s}_{z}^{\alpha}\stackrel{{\rm fn\,}\eqref{fn:BRSTnonminimal}}{=}\es\underbrace{\delta_{(\sigma)}\tilde{\bs s}_{z}^{\alpha}}_{=0}=0\\
\delta_{(\bar{\mu})}\tilde{\bar{\omega}}_{z}^{\alpha} & = & \delta_{(\bar{\mu})}\es\tilde{\bs s}_{z}^{\alpha}\stackrel{{\rm fn\,}\eqref{fn:BRSTnonminimal}}{=}\es\underbrace{\delta_{(\bar{\mu})}\tilde{\bs s}_{z}^{\alpha}}_{=0}+\bs{\delta}_{(\tilde{\sigma})}\tilde{\bs s}_{z}^{\alpha}=0\qquad
\end{eqnarray}
 It can also easily been checked by direct calculation%
\footnote{Let us check the gauge invariance of (\ref{gaugeinvOmegabarOnsurf})
by direct calculation\vspace{-.5cm}
\begin{eqnarray*}
\delta_{(\bar{\mu},\sigma)}\tilde{\bar{\omega}}_{z}^{\alpha} & = & \bar{\mu}_{z}^{a}(\underbrace{\Pi_{\bot}\gamma_{a}\bar{\lambda}}_{=0})^{\alpha}-\bs{\sigma}_{z}^{a}(\Pi_{\bot}\gamma_{a}\bs r)^{\alpha}-(\Pi_{\bot}\gamma_{a}\bs r)^{\alpha}\frac{\bs{\sigma}_{z}^{b}(\lambda\overbrace{\gamma^{a}\gamma_{b}}^{-\gamma_{b}\gamma^{a}\lqn{{\scriptstyle +2\delta_{b}^{a}}}}\bar{\lambda})}{2(\lambda\bar{\lambda})}=\\
 & = & (\Pi_{\bot}\gamma_{a}\bs r)^{\alpha}\frac{\bs{\sigma}_{z}^{b}(\lambda\gamma_{b}\gamma^{a}\bar{\lambda})}{2(\lambda\bar{\lambda})}
\end{eqnarray*}
Using the Fierz identity $(\gamma_{a}\bs r)^{\alpha}(\gamma^{a}\bar{\lambda})^{\beta}=-(\bar{\lambda}\gamma_{a}\bs r)\gamma^{a\,\alpha\beta}-(\gamma_{a}\bar{\lambda})^{\alpha}(\gamma^{a}\bs r)^{\beta}$
and remembering the constraint $(\bar{\lambda}\gamma_{a}\bs r)=0$,
the above expression becomes 
\begin{eqnarray*}
\delta_{(\bar{\mu},\sigma)}\tilde{\bar{\omega}}_{z}^{\alpha} & = & (\underbrace{\Pi_{\bot}\gamma_{a}\bar{\lambda}}_{=0})^{\alpha}\frac{\bs{\sigma}_{z}^{b}(\lambda\gamma_{b}\gamma^{a}\bs r)}{2(\lambda\bar{\lambda})}=0\qquad\fussend
\end{eqnarray*}
}. To our knowledge the gauge invariant projections $\tilde{\es}_{z}^{\alpha}$
and $\tilde{\bar{\omega}}_{z}^{\alpha}$ for the non-minimal variables
had not yet been mentioned in the literature. Note that $\tilde{\omega}_{z\alpha}$,
$\tilde{\bs s}_{z}^{\alpha}$ and $\tilde{\bar{\omega}}_{z}^{\alpha}$
are gauge invariant only up to the constraints (\ref{constraints}).
We will remove this restriction a bit later in this section. Let us
now elaborate a bit further on the variation of the constrained variables
and recover the gauge invariant expressions in the equations of motion.
To this end, let us first note that by using our projection maps,
the constraints \eqref{constraints} can equivalently be written as
\begin{equation}
\lambda^{\alpha}=P_{(f)}^{\alpha}(\lambda,\bar{\lambda}),\quad\bar{\lambda}_{\alpha}=P_{(f)}^{\alpha}(\lambda,\bar{\lambda}),\quad\bs r_{\alpha}=\bar{\Pi}_{\bot\alpha}\hoch{\beta}\bs r_{\beta}\label{constraints2}
\end{equation}
Varying these constraints on both sides leads for the ghost $\lambda^{\alpha}$
to $\delta\lambda^{\alpha}=(\Pi_{\bot}\delta\lambda)^{\alpha}$ which
was already given in equation (\ref{deltalambda}). For $\bar{\lambda}_{\alpha}$
one obtains the complex conjugate relation, while for the $\bs r_{\alpha}$-variation
of the last constraint in \eqref{constraints} we need the variation
of $\Pi_{\bot\alpha}\hoch{\beta}$ given in (\ref{deltaPiperp}) (or
actually its complex conjugate). Using $(\bar{\lambda}\gamma_{a}\bs r)=0$
this yields
\begin{equation}
\delta\bs r_{\alpha}=\bar{\Pi}_{\bot\alpha}\hoch{\beta}\delta\bs r_{\beta}-\tfrac{(\gamma^{a}\lambda)_{\alpha}}{2(\lambda\bar{\lambda})}(\underbrace{\bs r\Pi_{\bot}}_{\bs r}\gamma_{a})^{\gamma}\underbrace{\delta\bar{\lambda}_{\gamma}}_{(\bar{\Pi}_{\bot}\delta\bar{\lambda})_{\gamma}}
\end{equation}
So altogether we end up with the following relations for the variations
\begin{equation}
\delta\lambda^{\alpha}=\Pi_{\bot\beta}^{\alpha}\delta\lambda^{\beta},\quad\delta\bar{\lambda}_{\alpha}=\bar{\Pi}_{\bot\alpha}\hoch{\beta}\delta\bar{\lambda}_{\beta},\quad\delta\bs r_{\alpha}=\bar{\Pi}_{\bot\alpha}\hoch{\beta}\delta\bs r_{\beta}-\tfrac{(\gamma^{a}\lambda)_{\alpha}}{2(\lambda\bar{\lambda})}(\bs r\gamma_{a}\bar{\Pi}_{\bot})^{\gamma}\delta\bar{\lambda}_{\gamma}\label{constraintsOnVariations2}
\end{equation}
These relations are on the one side just equivalent to the constraints
in (\ref{constraintsOnVariations}) but on the other hand they enable
to rewrite the variations in terms of projections that extract the
independent degrees of freedom of the variation. In particular the
variation of a function $f(\lambda,\bar{\lambda},\bs r)$ (instead
of a functional) can be rewritten as follows 
\begin{eqnarray}
\delta f(\lambda,\bar{\lambda},\bs r)\!\!\! & = & \!\!\left(\delta\lambda^{\alpha}\partial_{\lambda^{\alpha}}+\delta\bar{\lambda}_{\alpha}\partial_{\bar{\lambda}_{\alpha}}+\delta\bs r_{\alpha}\partial_{\bs r_{\alpha}}\right)f(\lambda,\bar{\lambda},\bs r)=\\
 & \stackrel{\mbox{\small\eqref{constraintsOnVariations2}}}{=} & \!\!\biggl\{\!\delta\lambda^{\alpha}\bar{\Pi}_{\bot\alpha}\hoch{\beta}\partial_{\lambda^{\beta}}\!+\!\delta\bar{\lambda}_{\alpha}\!\Bigl(\!\Pi_{\bot\beta}^{\alpha}\partial_{\bar{\lambda}_{\beta}}\!-\!(\underbrace{\bs r\gamma_{a}\bar{\Pi}_{\bot}}_{\Pi_{\bot}\gamma_{a}\bs r})^{\alpha}\tfrac{(\lambda\gamma^{a}\partial_{\bs r})}{2(\lambda\bar{\lambda})}\!\Bigr)\!+\qquad\nonumber \\
 &  & +\delta\bs r_{\alpha}\Pi_{\bot\beta}^{\alpha}\partial_{\bs r_{\beta}}\biggr\} f(\lambda,\bar{\lambda},\bs r)
\end{eqnarray}
and then naturally defines \textbf{covariant derivatives} in the sense
in which it was already discussed on page \pageref{deltalambda}:
\begin{equation}
D_{\lambda^{\alpha}}\equiv\bar{\Pi}_{\bot\alpha}\hoch{\beta}\partial_{\lambda^{\beta}}\:,\quad D_{\bar{\lambda}_{\alpha}}\equiv\Pi_{\bot\beta}^{\alpha}\partial_{\bar{\lambda}_{\beta}}-(\Pi_{\bot}\gamma_{a}\bs r)^{\alpha}\tfrac{(\lambda\gamma^{a}\partial_{\bs r})}{2(\lambda\bar{\lambda})}\:,\quad D_{\bs r_{\beta}}\equiv\Pi_{\bot\beta}^{\alpha}\partial_{\bs r_{\beta}}\qquad\label{covariant-derivatives}
\end{equation}
They annihilate the constraints (\ref{constraints}) and remarkably
reproduce the projector that we have proposed for $\bar{\omega}_{z}^{\alpha}$
in (\ref{gaugeinvOmegabarOnsurf}). The same insertions of projection
matrices using (\ref{constraintsOnVariations2}) should be done in
the constrained variation of functionals, in particular of the action
(\ref{Sgh-orig}):
\begin{eqnarray}
\lqn{\delta S_{{\rm gh}}[\lambda,\omega_{z},\bar{\lambda},\bar{\omega}_{z},\bs r,\bs s_{z}]=}\nonumber \\
 & = & \!\!\!\!\int d^{2}z\:\Bigl[\bar{\partial}\lambda^{\alpha}\delta\omega_{z\alpha}+\bar{\partial}\bar{\lambda}_{\alpha}\delta\bar{\omega}_{z}^{\alpha}+\bar{\partial}\bs r_{\alpha}\delta\bs s_{z}^{\alpha}+\nonumber \\
 &  & \!\!\!\!-\delta\lambda^{\alpha}\bar{\partial}\omega_{z\alpha}-\delta\bar{\lambda}_{\alpha}\bar{\partial}\bar{\omega}_{z}^{\alpha}-\delta\bs r_{\alpha}\bar{\partial}\bs s_{z}^{\alpha}+\nonumber \\
 &  & \!\!\!\!+\bar{\partial}\left(\delta\lambda^{\alpha}\omega_{z\alpha}+\delta\bar{\lambda}_{\alpha}\bar{\omega}_{z}^{\alpha}+\delta\bs r_{\alpha}\bs s_{z}^{\alpha}\right)\Bigr]=\label{deltaSorig}\\
 & \stackrel{\eqref{constraintsOnVariations2}}{=} & \!\!\!\!\int d^{2}z\:\Bigl[\bar{\partial}\lambda^{\alpha}\delta\omega_{z\alpha}+\bar{\partial}\bar{\lambda}_{\alpha}\delta\bar{\omega}_{z}^{\alpha}+\bar{\partial}\bs r_{\alpha}\delta\bs s_{z}^{\alpha}+\nonumber \\
 &  & \!\!\!\!-\delta\lambda^{\alpha}\bar{\Pi}_{\bot\alpha}\hoch{\beta}\bar{\partial}\omega_{z\beta}\!-\!\delta\bar{\lambda}_{\alpha}\!\left(\!\Pi_{\bot\beta}^{\alpha}\bar{\partial}\bar{\omega}_{z}^{\beta}\!-\!(\Pi_{\bot}\gamma_{a}\bs r)^{\alpha}\tfrac{(\lambda\gamma^{a}\bar{\partial}\bs s_{z})}{2(\lambda\bar{\lambda})}\!\right)\!-\!\delta\bs r_{\alpha}\Pi_{\bot\beta}^{\alpha}\bar{\partial}\bs s_{z}^{\beta}\!+\nonumber \\
 &  & \!\!\!\!\!\!+\bar{\partial}\!\left(\!\delta\lambda^{\alpha}\bar{\Pi}_{\bot\alpha}\hoch{\beta}\omega_{z\beta}\!+\!\delta\bar{\lambda}_{\alpha}\!\left(\!\Pi_{\bot\beta}^{\alpha}\bar{\omega}_{z}^{\beta}\!-\!(\Pi_{\bot}\gamma_{a}\bs r)^{\alpha}\tfrac{(\lambda\gamma^{a}\bs s_{z})}{2(\lambda\bar{\lambda})}\!\right)\!+\!\delta\bs r_{\alpha}\Pi_{\bot\beta}^{\alpha}\bs s_{z}^{\beta}\!\right)\!\Bigr]\qquad\label{deltaSorig2}
\end{eqnarray}
The equations of motion from the constrained variation are thus (we
suppress the worldsheet arguments of the functional derivative) 
\begin{eqnarray}
0 & \stackrel{{\rm eom}}{=} & \left(\drek{\funktl{\omega_{z\alpha}}S_{{\rm gh}}}{\funktl{\bar{\omega}_{z}^{\alpha}}S_{{\rm gh}}}{\funktl{\bs s_{z}^{\alpha}}S_{{\rm gh}}}\right)=\left(\drek{\bar{\partial}\lambda^{\alpha}}{\bar{\partial}\bar{\lambda}_{\alpha}}{-\bar{\partial}\bs r_{\alpha}}\right)\label{lambda-r-eoms}\\
0 & \stackrel{{\rm eom}}{=} & \left(\begin{array}{ccc}
\bar{\Pi}_{\bot\alpha}\hoch{\beta}\\
 & \Pi_{\bot\beta}^{\alpha} & -(\Pi_{\bot}\gamma_{a}\bs r)^{\alpha}\tfrac{(\lambda\gamma^{a})_{\beta}}{2(\lambda\bar{\lambda})}\\
 &  & \Pi_{\bot\beta}^{\alpha}
\end{array}\right)\left(\drek{\funktl{\lambda^{\beta}}S_{{\rm gh}}}{\funktl{\bar{\lambda}_{\beta}}S_{{\rm gh}}}{\funktl{\bs r_{\beta}}S_{{\rm gh}}}\right)=\label{lambarlamr-matrix}\\
 & = & -\left(\drek{\bar{\Pi}_{\bot\alpha}\hoch{\beta}\bar{\partial}\omega_{z\beta}}{\Pi_{\bot\beta}^{\alpha}\bar{\partial}\bar{\omega}_{z}^{\beta}-(\Pi_{\bot}\gamma_{a}\bs r)^{\alpha}\tfrac{(\lambda\gamma^{a}\bar{\partial}\bs s_{z})}{2(\lambda\bar{\lambda})}}{\Pi_{\bot\beta}^{\alpha}\bar{\partial}\bs s_{z}^{\beta}}\right)\label{omega-s-eoms}
\end{eqnarray}
In the same way the boundary conditions from the variational principle
(in the absence of an additional boundary action) become%
\footnote{In the presence of right-movers, the expressions would not be zero,
but coincide on the boundary with the corresponding projections for
the right-movers.$\quad\fussend$%
}
\begin{equation}
\bei{\bar{\Pi}_{\bot\alpha}\hoch{\beta}\omega_{z\beta}}{\partial\Sigma}\stackrel{{\rm bc}}{=}\bei{\left(\Pi_{\bot\beta}^{\alpha}\bar{\omega}_{z}^{\beta}-(\Pi_{\bot}\gamma_{a}\bs r)^{\alpha}\tfrac{(\lambda\gamma^{a}\bs s_{z})}{2(\lambda\bar{\lambda})}\right)}{\partial\Sigma}\stackrel{{\rm bc}}{=}\bei{\Pi_{\bot\beta}^{\alpha}\bs s_{z}^{\beta}}{\partial\Sigma}\stackrel{{\rm bc}}{=}0\quad
\end{equation}
The expressions in the equations of motion for the conjugate momenta
(\ref{omega-s-eoms}) already resemble the gauge invariant variables
(\ref{gaugeinvOmegaOnsurf}),(\ref{gaugeinvSonSurf}) and (\ref{gaugeinvOmegabarOnsurf}),
apart from the $\bar{\partial}$-derivative. But as the matrix $\Pi_{\bot}$
depends only on $\lambda$ and $\bar{\lambda}$, we have on-shell
(\ref{lambda-r-eoms}) $\bar{\partial}\Pi_{\bot}=0$ which allows
to pull in (\ref{omega-s-eoms}) the $\bar{\partial}$-derivative
to the front. The equations of motion then simply become 
\begin{equation}
\bar{\partial}\tilde{\omega}_{z\alpha}\stackrel{{\rm eom}}{=}\bar{\partial}\tilde{\bar{\omega}}_{z}^{\alpha}\stackrel{{\rm eom}}{=}\bar{\partial}\tilde{\bs s}_{z}^{\alpha}\stackrel{{\rm eom}}{=}0\label{omega-s-gaugeinv-eoms}
\end{equation}
Note finally that (not surprisingly) also the action itself can be
rewritten in terms of the gauge invariant variables. As the relations
(\ref{constraintsOnVariations2}) are for arbitrary variations, they
hold in particular for the worldsheet derivatives $\bar{\partial}$.
Using this fact in the action (\ref{Sgh-orig}), we directly obtain
the gauge invariant variables:
\begin{eqnarray}
\lqn{S_{{\rm gh}}[\lambda,\omega_{z},\bar{\lambda},\bar{\omega}_{z},\bs r,\bs s_{z}]=}\nonumber \\
 & \stackrel{\eqref{constraintsOnVariations2}}{=} & \int d^{2}z\:\Bigl[\bar{\partial}\lambda^{\alpha}\underbrace{\bar{\Pi}_{\bot\alpha}\hoch{\beta}\omega_{z\alpha}}_{\tilde{\omega}_{z\alpha}}+\bar{\partial}\bar{\lambda}_{\alpha}\underbrace{\left(\!\Pi_{\bot\beta}^{\alpha}\bar{\omega}_{z}^{\beta}\!-\!(\Pi_{\bot}\gamma_{a}\bs r)^{\alpha}\tfrac{(\lambda\gamma^{a}\bs s)}{2(\lambda\bar{\lambda})}\!\right)}_{\tilde{\bar{\omega}}_{z}^{\alpha}}+\quad\nonumber \\
 &  & +\bar{\partial}\bs r_{\alpha}\underbrace{\Pi_{\bot\beta}^{\alpha}\bs s_{z}^{\beta}}_{\tilde{\bs s}_{z}^{\alpha}}\:\Bigr]\label{actionInTermsOfGaugeInvVar}
\end{eqnarray}

\subsection{With unconstrained variables}

\label{sec:ghost-action-2}So far we have not really used our projection
$P_{(f)}$, but only its Jacobian matrix at the constraint surface
which was known already previously. However, having a projection at
hand, we can easily get rid of the constraint. The resulting action
is certainly not free and not very pleasant, but conceptionally it
is of some interest. In particular it will have additional gauge symmetries
that can be fixed to different constraints than those we started with.
So the idea to remove the pure spinor constraint is simply to replace
$\lambda^{\alpha}$ by $P_{(f)}^{\alpha}(\rho,\bar{\rho})$ of (\ref{ProjGeneral})
with an unconstrained spinor $\rho^{\alpha}$. 
\begin{equation}
\lambda^{\alpha}\equiv P_{(f)}^{\alpha}(\rho,\bar{\rho})
\end{equation}
We have a priori two natural options in order to remove also the constraint
$(\bar{\lambda}\gamma^{a}\bs r)=0$ on $\bs r_{\alpha}$. Either replace
it by 
\begin{equation}
\bs r_{\alpha}\stackrel{?}{\equiv}\bar{\Pi}_{(f)\bot}(\rho,\bar{\rho})_{\alpha}\hoch{\beta}\bs t_{\beta}
\end{equation}
 with some unconstrained $\bs t_{\beta}$ or by 
\begin{equation}
\bs r_{\alpha}\equiv(\bar{\Pi}_{\bot}\bs t)_{\alpha}\equiv\bar{\Pi}_{(f)\bot}\left(\lambda,\bar{\lambda}\right)_{\alpha}\hoch{\beta}\bs t_{\beta}\label{rt-rel}
\end{equation}
with of course still $\lambda^{\alpha}\equiv P_{(f)}^{\alpha}(\rho,\bar{\rho})$.
For both options, the result will be $\gamma$-orthogonal to $\bar{\lambda}$.
Only the latter choice is a projection by itself (with $\bar{\Pi}_{\bot}^{2}=\bar{\Pi}_{\bot}$),
but also the first is (part of) a projection, if not regarded independently,
but together with $\rho^{\alpha}\mapsto\lambda^{\alpha}\equiv P_{(f)}^{\alpha}(\rho,\bar{\rho}$)
, as we have discussed previously. One might want to prefer the first
one in order to get formally exactly the same equations of motion
for $\bs t_{\alpha}$ as for example for $\bar{\rho}_{\alpha}$. However,
the variation of $\bar{\Pi}_{(f)\bot}(\rho,\bar{\rho})$ can be quite
complicated to calculate, while the variation of $\bar{\Pi}_{\bot}\left(\lambda,\bar{\lambda}\right)=\one-\frac{(\gamma^{a}\lambda)\otimes(\gamma_{a}\bar{\lambda})}{2\left(\lambda\bar{\lambda}\right)}\quad\mbox{with }\lambda^{\alpha}\equiv P_{(f)}^{\alpha}(\rho,\bar{\rho})$
is relatively easy to perform and was provided in equation (\ref{deltaPiperp}).
We thus choose the second option and define a family of action functionals
of unconstrained variables via 
\begin{equation}
S_{{\rm gh}(f)}[\rho,\omega_{z},\bar{\rho},\bar{\omega}_{z},\bs t,\bs s_{z}]\equiv S_{{\rm gh}}[\underbrace{P_{(f)}(\rho,\bar{\rho})}_{\lambda},\omega_{z},\underbrace{\bar{P}_{(f)}(\rho,\bar{\rho})}_{\bar{\lambda}},\bar{\omega}_{z},\underbrace{\bar{\Pi}_{\bot}\bs t}_{\bs r},\bs s_{z}]
\end{equation}
Obviously the antighost gauge symmetries of the original action (\ref{gaugetrafoomegaOnsurf})-(\ref{gaugetrafosOnsurf})
will still be \textbf{gauge transformations} $\delta_{(\mu,\bar{\mu},\sigma)}$
of the new action $S_{{\rm gh(f)}}$. However, we expect new gauge
symmetries, namely all transformations of $\rho,\bar{\rho}$ and $\bs t$
that leave $\lambda,\bar{\lambda}$ and $\bs r$ unchanged. So let
us express the variation of the latter in terms of the former. For
$\delta\bs r$ we need the variation of the projection matrix $\delta\Pi_{\bot}$
given in (\ref{deltaPiperp}) (or better its complex conjugate) together
with (\ref{rt-rel}). This yields%
\footnote{If the variables on the righthand side of (\ref{delta-r}) are also
on the constraint surface, i.e. $\bs t_{\alpha}=\bs r_{\alpha}$,$\delta\bs t_{\alpha}=\delta\bs r_{\alpha}$,$\rho^{\alpha}=\lambda^{\alpha}$,$\delta\rho^{\alpha}=\delta\lambda^{\alpha}$,
it reduces to the constraint on $\delta\bs r$ given in (\ref{constraintsOnVariations2}).$\quad\fussend$%
} 
\begin{equation}
\delta\bs r_{\alpha}\equiv(\bar{\Pi}_{\bot}\delta\bs t)_{\alpha}-\frac{(\bs t\gamma_{a}\bar{\lambda})}{2(\lambda\bar{\lambda})}(\bar{\Pi}_{\bot}\gamma^{a}\delta\lambda)_{\alpha}-\frac{(\gamma^{a}\lambda)_{\alpha}}{2(\lambda\bar{\lambda})}(\underbrace{\bs t\Pi_{\bot}}_{\bs r}\gamma_{a}\delta\bar{\lambda})\label{delta-r}
\end{equation}
Using $\delta\lambda=\Pi_{(f)\bot}(\rho,\bar{\rho})\delta\rho+\rest_{(f)\bot}(\rho,\bar{\rho})\delta\bar{\rho}$
(\ref{bigMatrix}), the constrained variation of $\delta\lambda$
can be further expressed in terms of the free variation $\delta\rho$.
So altogether this yields in matrix notation 
\begin{equation}
\left(\!\!\drek{\delta\lambda}{\delta\bar{\lambda}}{\delta\bs r}\!\!\right)\!=\!\left(\begin{array}{ccc}
\Pi_{(f)\bot}(\rho,\bar{\rho}) & \rest_{(f)\bot}(\rho,\bar{\rho}) & 0\\
\bar{\rest}_{(f)\bot}(\rho,\bar{\rho}) & \bar{\Pi}_{(f)\bot}(\rho,\bar{\rho}) & 0\\
\!\!\!\!\!\biggl\{\!\!\!\!\zwek{-(\bs t\gamma_{a}\bar{\lambda})\frac{(\bar{\Pi}_{\bot}\gamma^{a}\Pi_{(f)\bot}(\rho,\bar{\rho}))}{2(\lambda\bar{\lambda})}+}{-\frac{(\gamma^{a}\lambda)\otimes(\bs t\Pi_{\bot}\gamma_{a}\bar{\rest}_{(f)\bot}(\rho,\bar{\rho}))}{2(\lambda\bar{\lambda})}}\!\!\!\!\biggr\}\!\! & \!\!\biggl\{\!\!\!\!\zwek{-(\bs t\gamma_{a}\bar{\lambda})\frac{(\bar{\Pi}_{\bot}\gamma^{a}\rest_{(f)\bot}(\rho,\bar{\rho}))}{2(\lambda\bar{\lambda})}+}{\,-\frac{(\gamma^{a}\lambda)\otimes(\bs t\Pi_{\bot}\gamma_{a}\bar{\Pi}_{(f)\bot}(\rho,\bar{\rho}))}{2(\lambda\bar{\lambda})}}\!\!\!\!\biggr\}\!\!\!\! & \bar{\Pi}_{\bot}
\end{array}\!\!\!\!\right)\!\!\!\left(\!\!\drek{\delta\rho}{\delta\bar{\rho}}{\delta\bs t}\!\!\right)\label{delta-lam-r}
\end{equation}
where the terms in the curly brackets where broken into two lines
just because of lack of space and thus have to be understood as being
summands in the same row of the block-matrix. Also for place reasons
$\lambda$ was used like many times before as a place holder for $P_{(f)}(\rho,\bar{\rho})$
and also the implicit dependence of $\Pi_{\bot}\equiv\Pi_{(f)\bot}(\lambda,\bar{\lambda})$
on $\lambda$ and thus $\rho$ is not indicated above. 

Equation (\ref{delta-lam-r}) is valid for general variations, but
as mentioned above we are looking for \textbf{new gauge transformations
}$\delta_{(\nu,\bar{\nu},\tau)}$ of $\rho,\bar{\rho}$ and $\bs t$
which lead to a vanishing variation of $\lambda,\bar{\lambda}$ and
$\bs r$ on the left-hand side of\textbf{ }(\ref{delta-lam-r}). On
the constraint surface $\rho=\lambda$ the matrix-action on $\delta\rho$
in (\ref{delta-lam-r}) reduces to $\delta\lambda=\Pi_{\bot}\delta\rho$.
Remember that on the constraint surface we have proper projection
properties of the form $\Pi_{\bot}^{2}=\Pi_{\bot}$ (\ref{PiParsquare})
implying $\Pi_{\bot}\Pi_{\Vert}=0$ (\ref{PiparPiperp}). This suggests
that the new gauge symmetry transformations for $\rho$ is (at the
constraint surface) of the form $\delta_{({\rm sym})}\rho\propto\Pi_{\Vert}$.
However, off the constraint surface we do not have $\Pi_{\bot}^{2}=\Pi_{\bot}$
but something of the form $\Pi_{(f)\bot}(\lambda,\bar{\lambda})\Pi_{(f)\bot}(\rho,\bar{\rho})=\Pi_{(f)\bot}(\rho,\bar{\rho})$
((\ref{LinProjProp}) and (\ref{linprojprop})) implying $\Pi_{(f)\Vert}(\lambda,\bar{\lambda})\Pi_{(f)\bot}(\rho,\bar{\rho})=0$.
Still choosing $\delta_{({\rm sym})}\rho\propto\Pi_{(f)\Vert}(\lambda,\bar{\lambda})$
would lead in (\ref{delta-lam-r}) to $\Pi_{(f)\bot}(\rho,\bar{\rho})\delta_{({\rm sym})}\rho\propto\Pi_{(f)\bot}(\rho,\bar{\rho})\Pi_{(f)\Vert}(\lambda,\bar{\lambda})$
which is the wrong order of matrices, if we want to use the just mentioned
$\Pi_{(f)\Vert}(\lambda,\bar{\lambda})\Pi_{(f)\bot}(\rho,\bar{\rho})=0$.
This is the main reason for us to discuss later the new gauge symmetries
only in the case $f(\auxi)=h(\auxi)\equiv\frac{1+\sqrt{1-\auxi}}{2\sqrt{1-\auxi}}$
where we get Hermitian projection matrices which have additional properties
that will resolve this problem. In spite of this issue, there will
certainly be also for general $f$ a new gauge symmetry taking care
of the artificial degrees of freedom that we introduced by using the
projection $P_{(f)}$. These symmetries are given by the 0-eigenvectors
of the matrix in (\ref{delta-lam-r}). For a general function $f$
it would thus require some work to determine this symmetry, while
in the Hermitian case the above naive guess will already work. 

We have now argued about the new gauge symmetries by just discussing
the variation of the constrained variables in terms of the unconstrained
variables. Let us now provide the general variation of the ghost action.
Starting from the variation (\ref{deltaSorig}) of the ghost action
and applying the relation in (\ref{delta-lam-r}) not only for the
general variation $\delta$, but also for the partial derivative $\delta\to\bar{\partial}$,
we obtain \rem{old version of this discussion hidden here in \LyX{}
file } 
\begin{eqnarray}
\lqn{\delta S_{{\rm gh}(f)}[\rho,\omega_{z},\bar{\rho},\bar{\omega}_{z},\bs t,\bs s_{z}]=}\nonumber \\
 & = & -\int_{\Sigma}d^{2}z\:\left(\delta\rho^{T},\delta\bar{\rho}^{T},\delta\bs t^{T}\right)\times\nonumber \\
 &  & \times\!\!\left(\!\!\begin{array}{ccc}
\Pi_{(f)\bot}^{T}(\rho,\bar{\rho}) & \bar{\rest}_{(f)\bot}^{T}(\rho,\bar{\rho}) & \biggl\{\zwek{-(\bs t\gamma_{a}\bar{\lambda})\frac{(\Pi_{(f)\bot}^{T}(\rho,\bar{\rho})\gamma^{a}\Pi_{\bot})}{2(\lambda\bar{\lambda})}+}{-\frac{(\bar{\rest}_{(f)\bot}^{T}(\rho,\bar{\rho})\gamma_{a}\Pi_{\bot}^{T}\bs t)\otimes(\lambda\gamma^{a})}{2(\lambda\bar{\lambda})}}\biggr\}\\
\rest_{(f)\bot}^{T}(\rho,\bar{\rho}) & \bar{\Pi}_{(f)\bot}^{T}(\rho,\bar{\rho}) & \biggl\{\zwek{-(\bs t\gamma_{a}\bar{\lambda})\frac{(\rest_{(f)\bot}^{T}(\rho,\bar{\rho})\gamma^{a}\Pi_{\bot})}{2(\lambda\bar{\lambda})}+}{-\frac{(\bar{\Pi}_{(f)\bot}^{T}(\rho,\bar{\rho})\gamma_{a}\Pi_{\bot}^{T}\bs t)\otimes(\lambda\gamma^{a})}{2(\lambda\bar{\lambda})}}\biggr\}\\
0 & 0 & \bar{\Pi}_{\bot}^{T}
\end{array}\!\!\right)\!\!\left(\!\begin{array}{c}
\bar{\partial}\omega_{z}\\
\bar{\partial}\bar{\omega}_{z}\\
\bar{\partial}\bs s_{z}
\end{array}\!\right)\nonumber \\
 &  & +\int_{\Sigma}d^{2}z\:\left(\delta\omega^{T},\delta\bar{\omega}^{T},\delta\bs s_{z}^{T}\right)\times\nonumber \\
 &  & \times\!\!\left(\begin{array}{ccc}
\Pi_{(f)\bot}(\rho,\bar{\rho}) & \rest_{(f)\bot}(\rho,\bar{\rho}) & 0\\
\bar{\rest}_{(f)\bot}(\rho,\bar{\rho}) & \bar{\Pi}_{(f)\bot}(\rho,\bar{\rho}) & 0\\
\!\!\!\!\biggl\{\!\!\zwek{(\bs t\gamma_{a}\bar{\lambda})\frac{(\Pi_{\bot}^{T}\gamma^{a}\Pi_{(f)\bot}(\rho,\bar{\rho}))}{2(\lambda\bar{\lambda})}+}{+\frac{(\gamma^{a}\lambda)\otimes(\bs t\Pi_{\bot}\gamma_{a}\bar{\rest}_{(f)\bot}(\rho,\bar{\rho}))}{2(\lambda\bar{\lambda})}}\!\!\biggr\}\!\! & \!\!\biggl\{\!\!\!\!\zwek{(\bs t\gamma_{a}\bar{\lambda})\frac{(\Pi_{\bot}^{T}\gamma^{a}\rest_{(f)\bot}(\rho,\bar{\rho}))}{2(\lambda\bar{\lambda})}+}{\,+\frac{(\gamma^{a}\lambda)\otimes(\bs t\Pi_{\bot}\gamma_{a}\bar{\Pi}_{(f)\bot}(\rho,\bar{\rho}))}{2(\lambda\bar{\lambda})}}\!\!\biggr\}\!\!\!\! & -\bar{\Pi}_{\bot}
\end{array}\!\!\!\right)\!\!\left(\!\begin{array}{c}
\bar{\partial}\rho\\
\bar{\partial}\bar{\rho}\\
\bar{\partial}\bs t
\end{array}\!\right)\nonumber \\
 &  & \quad\label{deltaS2}
\end{eqnarray}
Note that because of the projection properties (\ref{LinProjProp})
and (\ref{linprojprop}), the fact that the matrices have argument
$(\rho,\bar{\rho})$ is not in contradiction with (\ref{deltaSorig2}),
where they are evaluated at $(\lambda,\bar{\lambda})$. The equations
of motion for the conjugate momenta $\omega_{z},\bar{\omega}_{z}$
and $\bs s_{z}$ that one reads off from \eqref{deltaS2} look formally
different from \eqref{omega-s-eoms} but are in fact equivalent. The
new gauge symmetries that we have mentioned above will always allow
to fix $\rho^{\alpha}=\lambda^{\alpha}$ and $\bs t^{\alpha}=\bs r^{\alpha}$
and thus return to the original action. For this gauge the equations
of motion from \eqref{deltaS2} precisely reduce to the ones in \eqref{omega-s-eoms}. 

Note that from (\ref{deltaS2}) we can also re-derive the antighost
gauge transformations (\ref{gaugetrafoomegaOnsurf})-(\ref{gaugetrafosOnsurf})
in terms of the projector matrices. If we vary only $\omega_{z},\bar{\omega}_{z}$
and $\bs s_{z}$ and require the variation (\ref{deltaS2}) to vanish
for any $\bar{\partial}\lambda,\bar{\partial}\bar{\lambda}$ and $\bar{\partial}\bs t$,
we obtain 
\begin{equation}
\left(\!\!\!\begin{array}{ccc}
\Pi_{(f)\bot}^{T}(\rho,\bar{\rho}) & \bar{\rest}_{(f)\bot}^{T}(\rho,\bar{\rho}) & \biggl\{\!\!\zwek{-(\bs t\gamma_{a}\bar{\lambda})\frac{(\Pi_{(f)\bot}^{T}(\rho,\bar{\rho})\gamma^{a}\Pi_{\bot})}{2(\lambda\bar{\lambda})}+}{-\frac{(\bar{\rest}_{(f)\bot}^{T}(\rho,\bar{\rho})\gamma_{a}\Pi_{\bot}^{T}\bs t)\otimes(\lambda\gamma^{a})}{2(\lambda\bar{\lambda})}}\!\!\biggr\}\\
\rest_{(f)\bot}^{T}(\rho,\bar{\rho}) & \bar{\Pi}_{(f)\bot}^{T}(\rho,\bar{\rho}) & \biggl\{\!\!\zwek{-(\bs t\gamma_{a}\bar{\lambda})\frac{(\rest_{(f)\bot}^{T}(\rho,\bar{\rho})\gamma^{a}\Pi_{\bot})}{2(\lambda\bar{\lambda})}+}{-\frac{(\bar{\Pi}_{(f)\bot}^{T}(\rho,\bar{\rho})\gamma_{a}\Pi_{\bot}^{T}\bs t)\otimes(\lambda\gamma^{a})}{2(\lambda\bar{\lambda})}}\!\!\biggr\}\\
0 & 0 & \bar{\Pi}_{\bot}^{T}
\end{array}\!\!\!\!\right)\!\!\left(\!\begin{array}{c}
\delta_{{\rm (sym)}}\omega_{z}\\
\delta_{{\rm (sym)}}\bar{\omega}_{z}\\
\delta_{{\rm (sym)}}\bs s_{z}
\end{array}\!\right)\stackrel{!}{=}0
\end{equation}
Again on the constraint surface it is obvious that the symmetry transformation
$\delta_{{\rm (sym)}}\omega_{z\alpha}\equiv\delta_{(\mu)}\omega_{z\alpha}=(\Pi_{\Vert}^{T}\mu_{z})_{\alpha}$
with some spinorial gauge parameter $\mu_{z\alpha}$ obeys this condition.
And for the antighosts this even holds off the constraint surface,
because the order of the projection matrices is here due to the transposition
the correct one, i.e. $\Pi_{(f)\bot}^{T}(\rho,\bar{\rho})\Pi_{(f)\Vert}^{T}(\lambda,\bar{\lambda})=0\quad\forall f$
according to (\ref{LinProjProp}). For the variable $\bs s_{z}$ a
first guess would be $\delta_{(\bs{\sigma})}\bs s_{z}^{\alpha}=(\Pi_{\Vert}\bs{\sigma}_{z})^{\alpha}$,
which turns out to need a compensating transformation of $\bar{\omega}_{z}$.
One thus ends up with the following gauge transformations for the
antighosts 
\begin{eqnarray}
\delta_{(\mu)}\omega_{z\alpha} & = & (\bar{\Pi}_{\Vert}\mu_{z})_{\alpha}\label{gaugetrafoOmega}\\
\delta_{(\bar{\mu},\bs{\sigma})}\bar{\omega}_{z}^{\alpha} & = & (\Pi_{\Vert}\bar{\mu}_{z})^{\alpha}-(\lambda\gamma^{a}\Pi_{\Vert}\bs{\sigma}_{z})\frac{(\gamma_{a}\overbrace{\bar{\Pi}_{\bot}\bs t}^{\bs r})^{\alpha}}{2(\lambda\bar{\lambda})}\label{gaugetrafoOmegabar}\\
\delta_{(\bs{\sigma})}\bs s_{z}^{\alpha} & = & (\Pi_{\Vert}\bs{\sigma}_{z})^{\alpha}\label{gaugetrafoS}
\end{eqnarray}
with as usual $\Pi_{\Vert}\equiv\Pi_{(f)\Vert}(\lambda,\bar{\lambda})$
where $\lambda^{\alpha}\equiv P_{(f)}^{\alpha}(\rho,\bar{\rho})$
and spinorial gauge parameters $\mu_{z\alpha},\bar{\mu}_{z}^{\alpha}$
and $\bs{\sigma}_{z}^{\alpha}$.  Having in mind that $\Pi_{\Vert}=\frac{(\gamma_{a}\bar{\lambda})\otimes(\lambda\gamma^{a})}{2(\lambda\bar{\lambda})}$
and comparing with the form of the gauge transformations in (\ref{gaugetrafoomegaOnsurf})-(\ref{gaugetrafosOnsurf}),
we can deduce the relation between the spinorial gauge parameters
here and the vectorial ones there:%
\footnote{While in (\ref{mumubarsigma}) the expressions for $\mu_{za}$ and
$\bs{\sigma}_{z}^{a}$ can directly be read off from $\delta_{(\mu)}\omega_{z\alpha}$
and $\delta_{(\bs{\sigma})}\bs s_{z}^{\alpha}$, the one for $\bar{\mu}_{z}^{a}$
requires some Fierzing: 
\begin{eqnarray*}
\delta_{(\bar{\mu},\bs{\sigma})}\bar{\omega}_{z}^{\alpha} & \stackrel{\mbox{\tiny\eqref{gaugetrafoOmegabar}}}{=} & \frac{(\gamma_{a}\bar{\lambda})^{\alpha}(\lambda\gamma^{a}\bar{\mu}_{z})}{2(\lambda\bar{\lambda})}-\frac{(\lambda\overbrace{\gamma^{a}\gamma_{b}}^{-\gamma_{b}\gamma^{a}\lqn{{\scriptstyle +2\delta_{b}^{a}}}}\bar{\lambda})(\lambda\gamma^{b}\bs{\sigma}_{z})}{4(\lambda\bar{\lambda})^{2}}(\gamma_{a}\bs r)^{\alpha}=\\
 & \stackrel{{\rm Fierz}}{=} & \underbrace{\frac{(\lambda\gamma^{a}\bar{\mu}_{z})}{2(\lambda\bar{\lambda})}}_{\bar{\mu}_{z}^{a}}(\gamma_{a}\bar{\lambda})^{\alpha}-\underbrace{\frac{(\lambda\gamma^{a}\bs{\sigma}_{z})}{2(\lambda\bar{\lambda})}}_{\bs{\sigma}_{z}^{a}}(\gamma_{a}\bs r)^{\alpha}\qquad\fussend
\end{eqnarray*}
}
\begin{equation}
\mu_{za}=\frac{(\bar{\lambda}\gamma_{a})^{\alpha}\mu_{z\alpha}}{2(\lambda\bar{\lambda})}\,,\quad\bar{\mu}_{z}^{a}=\frac{(\lambda\gamma^{a})_{\alpha}\bar{\mu}_{z}^{\alpha}}{2(\lambda\bar{\lambda})}\,,\quad\bs{\sigma}_{z}^{a}=\frac{(\lambda\gamma^{a})_{\alpha}\bs{\sigma}_{z}^{\alpha}}{2(\lambda\bar{\lambda})}\quad\label{mumubarsigma}
\end{equation}
Both parametrizations, either with $\mu_{za},\bar{\mu}_{z}^{a}$ and
$\bs{\sigma}_{z}^{a}$ or with $\mu_{z\alpha},\bar{\mu}_{z}^{\alpha}$
and $\bs{\sigma}_{z}^{\alpha}$ are reducible. Remembering that $\tr\Pi_{\Vert}=5$,
it is apparent from (\ref{gaugetrafoOmega})-(\ref{gaugetrafoS})
that for each of these spinors only 5 components effectively enter
the gauge transformation.

\subsubsection*{Gauge transformations in the Hermitian case}

The antighost gauge transformations (\ref{gaugetrafoOmega})-(\ref{gaugetrafoS})
formally look the same for any particular choice of $f$ in our projection
$P_{(f)}$. The only way $f$ enters there is via the dependence of
$\lambda^{\alpha}\equiv P_{(f)}^{\alpha}(\rho,\bar{\rho})$ on it.
The situation is different for the new gauge transformations of the
ghosts $\rho^{\alpha},\bar{\rho}_{\alpha}$ and $\bs t_{\alpha}$,
because for the particular choice $f(\auxi)=h(\auxi)\equiv\frac{1+\sqrt{1-\auxi}}{2\sqrt{1-\auxi}}$
the matrix $\Pi_{\Vert}\equiv\Pi_{(h)\Vert}(\lambda,\bar{\lambda})$
on the constraint surface commutes with $\Pi_{(h)\Vert}(\rho,\bar{\rho})$
off the constraint surface (see (\ref{Proj-commutativity}) and (\ref{proj-commutativity})). 

So looking again at (\ref{delta-lam-r}) and having in mind (\ref{Proj-commutativity})
and (\ref{proj-commutativity}), it is obvious that 
\begin{equation}
\delta_{(\bs{\tau})}\bs t_{\alpha}\equiv(\bar{\Pi}_{\Vert}\bs{\tau})_{\alpha}
\end{equation}
for some spinorial gauge parameter $\bs{\tau}_{\alpha}$ will not
induce a variation of $\lambda^{\alpha},\bar{\lambda}_{\alpha}$ or
$\bs r_{\alpha}$ on the left-hand side of (\ref{delta-lam-r}) and
thus be a gauge transformation. Similarly one can try
\begin{eqnarray}
\delta_{(\nu)}\rho^{\alpha} & = & (\Pi_{\Vert}\nu)^{\alpha}\\
\delta_{(\bar{\nu})}\bar{\rho}_{\alpha} & = & (\bar{\Pi}_{\Vert}\bar{\nu})_{\alpha}
\end{eqnarray}
with spinorial gauge parameters $\nu^{\alpha}$ and $\bar{\nu}_{\alpha}$.
As in (\ref{delta-lam-r}) $\delta\rho$ and $\delta\bar{\rho}$ enter
not only in the variation of $\delta\lambda$ and $\delta\bar{\lambda}$,
one might suspect that one needs also some compensating transformation
of $\delta_{(\nu,\bar{\nu})}\bs t$. But one can easily check that
the transformations as they are written above are already a symmetry
and do not need such a compensation. \rembreak\rem{hidden here
in a \LyX{}-note: gauge variations of $\auxii^{a},\auxi$, $\Pi_{(h)\bot}(\rho,\bar{\rho})$
and $\rest_{(h)\bot}(\rho,\bar{\rho})$}

\paragraph{Gauge fixing }

One interesting aspect of this new artificial gauge symmetry is, that
we can try to find different gauge fixings that bring us from the
pure spinor constraint to some other constraint. 

In particular the role of $\rho^{\alpha}$ and $\omega_{z\alpha}$
can now be interchanged by fixing $\omega_{z\alpha}$ (with a constraint
linear in $\omega_{z\alpha}$) and leaving $\rho$ unconstrained.
Choosing a linear constraint for $\omega_{z\alpha}$ might give a
clue how to obtain such a ghost system from an underlying gauge freedom.
A natural gauge fixing constraint might be 
\begin{equation}
(\omega_{z}\gamma^{a}\bar{P}_{(h)}(\rho,\bar{\rho}))\stackrel{!}{=}0\quad\mbox{(gauge 1)}
\end{equation}
\rem{Is it possible to choose a gauge where even $\omega_{z}\gamma^{a}\bar{\rho}=0$?
Would be more appealing as it is only linear in each variable and
could thus also be used for corresponding gauge parameters. However,
it might be more restrictive: In the U5-formalism we have 
\begin{eqnarray*}
\bra{\Psi}Cb_{\mf e}^{\dagger}\ket{\Phi} & \stackrel{{\rm chiral}}{=} & \psi^{+}\phi_{\mf e}+\psi_{\mf e}\phi^{+}-\tfrac{1}{4}\psi^{\mf{cc}}\epsilon_{\mf{ccebb}}\phi^{\mf{bb}}\stackrel{!}{=}0\quad(I)\\
\bra{\Psi}Cb^{\mf e}\ket{\Phi} & \stackrel{{\rm chiral}}{=} & \psi_{\mf a}\phi^{\mf{ea}}+\psi^{\mf{ea}}\phi_{\mf a}\stackrel{!}{=}0\quad(II)
\end{eqnarray*}
Solving (I) $\phi_{\mf e}=-\tfrac{1}{\psi^{+}}\psi_{\mf e}\phi^{+}+\tfrac{1}{4\psi^{+}}\psi^{\mf{cc}}\epsilon_{\mf{ccebb}}\phi^{\mf{bb}}$
and plugging into (II) yields
\begin{eqnarray*}
0 & \stackrel{!}{=} & \psi_{\mf a}\phi^{\mf{ea}}+\psi^{\mf{ea}}\left(-\tfrac{1}{\psi^{+}}\psi_{\mf a}\phi^{+}+\tfrac{1}{4\psi^{+}}\psi^{\mf{cc}}\epsilon_{\mf{ccabb}}\phi^{\mf{bb}}\right)=\\
 & = & \psi_{\mf a}\left(\phi^{\mf{ea}}-\tfrac{\phi^{+}}{\psi^{+}}\psi^{\mf{ea}}\right)+\tfrac{1}{4\psi^{+}}\psi^{\mf{ea}}\psi^{\mf{cc}}\epsilon_{\mf{ccabb}}\phi^{\mf{bb}}
\end{eqnarray*}
In contrast to the pure-spinor case where $\Psi=\Phi$, here (II)
gives additional constraints.}\rembreak which is equivalent to 
\begin{equation}
\tilde{\omega}_{z\alpha}\equiv(\bar{\Pi}_{\bot}\omega_{z})_{\alpha}\stackrel{!}{=}\omega_{z\alpha}\quad\mbox{(gauge 1)}\label{gauge:interchange}
\end{equation}
with $\bar{\Pi}_{\bot}\equiv\bar{\Pi}_{(h)\bot}(P_{(h)}(\rho,\bar{\rho}),\bar{P}_{(h)}(\rho,\bar{\rho}))$.
It fixes only the gauge parameter $\mu_{z}^{a}$. Now the antighost
$\omega_{z\alpha}$ would obey a constraint, while the ghost $\rho^{\alpha}$
would be unconstrained, but not gauge invariant. $\lambda^{\alpha}\equiv P_{(h)}(\rho,\bar{\rho})$
would still be the gauge invariant combination. However, according
to (\ref{actionInTermsOfGaugeInvVar}) (with $\lambda$ replaced by
$P_{(h)}(\rho,\bar{\rho})$), this gauge would not lead to a free
action $S_{{\rm gh}(h)}$. 

Looking at (\ref{delta-lam-r}) it is clear that absorbing the transpose
of the matrix there into the antighosts would lead to another free
action (we use Hermiticity of $\Pi_{(h)\bot}$ and symmetry of $\rest_{(h)\bot}$):
\begin{align}
\left(\!\!\!\begin{array}{ccc}
\bar{\Pi}_{(h)\bot}(\rho,\bar{\rho}) & \bar{\rest}_{(h)\bot}(\rho,\bar{\rho}) & \biggl\{\zwek{-(\bs t\gamma_{a}\bar{\lambda})\frac{(\bar{\Pi}_{(h)\bot}(\rho,\bar{\rho})\gamma^{a}\Pi_{\bot})}{2(\lambda\bar{\lambda})}+}{-\frac{(\bar{\rest}_{(h)\bot}(\rho,\bar{\rho})\gamma_{a}\bar{\Pi}_{\bot}\bs t)\otimes(\lambda\gamma^{a})}{2(\lambda\bar{\lambda})}}\biggr\}\\
\rest_{(h)\bot}(\rho,\bar{\rho}) & \Pi_{(h)\bot}(\rho,\bar{\rho}) & \biggl\{\zwek{-(\bs t\gamma_{a}\bar{\lambda})\frac{(\rest_{(h)\bot}(\rho,\bar{\rho})\gamma^{a}\Pi_{\bot})}{2(\lambda\bar{\lambda})}+}{-\frac{(\Pi_{(h)\bot}(\rho,\bar{\rho})\gamma_{a}\bar{\Pi}_{\bot}\bs t)\otimes(\lambda\gamma^{a})}{2(\lambda\bar{\lambda})}}\biggr\}\\
0 & 0 & \Pi_{\bot}
\end{array}\!\!\!\!\right)\!\!\!\left(\!\!\begin{array}{c}
\omega_{z}\\
\bar{\omega}_{z}\\
\bs s_{z}
\end{array}\!\!\right)\!\stackrel{!}{=}\! & \!\left(\!\!\begin{array}{c}
\omega_{z}\\
\bar{\omega}_{z}\\
\bs s_{z}
\end{array}\!\!\right)\nonumber \\
\mbox{(gauge 2)}
\end{align}
The above (somewhat ugly) constraint would therefore replace in this
gauge the original constraints $(\lambda\gamma^{a}\lambda)=(\bar{\lambda}\gamma_{a}\bar{\lambda})=(\bar{\lambda}\gamma_{a}\bs r)=0$
while keeping a free action.

Conceptionally more interesting would be those gauges which are linear
in each spinorial variable, as $(\bar{\rho}\gamma^{a}\omega_{z})=0$.
This sort of constraint on the $\rho$-ghosts could be easily translated
into an equivalent constraint on the odd gauge parameters of some
underlying gauge symmetry. In contrast the pure spinor constraint
$(\lambda\gamma^{a}\lambda)=0$ would be trivial for anticommuting
parameters. However, it seems that the gauge $(\bar{\rho}\gamma^{a}\omega_{z})=0$
is too strong and would fix $10$ instead of five degrees of freedom. 

Another aim for a possible gauge fixing would be that $\rho^{\alpha}$
and $\bar{\rho}_{\alpha}$ as well as $\omega_{z\alpha}$ and $\bar{\omega}_{z}^{\alpha}$
are treated as complex conjugates in all transformations. This is
so far not the case in the BRST transformations, as well as in the
gauge transformations. Having in mind also the matter fields of the
pure spinor string, with BRST transformation $\es\tet^{\alpha}=\lambda^{\alpha}$,
this would require to identify some variable with $\bar{\tet}^{\alpha}$
(perhaps after turning the gauge parameter $\bar{\nu}_{\alpha}$ into
an anticommuting ghost). It is not yet clear if this would lead anywhere.

\subsection{Ghost action in the U(5) formalism}

\label{sec:ghost-action-3} We will not translate the complete previous
discussion into the U(5) formalism. Instead we just want to relate
some known fact from the literature to our presentation. For simplicity,
we even restrict to the minimal formalism, i.e. neglect the variables
$\bar{\lambda}_{\alpha},\bar{\omega}_{z}^{\alpha},\bs r_{\alpha},\bs s_{z}^{\alpha}$
from (\ref{Sgh-orig}). In the minimal formalism the ghost action
in U(5) coordinates thus reads
\begin{equation}
S[\lambda,\omega_{z}]=\int d^{2}z\quad\bar{\partial}\lambda^{+}\omega_{z+}+\tfrac{1}{2}\bar{\partial}\lambda^{\mf a_{1}\mf a_{2}}\omega_{z\mf a_{1}\mf a_{2}+}+\bar{\partial}\lambda_{\mf a}\omega_{z}^{\mf a}
\end{equation}
Using the constraint (\ref{U5ps-constraint}) we can replace (on the
patch $\lambda^{+}\neq0$) $\lambda^{\mf a}$ by 
\begin{equation}
\lambda_{\mf a}=\tfrac{1}{8\lambda^{+}}\epsilon_{\mf a\mf b_{1}\mf b_{2}\mf b_{3}\mf b_{4}}\lambda^{\mf b_{1}\mf b_{2}}\lambda^{\mf b_{3}\mf b_{4}}
\end{equation}
which leads to 
\begin{eqnarray}
S[\lambda,\omega_{z}] & = & \int d^{2}z\quad\bar{\partial}\lambda^{+}\left(\omega_{z+}-\tfrac{1}{8(\lambda^{+})^{2}}\epsilon_{\mf c\mf b_{1}\mf b_{2}\mf b_{3}\mf b_{4}}\lambda^{\mf b_{1}\mf b_{2}}\lambda^{\mf b_{3}\mf b_{4}}\omega_{z}^{\mf c}\right)+\nonumber \\
 &  & +\tfrac{1}{2}\bar{\partial}\lambda^{\mf a_{1}\mf a_{2}}\left(\omega_{z\mf a_{1}\mf a_{2}}+\tfrac{1}{2\lambda^{+}}\epsilon_{\mf a_{1}\mf a_{2}\mf b_{1}\mf b_{2}\mf c}\lambda^{\mf b_{1}\mf b_{2}}\omega_{z}^{\mf c}\right)
\end{eqnarray}
Now let us compare the antighost combinations in the brackets with
the gauge invariant antighost $\tilde{\omega}_{z}\equiv(\Pi^{T}\omega_{z})$
defined in (\ref{gaugeinvOmegaOnsurf}), but with its non-covariant
version (call it $\omega'_{z})$, where $\bar{\lambda}$ is replaced
by the reference spinor $\bar{\chi}=(1,0,0)$ of equation (\ref{ref-spinor-ansatz})
so that we can use the projection matrix (\ref{U5-proj-matrix}) evaluated
at $\rho=\lambda$ which we denote by $\Pi_{(\bar{\chi})}\equiv\Pi_{(\bar{\chi})}(\lambda,\bar{\lambda})$:
\begin{eqnarray}
\omega'_{z\alpha} & \!\!\!\!\equiv & \!\!\!\!(\Pi_{(\bar{\chi})}^{T})_{\alpha}\hoch{\beta}\omega_{z\beta}=\\
 & \!\!\!\!\stackrel{\mbox{\small\eqref{U5-proj-matrix}}}{=} & \!\!\!\!\left(\!\!\begin{array}{ccc}
1 & 0 & -\tfrac{1}{8(\lambda^{+})^{2}}\epsilon_{\mf b\mf c_{1}\mf c_{2}\mf c_{3}\mf c_{4}}\lambda^{\mf c_{1}\mf c_{2}}\lambda^{\mf c_{3}\mf c_{4}}\\
0 & \delta_{\mf a_{1}}^{\mf b_{1}}\delta_{\mf a_{2}}^{\mf b_{2}}-\delta_{\mf a_{2}}^{\mf b_{1}}\delta_{\mf a_{1}}^{\mf b_{2}} & \tfrac{1}{2\lambda^{+}}\epsilon_{\mf b\mf a_{1}\mf a_{2}\mf c_{1}\mf c_{2}}\lambda^{\mf c_{1}\mf c_{2}}\\
0 & 0 & 0
\end{array}\!\!\right)\!\!\left(\!\!\begin{array}{c}
\omega_{z+}\\
\omega_{z\mf b_{1}\mf b_{2}}\\
\omega_{z}^{\mf b}
\end{array}\!\!\right)\qquad
\end{eqnarray}
In order to be consistent with \eqref{U5-variation} the matrix multiplication
has to be understood as coming with a factor $\tfrac{1}{2}$ when
summing over the double-indices ${\scriptstyle \mf b_{1}\mf b_{2}}$.
We thus arrive at 
\begin{equation}
\omega'_{z\alpha}=\left(\begin{array}{c}
\omega_{z+}-\tfrac{1}{8(\lambda^{+})^{2}}\epsilon_{\mf c_{1}\mf c_{2}\mf c_{3}\mf c_{4}\mf b}\lambda^{\mf c_{1}\mf c_{2}}\lambda^{\mf c_{3}\mf c_{4}}\omega_{z}^{\mf b}\\
\omega_{z\mf a_{1}\mf a_{2}}+\tfrac{1}{2\lambda^{+}}\epsilon_{\mf a_{1}\mf a_{2}\mf c_{1}\mf c_{2}\mf b}\lambda^{\mf c_{1}\mf c_{2}}\omega_{z}^{\mf b}\\
0
\end{array}\right)
\end{equation}
These are precisely the terms that appeared naturally in the action
above, so we can write 
\begin{eqnarray}
S[\lambda,\omega_{z}] & = & \int d^{2}z\quad\bar{\partial}\lambda^{+}\omega'_{z+}+\tfrac{1}{2}\bar{\partial}\lambda^{\mf a_{1}\mf a_{2}}\omega'_{z\mf a_{1}\mf a_{2}}
\end{eqnarray}
This rewriting of the action was first presented in equations (26)
and (27) of \cite{Oda:2001zm} and our $\omega_{z}'$ agrees up to
a conventional sign of the components $\omega_{z}^{\mf a}$ (and $\lambda_{\mf a}$).
The upshot of the discussion in this subsection, however, is that
these gauge invariant combinations $\omega'_{z}$ fit into our more
general projection-picture if one chooses the reference spinor $\bar{\chi}=(1,0,0)$
of equation (\ref{ref-spinor-ansatz}).

\section{Volume form of the pure spinor space}

\label{sec:integration-measure}The holomorphic volume form of the
pure spinor space is already well known. Originally it was given in
\cite{Berkovits:2005bt} p.13,(4.4), or even in \cite{Berkovits:2004px}
p.14, (3.6). It was further elaborated on it in \cite{Mafra:2009wq}
p. 28,29 or p.33, (2.85)-(2.87) and \cite{Mafra:2009wi}p.5, (2.5)-(2.10)
where it takes the form 
\begin{equation}
[d^{11}\lambda]=\tfrac{1}{11!5!(\lambda\os{}{\bar{\lambda}})^{3}}(\bar{\lambda}\gamma_{a})^{\alpha_{1}}(\bar{\lambda}\gamma_{b})^{\alpha_{2}}(\bar{\lambda}\gamma_{c})^{\alpha_{3}}(\gamma^{abc})^{\alpha_{4}\alpha_{5}}\epsilon_{\alpha_{1}\ldots\alpha_{5}\beta_{1}\ldots\beta_{11}}\de\lambda^{\beta_{1}}\cdots\de\lambda^{\beta_{11}}\label{hol-volform}
\end{equation}
See also \cite{Gomez:2009qd} p.16 and \cite{Grassi:2010ca}. At first
sight the 11-form (\ref{hol-volform}) does not look holomorphic in
$\lambda^{\alpha}$. However, going to U(5)-coordinates, one can show
that all $\bar{\lambda}$-dependence disappears.\rem{Reference for
proof? If we want to show it ourselves: $\gamma^{abc}$ in U5-formalism
goes more into the U5-formalism than intended for this paper.} 

Wedging the holomorphic volume form with its complex conjugate yields
according to the above cited references
\begin{equation}
[d^{11}\lambda]\wedge[d^{11}\bar{\lambda}]=\tfrac{1}{11!(\lambda\os{}{\bar{\lambda}})^{3}}(\de\lambda^{\alpha}\de\bar{\lambda}_{\alpha})^{11}
\end{equation}
Let us think of the pure spinor space as being embedded in the ambient
space $\mathbb{C}^{16}$ at $\xi=0$. Can we obtain the above holomorphic
volume form from the holomorphic volume form of $\mathbb{C}^{16}$
by the variable transformations $\rho^{\alpha}\mapsto(\lambda^{\alpha},\auxii^{a})$
or $\rho^{\alpha}\mapsto(\lambda^{\alpha},\Auxii^{a})$ discussed
on page \pageref{var-trafo} and page \pageref{newz}, or by yet another
similar one? The canonical holomorphic volume form of the ambient
space $\mathbb{C}^{16}$ is given by 
\begin{equation}
[d^{16}\rho]\equiv\tfrac{1}{16!}\epsilon_{\alpha_{1}\ldots\alpha_{16}}\de\rho^{\alpha_{1}}\cdots\de\rho^{\alpha_{16}}\label{hol-target-volform}
\end{equation}
The corresponding measure of the complete space is therefore  
\begin{eqnarray}
[d^{16}\rho]\wedge[d^{16}\bar{\rho}] & = & \!\!\!\tfrac{1}{(16!)^{2}}\underbrace{\epsilon_{\alpha_{1}\ldots\alpha_{16}}\epsilon^{\beta_{1}\ldots\beta_{16}}}_{16!\delta_{\alpha_{1}\ldots\alpha_{16}}^{\beta_{1}\ldots\beta_{16}}}\de\rho^{\alpha_{1}}\cdots\de\rho^{\alpha_{16}}\de\bar{\rho}_{\beta_{1}}\cdots\de\bar{\rho}_{\beta_{16}}\!=\qquad\\
 & = & \!\!\!-\tfrac{1}{16!}(\de\rho^{\alpha}\de\bar{\rho}_{\alpha})^{16}\label{target-measure}
\end{eqnarray}
In the corresponding discussion within the toy-model in the appendix
around page \pageref{toy:Hauxi} it became clear that in order to
obtain the expected holomorphic volume form of the constrained space
after the variable transformation it was necessary to redefine $\auxii$
or $\Auxii$ once more in (\ref{toy:Hauxi}). We will try the same
here and define 
\begin{equation}
\Hauxii^{a}\equiv(\lambda\bar{\lambda})\Auxii^{a}\quad,\quad\Hauxi\equiv\tfrac{1}{2}\Hauxii^{a}\bar{\Hauxii}_{a}=(\lambda\bar{\lambda})^{2}\Auxi\label{Hauxii}
\end{equation}
where $\lambda^{\alpha}$ will be understood as being $P_{(h)}^{\alpha}(\rho,\bar{\rho})$.
We have expressed $\Hauxii^{a}$ in terms of $\Auxii^{a}$, as we
will need some equations which where particularly simple for $\Auxii^{a}$.
However, a very good motivation for this reparametrization (apart
from the results of the toy-model) is that $\Hauxii^{a}$ as a function
of $\rho^{\alpha}$ becomes holomorphic and very simple. In order
to see this, we have to use explicit form of $\Auxii^{a}$ in (\ref{newz})
and the fact that according to (\ref{PhiIsModSqu}) and (\ref{Potential})
we have $(\lambda\bar{\lambda})=\tfrac{1}{2}(1+\sqrt{1+\auxi})(\rho\bar{\rho})$.
Then altogether, we have the following variable transformation%
\footnote{\label{fn:Hauxii}With $(\rho\gamma^{a}\rho)$ being the most obvious
choice to parametrize the distance of $\rho^{\alpha}$ from being
a pure spinor, it is a fair question why we have not used $\Hauxii^{a}$
as it is given in (\ref{Hauxii-simple}) from the beginning. There
are two aspects. One is that the equivalence between (\ref{Hauxii-simple})
and (\ref{Hauxii}) holds only in the case $f=h$. For general $f$
the definition $\Hauxii^{a}\equiv(\lambda\bar{\lambda})\Auxii^{a}$
would be equivalent to $\Hauxii^{a}=2(\rho\bar{\rho})f(\auxi)^{2}\frac{(1-\auxi)\auxii^{a}}{(1+\sqrt{1-\auxi})^{2}}$.
In turn, defining $\Hauxii^{a}\equiv\tfrac{1}{2}(\rho\gamma^{a}\rho)$
would not be equivalent to (\ref{Hauxii}) any more and thus will
not have the simple inverse transformation (\ref{inverseWithHauxii}).
The other aspect is that although also $\auxii^{a}=\frac{(\rho\gamma^{a}\rho)}{(\rho\bar{\rho})}$
has a non-simple inverse (\ref{Proj-inv}), that variable appeared
very naturally in the original projection (\ref{ProjGeneral}). It
was introduced as a placeholder to make equations shorter significantly
without loosing the feeling for the $\rho^{\alpha}$-dependence within
the projection. If we had worked with $\Hauxii^{a}$, a lot of factors
$(\rho\bar{\rho})$ would be floating around everywhere. The variable
$\Auxii^{a}$ on the other hand was introduced in (\ref{newz}) to
get the most simple form of the inverse transformation, while $\Auxii^{a}$
as a function of $\rho^{\alpha}$ and $\bar{\rho}_{\alpha}$ is in
turn more complicated. $\quad\fussend$%
}:
\begin{eqnarray}
(\rho^{\alpha},\bar{\rho}_{\alpha}) & \mapsto & (\lambda^{\alpha},\bar{\lambda}_{\alpha},\Hauxii^{a},\bar{\Hauxii}_{a})\\
\mbox{with }\lambda^{\alpha} & \equiv & P_{(h)}^{\alpha}(\rho,\bar{\rho})\label{thesameistrue}\\
\Hauxii^{a} & = & \tfrac{1}{2}(\rho\gamma^{a}\rho)\label{Hauxii-simple}
\end{eqnarray}
Its inverse becomes (based on (\ref{Proj-inv-alt-h}) with $f=h$
and using (\ref{Hauxii}))
\begin{eqnarray}
\rho^{\alpha} & = & \lambda^{\alpha}+\tfrac{1}{2}\frac{\Hauxii^{a}(\bar{\lambda}\gamma_{a})^{\alpha}}{(\lambda\bar{\lambda})}\label{inverseWithHauxii}
\end{eqnarray}
whose differential reads
\begin{eqnarray}
\de\rho^{\alpha} & = & \de\lambda^{\beta}\Bigl(\delta_{\beta}^{\alpha}-\tfrac{1}{2}\frac{\Hauxii^{a}(\bar{\lambda}\gamma_{a})^{\alpha}\bar{\lambda}_{\beta}}{(\lambda\bar{\lambda})^{2}}\Bigr)+\tfrac{1}{2}\frac{\de\Hauxii^{a}(\bar{\lambda}\gamma_{a})^{\alpha}}{\lambda\bar{\lambda}}+\nonumber \\
 &  & +\tfrac{1}{2}\frac{\de\bar{\lambda}_{\beta}\Hauxii^{a}}{\lambda\bar{\lambda}}\Bigl(\gamma_{a}^{\alpha\beta}-\frac{(\bar{\lambda}\gamma_{a})^{\alpha}\lambda^{\beta}}{(\lambda\bar{\lambda})}\Bigr)\label{drhoWithHauxii}
\end{eqnarray}
It is still clear that the inverse transformation (\ref{inverseWithHauxii})
is not holomorphic so that it does not seem like a good idea to try
deriving a holomorphic volume form with it. But note that at least
on the constraint surface $\Hauxii^{a}=0$ it becomes holomorphic.
The same is true for the original transformation (\ref{thesameistrue})
because of $\bei{\partiell{\lambda^{\alpha}}{\bar{\rho}_{\beta}}}{\Hauxii^{a}=0}=\bei{\rest_{(h)\bot}^{\alpha\beta}}{\Hauxii^{a}=0}=0$.
Instead the one-forms in (\ref{drhoWithHauxii}) transform even on
the constraint surface non-holomorphically. But at least all $\de\bar{\lambda}_{\alpha}$-appearance
drops:
\begin{equation}
\bei{\de\rho^{\alpha}}{\Hauxii^{a}=0}=\de\lambda^{\alpha}+\tfrac{1}{2}\frac{\de\Hauxii^{a}\left(\bar{\lambda}\gamma_{a}\right)^{\alpha}}{\lambda\bar{\lambda}}\label{drhoOnsurface}
\end{equation}
This suggests that the variable transformation might lead to the holomorphic
$\lambda$-volume form at least on the surface $\Hauxii^{a}=0$. So
the hope is, that after the transformation we have a split of $[d^{16}\rho]$
into the holomorphic volume form $[d^{11}\lambda]$ of \eqref{hol-volform}
and some volume form for the $\Hauxii^{a}$-space.%
\footnote{\label{fn:zeta-volform}We could think about a suitable covariant
holomorphic volume form for the $\Hauxii^{a}$-space without going
through the transformation of the ambient space volume form. It has
to contain five $\de\Hauxii^{a}$. The five indices have to be saturated
and the only invariant tensors are the 10d epsilon tensor $\epsilon_{a_{1}\ldots a_{10}}$,
and the metric. Otherwise we could contract (if we do not allow $\lambda$-dependence)
only with $\Hauxii^{a}$ or its complex conjugate. Contracting with
the $\epsilon$-tensor is of no help, because it leads again to five
free indices and saturating them with $\Hauxii^{a}$ or $\bar{\Hauxii}_{a}$
always leads to vanishing results. It is therefore impossible to construct
a covariant holomorphic volume form for the $\Hauxii^{a}$-space which
is $\lambda^{\alpha}$-independent. If we instead allow $\lambda$-dependence,
a natural candidate would be $(\lambda\gamma_{a_{1}\ldots a_{5}}\lambda)$.
However, because of the constraint $\Hauxii_{a}(\gamma^{a}\lambda)_{\alpha}=0$
this would again vanish. So we need to use also the complex conjugate
$\bar{\lambda}_{\alpha}$ and hope that this non-holomorphic $\bar{\lambda}$-dependence
might drop when dividing by appropriate powers of $(\lambda\bar{\lambda})$
as it is the case in \eqref{hol-volform}. So the only ansatz which
has a chance to be a holomorphic volume form on the $\Hauxii$-space
is 
\[
[d^{5}\Hauxii]\stackrel{?}{=}\frac{1}{(\lambda\bar{\lambda})^{2}}(\bar{\lambda}\gamma_{a_{1}\ldots a_{5}}\bar{\lambda})\de\Hauxii^{a_{1}}\cdots\de\Hauxii^{a_{5}}\qquad\fussend
\]
} Using (\ref{drhoOnsurface}), the holomorphic volume form (\ref{hol-target-volform})
transforms on the constraint surface as 
\begin{equation}
\bei{[d^{16}\rho]}{\Hauxii^{a}=0}=\tfrac{\epsilon_{\alpha_{1}\ldots\alpha_{16}}}{16!}\bigl(\de\lambda^{\alpha_{1}}+\tfrac{\de\Hauxii^{a}(\bar{\lambda}\gamma_{a})^{\alpha_{1}}}{2(\lambda\bar{\lambda})}\bigr)\cdots\bigl(\de\lambda^{\alpha_{16}}+\tfrac{\de\Hauxii^{a}(\bar{\lambda}\gamma_{a})^{\alpha_{16}}}{2(\lambda\bar{\lambda})}\bigr)\qquad
\end{equation}
As $\lambda^{\alpha}$ and $\Hauxii^{a}$ are constrained variables,
they effectively have $11$ and $5$ components respectively and we
have the identities 
\begin{eqnarray}
\de\lambda^{\alpha_{1}}\cdots\de\lambda^{\alpha_{12}} & = & 0\\
\de\Hauxii^{a_{1}}\cdots\de\Hauxii^{a_{6}} & = & 0
\end{eqnarray}
Using these, the holomorphic volume form of the ambient space becomes
on the constraint surface
\begin{equation}
\bei{[d^{16}\rho]}{\Hauxii^{a}=0}\!\!\!=\!\!\tfrac{1}{32\cdot5!11!(\lambda\bar{\lambda})^{5}}\!(\bar{\lambda}\gamma_{a_{1}})^{\alpha_{1}}\!\!\cdots\!(\bar{\lambda}\gamma_{a_{5}})^{\alpha_{5}}\!\epsilon_{\alpha_{1}\ldots\alpha_{16}}\!\de\lambda^{\alpha_{6}}\cdots\de\lambda^{\alpha_{16}}\de\Hauxii^{a_{1}}\cdots\de\Hauxii^{a_{5}}\qquad
\end{equation}
Any matrix with two upper indices can be expanded in $\gamma_{a}^{\alpha\beta},\gamma_{abc}^{\alpha\beta}$
and $\gamma_{abcde}^{\alpha\beta}$ where only the middle one is antisymmetric.
This means that any antisymmetric matrix $A^{[\alpha\beta]}$ can
be written as (see e.g. (D.142) of \cite{Guttenberg:2008ic}) 
\begin{equation}
A^{[\alpha\beta]}=\tfrac{1}{16\cdot3!}\gamma_{abc}^{\alpha\beta}\gamma_{\gamma\delta}^{abc}A^{\gamma\delta}
\end{equation}
So we can in particular replace 
\begin{eqnarray}
\tfrac{1}{(\lambda\bar{\lambda})^{2}}(\bar{\lambda}\gamma_{a_{1}})^{[\alpha_{1}}(\bar{\lambda}\gamma_{a_{2}})^{\alpha_{2}]} & = & \tfrac{1}{16\cdot3!(\lambda\bar{\lambda})^{2}}(\bar{\lambda}\gamma_{a_{1}}\gamma^{bcd}\gamma_{a_{2}}\bar{\lambda})\gamma_{bcd}^{\alpha_{1}\alpha_{2}}
\end{eqnarray}
to obtain 
\begin{eqnarray}
\bei{[d^{16}\rho]}{\Hauxii^{a}=0}\!\!\! & \!\!= & \!\!\!\!\tfrac{1}{5!11!3!32\cdot16}\tfrac{1}{(\lambda\bar{\lambda})^{3}}\gamma_{bcd}^{\alpha_{1}\alpha_{2}}(\bar{\lambda}\gamma_{a_{3}})^{\alpha_{3}}\cdots(\bar{\lambda}\gamma_{a_{5}})^{\alpha_{5}}\epsilon_{\alpha_{1}\ldots\alpha_{16}}\de\lambda^{\alpha_{6}}\cdots\de\lambda^{\alpha_{16}}\times\nonumber \\
 &  & \!\!\!\!\times\tfrac{1}{(\lambda\bar{\lambda})^{2}}(\bar{\lambda}\gamma_{a_{1}}\hoch{bcd}\tief{a_{2}}\bar{\lambda})\de\Hauxii^{a_{1}}\cdots\de\Hauxii^{a_{5}}
\end{eqnarray}
This is already reasonably close to a split into (\ref{hol-volform})
and a volume form for the $\Hauxii^{a}$-space, in particular if one
has in mind footnote \ref{fn:zeta-volform} in which we argued that
the $\Hauxii^{a}$-volume form is expected to be $\lambda$-dependent.
There just remains the problem that the contraction of the indices
${\scriptstyle bcd}$ is between the two {}``factors'' so that factorization
is not yet obvious. We believe, however, that because of the constraints
on $\lambda^{\alpha}$ \textbf{and }on $\Hauxii^{a}$ there will be
identities similar to (\ref{toy:identity3}) for the toy-model which
will rearrange the contractions and thus solve this issue. We leave
this for further study.

\section{Conclusions}

In this article we have introduced a family $P_{(f)}^{\alpha}$ (parametrized
by a function $f$) of covariant non-linear projections to the pure
spinor space whose linearization on the constraint surface reduced
for all of them to the transpose of the known projector to the gauge
invariant part of the antighost $\omega_{z\alpha}$. In addition we
introduced similar linear projectors to gauge invariant parts for
the non-minimal variables $\bar{\omega}_{z}^{\alpha}$ and $\bs s_{z}^{\alpha}$. 

A priori the simple choice $f=1$ seems natural for the non-linear
projection, but does not lead to any particular properties. Instead
we presented a non-trivial function $h$ for which the projection
can be derived from a potential $\Phi$ and has Hermitian Jacobian
matrices. Hermiticity was essential to derive the explicit form for
the additional gauge transformations that one obtains if one replaces
the pure spinors in the string action by projections of an unconstrained
spinor. We have discussed possible gauge fixings but it remains to
see whether one can transform in this way to a formalism which has
some advantage to the original. 

Regarding the pure spinor space (parametrized by $\lambda^{\alpha}$)
as being embedded into an ambient space $\mathbb{C}^{16}$ (parametrized
by $\rho^{\alpha}$), the projection becomes part of a variable transformation:
$\rho^{\alpha}\mapsto(\lambda^{\alpha},\auxii^{a})$. Depending on
the application the latter variable was also redefined to either $\Auxii^{a}$
or $\Hauxii^{a}$. All of them have in common that they are constrained.
We have derived the transformation of the holomorphic volume form
under of $\mathbb{C}^{16}$ under this variable transformation, hoping
that it would factorize into the holomorphic volume form of the pure
spinor space and a volume form of the rest. In the toy-model such
a factorization worked perfectly. In the pure spinor case the result
is promising, but contains contractions between the two {}``factors''
which hopefully can be resolved via some identities. If this could
be completed, the corresponding reparametrization might by a way to
derive path-integral measures for higher genus. We leave this for
further study. 

Having a projection to the pure spinor space allows to obtain solutions
to the pure spinor constraint without switching to the U(5) formalism.
The latter is based on a Fock space representation whose vacuum has
to be a pure spinor itself. Having concrete pure spinors in the standard
representation should thus allow to derive concrete intertwiners between
the standard and the Fock space representation. 

One further application of the present formalism might be the projection
of unphysical contributions in the computation of the force between
D-branes. Indeed, this check in the context of the pure spinor formalism
has never been done. Similarly, the computation of the partition function
\cite{Aisaka:2008vw} can be probably expressed in closed form using
this non-linear projections. We hope to report on these applications
in the future.

\subsection*{Acknowledgements}

S.G. would like to thank Nicolas Orantin and Stavros Papadakis for
information about the website of sequences oeis.org \cite{oeis:2011}
(see footnote \ref{fn:series} on page \pageref{fn:series}) and further
Stavros for additional discussions and for testing numerically the
inequality $\auxi\leq1$ in (\ref{xin01}) or (\ref{xsmaller1}).
The work of S.G was financed in the initial phase by a {}``Scholarship
for action C 'Attracting foreign researchers' '' of the University
of Torino \rem{reference number?} and in the final phase by a research
fellowship of the Portuguese research foundation FCT (reference number
SFRH/BPD/63719/2009). 

\appendix

\section{Toy model}

\label{app:toy-model}\global\long\def\toydim{\mc N}
Many of the properties of pure spinor space can also be observed for
a simple toy model which helped us to find the projection that we
presented in the main part. In this appendix we will be less rigorous
in some statements and put more emphasis to present the logic that
we followed. 

Let us consider complex variables $\lambda^{I}\in\mathbb{C},\: I\in\{1,\ldots,\toydim\}$
obeying the quadratic constraint 
\begin{equation}
\lambda^{2}\equiv\lambda^{I}\delta_{IJ}\lambda^{J}=0\label{toy:constraint}
\end{equation}
We will denote the complex conjugate by $\bar{\lambda}_{I}$ and use
the metric $\delta_{IJ}$ to pull indices up or down, i.e. $\lambda_{I}=\delta_{IJ}\lambda^{J}$.
In principle there is thus no need to distinguish between upper and
lower indices, but we will still do so, because we understand the
Einstein summation convention to be applied only when an index is
repeated with opposite vertical position.

If one wants to explicitly solve the constraint, it is convenient
to introduce 'lightcone-coordinates' 
\begin{equation}
\lambda^{\pm}\equiv\tfrac{1}{\sqrt{2}}\left(\lambda^{\toydim-1}\pm i\lambda^{\toydim}\right)\label{toy:lightcone-coord}
\end{equation}
 in which the metric becomes off-diagonal for these two entries,
i.e. $\lambda^{+}=\lambda_{-}$, $\lambda^{-}=\lambda_{+}$ while
it remains diagonal for the remaining $\toydim-2$ indices ${\scriptstyle i}$,
i.e. $\lambda^{i}=\lambda_{i}$:
\begin{equation}
\lambda^{2}=2\lambda^{+}\lambda^{-}+\lambda^{i}\lambda_{i}=0
\end{equation}
For $\toydim=3$ this is the model discussed for example in section
3.1.3. of \cite{Grassi:2007va}. In a patch where $\lambda^{+}\neq0$,
one can thus solve the constraint $\lambda^{2}=0$ for $\lambda^{-}$
as a function of $\lambda^{+}$ and the $\lambda^{i}$'s. Let us give
this function the name $\lambda_{{\rm sol}}^{-}$, as we will later
refer to it: 
\begin{equation}
\lambda_{{\rm sol}}^{-}(\lambda^{+},\lambda^{i})\equiv-\tfrac{1}{2\lambda^{+}}\lambda^{i}\lambda_{i}\label{toy:lamsol}
\end{equation}
These coordinates correspond in the pure spinor case to the U(5)-covariant
coordinates. 

A covariant derivative%
\footnote{The covariant derivative $D_{\lambda^{I}}$ corresponds in the pure
spinor sigma model to the gauge invariant part of the antighost $\omega_{z}$.
The projection to this gauge invariant part as presented in \cite{Oda:2001zm}
and \cite{Berkovits:2010zz} was the starting point for this article.
$\quad\fussend$%
} which respects the constraint (\ref{toy:constraint}) can be constructed
using some reference vector $\bar{\chi}_{I}$:
\begin{eqnarray}
D_{\lambda^{I}} & \equiv & \partial_{\lambda^{I}}-\tfrac{1}{(\lambda\bar{\chi})}\lambda_{I}\bar{\chi}^{J}\partial_{\lambda^{J}}\equiv\Pi_{(\bar{\chi})\bot\, I}^{T}\hoch J\partial_{\lambda^{J}}
\end{eqnarray}
This defines the projection matrix $\Pi_{(\bar{\chi})\bot}^{T}$ whose
transpose 
\begin{equation}
\Pi_{(\bar{\chi})\bot}=\one-\tfrac{\bar{\chi}\otimes\lambda}{(\bar{\chi}\lambda)}
\end{equation}
 projects a general vector to one which is orthogonal (therefore the
subscript $\bot$) to $\lambda^{K}$ or equivalently 
\begin{eqnarray}
\lambda^{K}\delta_{KI}\Pi_{(\bar{\chi})\bot}\hoch I\tief J & = & 0
\end{eqnarray}
and in addition is idempotent, as one can easily check:
\begin{equation}
(\Pi_{(\bar{\chi})\bot})^{2}=\Pi_{(\bar{\chi})\bot}
\end{equation}
Note that in particular every variation of $\lambda^{I}$ consistent
with (\ref{toy:constraint}) is orthogonal to $\lambda^{I}$ (i.e.
$\lambda^{I}\delta_{IJ}\delta\lambda^{J}=0$). $\Pi_{(\bar{\chi})\bot}$
thus acts on $\delta\lambda$ like the identity:
\begin{equation}
\Pi_{(\bar{\chi})\bot}\hoch I\tief J\delta\lambda^{J}=\delta\lambda^{J}\label{toy:PiDeltalamIsDeltalam}
\end{equation}
A general variation of a function of $\lambda^{I}$ is thus of the
form 
\begin{equation}
\delta\lambda^{T}\partial_{\lambda}=(\Pi_{(\bar{\chi})\bot}\delta\lambda)^{T}\partial_{\lambda}=\delta\lambda^{T}(\Pi_{(\bar{\chi})\bot}^{T}\partial_{\lambda})=\delta\lambda^{T}D_{\lambda}
\end{equation}
and thus reproduces the covariant derivative from which we started. 

We will not restrict $\bar{\chi}$ to be only a constant reference
vector, but allow e.g. also $\bar{\chi}_{I}=\bar{\lambda}_{I}$. As
this will be the most important case, we denote 
\begin{equation}
\Pi_{\bot}\equiv\Pi_{(\bar{\lambda})\bot}=\one-\tfrac{\bar{\lambda}\otimes\lambda}{(\bar{\lambda}\lambda)}\label{toy:Pionshell}
\end{equation}
Particular properties of this projection matrix are its Hermiticity
and the fact that (seen as a function of $\lambda$ and $\bar{\lambda}$)
it is homogeneous of degree (0,0), i.e. 
\begin{equation}
\Pi_{\bot}(c\lambda,\bar{c}\bar{\lambda})=\Pi_{\bot}(\lambda,\bar{\lambda})\quad\forall c\in\mathbb{C}
\end{equation}
A natural question is now, whether one can integrate (\ref{toy:PiDeltalamIsDeltalam}).
Or in other words, whether we can find a projection 
\begin{equation}
P_{(\bar{\chi})}^{I}:\quad\rho^{I}\mapsto\lambda^{I}\equiv P_{(\bar{\chi})}^{I}(\rho)
\end{equation}
such that its variation just produces the linear projector $\Pi_{\bot}$.
I.e. if we define $\Pi_{(\bar{\chi})\bot}(\rho)$ via $\delta P_{(\bar{\chi})}^{I}(\rho)\equiv(\Pi_{(\bar{\chi})\bot}(\rho)\delta\rho)^{I}$
or equivalently 
\begin{equation}
\Pi_{(\bar{\chi})\bot}(\rho)^{I}\tief J\equiv\partial_{\rho^{J}}P_{(\bar{\chi})}^{I}(\rho)
\end{equation}
then this matrix should reduce on the constraint surface $\rho=\lambda$
to $\Pi_{(\bar{\chi})\bot}(\lambda)$. So the condition is 
\begin{equation}
\bei{\partial_{\rho^{J}}P_{(\bar{\chi})}^{I}(\rho)}{\rho=\lambda}\stackrel{!}{=}\delta_{J}^{I}-\tfrac{\bar{\chi}^{I}\lambda_{J}}{(\bar{\chi}\lambda)}\equiv\Pi_{(\bar{\chi})\bot}\hoch I\tief J\label{toy:to-solve}
\end{equation}
The naive extension of the righthand side off the constraint surface
to $\partial_{\rho^{J}}P_{(\bar{\chi})}^{I}(\rho)\stackrel{!}{=}\delta_{J}^{I}-\tfrac{\bar{\chi}^{I}\rho_{J}}{(\bar{\chi}\rho)}$
is not so easily integrable, but already the terms one obtains by
integrating it while treating the $\rho$-dependent denominator as
a constant leads for certain $\bar{\chi}$ to a solution of (\ref{toy:to-solve})
\begin{equation}
P_{(\bar{\chi})}^{I}(\rho)\equiv\rho^{I}-\frac{1}{2(\bar{\chi}\rho)}\rho^{2}\bar{\chi}^{I}\quad(\mbox{if }\bar{\chi}^{I}\bar{\chi}_{I}=0)\label{toy:PchiWithpurechi}
\end{equation}
However, the extra condition $\bar{\chi}^{I}\bar{\chi}_{I}=0$ is
necessary for $\lambda^{I}\equiv P_{(\bar{\chi})}^{I}(\rho)$ to obey
the constraint $\lambda^{2}=0$ for all $\rho$. It is clear that
we need additional terms in the case where $\bar{\chi}$ is not a
constrained vector, e.g. for the covariant choice $\bar{\chi}=\bar{\rho}$.
An idea would be to replace $\bar{\chi}$ in turn by a projection
that uses $\rho$ as reference spinor $\bar{\chi}_{I}\to\bar{\chi}_{I}-\frac{1}{2(\bar{\chi}\rho)}(\bar{\chi})^{2}\rho_{I}$
and then again replace $\rho_{I}$ by the projection (\ref{toy:PchiWithpurechi})
and then continue iteratively. The main insight from this procedure
is that we will obtain something of the form\enlargethispage*{1cm}
\begin{eqnarray}
P_{(\bar{\chi})}^{I}(\rho) & \equiv & f(\auxi)\rho^{I}-g(\auxi)\frac{\rho^{2}}{(\rho\bar{\chi})}\bar{\chi}^{I}\label{toy:projector-ansatz}\\
\mbox{with }\auxi & \equiv & \frac{\rho^{2}\bar{\chi}^{2}}{(\rho\bar{\chi})^{2}}\label{toy:xz-def}
\end{eqnarray}
with some functions $f,g$ to be determined. We will understand this
as an ansatz from now on and can forget about the iterative motivation%
\footnote{\label{fn:series}Coming from the iteration idea of the projection
(\ref{toy:PchiWithpurechi}), our original approach was via a power
series ansatz\vspace{-.2cm} 
\[
P_{(\bar{\chi})}^{I}(\rho)\equiv\rho^{I}\bigl({\scriptstyle \sum_{k=0}^{\infty}}\alpha_{k}\auxi^{k}\bigr)-\bar{\chi}^{I}\tfrac{\rho^{2}}{(\rho\bar{\chi})}\bigl({\scriptstyle \sum_{k=0}^{\infty}}\beta_{k}\auxi^{k}\bigr),\quad\alpha_{k},\beta_{k}\in\mathbb{C}
\]
In order to be a projection, one needs that for constrained $\lambda^{I}$
(i.e. with $\lambda^{2}=0$) it should reduce to the identity map
$P_{(\bar{\chi})}^{I}(\lambda)\stackrel{!}{=}\lambda^{I}$, which
fixes $\alpha_{0}=1$. From the condition that $(P_{(\bar{\chi})}(\rho))^{2}=0$
one obtains $\beta_{0}=\tfrac{1}{2}$ and in addition (using already
$\alpha_{0}=1$,$\beta_{0}=\tfrac{1}{2}$) the recursion relation
$\left(\alpha_{n}-2\beta_{n}\right)=-\sum_{k=1}^{n-1}\alpha_{k}\left(\alpha_{n-k}-2\beta_{n-k}\right)-\sum_{k=0}^{n-1}\beta_{k}\beta_{n-1-k}$. 

If one fixes by hand $\alpha_{k}=0\quad\forall k\geq1$ and makes
a reparametrization $\beta_{n}\equiv\frac{\gamma_{n}}{2^{2n+1}}$,
it leads to the new series $\gamma_{n+1}=\sum_{k=0}^{n}\gamma_{k}\gamma_{n-k}\:(\gamma_{0}=1)$,
whose first entries (starting with $\gamma_{0}$) are $\{1,1,2,5,14,42,132,429,1430,4862,...\}$.
 According to oeis.org \cite{oeis:2011} (thanks to Nicolas Orantin
and Stavros Papadakis for information about this web-site!), these
are the Catalan numbers or Segner numbers $\gamma_{n}=\frac{(2n)!}{n!(n+1)!}$.
For them it is well known that $\sum_{n=0}^{\infty}\gamma_{n}\auxi^{n}=\frac{2}{1+\sqrt{1-4\auxi}}$.
Coming back to $\beta_{n}=\frac{\gamma_{n}}{2^{2n+1}}$, we obtain
$\sum_{n=0}^{\infty}\beta_{n}\auxi^{n}=\frac{1}{1+\sqrt{1-\auxi}}$.
Therefore the power series ansatz for our projector with the manual
choice $\alpha_{k}=0\quad\forall k\geq1$ turns into 
\[
P_{(\bar{\chi})}^{I}(\rho)=\rho^{I}-\bar{\chi}^{I}\tfrac{\rho^{2}}{(\rho\bar{\chi})}\tfrac{1}{1+\sqrt{1-\auxi}}\qquad\fussend
\]
}. The variable $\auxi$ is homogeneous of degree $0$ w.r.t. $\rho$
and also w.r.t. $\bar{\chi}$. The above ansatz for $P_{(\bar{\chi})}^{I}(\rho)$
is therefore homogeneous of degree $1$ in $\rho$ and $0$ in $\bar{\chi}$.

For a vector $\lambda^{I}$ that lies already on the constraint-surface
$\lambda^{2}\equiv\lambda^{I}\delta_{IJ}\lambda^{J}=0$,  the map
$P_{(\bar{\chi})}^{I}$ (being a projection) should be the identity:
\begin{eqnarray}
\lambda^{I} & \stackrel{!}{=} & P_{(\bar{\chi})}^{I}(\lambda)=\lambda^{I}f\left(0\right)
\end{eqnarray}
This determines 
\begin{equation}
f(0)=1
\end{equation}
The other requirement for $P_{(\bar{\chi})}^{I}$ is that the image
$P_{(\bar{\chi})}^{I}(\rho)$ lies on the constraint surface for every
$\rho^{I}$, i.e.
\begin{eqnarray}
0 & \stackrel{!}{=} & P_{(\bar{\chi})}^{I}(\rho)P_{(\bar{\chi})I}(\rho)=\\
 & = & \rho^{2}\left(f(\auxi)^{2}-2f(\auxi)g(\auxi)+\auxi g(\auxi)^{2}\right)
\end{eqnarray}
So at least away from the constraint surface (i.e. for $\rho^{2}\neq0$)
we need the bracket to vanish and thus obtain a priori two solutions
for $f$ in terms of $g$ or vice versa 
\begin{eqnarray}
f(\auxi) & = & g(\auxi)\left(1\pm\sqrt{1-\auxi}\right)
\end{eqnarray}
The condition $f(0)=1$ fixes the sign to be the upper one. This
fixes $g$ uniquely in terms of $f$: 
\begin{equation}
g(\auxi)=\frac{f(\auxi)}{1+\sqrt{1-\auxi}},\quad f(0)=1
\end{equation}
Plugging this back into the ansatz (\ref{toy:projector-ansatz}),
we obtain a \textbf{family of projections} that still depends on the
choice of the reference spinor $\bar{\chi}$ and the function $f$:
\begin{eqnarray}
P_{(f,\bar{\chi})}^{I}(\rho) & \equiv & f(\auxi)\left(\rho^{I}-\frac{1}{1+\sqrt{1-\auxi}}\frac{\rho^{2}}{(\rho\bar{\chi})}\bar{\chi}^{I}\right)\quad\mbox{with }\auxi\equiv\frac{\rho^{2}\bar{\chi}^{2}}{(\rho\bar{\chi})^{2}}
\end{eqnarray}
For $\bar{\chi}=\bar{\rho}$, it is convenient to define 
\begin{equation}
\auxii\equiv\frac{\rho^{2}}{(\rho\bar{\rho})}\label{toy:z-def}
\end{equation}
which implies 
\begin{equation}
\auxi=\auxii\bar{\auxii}
\end{equation}
and the projector becomes%
\footnote{\label{fn:toy:xis1}When $\bar{\rho}_{I}$ is really the complex conjugate
of $\rho^{I}$, then we have $\auxi\leq1$ 
\begin{eqnarray*}
(\rho\bar{\rho})^{2} & = & \bigl({\scriptstyle \sum_{I}}\abs{\rho^{I}}^{2}\bigr)^{2}=\bigl({\scriptstyle \sum_{I}}\abs{(\rho^{I})^{2}}\bigr)^{2}\geq\abs{{\scriptstyle \sum_{I}}(\rho^{I})^{2}}^{2}=\abs{\rho^{2}}^{2}=\rho^{2}\bar{\rho}^{2}\\
\dann(\rho\bar{\rho})^{2} & \geq & \rho^{2}\bar{\rho}^{2}\quad\dann\auxi\geq1
\end{eqnarray*}
The fact $\auxi\leq1$ makes the projection (\ref{toy:projGeneral})
particularly well-behaved. But note that this statement is true only
if the reference spinor $\bar{\chi}=\bar{\rho}$ is really the complex
conjugate of $\rho$.$\quad\fussend$%
}$\hoch ,$%
\footnote{Let us discuss two special cases. 
\begin{itemize}
\item For $\toydim=1$ the constraint $0=\lambda^{2}=(\Re(\lambda)+i\Im(\lambda))^{2}=\Re(\lambda)^{2}-\Im(\lambda)^{2}+2i\Re(\lambda)\Im(\lambda)$
implies $\Re(\lambda)=\Im(\lambda)=0\dann\lambda=0$. Consistent with
this, the projection maps every complex number $\rho$ to $0$. Note
that at quantum level, or even in the ring of supernumbers, $\lambda^{2}=0$
has also nontrivial solutions. 
\item In dimension $\toydim=2$, the solution space to $0=\lambda^{2}=2\lambda^{+}\lambda^{-}$
is the union of $\lambda^{+}=0$ and $\lambda^{-}=0$. The projection
\eqref{toy:projGeneral} reads for $f=1$ 
\begin{eqnarray*}
P_{(f=1)}^{+}(\rho,\bar{\rho}) & = & \rho^{+}-\bar{\rho}^{+}\frac{2\rho^{+}\rho^{-}}{\rho^{+}\bar{\rho}_{+}+\rho^{-}\bar{\rho}_{-}+\sqrt{(\rho^{+}\bar{\rho}_{+}-\rho^{-}\bar{\rho}_{-})^{2}}}\\
 & = & \rho^{+}-\bar{\rho}_{-}\frac{2\rho^{+}\rho^{-}}{\rho^{+}\bar{\rho}_{+}+\rho^{-}\bar{\rho}_{-}+\abs{\underbrace{\rho^{+}\bar{\rho}_{+}-\rho^{-}\bar{\rho}_{-}}_{{\rm real}}}}=\\
 & = & \left\{ \begin{array}{ccc}
\rho^{+}-\frac{\bar{\rho}_{-}\rho^{-}}{\bar{\rho}_{+}} & {\rm for} & \abs{\rho^{+}}>\abs{\rho^{-}}\\
0 & {\rm for} & \abs{\rho^{+}}\leq\abs{\rho^{-}}
\end{array}\right.\\
P_{(f=1)}^{-}(\rho,\bar{\rho}) & = & \left\{ \begin{array}{ccc}
0 & {\rm for} & \abs{\rho^{+}}\geq\abs{\rho^{-}}\\
\rho^{-}-\frac{\bar{\rho}_{+}\rho^{+}}{\bar{\rho}_{-}} & {\rm for} & \abs{\rho^{+}}<\abs{\rho^{-}}
\end{array}\right.
\end{eqnarray*}
On the constraint surface this implies:
\[
\left(\zwek{P_{(f=1)}^{+}(\lambda^{+},0)}{P_{(f=1)}^{-}(\lambda^{+},0)}\right)=\left(\zwek{\lambda^{+}}0\right)\quad,\quad\left(\zwek{P_{(f=1)}^{+}(0,\lambda^{-})}{P_{(f=1)}^{-}(0,\lambda^{-})}\right)=\left(\zwek 0{\lambda^{-}}\right)\quad\fussend
\]
\end{itemize}
} 
\begin{equation}
\lambda^{I}\equiv P_{(f)}^{I}(\rho,\bar{\rho})\equiv f(\auxi)\left(\rho^{I}-\frac{\auxii}{1+\sqrt{1-\auxi}}\bar{\rho}^{I}\right)\label{toy:projGeneral}
\end{equation}
As long as $\bar{\rho}$ is treated independently from $\rho$ and
as long as we are not talking about reality properties, there is no
loss of generality when using this second form. We can simply see
$\bar{\rho}$ as an independent vector and are back at $P_{(f,\bar{\chi})}$.
Then $\bar{\auxii}$ is not the complex conjugate any more but it
is still formally defined in the same way as before, i.e. as $\bar{\auxii}\equiv\frac{\bar{\chi}^{2}}{(\rho\bar{\chi})}$.
The notation will thus reflect only which point of view we take. We
will stress it, as soon as it makes a fundamental difference.  The
absolute value square of the image of the projection is 

\begin{equation}
(\lambda\bar{\lambda})=2(\rho\bar{\rho})\abs{f(\auxi)}^{2}\frac{1-\auxi}{1+\sqrt{1-\auxi}}\label{toy:absval}
\end{equation}
The absolute value (\ref{toy:absval}) shows that if $f(1)$ is non-singular,
the projection (\ref{toy:projGeneral}) maps to the origin for $\auxi=1$.
\begin{equation}
\bei{P_{(f)}^{I}(\rho,\bar{\rho})}{\auxi=1}=0
\end{equation}
In particular when $f$ has no zero's (like the valid choice $f=1$),
$\auxi=1$ is the only case where the projection vanishes (apart from
$(\rho\bar{\rho})=0$ for which $\auxi$ is not well-defined). Using
this insight and plugging $\auxi=1$ and $\lambda^{I}=0$ back into
(\ref{toy:projGeneral}), we obtain $0=\rho^{I}-\auxii\bar{\rho}^{I}$
and therefore 
\begin{equation}
\auxi=1\quad\iff\quad\rho^{I}=\auxii\bar{\rho}^{I}\quad(\mbox{or }\bar{\rho}_{I}=\bar{\auxii}\rho_{I})
\end{equation}
\frem{The righthand side should further be equivalent to $\rho^{I}$
being a complex multiple of a real vector, i.e. $\rho^{I}=c\cdot\tilde{\rho}^{I}$
with $c\in\mathbb{C}$ and real $\tilde{\rho}^{I}$.}

\paragraph{Projection as part of a variable transformation}

One can partially invert the projection and express $\rho^{I}$ as
a linear combination of $\lambda^{I}$ and $\bar{\lambda}_{I}$ with
$\auxii$-dependent coefficients (having in mind that $\auxi=\auxii\bar{\auxii}$):

\begin{equation}
\rho^{I}(\lambda,\bar{\lambda},\auxii,\bar{\auxii})=\frac{1+\sqrt{1-\auxi}}{2f(\auxi)\sqrt{1-\auxi}}\lambda^{I}+\frac{\auxii}{2\bar{f}(\auxi)\sqrt{1-\auxi}}\bar{\lambda}^{I}\label{toy:inverse}
\end{equation}
 The unconstrained vector $\rho^{I}$ can thus be seen as a non-holomorphic
function of $\lambda^{I}$ and $\auxii$ and one can regard the projection
as part of a variable transformation $\rho^{I}\mapsto(\lambda^{I},\auxii)$
and the above equation as the inverse variable transformation. This
means that in (\ref{toy:inverse}) $\auxii$ is regarded as an independent
variable, while in (\ref{toy:projGeneral}) it was just a placeholder
for $\frac{\rho^{2}}{(\rho\bar{\rho})}$ and $\lambda^{I}$ was regarded
as a function of $\rho^{I}$ only. 

The inverse transformation of the absolute value squared is obviously
(looking at (\ref{toy:absval})) given by 
\begin{equation}
(\rho\bar{\rho})=\frac{(\lambda\bar{\lambda})\left(1+\sqrt{1-\auxi}\right)}{2\abs{f(\auxi)}^{2}\left(1-\auxi\right)}\label{toy:absval-inv}
\end{equation}

\paragraph{Alternative reparametrization with simple inverse}

If one rewrites the righthand side of equation (\ref{toy:inverse})
as $\frac{1+\sqrt{1-\auxi}}{2f(\auxi)\sqrt{1-\auxi}}\bigl(\lambda^{I}+\frac{f(\auxi)}{\os{}{\bar{f}(\auxi)}}\frac{\auxii}{1+\sqrt{1-\auxi}}\bar{\lambda}^{I}\bigr)$,
it becomes obvious that the inverse variable transformation would
become particularly simple upon choosing 
\begin{equation}
\Auxii\equiv\frac{f(\auxi)}{\bar{f}(\auxi)}\frac{\auxii}{1+\sqrt{1-\auxi}}=\frac{f(\auxi)}{\bar{f}(\auxi)}\frac{\rho^{2}}{(\rho\bar{\rho})+\sqrt{(\rho\bar{\rho})^{2}-\rho^{2}\bar{\rho}^{2}}}\label{toy:Auxii}
\end{equation}
as a new variable. The absolute value square of this variable is
\begin{equation}
\Auxi\equiv\Auxii\bar{\Auxii}=\frac{\auxi}{(1+\sqrt{1-\auxi})^{2}}=\frac{1-\sqrt{1-\auxi}}{1+\sqrt{1-\auxi}}\label{toy:Auxi}
\end{equation}
Because of $0\leq\auxi\leq1$ also $\Auxi$ obeys 
\begin{equation}
0\leq\Auxi\leq1
\end{equation}
In order to invert the relation between $\Auxii$ and $\auxii$ in
\eqref{toy:Auxii}, it is useful to invert first the above relation
\eqref{toy:Auxi} of their absolute values. Multiplying both sides
of \eqref{toy:Auxi} with the denominator $\left(2-\auxi+2\sqrt{1-\auxi}\right)$
and putting the term that contains the square root on the left-hand
side and the rest on the righthand side, one obtains $2\Auxi\sqrt{1-\auxi}=\auxi(1+\Auxi)-2\Auxi$.
Squaring and putting everything to the left yields $\auxi\left(\auxi(1+\Auxi)^{2}-4\Auxi\right)=0$.
At least for $\auxi\neq0$, the inverse transformation of the absolute
value square is thus given by 
\begin{equation}
\auxi=\frac{4\Auxi}{(1+\Auxi)^{2}}
\end{equation}
Together with $0\leq\Auxi\leq1$ this implies 
\begin{equation}
\sqrt{1-\auxi}=\frac{1-\Auxi}{1+\Auxi}
\end{equation}
Plugging this into \eqref{toy:Auxii} and solving for $\auxii$,
we arrive at 
\begin{eqnarray}
\auxii & = & \frac{\bar{\tilde{f}}(\tilde{\auxi})}{\tilde{f}(\tilde{\auxi})}\frac{2\Auxii}{(1+\Auxi)}
\end{eqnarray}
where 
\begin{equation}
\fa(\Auxi)\equiv f(\auxi(\Auxi))
\end{equation}
We introduced the new variable $\Auxii$ in order to obtain a simple
inverse transformation of $\rho^{I}\mapsto(\lambda^{I},\tilde{\zeta})$,
and indeed, after plugging the above expressions into \eqref{toy:inverse}
this inverse transformation becomes
\begin{eqnarray}
\rho^{I}(\lambda,\bar{\lambda},\Auxii,\bar{\Auxii}) & = & \frac{1}{\fa(\Auxi)(1-\Auxi)}\left(\lambda^{I}+\Auxii\bar{\lambda}^{I}\right)\label{toy:inverseAlt}
\end{eqnarray}
The absolute value squared is given by
\begin{eqnarray}
(\rho\bar{\rho}) & = & \frac{(1+\Auxi)}{\abs{\fa(\Auxi)}^{2}(1-\Auxi)^{2}}(\lambda\bar{\lambda})
\end{eqnarray}
Apparently the inverse transformation \eqref{toy:inverseAlt} simplifies
even further if the function $f(\auxi)$ (which determines the precise
form of the projection \eqref{toy:projGeneral} is chosen to coincide
with $\frac{1}{(1-\Auxi)}$ in the new variables. We will later rediscover
this function in the context of Hermiticity, so we give it the name
$h$: 
\begin{equation}
h(\auxi)\equiv\frac{1+\sqrt{1-\auxi}}{2\sqrt{1-\auxi}}=\frac{1}{(1-\Auxi)}
\end{equation}
In this case the inverse transformation \eqref{toy:inverseAlt} is
simply 
\begin{equation}
\rho^{I}(\lambda,\bar{\lambda},\Auxii,\bar{\Auxii})=\lambda^{I}+\Auxii\bar{\lambda}^{I}\quad\mbox{if }f=h\label{toy:inverseAltH}
\end{equation}
\rem{Starting from this form of the inverse transformation and with
the constraints on $\lambda^{I}$ and $\Auxii$, can we get back the
projection?}with $(\rho\bar{\rho})=(\lambda\bar{\lambda})\left(1+\Auxi\right)$.

It is interesting to see that from the simple inverse transformations
\eqref{toy:inverseAlt} (for general $f$) or \eqref{toy:inverseAltH}
(for $f=h$) one can indeed reconstruct the original (in terms of
$\rho^{I}$ quite complicated) form of the projection \eqref{toy:projGeneral},
just exploiting the constraint $\lambda^{2}=0$. As this is basically
inverting the inverse transformation, it is clear that it has to work.
The reader might perhaps convince himself for the special case $f=h$
that plugging \eqref{toy:inverseAltH} into \eqref{toy:Auxii} (with
$f=h$ written in terms of $\rho,\bar{\rho}$) indeed gives $\Auxii$
and that plugging \eqref{toy:inverseAltH} into \eqref{toy:projGeneral}
indeed gives $\lambda^{I}$.

\paragraph{Jacobian matrices }

The projection $P^{I}$ in the manifold is naturally accompanied by
push-forward maps on vectors in the tangent space, which are given
by the Jacobian matrix of the projection: 
\begin{eqnarray}
\delta\lambda^{I} & \equiv & \delta P_{(f)}^{I}(\rho,\bar{\rho})\equiv\Pi_{(f)\bot K}^{I}\delta\rho^{K}+\rest_{(f)\bot}^{IK}\delta\bar{\rho}_{K}\\
\Pi_{(f)\bot K}^{I}(\rho,\bar{\rho}) & \equiv & \partial_{\rho^{K}}P_{(f)}^{I}(\rho,\bar{\rho})\label{toy:Pidef}\\
\rest_{(f)\bot}^{IK}(\rho,\bar{\rho}) & \equiv & \partial_{\bar{\rho}_{K}}P_{(f)}^{I}(\rho,\bar{\rho})\label{toy:pidef}
\end{eqnarray}
The subscript '$\bot$' indicates the fact that the matrices $\Pi_{(f)\bot}$
and $\rest_{(f)\bot}$ are mapping to subspaces which are orthogonal
to $\lambda$
\begin{equation}
\lambda_{I}\delta\lambda^{I}=0\,,\quad P_{I}(\rho,\bar{\rho})\Pi_{(f)\bot K}^{I}(\rho,\bar{\rho})=P_{I}(\rho,\bar{\rho})\rest_{(f)\bot}^{IK}(\rho,\bar{\rho})=0\quad\forall\rho\label{toy:orth}
\end{equation}
In order to calculate $\Pi_{(f)\bot}$ and $\rest_{(f)\bot}$ we repeatedly
need the following partial derivatives:%
\footnote{\label{fn:toy:rho-derivs-lam}Using (\ref{toy:inverse}) and (\ref{toy:absval-inv}),
the derivatives (\ref{toy:zbyrho}) to (\ref{toy:xbyrho}) can be
rewritten as 
\begin{eqnarray*}
\partl{\rho^{K}}\auxii & = & \bar{f}(\auxi)\sqrt{1-\auxi}\left(1+\sqrt{1-\auxi}\right)\frac{\lambda_{K}}{(\lambda\bar{\lambda})}+f(\auxi)\frac{\auxii\sqrt{1-\auxi}\left(1-\sqrt{1-\auxi}\right)}{1+\sqrt{1-\auxi}}\frac{\bar{\lambda}_{K}}{(\lambda\bar{\lambda})}\\
\partl{\bar{\rho}_{K}}\auxii & = & -\auxii\sqrt{1-\auxi}\bar{f}(\auxi)\frac{\lambda^{K}}{(\lambda\bar{\lambda})}-\auxii\frac{f(\auxi)\sqrt{1-\auxi}}{\left(1+\sqrt{1-\auxi}\right)}\frac{\auxii\bar{\lambda}^{K}}{(\lambda\bar{\lambda})}\\
\partl{\rho^{K}}\auxi & = & 2\bar{f}(\auxi)\left(1-\auxi\right)\frac{\bar{\auxii}\lambda_{K}}{(\lambda\bar{\lambda})}-2f(\auxi)\frac{\auxi\left(1-\auxi\right)}{1+\sqrt{1-\auxi}}\frac{\bar{\lambda}_{K}}{(\lambda\bar{\lambda})}\qquad\fussend
\end{eqnarray*}
}
\begin{eqnarray}
\partl{\rho^{K}}\auxii & = & 2\frac{\rho_{K}}{(\rho\bar{\rho})}-\auxii\frac{\bar{\rho}_{K}}{(\rho\bar{\rho})}\label{toy:zbyrho}\\
\partl{\bar{\rho}_{K}}\auxii & = & -\auxii\frac{\rho^{K}}{(\rho\bar{\rho})}\label{toy:zbyrhobar}\\
\partl{\rho^{K}}\auxi & = & 2\frac{\bar{\auxii}\rho_{K}}{(\rho\bar{\rho})}-2\auxi\frac{\bar{\rho}_{K}}{(\rho\bar{\rho})}\label{toy:xbyrho}
\end{eqnarray}
Using these, we obtain
\begin{eqnarray}
\lqn{\Pi_{(f)\bot K}^{I}(\rho,\bar{\rho})=}\nonumber \\
 & = & f\left(\auxi\right)\delta_{K}^{I}+2\bar{\auxii}f'(\auxi)\tfrac{\rho^{I}\rho_{K}}{(\rho\bar{\rho})}-2\auxi f'(\auxi)\tfrac{\rho^{I}\bar{\rho}_{K}}{(\rho\bar{\rho})}-\left(\frac{f(\auxi)}{\sqrt{1-\auxi}}+\frac{2\auxi f'(\auxi)}{1+\sqrt{1-\auxi}}\right)\tfrac{\bar{\rho}^{I}\rho_{K}}{(\rho\bar{\rho})}+\nonumber \\
 &  & +2\auxii\left(\frac{f(\auxi)}{2\sqrt{1-\auxi}\left(1+\sqrt{1-\auxi}\right)}+\left(1-\sqrt{1-\auxi}\right)f'(\auxi)\right)\tfrac{\bar{\rho}^{I}\bar{\rho}_{K}}{(\rho\bar{\rho})}\qquad\label{toy:Pi}
\end{eqnarray}
with trace
\begin{eqnarray}
\tr\Pi_{(f)\bot}(\rho,\bar{\rho}) & = & f\left(\auxi\right)\left(N-1\right)-2(1-\auxi)\left(1-\sqrt{1-\auxi}\right)f'(\auxi)\label{toy:trace}
\end{eqnarray}
and 
\begin{eqnarray}
\lqn{\rest_{(f)\bot}^{IK}(\rho,\bar{\rho})=}\nonumber \\
 & = & -\frac{f(\auxi)}{\left(1+\sqrt{1-\auxi}\right)}\auxii\delta^{IK}-2\auxi f'(\auxi)\tfrac{\rho^{I}\rho^{K}}{(\rho\bar{\rho})}+2\auxii f'(\auxi)\tfrac{\rho^{I}\bar{\rho}^{K}}{(\rho\bar{\rho})}+\nonumber \\
 &  & +2\auxii\left(\frac{f(\auxi)}{2\sqrt{1-\auxi}\left(1+\sqrt{1-\auxi}\right)}+\left(1-\sqrt{1-\auxi}\right)f'(\auxi)\right)\tfrac{\bar{\rho}^{I}\rho^{K}}{(\rho\bar{\rho})}+\nonumber \\
 &  & -\frac{2\auxii^{2}}{1+\sqrt{1-\auxi}}\left(\frac{f(\auxi)}{2\sqrt{1-\auxi}\left(1+\sqrt{1-\auxi}\right)}+f'(\auxi)\right)\tfrac{\bar{\rho}^{I}\bar{\rho}^{K}}{(\rho\bar{\rho})}\qquad\label{toy:pi}
\end{eqnarray}
Neither the matrix $\Pi_{(f)\bot}(\rho,\bar{\rho})$ nor the more
appropriate full matrix 
\begin{equation}
\left(\begin{array}{cc}
\Pi_{(f)\bot}(\rho,\bar{\rho}) & \rest_{(f)\bot}(\rho,\bar{\rho})\\
\bar{\rest}_{(f)\bot}(\rho,\bar{\rho}) & \bar{\Pi}_{(f)\bot}(\rho,\bar{\rho})
\end{array}\right)\label{toy:full-matrix}
\end{equation}
are in general projection matrices by themselves in the sense that
they are idempotent for all $\rho$. So in general $\Pi_{(f)\bot}^{2}(\rho,\bar{\rho})\neq\Pi_{(f)\bot}(\rho,\bar{\rho})$
and the square of the matrix (\ref{toy:full-matrix}) 
\begin{equation}
\left(\begin{array}{cc}
\Pi_{(f)\bot}^{2}+\rest_{(f)\bot}\bar{\rest}_{(f)\bot} & \Pi_{(f)\bot}\rest_{(f)\bot}+\rest_{(f)\bot}\bar{\Pi}_{(f)\bot}\\
\bar{\rest}_{(f)\bot}\Pi_{(f)\bot}+\bar{\Pi}_{(f)\bot}\bar{\rest}_{(f)\bot} & \bar{\Pi}_{(f)\bot}^{2}+\bar{\rest}_{(f)\bot}\rest_{(f)\bot}
\end{array}\right)\label{toy:full-matrix-squared}
\end{equation}
(for notational simplicity, we have suppressed here the argument $(\rho,\bar{\rho})$)
is in general not equal to (\ref{toy:full-matrix}). Matrices obeying
this strict kind of projection property would have an integer trace,
namely the dimension of the projected subspace. Looking at the general
trace in (\ref{toy:trace}), $f=1$ is an obvious choice where at
least the trace becomes integer
\begin{eqnarray}
\tr\Pi_{(1)\bot}(\rho,\bar{\rho}) & = & N-1
\end{eqnarray}
and it turns out that indeed $\Pi_{(1)\bot}(\rho,\bar{\rho})$ is
a proper projection matrix 
\begin{equation}
\Pi_{(1)\bot}^{2}(\rho,\bar{\rho})=\Pi_{(1)\bot}(\rho,\bar{\rho})\label{Pi1quad}
\end{equation}
It is therefore worth to spell out this special case, in which the
matrices also become particularly simple:
\begin{eqnarray}
\Pi_{(1)\bot K}^{I}(\rho,\bar{\rho}) & = & \delta_{K}^{I}-\frac{1}{\sqrt{1-\auxi}}\tfrac{\bar{\rho}^{I}\rho_{K}}{(\rho\bar{\rho})}+\frac{\auxii}{\sqrt{1-\auxi}\left(1+\sqrt{1-\auxi}\right)}\tfrac{\bar{\rho}^{I}\bar{\rho}_{K}}{(\rho\bar{\rho})}\qquad\\
\rest_{(1)\bot}^{IK}(\rho,\bar{\rho}) & = & -\frac{\auxii}{1+\sqrt{1-\auxi}}\Pi_{(1)\bot}^{IK}(\rho,\bar{\rho})
\end{eqnarray}
$\Pi_{(1)\bot K}^{I}(\rho,\bar{\rho})$ can be even further simplified
by noticing that some of the terms combine to the projection $\lambda_{K}\equiv P_{(1)K}(\rho,\bar{\rho})$:
\begin{equation}
\Pi_{(1)\bot K}^{I}(\rho,\bar{\rho})=\delta_{K}^{I}-\frac{1}{\sqrt{1-\auxi}}\tfrac{\bar{\rho}^{I}\lambda_{K}}{(\rho\bar{\rho})}
\end{equation}
Note that $f=1$ for the pure spinor case does not lead to $\Pi_{(1)\bot}^{2}(\rho,\bar{\rho})=\Pi_{(1)\bot}(\rho,\bar{\rho})$
nor to an integer trace as one can see in (\ref{LinProjTrace}) on
page \pageref{LinProjTrace} (See also footnote \ref{fn:differentialEqForf}
on page \pageref{fn:differentialEqForf}). If $f\neq1$, the matrices
can only be seen as part of a tangent bundle projection, namely 
\begin{eqnarray}
\mc P_{(f)}:\quad(\rho^{I},\bar{\rho}_{I}) & \mapsto & (\lambda^{I},\bar{\lambda}_{I})\equiv(P_{(f)}^{I}(\rho,\bar{\rho}),\bar{P}_{(f)I}(\rho,\bar{\rho}))\\
\left(\!\!\zwek{\delta\rho^{I}}{\delta\bar{\rho}_{I}}\!\!\right) & \mapsto & \!\!\left(\!\!\zwek{\delta\lambda^{I}}{\delta\bar{\lambda}_{I}}\!\!\right)=\!\left(\!\!\begin{array}{cc}
\Pi_{(f)\bot}(\rho,\bar{\rho}) & \rest_{(f)\bot}(\rho,\bar{\rho})\\
\bar{\rest}_{(f)\bot}(\rho,\bar{\rho}) & \bar{\Pi}_{(f)\bot}(\rho,\bar{\rho})
\end{array}\!\!\right)\!\!\left(\!\!\zwek{\delta\rho^{I}}{\delta\bar{\rho}_{I}}\!\!\right)\qquad
\end{eqnarray}
Its projection property 
\begin{equation}
\mc P_{(f)}\circ\mc P_{(f)}=\mc P_{(f)}
\end{equation}
implies for the matrices
\begin{eqnarray}
\Pi_{(f)\bot K}^{I}(\lambda,\bar{\lambda})\Pi_{(f)\bot J}^{K}(\rho,\bar{\rho})+\underbrace{\rest_{(f)\bot}^{IK}(\lambda,\bar{\lambda})}_{=0}\bar{\rest}_{(f)\bot KJ}(\rho,\bar{\rho})\!\! & = & \!\!\Pi_{(f)\bot J}^{I}(\rho,\bar{\rho})\qquad\label{projection-matrix-mod1}\\
\underbrace{\rest_{(f)\bot}^{IK}(\lambda,\bar{\lambda})}_{=0}\bar{\Pi}_{K}^{\bot J}(\rho,\bar{\rho})+\Pi_{\bot K}^{I}(\lambda,\bar{\lambda})\rest_{(f)\bot}^{KJ}(\rho,\bar{\rho})\!\! & = & \!\!\rest_{(f)\bot}^{IJ}(\rho,\bar{\rho})\label{projection-matrix-mod2}
\end{eqnarray}
Comparing the left side of these two equations with (\ref{toy:full-matrix-squared}),
the difference is only that here the first factor of each term is
evaluated at $\lambda^{I}\equiv P_{(f)}^{I}(\rho,\bar{\rho})$ and
only the second at $\rho^{I}$ while in (\ref{toy:full-matrix-squared})
both have argument $\rho$ as mentioned in the line below (\ref{toy:full-matrix-squared}). 

The Jacobian matrices are mapping to a subspace of vectors that is
perpendicular to $\lambda^{I}\equiv P_{(f)}^{I}(\rho,\bar{\rho})$:
\begin{eqnarray}
\lambda_{I}\Pi_{(f)\bot K}^{I}(\rho,\bar{\rho}) & = & 0\label{PibotOrthtoP}\\
\lambda_{I}\rest_{(f)\bot}^{IK}(\rho,\bar{\rho}) & = & 0\label{PirestOrthtoP}
\end{eqnarray}

On the constraint surface where $\rho=\lambda$ with $\lambda^{2}=0$
(and thus $\auxii=\auxi=0$, $f(0)=1$), the matrix $\Pi_{(f)\bot K}^{I}(\rho,\bar{\rho})$
reduces for all $f$ to the projection matrix we started with (as
it was asked for in (\ref{toy:to-solve}), just now with $\bar{\chi}=\bar{\rho}$)
while $\rest_{(f)\bot}^{IK}(\rho,\bar{\rho})$ vanishes there: 
\begin{eqnarray}
\Pi_{\bot}^{I}\tief K\equiv\Pi_{(f)\bot}^{I}\tief K(\lambda,\bar{\lambda}) & = & \delta_{K}^{I}-\tfrac{\bar{\lambda}^{I}\lambda_{K}}{(\lambda\bar{\lambda})}\\
\rest_{(f)\bot}^{IK}(\lambda,\bar{\lambda}) & = & 0
\end{eqnarray}
The matrix $\Pi_{\bot}\equiv\Pi_{(f)\bot}(\lambda,\bar{\lambda})$
depends on the function $f$ only implicitly via the dependence of
$\lambda$ on $f$. Note that $\Pi_{(f)\bot}$ evaluated on the constraint
surface, i.e. $\Pi_{\bot}\equiv\Pi_{(f)\bot}(\lambda,\bar{\lambda})$,
is a proper projection matrix for any $f$
\begin{eqnarray}
\Pi_{\bot}^{2} & = & \Pi_{\bot}\\
\tr\Pi_{\bot} & = & N-1
\end{eqnarray}
Furthermore this matrix is Hermitian
\begin{equation}
\Pi_{\bot}^{\dagger}=\Pi_{\bot}
\end{equation}
It is thus a natural question whether this Hermiticity can be extended
from $\Pi_{\bot}=\Pi_{(f)\bot}(\lambda,\bar{\lambda})$ off the constraint
surface to $\Pi_{(f)\bot}(\rho,\bar{\rho})$.

\paragraph{Hermitian Jacobian matrices }

The answer to the above question is that $\Pi_{(f)\bot}(\rho,\bar{\rho})$
is Hermitian for all $\rho$ if $f=h$, where 
\begin{eqnarray}
h(\auxi) & \equiv & \frac{1+\sqrt{1-\auxi}}{2\sqrt{1-\auxi}},\quad(h(0)=1)
\end{eqnarray}
In fact for this choice the matrices become
\begin{eqnarray}
\Pi_{(h)\bot K}^{I}(\rho,\bar{\rho})\!\!\!\!\! & = & \!\!\!\!\!\frac{1}{2\sqrt{1-\auxi}}\left(1+\sqrt{1-\auxi}\right)\delta_{K}^{I}-\frac{2-\auxi}{2\sqrt{1-\auxi}^{3}}\tfrac{\bar{\rho}^{I}\rho_{K}}{(\rho\bar{\rho})}+\nonumber \\
 &  & +\frac{\bar{\auxii}}{2\sqrt{1-\auxi}^{3}}\tfrac{\rho^{I}\rho_{K}}{(\rho\bar{\rho})}-\frac{\auxi}{2\sqrt{1-\auxi}^{3}}\tfrac{\rho^{I}\bar{\rho}_{K}}{(\rho\bar{\rho})}+\frac{\auxii}{2\sqrt{1-\auxi}^{3}}\tfrac{\bar{\rho}^{I}\bar{\rho}_{K}}{(\rho\bar{\rho})}\qquad\qquad\\
\rest_{(h)\bot}^{IK}(\rho,\bar{\rho}) & = & -\frac{\auxi}{2\sqrt{1-\auxi}^{3}}\tfrac{\rho^{I}\rho^{K}}{(\rho\bar{\rho})}+\frac{\auxii}{2\sqrt{1-\auxi}^{3}}\tfrac{\rho^{I}\bar{\rho}^{K}}{(\rho\bar{\rho})}+\frac{\auxii}{2\sqrt{1-\auxi}^{3}}\tfrac{\bar{\rho}^{I}\rho^{K}}{(\rho\bar{\rho})}+\nonumber \\
 &  & -\frac{\auxii^{2}}{2\sqrt{1-\auxi}^{3}}\tfrac{\bar{\rho}^{I}\bar{\rho}^{K}}{(\rho\bar{\rho})}-\frac{\auxii}{2\sqrt{1-\auxi}}\delta^{IK}
\end{eqnarray}
This shows that not only $\Pi_{(h)\bot}(\rho,\bar{\rho})$ is Hermitian,
but in addition $\rest_{(h)\bot}$ is symmetric 
\begin{eqnarray}
\Pi_{(h)\bot}^{\dagger}(\rho,\bar{\rho}) & = & \Pi_{(h)\bot}(\rho,\bar{\rho})\\
\rest_{(h)\bot}^{T}(\rho,\bar{\rho}) & = & \rest_{(h)\bot}(\rho,\bar{\rho})
\end{eqnarray}
These equations are in turn equivalent to the statement that the complete
block matrix {\tiny$\left(\begin{array}{cc}
\Pi_{(h)\bot}(\rho,\bar{\rho}) & \rest_{(h)\bot}(\rho,\bar{\rho})\\
\bar{\rest}_{(h)\bot}(\rho,\bar{\rho}) & \bar{\Pi}_{(h)\bot}(\rho,\bar{\rho})
\end{array}\right)$} is Hermitian.

The non-linear projection \eqref{toy:projGeneral} itself becomes
for $f=h$
\begin{equation}
\lambda^{I}\equiv P_{(h)}^{I}(\rho,\bar{\rho})\equiv\frac{1+\sqrt{1-\auxi}}{2\sqrt{1-\auxi}}\rho^{I}-\frac{\auxii}{2\sqrt{1-\auxi}}\bar{\rho}^{I}
\end{equation}
\rem{Also in the toy model it is hard to decide whether a limit $\auxi\to1$
exists}\rembreak with 'inverse' transformation (\ref{toy:inverse})

\begin{equation}
\rho^{I}=\lambda^{I}+\frac{\auxii}{1+\sqrt{1-\auxi}}\bar{\lambda}^{I}\label{toy:herm-inverse}
\end{equation}
In the alternative parametrization (\ref{toy:Auxii}) this becomes
according to (\ref{toy:inverseAltH}) simply $\rho^{I}=\lambda^{I}+\Auxii\bar{\lambda}^{I}$.

In particular in the Hermitian case it turns out to be convenient
for calculations to use the above inverse transformation, in order
to express the Jacobian matrices in terms of $\lambda^{I}$, $\auxii$
and their complex conjugates: 
\begin{eqnarray}
\Pi_{(h)\bot K}^{I}(\rho,\bar{\rho}) & = & \frac{1+\sqrt{1-\auxi}}{2\sqrt{1-\auxi}}\biggl(\delta_{K}^{I}-\frac{\bar{\lambda}^{I}\lambda_{K}}{(\lambda\bar{\lambda})}-\frac{1-\sqrt{1-\auxi}}{1+\sqrt{1-\auxi}}\frac{\lambda^{I}\bar{\lambda}_{K}}{(\lambda\bar{\lambda})}\biggr)\qquad\\
 & \stackrel{\auxi=0}{=} & \delta_{K}^{I}-\frac{\bar{\lambda}^{I}\lambda_{K}}{(\lambda\bar{\lambda})}\quad(\equiv\Pi_{\bot K}^{I})\\
\rest_{(h)\bot}^{IK}(\rho,\bar{\rho}) & = & -\frac{\auxii}{2\sqrt{1-\auxi}}\Bigl(\underbrace{\delta^{IK}-\tfrac{\bar{\lambda}^{I}\lambda^{K}}{(\lambda\bar{\lambda})}}_{\Pi_{\bot}}-\underbrace{\tfrac{\lambda^{I}\bar{\lambda}^{K}}{(\lambda\bar{\lambda})}}_{\equiv\Pi_{\Vert}^{T}}\Bigr)\\
 & \stackrel{\auxii=\auxi=0}{=} & 0
\end{eqnarray}
The $\rho$-dependence of these matrices is now only implicitly, with
$\auxii,\auxi$ and $\lambda^{I}\equiv P_{(h)}^{I}(\rho,\bar{\rho})$
being functions of $\rho$. The trace of the first matrix is given
by 
\begin{equation}
\tr\Pi_{(h)\bot}(\rho,\bar{\rho})=\frac{1+\sqrt{1-\auxi}}{2\sqrt{1-\auxi}}N-\frac{1}{\sqrt{1-\auxi}}
\end{equation}
It is useful to note that in the Hermitian case $\bar{\rest}_{(h)\bot}$
is proportional to $\rest_{(h)\bot}$ 
\begin{equation}
\bar{\rest}_{(h)\bot IK}(\rho,\bar{\rho})=\frac{\bar{\auxii}}{\auxii}\rest_{(h)\bot IK}(\rho,\bar{\rho})
\end{equation}
If we denote just for a moment $\partial_{I}\equiv\partial_{\rho^{I}}$
and $\bar{\partial}^{I}\equiv\partial_{\bar{\rho}_{I}}$, then we
have due to Hermiticity\vspace{-.4cm} 
\begin{equation}
\partial_{[I|}\bar{P}_{(h)|K]}=\bar{\partial}^{[I}P_{(h)}^{K]}=\partial_{I}P_{(h)}^{K}-\bar{\partial}^{K}\bar{P}_{(h)I}=0
\end{equation}
This in turn implies that $P^{I}\de\bar{\rho}_{I}+\bar{P}_{I}\de\rho^{I}$
is a closed 1-form and thus locally exact. I.e. there exists a scalar
function $\Phi(\rho,\bar{\rho})$ such that 
\begin{equation}
\lambda^{I}\equiv P^{I}(\rho,\bar{\rho})=\bar{\partial}^{I}\Phi(\rho,\bar{\rho})
\end{equation}
Indeed an explicit solution for the {}``potential'' $\Phi$ is given
by 
\begin{equation}
\Phi(\rho,\bar{\rho})=\frac{(\rho\bar{\rho})}{2}\left(1+\sqrt{1-\auxi}\right)\label{toy:potential}
\end{equation}
One can further check by direct calculation that 
\begin{equation}
(\lambda\bar{\lambda})\equiv P^{K}(\rho,\bar{\rho})\bar{P}_{K}(\rho,\bar{\rho})=\Phi(\rho,\bar{\rho})\label{toy:PhiIslambdalambdabar}
\end{equation}

\subsubsection*{Holomorphic volume form and measure}

\paragraph{Ambient space }

The canonical holomorphic volume form in the unconstrained ambient
space $\mathbb{C}^{\toydim}$ is given by 

\begin{eqnarray}
[d^{\mc N}\rho] & = & \de\rho^{1}\wedge\ldots\wedge\de\rho^{\toydim}=\tfrac{1}{\toydim!}\epsilon_{K_{1}\ldots K_{\toydim}}\de\rho^{K_{1}}\wedge\ldots\wedge\de\rho^{K_{\toydim}}\label{toy:targspaceVolform}
\end{eqnarray}
Wedged with its complex conjugate and using $\epsilon_{K_{1}\ldots K_{\toydim}}\epsilon^{L_{1}\ldots L_{\toydim}}=\toydim!\delta_{K_{1}\ldots K_{\toydim}}^{L_{1}\ldots L_{\toydim}}$
and for the signs ${\scriptstyle \sum_{k=0}^{\toydim-1}k=\frac{\toydim(\toydim-1)}{2}}$,
one obtains the canonical integration measure of the ambient space
\begin{equation}
[d^{\mc N}\rho]\wedge[d^{\mc N}\bar{\rho}]=(-)^{\frac{\toydim(\toydim-1)}{2}}\tfrac{1}{\toydim!}(\de\rho^{I}\de\bar{\rho}_{I})^{\toydim}\label{toy:targspaceMeasure}
\end{equation}
where on the righthand side we have omitted the wedge symbol $\wedge$
and will omit it also in the following.

\paragraph{$\lambda$-space}

In the $\lambda$-space (which we think of being embedded in the ambient
$\rho$-space at $\auxii=0$) things complicate a bit, because of
the constraint $\lambda^{2}=0$, in particular when trying to build
a covariant holomorphic volume form. So let us first collect a few
identities which we will need in the following. They are all related
to the fact that $\lambda^{K}$ and $\de\lambda^{K}$ effectively
have only $\toydim-1$ independent components and thus the antisymmetrization
of $\toydim$ vectors vanishes.
\begin{eqnarray}
\de\lambda^{I_{1}}\cdots\de\lambda^{I_{\toydim}} & = & 0\label{toy:identity1}\\
\de\lambda^{[I_{1}}\cdots\de\lambda^{I_{\toydim-1}}\lambda^{I_{\toydim}]} & = & 0\label{toy:identity2}\\
\toydim\de\lambda^{[1}\cdots\de\lambda^{\toydim-2}\de\lambda^{+}v^{-]} & = & \tfrac{1}{\lambda^{+}}\de\lambda^{1}\cdots\de\lambda^{\toydim-2}\de\lambda^{+}v_{I}\lambda^{I}\label{toy:identity2-gen}
\end{eqnarray}
We will prove these identities in a footnote\footnotemark  on the
next page. This will require at least for the last identity (\ref{toy:identity2-gen})
the explicit solution (\ref{toy:lamsol}) of the constraint together
with its derivatives, which obey 
\begin{equation}
\partial_{+}\lambda_{{\rm sol}}^{-}=-\frac{\lambda_{{\rm sol}}^{-}}{\lambda^{+}}\quad,\quad\partial_{i}\lambda_{{\rm sol}}^{-}=-\frac{\lambda_{i}}{\lambda^{+}}\label{toy:anotherusefulrelation}
\end{equation}
The holomorphic volume form that we are looking for has to contain
$\toydim-1$ powers of $\de\lambda^{I}$. Their $\toydim-1$ indices
have to be contracted covariantly with $\epsilon_{I_{1}\ldots I_{\toydim}}$,
$\delta_{IJ}$ or the only available vectors, namely $\lambda^{I}$
and $\bar{\lambda}_{I}$. It is clear that we cannot do without the
$\epsilon$-tensor. All contractions avoiding it would vanish. But
after contracting the $\toydim-1$ one-forms with the covariant $\epsilon$-tensor
of the ambient space $\epsilon_{I_{1}\ldots I_{\toydim-1}I_{\toydim}}\de\lambda^{I_{1}}\cdots\de\lambda^{I_{\toydim-1}}$,
there is one index ${\scriptstyle I_{\toydim}}$ remaining, which
needs to be saturated. So we have to contract it with another vector,
so either with $\lambda^{I}$ or with $\bar{\lambda}^{I}$. Contracting
with $\lambda^{I}$ yields a vanishing result due to (\ref{toy:identity2}).
Contracting it instead with $\bar{\lambda}^{I}$ seems to be nonsense
a priori, as we want a holomorphic form. However, from (\ref{toy:identity2-gen})
it is clear that this $\bar{\lambda}^{I}$ will actually be contracted
with a $\lambda^{I}$ so that this non-holomorphic factor can be divided
out. So the following covariant expression is indeed holomorphic:
\begin{eqnarray}
[d\lambda^{\toydim-1}] & \equiv & \frac{1}{(\toydim-1)!(\lambda\bar{\lambda})}\epsilon_{I_{1}\ldots I_{\toydim}}\de\lambda^{I_{1}}\cdots\de\lambda^{I_{\toydim-1}}\bar{\lambda}^{I_{\mc N}}\label{toy:holomVolform}
\end{eqnarray}
\footnotetext{\label{fn:toy:identity}Proof of (\ref{toy:identity1})-(\ref{toy:identity2-gen}):
Remember that the constraint on $\lambda^{I}$ is quadratic, i.e.
$\lambda^{I}\lambda_{I}=2\lambda^{+}\lambda^{-}+\lambda^{i}\lambda_{i}=0$,
while the one on $\de\lambda^{I}$ is linear in the $\de\lambda$'s,
i.e. $\lambda_{I}\de\lambda^{I}=0$. Solving the latter equation for
one of the $\de\lambda$'s, e.g. for $\de\lambda^{\toydim}$ yields
an expression linear in the other $\de\lambda$'s, so that \eqref{toy:identity1}
is rather obvious. Instead, in order to show \eqref{toy:identity2}
or (\ref{toy:identity2-gen}), it is essential to observe that the
constraint-function $\lambda^{I}\lambda_{I}$ is homogeneous of degree
2 (or in other words has definite {}``ghost number'' 2). Solving
for any of the $\lambda$'s, e.g. for $\lambda^{-}$, yields some
function of the other variables 
\[
\lambda^{-}=\lambda_{{\rm sol}}^{-}(\lambda^{1},\ldots,\lambda^{\toydim-2},\lambda^{+})\quad(*)
\]
At least for showing \eqref{toy:identity2}, it is actually not essential
here to switch to {}``lightcone coordinates'' with $\lambda^{+},\lambda^{-}$.
One could also solve for $\lambda_{{\rm sol}}^{\toydim}(\lambda^{1},\ldots,\lambda^{\toydim-1})$,
although the explicit form of the solution $\lambda_{{\rm sol}}^{\toydim}=\pm\sqrt{-(\lambda^{1})^{2}-\ldots-(\lambda^{\toydim-1})^{2}}$
is not unique and is also not so nice as the one in lightcone coordinates
$\lambda_{{\rm sol}}^{-}=-\tfrac{1}{2\lambda^{+}}\lambda^{i}\lambda_{i}$.
However, what matters for the argument is only homogeneity: Because
of the homogeneity of the constraint function also the functions $\lambda_{{\rm sol}}^{\toydim}$
or $\lambda_{{\rm sol}}^{-}$ are homogeneous, but now of degree 1
(They have ghost number 1), i.e. 
\[
\left(\lambda^{i}\partial_{i}+\lambda^{+}\partial_{+}\right)\lambda_{{\rm sol}}^{-}=\lambda_{{\rm sol}}^{-}\quad(\#)
\]
In order to show \eqref{toy:identity2} and (\ref{toy:identity2-gen}),
first take the differential of ({*}): 
\[
\de\lambda_{{\rm sol}}^{-}=\left(\de\lambda^{i}\partial_{i}+\de\lambda^{+}\partial_{+}\right)\lambda_{{\rm sol}}^{-}\quad(**)
\]
Now we can start with the left-hand side of (\ref{toy:identity2-gen})
(for a particular permutation of the indices), make the antisymmetrization
of the index `$-$' explicit and then replace $\de\lambda^{-}$ using
({*}{*}). 
\begin{eqnarray*}
\lqn{\toydim\underbrace{\de\lambda^{[1}\cdots\de\lambda^{\toydim-2}\de\lambda^{+}}_{\toydim-1}v^{-]}=}\\
 & = & \de\lambda^{1}\cdots\de\lambda^{\toydim-2}\de\lambda^{+}v^{-}-\de\lambda^{1}\cdots\de\lambda^{\toydim-2}\de\lambda^{-}v^{+}+(\toydim-2)\de\lambda^{+}\de\lambda^{-}\de\lambda^{[1}\cdots\de\lambda^{\toydim-3}v^{\toydim-2]}=\\
 & \stackrel{(**)}{=} & \de\lambda^{1}\cdots\de\lambda^{\toydim-2}\de\lambda^{+}v^{-}-\de\lambda^{1}\cdots\de\lambda^{\toydim-2}\left(\de\lambda^{i}\partial_{i}\lambda_{{\rm sol}}^{-}+\de\lambda^{+}\partial_{+}\lambda_{{\rm sol}}^{-}\right)v^{+}+\\
 &  & +(\toydim-2)\de\lambda^{+}\left(\de\lambda^{i}\partial_{i}\lambda_{{\rm sol}}^{-}+\de\lambda^{+}\partial_{+}\lambda_{{\rm sol}}^{-}\right)\de\lambda^{[1}\cdots\de\lambda^{\toydim-3}v^{\toydim-2]}=\\
 & = & \de\lambda^{1}\cdots\de\lambda^{\toydim-2}\de\lambda^{+}v^{-}-\de\lambda^{1}\cdots\de\lambda^{\toydim-2}\de\lambda^{+}\partial_{+}\lambda_{{\rm sol}}^{-}v^{+}+\\
 &  & +(-)^{\mc N-3}\de\lambda^{+}\de\lambda^{1}\cdots\de\lambda^{\mc N-2}v^{i}\partial_{i}\lambda_{{\rm sol}}^{-}=\\
 & = & \de\lambda^{1}\cdots\de\lambda^{\toydim-2}\de\lambda^{+}\bigl(v^{-}-v^{+}\partial_{+}\lambda_{{\rm sol}}^{-}-v^{i}\partial_{i}\lambda_{{\rm sol}}^{-}\bigr)
\end{eqnarray*}
If at this point we replace the general vector $v^{I}$ by $\lambda^{I}$,
then because of (\#) the last line indeed vanishes, which proves \eqref{toy:identity2}
without making use of the explicit form of the constraint. In order
to prove (\ref{toy:identity2-gen}), however, we need the explicit
form (\ref{toy:lamsol}) or in particular its derivatives (\ref{toy:anotherusefulrelation}).
Using these derivatives for the above equation with general $v^{I}$
yields: 
\begin{eqnarray*}
\toydim\de\lambda^{[1}\cdots\de\lambda^{\toydim-2}\de\lambda^{+}v^{-]} & = & \tfrac{1}{\lambda^{+}}\de\lambda^{1}\cdots\de\lambda^{\toydim-2}\de\lambda^{+}\left(v^{-}\lambda^{+}+v^{+}\lambda^{-}+v^{i}\lambda_{i}\right)=\\
 & = & \tfrac{1}{\lambda^{+}}\de\lambda^{1}\cdots\de\lambda^{\toydim-2}\de\lambda^{+}v_{I}\lambda^{I}\quad\fussend
\end{eqnarray*}
}The corresponding measure of the total space is therefore
\begin{eqnarray}
\lqn{[d\lambda^{\toydim-1}]\wedge[d\bar{\lambda}^{\toydim-1}]=}\\
 & = & \tfrac{\bar{\lambda}^{K}\lambda_{L}}{(\toydim-1)!^{2}(\lambda\os{}{\bar{\lambda}})^{2}}\underbrace{\epsilon_{I_{1}\ldots I_{\toydim-1}K}\epsilon^{J_{1}\ldots J_{\toydim-1}L}}_{\toydim!\delta_{I_{1}\ldots I_{\toydim-1}K}^{J_{1}\ldots J_{\toydim-1}L}}\de\lambda^{I_{1}}\cdots\de\lambda^{I_{\toydim-1}}\de\bar{\lambda}_{J_{1}}\cdots\de\bar{\lambda}_{J_{\toydim-1}}\nonumber 
\end{eqnarray}
Using ${\scriptstyle \toydim\delta_{I_{1}\ldots I_{\toydim-1}K}^{J_{1}\ldots J_{\toydim-1}L}=\delta_{I_{1}\ldots I_{\toydim-1}}^{J_{1}\ldots J_{\toydim-1}}\delta_{K}^{L}-(\toydim-1)\delta_{[I_{1}\ldots I_{\toydim-2}|K}^{J_{1}\ldots J_{\toydim-1}}\delta_{|I_{\toydim-1}]}^{L}}$
and $\lambda_{L}\de\lambda^{L}=0$ and for the signs ${\scriptstyle \sum_{k=0}^{\toydim-2}k=\frac{(\toydim-1)(\toydim-2)}{2}}$,
this becomes
\begin{equation}
[d\lambda^{\toydim-1}]\wedge[d\bar{\lambda}^{\toydim-1}]=\tfrac{1}{(\toydim-1)!(\lambda\os{}{\bar{\lambda}})}(-)^{\frac{(\toydim-1)(\toydim-2)}{2}}(\de\lambda^{I}\de\bar{\lambda}_{I})^{\toydim-1}\label{toy:measure}
\end{equation}
From the total space point of view, the prefactor $\frac{1}{(\lambda\bar{\lambda})}$
is not really necessary. It comes only from the requirement that we
want it to factorize into holomorphic and anti-holomorphic volume
form, so from compatibility with the complex structure.

\paragraph{From the ambient space to the $\lambda$-space}

Thinking of the constrained $\lambda$-space as being embedded at
$\auxii=0$ (or $\Auxii=0$ in the parametrization of (\ref{toy:Auxii}))
into the ambient space, it is a natural question if the above volume
form and measure can be derived from the ambient space. The idea is
to make a variable transformation from $\rho^{I}$ to $(\lambda^{I},\auxii)$
or $(\lambda^{I},\Auxii$) and see if the transformed volume form
factorizes. Let us consider the case $f=h$ where the parametrization
with the variables $\Auxii$ of (\ref{toy:Auxii}) becomes according
to \eqref{toy:inverseAltH} simply 
\begin{equation}
\rho^{I}=\lambda^{I}+\Auxii\bar{\lambda}^{I}\label{toy:rho}
\end{equation}
It is obvious that this transformation is not holomorphic which might
spoil the idea to obtain a holomorphic $\lambda$-volume form after
the transformation. However, on the constraint surface $\Auxii=0$,
the transformation is holomorphic. On the other hand, the transformation
of one-forms, given by
\begin{equation}
\de\rho^{I}=\de\lambda^{I}+\Auxii\de\bar{\lambda}^{I}+\de\Auxii\bar{\lambda}^{I}\label{toy:drho}
\end{equation}
is not holomorphic even at $\Auxii=0$. But this does not exclude
the possibility that the volume form might still transform holomorphically
at $\Auxii=0$ and this is what we are going to test in the following. 

But before we discuss the transformation of the holomorphic volume
form (\ref{toy:targspaceVolform}), let us first consider the transformation
of the total measure (\ref{toy:targspaceMeasure}) where the problem
of holomorphicity does not yet appear. For the measure it remains
to see, if we can factorize the result covariantly and whether one
factor coincides with (\ref{toy:measure}). So let us start by replacing
in (\ref{toy:targspaceMeasure}) all $\de\rho^{I}$ using (\ref{toy:drho}):
\begin{eqnarray}
\lqn{(\de\rho^{I}\de\bar{\rho}_{I})^{\toydim}=}\nonumber \\
 & = & \left(\left(\de\lambda^{I}+\Auxii\de\bar{\lambda}^{I}+\de\Auxii\bar{\lambda}^{I}\right)\left(\de\bar{\lambda}_{I}+\bar{\Auxii}\de\lambda_{I}+\de\bar{\Auxii}\lambda_{I}\right)\right)^{\toydim}\!\!\!=\qquad\\
 & \stackrel{\de\lambda^{I}\lambda_{I}=0}{=} & \Bigl\{\de\lambda^{I}\de\bar{\lambda}_{I}\left(1-\Auxi\right)+\Auxii\lambda_{I}\de\bar{\lambda}^{I}\de\bar{\Auxii}+\nonumber \\
 &  & -\bar{\Auxii}\bar{\lambda}^{I}\de\lambda_{I}\de\Auxii+(\lambda\bar{\lambda})\de\Auxii\de\bar{\Auxii}\Bigr\}^{\toydim}
\end{eqnarray}
 Now we can use the identity (\ref{toy:identity1}) and its complex
conjugate which imply that only terms survive which have $\toydim-1$
factors of $\de\lambda^{I}$ and of $\de\bar{\lambda}_{I}$ respectively,
as well as one factor of $\de\Auxii$ and $\de\bar{\Auxii}$. In addition
we are using the constraints $\lambda_{I}\lambda^{I}=\lambda_{I}\de\lambda^{I}=\de\lambda_{I}\de\lambda^{I}=0$
wherever possible:
\begin{eqnarray}
(\de\rho^{I}\de\bar{\rho}_{I})^{\toydim} & \stackrel{\mbox{\small\eqref{toy:identity1}}}{=} & \toydim\left(\de\lambda^{I}\de\bar{\lambda}_{I}(1-\Auxi)\right)^{\toydim-1}(\lambda\bar{\lambda})\de\Auxii\de\bar{\Auxii}+\label{toy:intermed}\\
 &  & +\toydim(\toydim-1)\left(\de\lambda^{I}\de\bar{\lambda}_{I}(1-\Auxi)\right)^{\toydim-2}\bar{\lambda}^{J}\de\lambda_{J}\lambda_{K}\de\bar{\lambda}^{K}\Auxi\de\Auxii\de\bar{\Auxii}\nonumber 
\end{eqnarray}
In order to combine the two terms which have different contractions
among the $\de\lambda$'s, we can use the identity \eqref{toy:identity2}
in the following way:
\begin{eqnarray}
0\!\! & \!\!\!\!\!\stackrel{\mbox{\small\eqref{toy:identity2}}}{=} & \!\!\!\!\toydim\de\lambda^{[I_{1}}\de\bar{\lambda}_{I_{1}}\cdots\de\lambda^{I_{\toydim-2}}\de\bar{\lambda}_{I_{\toydim-2}}\de\lambda^{J}\de\bar{\lambda}_{K}\lambda^{K]}\bar{\lambda}_{J}=\\
 & \!\!\!\!= & \!\!\!\!\!\!\de\lambda^{I_{1}}\de\bar{\lambda}_{I_{1}}\cdots\de\lambda^{I_{\toydim-2}}\de\bar{\lambda}_{I_{\toydim-2}}\de\lambda^{J}\de\bar{\lambda}_{K}\lambda^{K}\bar{\lambda}_{J}+\nonumber \\
 &  & \!\!\!\!\!\!\!\!-\de\lambda^{I_{1}}\de\bar{\lambda}_{I_{1}}\cdots\de\lambda^{I_{\toydim-2}}\de\bar{\lambda}_{I_{\toydim-2}}\de\lambda^{K}\de\bar{\lambda}_{K}\lambda^{J}\bar{\lambda}_{J}+\nonumber \\
 &  & \!\!\!\!\!\!\!\!+{\scriptstyle (\toydim-2)}\de\lambda^{I_{1}}\de\bar{\lambda}_{I_{1}}\!\cdots\!\de\lambda^{I_{\toydim-3}}\de\bar{\lambda}_{I_{\toydim-3}}\de\lambda^{J}\de\bar{\lambda}_{I_{\toydim-2}}\de\lambda^{K}\de\bar{\lambda}_{K}\lambda^{I_{\toydim-2}}\bar{\lambda}_{J}\!\!=\quad\qquad\\
 & \!\!\!\!= & \!\!\!\!\!\!{\scriptstyle (\toydim-1)}(\de\lambda^{I}\de\bar{\lambda}_{I})^{\toydim-2}\de\lambda^{J}\de\bar{\lambda}_{K}\lambda^{K}\bar{\lambda}_{J}-(\lambda\bar{\lambda})(\de\lambda^{I}\de\bar{\lambda}_{I})^{\toydim-1}
\end{eqnarray}
This implies the identity 
\begin{equation}
{\scriptstyle (\toydim-1)}(\de\lambda^{I}\de\bar{\lambda}_{I})^{\toydim-2}\de\lambda^{J}\de\bar{\lambda}_{K}\lambda^{K}\bar{\lambda}_{J}=(\lambda\bar{\lambda})(\de\lambda^{I}\de\bar{\lambda}_{I})^{\toydim-1}\label{toy:identity3}
\end{equation}
Plugging this back into \eqref{toy:intermed} yields
\begin{eqnarray}
(\de\rho^{I}\de\bar{\rho}_{I})^{\toydim} & = & \toydim\cdot(\lambda\bar{\lambda})(\de\lambda^{I}\de\bar{\lambda}_{I})^{\toydim-1}\times(1-\Auxi)^{\toydim-2}\de\Auxii\de\bar{\Auxii}
\end{eqnarray}
This, as indicated, factorizes into a $\lambda$- and a $\Auxii$-volume
form. However, the $\lambda$-volume form $\toydim\cdot(\lambda\bar{\lambda})(\de\lambda^{I}\de\bar{\lambda}_{I})^{\toydim-1}$
does not come with the expected negative power of $(\lambda\bar{\lambda})$
as in (\ref{toy:measure}).

This problem disappears if we redefine 
\begin{equation}
\Hauxii\equiv(\lambda\bar{\lambda})\Auxii\quad,\quad\Hauxi\equiv\Hauxii\bar{\Hauxii}=(\lambda\bar{\lambda})^{2}\Auxi\label{toy:Hauxi}
\end{equation}
Note, however, that then the factorization breaks down in general
(the $\Hauxii$-measure becomes $\lambda$-dependent) and we need
to restrict to the constraint surface $\Hauxi=0$, if we want to restore
it:
\begin{eqnarray}
(\de\rho^{I}\de\bar{\rho}_{I})^{\toydim}\!\!\! & = & \!\!\!\!\tfrac{\toydim}{(\lambda\os{}{\bar{\lambda}})}\cdot(\de\lambda^{I}\de\bar{\lambda}_{I})^{\toydim-1}\times(1-\tfrac{1}{(\lambda\os{}{\bar{\lambda}})^{2}}\Hauxi)^{\toydim-2}\de\Hauxii\de\bar{\Hauxii}=\qquad\\
 & \stackrel{\Hauxi=0}{=} & \!\!\!\!\tfrac{\toydim}{(\lambda\os{}{\bar{\lambda}})}\cdot(\de\lambda^{I}\de\bar{\lambda}_{I})^{\toydim-1}\times\de\Hauxii\de\bar{\Hauxii}
\end{eqnarray}
So using the variables (\ref{toy:Hauxi}) we have achieved to arrive
at the measure (\ref{toy:measure}) on the constraint surface.  Note
finally that the new variables $\Hauxii$ turn out to be very simple
in terms of $\rho$, if one uses that according to (\ref{toy:potential}),
(\ref{toy:PhiIslambdalambdabar}), (\ref{toy:Auxii}) and (\ref{toy:z-def})
it is nothing else but 
\begin{equation}
\Hauxii=\tfrac{1}{2}\rho^{2}
\end{equation}
This means in particular that in contrast to $\auxii$ or $\Auxii$
it depends holomorphically on $\rho^{\alpha}$.

\paragraph{Transformation of the target space holomorphic volume form}

Now let us see if we can achieve the same for the holomorphic volume
form. Using the redefined variables $\Hauxii$ of (\ref{toy:Hauxi}),
the reparametrization of $\rho^{I}$ and the corresponding transformation
of $\de\rho^{I}$ read
\begin{eqnarray}
\rho^{I} & = & \lambda^{I}+\Hauxii\frac{\bar{\lambda}^{I}}{(\lambda\bar{\lambda})}\label{toy:rho-ito-Hauxi}\\
\de\rho^{I} & = & \Bigl(\delta_{J}^{I}-\Hauxii\frac{\bar{\lambda}^{I}\bar{\lambda}_{J}}{(\lambda\bar{\lambda})^{2}}\Bigr)\de\lambda^{J}+\frac{\bar{\lambda}^{I}}{(\lambda\bar{\lambda})}\de\Hauxii+\frac{\Hauxii}{(\lambda\bar{\lambda})}\Bigl(\delta_{J}^{I}-\frac{\bar{\lambda}^{I}\lambda_{J}}{(\lambda\bar{\lambda})}\Bigr)\de\bar{\lambda}^{J}\qquad
\end{eqnarray}
For the transformation of the holomorphic volume form (\ref{toy:targspaceVolform})
of the ambient space, we will now immediately restrict to the constraint
surface $\Hauxii=0$ where $\de\rho^{I}=\de\lambda^{I}+\frac{\bar{\lambda}^{I}}{(\lambda\bar{\lambda})}\de\Hauxii$.
This yields
\begin{eqnarray}
\bei{[d^{\mc N}\rho]}{\Hauxii=0} & \!\!\!= & \!\!\!\!\!\!\tfrac{1}{\toydim!}\epsilon_{K_{1}\ldots K_{\toydim}}\Bigl(\de\lambda^{K_{1}}+\frac{\bar{\lambda}^{K_{1}}}{(\lambda\bar{\lambda})}\de\Hauxii\Bigr)\cdots\Bigl(\de\lambda^{K_{\toydim}}+\frac{\bar{\lambda}^{K_{\toydim}}}{(\lambda\bar{\lambda})}\de\Hauxii\Bigr)\!\!=\quad\qquad\\
 & \!\!\!\stackrel{\eqref{toy:identity1}}{=} & \!\!\!\!\!\!\tfrac{1}{(\lambda\bar{\lambda})(\toydim-1)!}\epsilon_{K_{1}\ldots K_{\toydim}}\de\lambda^{K_{1}}\cdots\de\lambda^{K_{\toydim-1}}\bar{\lambda}^{K_{\toydim}}\times\de\Hauxii
\end{eqnarray}
This indeed factorizes into the holomorphic $\lambda$-measure (\ref{toy:holomVolform})
and $\de\Hauxii$. So on the constraint surface the non-holomorphic
transformation (\ref{toy:rho-ito-Hauxi}) seems to transform the volume
form holomorphically.

\subsubsection*{From toy-model formulas to pure spinor formulas}

\label{sub:non-rigorous}The toy-model equations have been proven
to be a valuable guide for the pure spinor case, although they do
not make use of something like the Clifford algebra nor of Fierz identities.
Nevertheless many equations look almost identical. A recipe to go
from pure spinor equations back to toy-model equations is to make
the following replacements 
\begin{eqnarray*}
\rho^{\alpha},\bar{\rho}_{\alpha} & \to & \rho^{I},\bar{\rho}_{I}\\
\auxii^{a},\bar{\auxii}_{a} & \to & \auxii,\bar{\auxii}\\
\gamma_{\alpha\beta}^{a} & \to & \delta_{IJ}\\
\sum_{a}(\ldots) & \to & 2\cdot(\ldots)\\
\auxi & \to & \auxi
\end{eqnarray*}
There are a few deviations from this rule which is therefore not rigorous
but rather serves as a guideline. For example the last term of \eqref{HermLinProjLam}
does not appear in the toy-model. It would be interesting to see if
one can modify the recipe in a way which also correctly covers such
terms. \rem{These deviations from made me think that I had some mistakes
for a long time, but I have rechecked carefully:

As mentioned, the last term of \eqref{HermLinProjLam}, i.e. $-\frac{1}{16\sqrt{1-\auxi}\left(1+\sqrt{1-\auxi}\right)}\frac{\bar{\auxii}_{c}(\gamma^{c}\gamma_{b}\lambda)^{\alpha}\auxii^{d}(\bar{\lambda}\gamma^{b}\gamma_{d})_{\beta}}{(\lambda\bar{\lambda})}$,
does not appear in the toy-model. Following the above rule, it should
correspond to $-\frac{\auxi}{2\sqrt{1-\auxi}\left(1+\sqrt{1-\auxi}\right)}\frac{\lambda^{\alpha}\bar{\lambda}_{\beta}}{(\lambda\bar{\lambda})}=-\frac{1-\sqrt{1-\auxi}}{2\sqrt{1-\auxi}}\frac{\lambda^{\alpha}\bar{\lambda}_{\beta}}{(\lambda\bar{\lambda})}$.

It is confusing that in \eqref{HermLinProj} there is no such mismatch
to the toy-model, so it seems to appear only with the inverse transformation
(\eqref{hermInverse}). This by itself however also doesn't have a
mismatch. Instead, the expressions in footnote \eqref{fn:xzrho-derivsLamHerm}
do have again the same kind of mismatch. Would be nice to understand
this, in order to be really confident about the equations. 

The mismatches imply also that in the pure spinor case the choice
$f(\auxi)\equiv1\quad\forall\auxi$ does not have the property $\Pi_{(1)\bot}^{2}=\Pi_{(1)\bot}$
that it had in the toy-model. Related to that, its trace is not 11
(in the toy-model it was $N-1)$. It would be 11, however, precisely
when one drops the disturbing term $-\frac{1}{16\sqrt{1-\auxi}\left(1+\sqrt{1-\auxi}\right)}\frac{\bar{\auxii}_{c}(\gamma^{c}\gamma_{b}\lambda)^{\alpha}\auxii^{d}(\bar{\lambda}\gamma^{b}\gamma_{d})_{\beta}}{(\lambda\bar{\lambda})}$
which really made me think for a long time that it shouldn't be there.

A try to go even in the other direction was the idea to always group
$\auxii$ with a Kronecker $\delta^{IJ}$ and then use the mapping
\[
\auxii\delta^{IJ}\to\tfrac{1}{2}\auxii^{a}\gamma_{a}^{\alpha\beta}
\]
This leads however already for $\Pi_{\bot}=\delta_{K}^{I}-\frac{\bar{\lambda}^{I}\lambda_{K}}{(\lambda\bar{\lambda})}=\delta_{K}^{I}-\frac{\auxii\delta^{IJ}\bar{\auxii}\delta_{KL}\bar{\lambda}_{J}\lambda^{L}}{\auxi(\lambda\bar{\lambda})}$
to the wrong result $\delta_{\beta}^{\alpha}-\tfrac{1}{4\auxi}\frac{\auxii^{a}(\gamma_{a}\bar{\lambda})^{\alpha}\bar{\auxii}_{a}(\gamma^{a}\lambda)_{\beta}}{(\lambda\bar{\lambda})}$.
Interestingly, its trace would be $15=(16-1)$ instead of $11$, which
makes it a too close analogy to the toy-model. }\newpage

\section{Proofs}

\subsection{Proof of proposition 1}

\label{app:proof1}\begin{proof} Let us start from the end with the
properties of the auxiliary variables:

5. The equations in (\ref{zsquarezero}), i.e. $\auxii^{a}\auxii_{a}=0$
and $\auxii^{a}(\gamma_{a}\rho)_{\alpha}=0$, follow directly from
the Fierz identity.%
\footnote{\label{fn:Fierz}The 10d Fierz identity that we will use most frequently
is 
\[
\gamma_{a(\alpha\beta}\gamma_{\gamma)\delta}^{a}=0\quad\mbox{or }\gamma^{a(\alpha\beta}\gamma_{a}^{\gamma)\delta}=0
\]
So in particular 
\[
(\rho\gamma_{a}\rho)(\gamma^{a}\rho)_{\alpha}=0
\]
or, when we work with pure spinors $\lambda$:
\[
(\gamma_{a}\lambda)_{\alpha}(\gamma^{a}\lambda)_{\beta}=\tfrac{1}{2}(\lambda\gamma_{a}\lambda)\gamma_{\alpha\beta}^{a}=0
\]

The less well known Fierz identity given in footnote \ref{fn:Fierz-id}
can be found in e.g. \cite{Berkovits:2010zz}, equation (2.12)\frem{Is
there a standard reference for such Fierz identities?}. One can find
how to derive this kind of chiral Fierz identities e.g. around page
179 of \cite{Guttenberg:2008ic}. The first identity of above is there
given in equation (D.160) (p.180), while a symmetrized version of
the identity given in footnote \ref{fn:Fierz-id} can there be found
in equation (D.166), which would actually already be enough for our
purposes here. Nevertheless, in order to get the full one of footnote
\ref{fn:Fierz-id}, one can take there a linear combination of equations
(D.154) on page 179 and (D.70) on page 174 (taking the chiral block
with the index structure that matches the identity in footnote \ref{fn:Fierz-id}).$\quad\fussend$%
} That $\auxi$ and $\auxii^{a}$ vanish for pure spinors away from
the origin (\ref{xzvanishforpure}) follows directly from their definitions
in (\ref{zxDefandfzero}). Equations (\ref{xzscaling}) are also obvious
from the definitions. 

If $\rho$ and its complex conjugate are 'proportional' in the sense
$\rho^{\alpha}\propto(\gamma_{b}\bar{\rho})^{\alpha}$, or more explicitly
$\rho^{\alpha}=\alpha^{b}(\rho,\bar{\rho})(\gamma_{b}\bar{\rho})^{\alpha}$
for some complex-valued $\alpha^{b}(\rho,\bar{\rho})$, or equivalently
$\bar{\rho}_{\alpha}=\bar{\alpha}^{b}(\gamma_{b}\rho)_{\alpha}$ and
$(\rho\bar{\rho})\neq0$ (which implies with the preceding assumption
that $\bar{\alpha}_{b}(\rho,\bar{\rho})(\rho\gamma^{b}\rho)\neq0$),
then we have 
\begin{eqnarray}
\tfrac{1}{2}\auxii^{a}(\gamma_{a}\bar{\rho})^{\alpha} & = & \tfrac{1}{2}\frac{\rho\gamma^{a}\rho}{\bar{\alpha}^{c}(\rho\gamma_{c}\rho)}(\underbrace{\gamma_{a}\gamma_{b}}_{-\gamma_{b}\gamma_{a}\lqn{{\scriptstyle +2\delta_{ab}}}}\rho)^{\alpha}\bar{\alpha}^{b}=\\
 & \stackrel{{\rm Fierz}}{=} & \frac{(\rho\gamma_{b}\rho)\bar{\alpha}^{b}}{(\rho\gamma_{c}\rho)\bar{\alpha}^{c}}\rho^{\alpha}=\rho^{\alpha}\quad\surd
\end{eqnarray}
This shows that if $\rho^{\alpha}$ is 'proportional' to $\bar{\rho}_{\alpha}$
in the sense $\rho^{\alpha}=\alpha^{b}(\gamma_{b}\bar{\rho})^{\alpha}$,
then the expansion coefficients $\alpha^{b}$ can always be chosen
to be 
\begin{equation}
\alpha^{b}=\tfrac{1}{2}\auxii^{b}=\tfrac{\rho\gamma^{b}\rho}{2(\rho\bar{\rho})}\label{alphafixed}
\end{equation}
and thus proves the second equivalence relation in (\ref{xoneifreal}).

In order to show the first equivalence relation in (\ref{xoneifreal}),
we will show for the absolute value square of the difference $\abs{\rho-\tfrac{1}{2}\auxii^{a}(\gamma_{a}\bar{\rho})}^{2}=0\iff\auxi=1\:\mbox{or }\rho\bar{\rho}=0$:
\begin{eqnarray}
\hspace{-1cm}\abs{\rho-\tfrac{1}{2}\auxii^{a}(\gamma_{a}\bar{\rho})}^{2} & = & \left(\rho-\tfrac{1}{2}\auxii^{a}(\gamma_{a}\bar{\rho})\right)\left(\bar{\rho}-\tfrac{1}{2}\bar{\auxii}^{b}(\gamma_{b}\rho)\right)=\\
 & = & \rho\bar{\rho}-\tfrac{1}{2}\auxii^{a}(\bar{\rho}\gamma_{a}\bar{\rho})-\tfrac{1}{2}\bar{\auxii}^{a}(\rho\gamma_{a}\rho)+\tfrac{1}{4}\auxii^{a}\bar{\auxii}^{b}(\bar{\rho}\underbrace{\gamma_{a}\gamma_{b}}_{-\gamma_{b}\gamma_{a}\lqn{{\scriptstyle +2\delta_{ab}}}}\rho)=\qquad\\
 & = & \rho\bar{\rho}\left(1-\auxi\right)
\end{eqnarray}
This completes the proof of (\ref{xoneifreal}). 

In order to prove (\ref{xin01}), we remember the definition $\auxi=\frac{1}{2}\auxii^{a}\bar{\auxii}_{a}$
and thus 
\begin{equation}
\mathbb{R}\ni\auxi\geq0\quad\mbox{with }\auxi=0\,\mbox{only if }\auxii^{a}=0
\end{equation}
Now assume that $\auxi>1$ and consider the spinor $\tfrac{1}{2}\frac{\auxii^{a}(\gamma_{a}\bar{\rho})^{\alpha}}{1+i\sqrt{\auxi-1}}$.
The choice of this spinor is suggested by the appearance of such a
term in the projector map (\ref{ProjGeneral}) (with $\sqrt{-1}\equiv i$),
but the following argument is independent from our motivation to look
at this particular spinor. We claim that this spinor would be equal
to $\rho^{\alpha}$ for every $\auxi>1$ by calculating the modulus
squared of the difference: 
\begin{eqnarray}
\lqn{\Abs{\rho^{\alpha}-\tfrac{1}{2}\frac{\auxii^{a}(\gamma_{a}\bar{\rho})^{\alpha}}{1+i\sqrt{\auxi-1}}}^{2}=}\nonumber \\
 & \stackrel{\auxi>1}{=} & \left(\rho^{\alpha}-\tfrac{1}{2}\frac{\auxii^{a}(\bar{\rho}\gamma_{a})^{\alpha}}{1+i\sqrt{\auxi-1}}\right)\left(\bar{\rho}_{\alpha}-\tfrac{1}{2}\frac{\bar{\auxii}_{b}(\gamma^{b}\rho)_{\alpha}}{1-i\sqrt{\auxi-1}}\right)=\\
 & = & (\rho\bar{\rho})-\tfrac{1}{2}\frac{\bar{\auxii}_{b}(\rho\gamma^{b}\rho)}{1-i\sqrt{\auxi-1}}-\tfrac{1}{2}\frac{\auxii^{a}(\bar{\rho}\gamma_{a}\bar{\rho})}{1+i\sqrt{\auxi-1}}+\tfrac{1}{4}\frac{\auxii^{a}(\bar{\rho}\overbrace{\gamma_{a}\gamma^{b}}^{-\gamma^{b}\gamma_{a}\lqn{{\scriptstyle +2\delta_{a}^{b}}}}\rho)\bar{\auxii}_{b}}{1+\auxi-1}=\\
 & \stackrel{(\ref{zsquarezero})}{=} & (\rho\bar{\rho})\left(2-\frac{\auxi}{1-i\sqrt{\auxi-1}}-\frac{\auxi}{1+i\sqrt{\auxi-1}}\right)=\\
 & = & 0
\end{eqnarray}
However, a few lines above we have seen that a linear combination
of $(\gamma_{a}\bar{\rho})^{\alpha}$ can be equal to $\rho$ if and
only if $\auxi=1$ which disproves the assumption $\auxi>1$ and thus
shows that%
\footnote{\label{fn:Fierz-id}This was a very indirect proof of $\auxi\leq1$.
One might think that a more direct way is to use the following Fierz
identity (see also footnote \ref{fn:Fierz} on page \pageref{fn:Fierz}):
\begin{eqnarray*}
4\gamma_{\delta\beta}^{a}\gamma_{a}^{\alpha\gamma} & = & 8\delta_{\delta}^{\alpha}\delta_{\beta}^{\gamma}+2\delta_{\beta}^{\alpha}\delta_{\delta}^{\gamma}-\gamma^{ab\alpha}\tief{\beta}\gamma_{ba}\hoch{\gamma}\tief{\delta}
\end{eqnarray*}
Contracting it with two $\rho'$s and two $\bar{\rho}$'s, we obtain
\begin{eqnarray*}
4(\rho\gamma^{a}\rho)(\bar{\rho}\gamma_{a}\bar{\rho}) & = & 10(\rho\bar{\rho})^{2}-(\bar{\rho}\gamma^{ab}\rho)(\bar{\rho}\gamma_{ba}\rho)\leq10(\rho\bar{\rho})^{2}
\end{eqnarray*}
Strange enough, this shows only $\auxi\leq\tfrac{5}{4}$ which is
weaker than what we had before. The bound $\auxi=\tfrac{5}{4}$ would
be reached only for $(\bar{\rho}\gamma^{ab}\rho)=0$. Together with
the proven $\auxi\leq1$ this implies 
\[
(\bar{\rho}\gamma^{ab}\rho)=0\iff\rho\bar{\rho}=0\qquad\fussend
\]
} 
\begin{equation}
\auxi\leq1\label{xsmaller1}
\end{equation}
The proof of the properties of the auxiliary variables is now complete.
From now on let us stick to the order in the proposition and continue
with its 1st statement.$\quad\hfill{\scriptstyle \square}$

1. The second projection property (\ref{proj-prop2}) follows directly
from (\ref{xzvanishforpure}) and the fact that we required $f(0)=1$.
What remains to show is that the spinor indeed becomes pure after
the mapping:
\begin{eqnarray}
\lqn{P_{(f)}^{\alpha}(\rho,\bar{\rho})\gamma_{\alpha\beta}^{c}P_{(f)}^{\beta}(\rho,\bar{\rho})=}\nonumber \\
 & = & f\left(\auxi\right)^{2}\left(\rho^{\alpha}-\tfrac{1}{2}\frac{\auxii^{a}(\bar{\rho}\gamma_{a})^{\alpha}}{1+\sqrt{1-\auxi}}\right)\gamma_{\alpha\beta}^{c}\left(\rho^{\beta}-\tfrac{1}{2}\frac{\auxii^{b}(\bar{\rho}\gamma_{b})^{\beta}}{1+\sqrt{1-\auxi}}\right)=\\
 & = & f\left(\auxi\right)^{2}\Biggl((\rho\gamma^{c}\rho)-\frac{\auxii^{a}(\rho\overbrace{\gamma^{c}\gamma_{a}}^{-\gamma_{a}\gamma^{c}\lqn{{\scriptstyle +2\delta_{a}^{c}}}}\bar{\rho})}{1+\sqrt{1-\auxi}}+\tfrac{1}{4}\frac{\auxii^{a}\auxii^{b}(\bar{\rho}\gamma_{a}\overbrace{\gamma^{c}\gamma_{b}}^{-\gamma_{b}\gamma^{c}\lqn{{\scriptstyle +2\delta_{b}^{c}}}}\bar{\rho})}{\left(1+\sqrt{1-\auxi}\right)^{2}}\Biggr)=\\
 & \stackrel{(\ref{zsquarezero})}{=} & \frac{f\left(\auxi\right)^{2}}{\rho\bar{\rho}}\Biggl(\auxii^{c}-\frac{2\auxii^{c}}{1+\sqrt{1-\auxi}}+\frac{\auxii^{c}\auxi}{\left(1+\sqrt{1-\auxi}\right)^{2}}\Biggr)=0\quad\surd
\end{eqnarray}
This proves (\ref{proj-prop2}).$\quad\hfill{\scriptstyle \square}$

2. The homogeneity (\ref{homogenous}) follows directly from the scaling
behaviour (\ref{xzscaling}) of the variables $\auxi$ and $\auxii^{a}$.
The modulus square of the projected spinor is
\begin{eqnarray}
\lqn{P_{(f)}^{\alpha}(\rho,\bar{\rho})\bar{P}_{(f)\alpha}(\rho,\bar{\rho})=}\nonumber \\
 & = & f\left(\auxi\right)\bar{f}\left(\auxi\right)\left(\rho^{\alpha}-\tfrac{1}{2}\frac{\auxii^{a}(\bar{\rho}\gamma_{a})^{\alpha}}{1+\sqrt{1-\auxi}}\right)\left(\bar{\rho}_{\alpha}-\tfrac{1}{2}\frac{\bar{\auxii}^{b}(\gamma_{b}\rho)_{\alpha}}{1+\sqrt{1-\auxi}}\right)=\\
 & = & \abs{f\left(\auxi\right)}^{2}\biggl((\rho\bar{\rho})-\tfrac{1}{2}\tfrac{\bar{\auxii}^{b}(\rho\gamma_{b}\rho)}{1+\sqrt{1-\auxi}}-\tfrac{1}{2}\tfrac{\auxii^{a}(\bar{\rho}\gamma_{a}\bar{\rho})}{1+\sqrt{1-\auxi}}+\tfrac{1}{4}\tfrac{\auxii^{a}(\bar{\rho}\overbrace{\gamma_{a}\gamma_{b}}^{-\gamma_{b}\gamma_{a}\lqn{{\scriptscriptstyle +2\delta_{ab}}}}\rho)\bar{\auxii}^{b}}{\left(1+\sqrt{1-\auxi}\right)^{2}}\biggr)=\\
 & \!\!\!\!\stackrel{(\ref{zsquarezero})}{=}\!\!\!\! & \abs{f\left(\auxi\right)}^{2}(\rho\bar{\rho})\biggl(\!1\!-\tfrac{1}{2}\tfrac{\bar{\auxii}^{b}\auxii_{b}}{1+\sqrt{1-\auxi}}-\tfrac{1}{2}\tfrac{\auxii^{a}\bar{\auxii}_{a}}{1+\sqrt{1-\auxi}}+\tfrac{1}{2}\tfrac{\auxii^{a}\bar{\auxii}_{a}}{\left(1+\sqrt{1-\auxi}\right)^{2}}\biggr)=\qquad\\
 & = & 2(\rho\bar{\rho})\abs{f\left(\auxi\right)}^{2}\left(\frac{1-\auxi}{1+\sqrt{1-\auxi}}\right)
\end{eqnarray}
This agrees with the claim in (\ref{Proj-modulus}). At the origin
$(\rho\bar{\rho})=0$ the variable $\auxi$ is ill-defined, but when
we assume $f(\auxi)$ to be continuous on the whole interval $[0,1]$
then it will clearly be bounded and we have a well defined limit when
approaching the origin:
\begin{eqnarray}
\lqn{\lim_{\abs{\rho}\to0}P_{(f)}^{\alpha}(\rho,\bar{\rho})\bar{P}_{(f)\alpha}(\rho,\bar{\rho})=}\nonumber \\
 & = & \lim_{\abs{\rho}\to0}2\abs{\rho}^{2}\abs{f\left(\auxi\right)}^{2}\left(\frac{1-\auxi}{1+\sqrt{1-\auxi}}\right)\leq\\
 & \stackrel{0\leq\auxi\leq1}{\leq} & \lim_{\abs{\rho}\to0}2\abs{\rho}^{2}\abs{f\left(\auxi\right)}^{2}=\\
 & \stackrel{f(\auxi){\rm bounded}}{=} & 0
\end{eqnarray}
 This proves (\ref{limitToZero}). Note that the projection properties
(\ref{proj-prop1}) and (\ref{proj-prop2}) also obviously hold in
this limit.$\quad\hfill{\scriptstyle \square}$

3. From the above result we see that the zero-vector is in the zero-locus
of $P_{(f)}^{\alpha}$, at least if the latter is analytically continued
to that point. Looking at the definition (\ref{ProjGeneral}) of the
projector in the remaining regime, it is obvious that it vanishes
if and only if either $f(\auxi)=0$ or the term in the bracket, i.e.
$\rho^{\alpha}-\tfrac{1}{2}\frac{\auxii^{a}(\bar{\rho}\gamma_{a})^{\alpha}}{1+\sqrt{1-\auxi}}$
is zero. From (\ref{xoneifreal}) we know that this is the case if
and only if $\rho^{\alpha}=\tfrac{1}{2}\auxii^{a}(\bar{\rho}\gamma_{a})^{\alpha}$
(or $\auxi=1$). This completes the proof of (\ref{kerP}). Note that
writing $(\rho\bar{\rho})\rho^{\alpha}=\tfrac{1}{2}(\rho\gamma^{a}\rho)(\bar{\rho}\gamma_{a})^{\alpha}$
instead of $\rho^{\alpha}=\tfrac{1}{2}\auxii^{a}(\bar{\rho}\gamma_{a})^{\alpha}$
would include also the case $\rho^{\alpha}=0$ into this set. Now
let us assume that $\rho^{\alpha}$ is real, which is a non-covariant
statement: 
\begin{equation}
\underbrace{\bar{\rho}_{\alpha}}_{(\rho^{\alpha})^{*}}=\rho^{\alpha}\quad({\rm assumption})\label{real-assump}
\end{equation}
Remember from footnote \ref{fn:10dgamma} on page \pageref{fn:10dgamma}
that we have $\gamma^{10\alpha\beta}=-\gamma_{\alpha\beta}^{10}=i\delta_{\alpha\beta}$
($\gamma_{\alpha\beta}^{0}=-\gamma^{0\alpha\beta}=\gamma_{0}^{\alpha\beta}=-\delta_{\alpha\beta}$).
Using the above assumption for the following expression, we thus obtain
\begin{eqnarray}
\tfrac{\auxii^{a}}{2}(\gamma_{a}\bar{\rho})^{\alpha} & = & \sum_{i=1}^{9}\tfrac{(\rho\gamma^{i}\rho)}{2}\underbrace{(\gamma_{i}\bar{\rho})^{\alpha}}_{(\gamma_{i}\rho)_{\alpha}}+\tfrac{(\rho^{\gamma}\gamma_{\gamma\delta}^{10}\rho^{\delta})}{2(\rho\bar{\rho})}(\underbrace{\gamma_{10}^{\alpha\beta}}_{-\gamma_{\alpha\beta}^{10}}\underbrace{\bar{\rho}_{\beta}}_{\rho^{\beta}})=\label{same-calc-I}\\
 & \stackrel{(\ref{real-assump})}{=} & \tfrac{(\rho\gamma^{a}\rho)}{2(\rho\rho)}(\gamma_{a}\rho)^{\alpha}-\tfrac{(\rho^{\gamma}\overbrace{\gamma_{\gamma\delta}^{10}}^{-i\delta_{\gamma\delta}}\rho^{\delta})}{(\rho\rho)}(\underbrace{\gamma_{\alpha\beta}^{10}}_{-i\delta_{\alpha\beta}}\rho^{\beta})=\\
 & = & \rho^{\alpha}\label{same-calc-III}
\end{eqnarray}
From what we have proved before, this shows that $\auxi=1$ if $\rho$
is real.%
\footnote{Note that the same calculation (\ref{same-calc-I})-(\ref{same-calc-III})
goes through if $\rho^{\alpha}$ is only almost real in the sense
$\bar{\rho}_{\alpha}=c\rho^{\alpha}$ for some $c\in\mathbb{C}$ (implying
that $\rho^{\alpha}$ is a complex multiple of some real spinor $\tilde{\rho}^{\alpha}$,
i.e. $\rho^{\alpha}=\tilde{c}\tilde{\rho}^{\alpha},\quad\tilde{c}\in\mathbb{C}$).
Also in this case $\auxi=1.\quad\mbox{\ensuremath{\fussend}}$%
} The complete zero-locus is SO(10) covariant. Any SO(10) rotation
of a real spinor (the result of the rotation is in general not real
any longer) thus also has to lie in the zero-locus. The corresponding
(infinitesimally) rotated reality condition would be 
\[
(L_{ab}\gamma^{ab}\tief{\alpha}\hoch{\beta}\bar{\rho}_{\beta})=(L_{ab}\gamma^{ab\:\alpha}\tief{\beta}\rho^{\beta})
\]
with any antisymmetric parameter $L_{ab}$.$\quad\hfill{\scriptstyle \square}$

4. Continuity is obvious. If $f$ is differentiable, then there are
only two possible problems for differentiability of the projector.
One is at $\rho^{\alpha}=0$, where $\auxi$ and $\auxii^{a}$ are
not well defined. And the other is at $\auxi=1$, as the square root
is not differentiable at $0$. We will later study the variations
(and thus the derivatives) of the projector and will see in footnote
\ref{fn:endOfproof4} that they have a pole for $\abs{\rho}\to0$
and also in general for $\auxi=1$, but that the latter can be avoided
by choosing $f(\auxi)=\tilde{f}(\auxi)\left(1-\auxi\right)^{1+r}$
with $\tilde{f}$ differentiable everywhere. This will complete the
proof of proposition \ref{prop:covproj}.\end{proof}

\subsection{Proof of proposition 2}

\label{app:proof2}\begin{proof}[Proof of Proposition \ref{prop:linProj}.]Remember
$\auxii^{a}=\frac{\rho\gamma^{a}\rho}{\rho\bar{\rho}},\:\auxi=\tfrac{1}{2}\auxii^{a}\bar{\auxii}_{a}$
of equation \eqref{zxDefandfzero}. Their derivatives read%
\footnote{\label{fn:zx-derivs-lambda}Using (\ref{Proj-inv}), the derivatives
of the auxiliary variables can also be written in terms of $\lambda^{\alpha}\equiv P_{(f)}^{\alpha}(\rho,\bar{\rho})$:
\begin{eqnarray*}
\partial_{\rho^{\beta}}\auxii^{a} & = & \frac{\sqrt{1-\auxi}}{(\lambda\bar{\lambda})}\Bigl\{2\bar{f}(\auxi)\left(\lambda\gamma^{a}\right)_{\beta}-\frac{\bar{f}(\auxi)}{2\left(1+\sqrt{1-\auxi}\right)}\auxii^{a}\bar{\auxii}_{b}\left(\lambda\gamma^{b}\right)_{\beta}+\\
 &  & -\frac{f(\auxi)}{1+\sqrt{1-\auxi}}\auxii^{b}\left(\gamma_{b}\gamma^{a}\bar{\lambda}\right)_{\beta}+\frac{1-\sqrt{1-\auxi}}{1+\sqrt{1-\auxi}}\auxii^{a}f(\auxi)\bar{\lambda}_{\beta}\Bigr\}\qquad\\
\partial_{\bar{\rho}_{\beta}}\auxii^{a} & = & -\auxii^{a}\frac{\sqrt{1-\auxi}}{(\lambda\bar{\lambda})}\Bigl\{\bar{f}(\auxi)\lambda^{\beta}+\frac{f(\auxi)}{2\left(1+\sqrt{1-\auxi}\right)}\auxii^{b}\left(\bar{\lambda}\gamma_{b}\right)^{\beta}\Bigr\}\\
\partial_{\rho^{\beta}}\auxi & = & \frac{\left(1-\auxi\right)}{(\lambda\bar{\lambda})}\Bigl\{\bar{f}(\auxi)\bar{\auxii}_{c}\left(\gamma^{c}\lambda\right)_{\beta}-\frac{2\auxi}{1+\sqrt{1-\auxi}}f(\auxi)\bar{\lambda}_{\beta}\Bigr\}
\end{eqnarray*}
Written in this form it is particularly easy to see that on the subspace
where $\auxi=1$ (basically the zero-locus of the projection), the
$\rho$-derivative of $\auxi$ vanishes
\[
\bei{\partial_{\rho^{\beta}}\auxi}{\auxi=1}=0
\]
while at the constraint surface (where $\auxi=\auxii^{a}=\bar{\auxii}_{a}=0$)
we have even 
\[
\bei{\partial_{\rho^{\beta}}\auxi}{\auxi=\auxii^{a}=0}=\bei{\partial_{\bar{\rho}_{\beta}}\auxii^{a}}{\auxi=\auxii^{a}=0}=0,\quad\bei{\partial_{\rho^{\beta}}\auxii^{a}}{\auxi=\auxii^{a}=0}=\frac{2\left(\lambda\gamma^{a}\right)_{\beta}}{(\lambda\bar{\lambda})}
\]
So in both subspaces the $\rho$-derivative of $\auxi$ vanishes which
means that $\auxi$ is stationary (so e.g. extremal) at these values.
This agrees with the observation that $0\leq\auxi\leq1$.$\qquad\fussend$%
} 
\begin{eqnarray}
\partial_{\rho^{\beta}}\auxii^{a} & = & \frac{2(\gamma^{a}\rho)_{\beta}-\auxii^{a}\bar{\rho}_{\beta}}{\rho\bar{\rho}}\label{zbyrho-derivative}\\
\partial_{\bar{\rho}_{\beta}}\auxii^{a} & = & -\frac{\auxii^{a}\rho^{\beta}}{\rho\bar{\rho}}\label{zbyrhobar-derivative}\\
\partial_{\rho^{\beta}}\auxi & = & \tfrac{1}{2}\partial_{\rho^{\beta}}\auxii^{a}\bar{\auxii}_{a}+\tfrac{1}{2}\auxii^{a}\partial_{\rho^{\beta}}\bar{\auxii}_{a}=\frac{\bar{\auxii}_{a}(\gamma^{a}\rho)_{\beta}-2\auxi\bar{\rho}_{\beta}}{\rho\bar{\rho}}\label{xbyrhoderivative}
\end{eqnarray}

1. Now we can apply the $\rho$-derivative to the projector (\ref{ProjGeneral}):
\begin{eqnarray}
\lqn{\partial_{\rho^{\beta}}P_{(f)}^{\alpha}(\rho,\bar{\rho})=}\nonumber \\
 & = & \!\!\!\!\partial_{\rho^{\beta}}\auxi\cdot f'\left(\auxi\right)\left(\rho^{\alpha}-\tfrac{1}{2}\frac{\auxii^{a}(\bar{\rho}\gamma_{a})^{\alpha}}{1+\sqrt{1-\auxi}}\right)+\nonumber \\
 &  & \!\!\!\!+f\left(\auxi\right)\left(\delta_{\beta}^{\alpha}-\tfrac{1}{2}\frac{\partial_{\rho^{\beta}}\auxii^{a}(\bar{\rho}\gamma_{a})^{\alpha}}{1+\sqrt{1-\auxi}}-\tfrac{1}{2}\frac{\partial_{\rho^{\beta}}\auxi\cdot\auxii^{a}(\bar{\rho}\gamma_{a})^{\alpha}}{2\sqrt{1-\auxi}\left(1+\sqrt{1-\auxi}\right)^{2}}\right)=\qquad\label{partialPintermed}\\
 & \stackrel{\mbox{\tiny\eqref{zbyrho-derivative}\eqref{xbyrhoderivative}}}{=} & f\left(\auxi\right)\delta_{\beta}^{\alpha}+f'\left(\auxi\right)\frac{\bar{\auxii}_{a}\rho^{\alpha}(\gamma^{a}\rho)_{\beta}}{\rho\bar{\rho}}-2f'\left(\auxi\right)\auxi\frac{\rho^{\alpha}\bar{\rho}_{\beta}}{\rho\bar{\rho}}+\nonumber \\
 &  & \hspace{-1.5cm}-\!\!\left(\!\!\frac{f\left(\auxi\right)}{1+\sqrt{1-\auxi}}\delta_{b}^{a}\!+\!\frac{f\left(\auxi\right)\auxii^{a}\bar{\auxii}_{b}}{4\sqrt{1-\auxi}\left(1+\sqrt{1-\auxi}\right)^{2}}\!+\!\tfrac{1}{2}\frac{f'\left(\auxi\right)\auxii^{a}\bar{\auxii}_{b}}{1+\sqrt{1-\auxi}}\!\!\right)\!\!\frac{(\gamma_{a}\bar{\rho})^{\alpha}(\rho\gamma^{b})_{\beta}}{\rho\bar{\rho}}+\nonumber \\
 &  & \hspace{-1.5cm}+\!\!\left(\!\!\frac{f'\left(\auxi\right)\auxi}{1+\sqrt{1-\auxi}}\!+\!\frac{f\left(\auxi\right)}{2\left(1+\sqrt{1-\auxi}\right)}\!+\!\frac{f\left(\auxi\right)\auxi}{2\sqrt{1-\auxi}\left(1+\sqrt{1-\auxi}\right)^{2}}\!\!\right)\!\!\frac{\auxii^{a}(\bar{\rho}\gamma_{a})^{\alpha}\bar{\rho}_{\beta}}{\rho\bar{\rho}}\qquad\quad
\end{eqnarray}
After a slight rewriting this agrees with (\ref{LinProjGen}). The
derivative with respect to $\bar{\rho}_{\beta}$ instead has the form
\begin{eqnarray}
\lqn{\partial_{\bar{\rho}_{\beta}}P_{(f)}^{\alpha}(\rho,\bar{\rho})=}\nonumber \\
 & \hspace{-1cm}= & \partial_{\bar{\rho}_{\beta}}\auxi\cdot f'\left(\auxi\right)\left(\rho^{\alpha}-\tfrac{1}{2}\frac{\auxii^{a}(\bar{\rho}\gamma_{a})^{\alpha}}{1+\sqrt{1-\auxi}}\right)+\nonumber \\
 &  & \hspace{-1cm}+f\!\left(\auxi\right)\!\left(\!\!-\tfrac{1}{2}\frac{\partial_{\bar{\rho}_{\beta}}\auxii^{a}(\bar{\rho}\gamma_{a})^{\alpha}}{1+\sqrt{1-\auxi}}-\tfrac{1}{2}\frac{\auxii^{a}\gamma_{a}^{\alpha\beta}}{1+\sqrt{1-\auxi}}-\tfrac{1}{2}\frac{\partial_{\bar{\rho}_{\beta}}\auxi\auxii^{a}(\bar{\rho}\gamma_{a})^{\alpha}}{2\sqrt{1-\auxi}\left(1+\sqrt{1-\auxi}\right)^{2}}\!\!\right)\!=\qquad\label{barpartialPinter}\\
 & \hspace{-1cm}\stackrel{\tiny\eqref{zbyrhobar-derivative}\eqref{xbyrhoderivative}}{=} & \frac{\auxii^{a}(\gamma_{a}\bar{\rho})^{\beta}-2\auxi\rho^{\beta}}{\rho\bar{\rho}}\times\nonumber \\
 &  & \hspace{-1cm}\times\left(f'\left(\auxi\right)\left(\rho^{\alpha}-\tfrac{1}{2}\frac{\auxii^{a}(\bar{\rho}\gamma_{a})^{\alpha}}{1+\sqrt{1-\auxi}}\right)-\tfrac{1}{2}f\left(\auxi\right)\frac{\auxii^{a}(\bar{\rho}\gamma_{a})^{\alpha}}{2\sqrt{1-\auxi}\left(1+\sqrt{1-\auxi}\right)^{2}}\right)+\nonumber \\
 &  & \hspace{-1cm}+f\left(\auxi\right)\left(-\tfrac{1}{2}\frac{-\frac{\auxii^{a}\rho^{\beta}}{\rho\bar{\rho}}(\bar{\rho}\gamma_{a})^{\alpha}}{1+\sqrt{1-\auxi}}-\tfrac{1}{2}\frac{\auxii^{a}\gamma_{a}^{\alpha\beta}}{1+\sqrt{1-\auxi}}\right)
\end{eqnarray}
Collecting the terms, one arrives at (\ref{linprojgen})%
\footnote{\label{fn:endOfproof4} Looking at the denominators of (\ref{LinProjGen})
and (\ref{linprojgen}) it becomes clear that the only possible poles
of $\Pi_{(f)\bot}$ and $\rest_{(f)\bot}$ are either at $(\rho\bar{\rho})=0$,
at $\auxi=1$ or at some poles of $f$ itself. This shows that $P_{(f)}^{\alpha}$
is differentiable everywhere but at $\{0\}\cup\{\rho^{\alpha}|\auxi=1\}$
if $f$ is everywhere differentiable. This was one of the statements
of the 4th point of proposition \ref{prop:covproj}. Furthermore the
singularities $\frac{1}{\sqrt{1-\auxi}}$ come with $f(\auxi)$ and
could be removed by $f(\auxi)\propto\sqrt{1-\auxi}$ which, however,
would introduce new singularities in other terms because of $f'(\auxi)\propto\frac{1}{\sqrt{1-\auxi}}$.
A save choice instead would be 
\begin{equation}
f(\auxi)=\tilde{f}(\auxi)(1-\auxi)^{1+r}\,,\quad\tilde{f}(0)=1\,,\quad r\geq0
\end{equation}
with $\tilde{f}$ differentiable everywhere. The exponent $(1+r)$
guarantees that the derivative 
\begin{equation}
f'(\auxi)=\tilde{f}'(\auxi)(1-\auxi)^{1+r}-(1+r)\tilde{f}(\auxi)(1-\auxi)^{r}
\end{equation}
does not have singularities, while at the same time the factor $(1-\auxi)^{1+r}$
will cure the singularities of the form $\frac{1}{\sqrt{1-\auxi}}$
which come along with $f$ only. This completes the proof of point
4 of proposition \ref{prop:covproj}.$\quad\fussend$%
}. $\quad\hfill{\scriptstyle \square}$\newpage 

2. Using (\ref{Proj-inv}) and (\ref{Proj-modulus-repeated}), we
can rewrite the the linear projection matrix (\ref{LinProjGen}) in
terms of $\lambda^{\alpha}$ and its complex conjugate:{\small 
\begin{eqnarray}
\lqn{\Pi_{(f)\bot\beta}^{\alpha}(\rho,\bar{\rho})=f\left(\auxi\right)\delta_{\beta}^{\alpha}+}\nonumber \\
 &  & \hspace{-0.5cm}+f'\!(\auxi)\!\frac{\!\!\left(\!\frac{1+\sqrt{1-\auxi}}{2f(\auxi)\sqrt{1-\auxi}}\lambda^{\alpha}\!+\!\frac{1}{4\bar{f}(\auxi)\sqrt{1-\auxi}}\auxii^{c}\left(\bar{\lambda}\gamma_{c}\right)^{\alpha}\!\right)\!\!\Bigl(\!\frac{1+\sqrt{1-\auxi}}{2f(\auxi)\sqrt{1-\auxi}}\bar{\auxii}_{b}(\lambda\gamma^{b})_{\beta}\!+\!\frac{1}{4\os{}{\bar{f}(\auxi)}\sqrt{1-\auxi}}\auxii^{d}\bar{\auxii}_{b}(\bar{\lambda}\overbrace{\gamma_{d}\gamma^{b}}^{\hspace{-1cm}-\gamma^{b}\gamma_{d}\lqn{{\scriptstyle +2\delta_{d}^{b}}}})_{\beta}\!\Bigr)\!\!}{\frac{\left(1+\sqrt{1-\auxi}\right)}{2\abs{f(\auxi)}^{2}\left(1-\auxi\right)}(\lambda\bar{\lambda})}+\nonumber \\
 &  & \hspace{-0.5cm}-2\auxi f'\!\left(\auxi\right)\!\frac{\!\!\left(\!\frac{1+\sqrt{1-\auxi}}{2f(\auxi)\sqrt{1-\auxi}}\lambda^{\alpha}\!+\!\frac{1}{4\bar{f}(\auxi)\sqrt{1-\auxi}}\auxii^{c}\left(\bar{\lambda}\gamma_{c}\right)^{\alpha}\!\right)\!\!\left(\!\frac{1+\sqrt{1-\auxi}}{2\bar{f}(\auxi)\sqrt{1-\auxi}}\bar{\lambda}_{\beta}\!+\!\frac{1}{4f(\auxi)\sqrt{1-\auxi}}\bar{\auxii}_{d}\left(\lambda\gamma^{d}\right)_{\beta}\!\right)\!\!}{\frac{\left(1+\sqrt{1-\auxi}\right)}{2\abs{f(\auxi)}^{2}\left(1-\auxi\right)}(\lambda\bar{\lambda})}+\nonumber \\
 &  & \hspace{-0.5cm}-\tfrac{1}{1+\sqrt{1-\auxi}}\left(f\left(\auxi\right)\delta_{b}^{a}+\Bigl(\tfrac{f\left(\auxi\right)}{4\sqrt{1-\auxi}\left(1+\sqrt{1-\auxi}\right)}+\tfrac{1}{2}f'(\auxi)\Bigr)\auxii^{a}\bar{\auxii}_{b}\right)\times\nonumber \\
 &  & \hspace{-0.5cm}\times\frac{\!\!\Bigl(\!\frac{1+\sqrt{1-\auxi}}{2\bar{f}(\auxi)\sqrt{1-\auxi}}(\gamma_{a}\bar{\lambda})^{\alpha}\!+\!\frac{1}{4f(\auxi)\sqrt{1-\auxi}}\bar{\auxii}_{c}(\overbrace{\gamma_{a}\gamma^{c}}^{\hspace{-1cm}-\gamma^{c}\gamma_{a}\lqn{{\scriptstyle +2\delta_{a}^{c}}}}\lambda)^{\alpha}\!\Bigr)\!\Bigl(\!\frac{1+\sqrt{1-\auxi}}{2f(\auxi)\sqrt{1-\auxi}}(\lambda\gamma^{b})_{\beta}\!+\!\frac{1}{4\bar{f}(\auxi)\sqrt{1-\auxi}}\auxii^{d}(\bar{\lambda}\overbrace{\gamma_{d}\gamma^{b}}^{\hspace{-1cm}-\gamma^{b}\gamma_{d}\lqn{{\scriptstyle +2\delta_{d}^{b}}}})_{\beta}\!\Bigr)\!\!}{\frac{\left(1+\sqrt{1-\auxi}\right)}{2\abs{f(\auxi)}^{2}\left(1-\auxi\right)}(\lambda\bar{\lambda})}+\nonumber \\
 &  & \hspace{-0.5cm}+\left(\tfrac{f\left(\auxi\right)}{2\sqrt{1-\auxi}\left(1+\sqrt{1-\auxi}\right)}+{\scriptstyle \left(1-\sqrt{1-\auxi}\right)}f'\left(\auxi\right)\right)\times\nonumber \\
 &  & \hspace{-0.5cm}\times\frac{\!\!\Bigl(\!\frac{1+\sqrt{1-\auxi}}{2\bar{f}(\auxi)\sqrt{1-\auxi}}\auxii^{a}(\gamma_{a}\bar{\lambda})^{\alpha}\!+\!\frac{1}{4f(\auxi)\sqrt{1-\auxi}}\bar{\auxii}_{c}\auxii^{a}(\overbrace{\gamma_{a}\gamma^{c}}^{-\gamma^{c}\gamma_{a}\lqn{{\scriptstyle +2\delta_{a}^{c}}}}\lambda)^{\alpha}\!\Bigr)\!\Bigl(\!\frac{1+\sqrt{1-\auxi}}{2\bar{f}(\auxi)\sqrt{1-\auxi}}\bar{\lambda}_{\beta}\!+\!\frac{1}{4f(\auxi)\sqrt{1-\auxi}}\bar{\auxii}_{d}(\lambda\gamma^{d})_{\beta}\!\Bigr)\!\!}{\frac{\left(1+\sqrt{1-\auxi}\right)}{2\abs{f(\auxi)}^{2}\left(1-\auxi\right)}(\lambda\bar{\lambda})}\nonumber \\
{}
\end{eqnarray}
}Changing the order of the gamma-matrices as indicated above the
curly brackets makes it for some of the terms possible to use Fierz
identities of the form $(\gamma^{a}\lambda)_{\alpha}(\gamma_{a}\lambda)_{\beta}\propto(\lambda\gamma_{a}\lambda)\gamma_{\alpha\beta}^{a}=0$
or $\auxii^{a}(\gamma_{a}\lambda)_{\alpha}=0$ (\ref{ProjZsquareZero})
and their complex conjugate counterparts. However, a few terms will
arise where these identities will not be applicable. Nevertheless
the change of order in the gamma-matrices is advantageous, as a multiplication
by a $\lambda$ from the right or a $\bar{\lambda}$ from the left
will trigger these identities. So after replacing the products of
$\gamma$-matrices by the expressions above the curly brackets, using
the just mentioned identities wherever possible, and sorting the terms
by collecting first those of the form $\lambda\otimes\lambda$, then
$\lambda\otimes\bar{\lambda}$ and so on, we obtain (before simplifying)
\begin{eqnarray}
\lqn{\Pi_{(f)\bot\beta}^{\alpha}(\rho,\bar{\rho})=f\left(\auxi\right)\delta_{\beta}^{\alpha}+}\nonumber \\
 &  & \hspace{-0.5cm}+\Biggl\{\!\!\tfrac{\left(1+\sqrt{1-\auxi}\right)f'\!\left(\auxi\right)\bar{f}(\auxi)}{2f(\auxi)}\!-\!\tfrac{\auxi f'\!\left(\auxi\right)\bar{f}(\auxi)}{2f(\auxi)}\!+\nonumber \\
 &  & \hspace{-0.5cm}-\tfrac{\bar{f}(\auxi)}{2f(\auxi)\left(1+\sqrt{1-\auxi}\right)}\!\left(\! f\left(\auxi\right)\!+\!\Bigl(\!\tfrac{f\left(\auxi\right)}{2\sqrt{1-\auxi}\left(1+\sqrt{1-\auxi}\right)}\!+\! f'\!\left(\auxi\right)\!\Bigr)\auxi\!\right)+\nonumber \\
 &  & \hspace{-0.5cm}+\tfrac{\auxi\bar{f}(\auxi)}{2\left(1+\sqrt{1-\auxi}\right)^{2}f(\auxi)}\left(\tfrac{f\left(\auxi\right)}{2\sqrt{1-\auxi}}+\auxi f'\left(\auxi\right)\right)\Biggr\}\frac{\lambda^{\alpha}\bar{\auxii}_{b}(\lambda\gamma^{b})_{\beta}}{(\lambda\bar{\lambda})}\nonumber \\
 &  & \hspace{-0.5cm}+\Biggl\{\!\auxi f'\!\left(\auxi\right)\!-\!\auxi f'\left(\auxi\right)\!{\scriptstyle \left(\!1+\sqrt{1-\auxi}\!\right)}\!-\!\tfrac{\auxi}{\!\!\left(1+\sqrt{1-\auxi}\right)^{\!2}\!}\!\left(\! f\left(\auxi\right)\!+\!\Bigl(\!\tfrac{f\left(\auxi\right)}{2\sqrt{1-\auxi}\left(1+\sqrt{1-\auxi}\right)}\!+\! f'\!\left(\auxi\right)\!\Bigr)\auxi\!\right)+\nonumber \\
 &  & \hspace{-0.5cm}+\tfrac{\auxi}{\left(1+\sqrt{1-\auxi}\right)}\left(\tfrac{f\left(\auxi\right)}{2\sqrt{1-\auxi}}+\auxi f'\left(\auxi\right)\right)\Biggr\}\frac{\lambda^{\alpha}\bar{\lambda}_{\beta}}{(\lambda\bar{\lambda})}\nonumber \\
 &  & \hspace{-0.5cm}-\tfrac{f\left(\auxi\right)}{8\left(1+\sqrt{1-\auxi}\right)^{2}}\frac{\bar{\auxii}_{c}(\gamma^{c}\gamma_{b}\lambda)^{\alpha}\auxii^{d}(\bar{\lambda}\gamma^{b}\gamma_{d})_{\beta}}{(\lambda\bar{\lambda})}+\nonumber \\
 &  & \hspace{-0.5cm}+\biggl\{\tfrac{1}{4}f'\left(\auxi\right)-\tfrac{\auxi f'\left(\auxi\right)}{4\left(1+\sqrt{1-\auxi}\right)}-\tfrac{1}{2}\left(\tfrac{f\left(\auxi\right)}{4\sqrt{1-\auxi}\left(1+\sqrt{1-\auxi}\right)}+\tfrac{1}{2}f'\left(\auxi\right)\right)+\nonumber \\
 &  & \hspace{-0.5cm}+\tfrac{1}{4\left(1+\sqrt{1-\auxi}\right)}\left(\tfrac{f\left(\auxi\right)}{2\sqrt{1-\auxi}}+\auxi f'\left(\auxi\right)\right)\biggr\}\frac{\auxii^{a}\left(\bar{\lambda}\gamma_{a}\right)^{\alpha}\bar{\auxii}_{b}(\lambda\gamma^{b})_{\beta}}{(\lambda\bar{\lambda})}\nonumber \\
 &  & \hspace{-0.5cm}-\tfrac{1}{2}f\left(\auxi\right)\frac{(\gamma_{a}\bar{\lambda})^{\alpha}(\lambda\gamma^{a})_{\beta}}{(\lambda\bar{\lambda})}+\nonumber \\
 &  & \hspace{-0.5cm}+\Biggl\{\tfrac{f'\left(\auxi\right)f(\auxi)\auxi}{2\bar{f}(\auxi)\left(1+\sqrt{1-\auxi}\right)}-\tfrac{\auxi f'\left(\auxi\right)f(\auxi)}{2\bar{f}(\auxi)}+\nonumber \\
 &  & \hspace{-0.5cm}-\tfrac{f(\auxi)}{2\left(1+\sqrt{1-\auxi}\right)\bar{f}(\auxi)}\left(f\left(\auxi\right)+\biggl(\tfrac{f\left(\auxi\right)}{4\sqrt{1-\auxi}\left(1+\sqrt{1-\auxi}\right)}+\tfrac{1}{2}f'\left(\auxi\right)\biggr)2\auxi\right)+\nonumber \\
 &  & \hspace{-0.5cm}+\tfrac{f(\auxi)}{2\bar{f}(\auxi)}\left(\tfrac{f\left(\auxi\right)}{2\sqrt{1-\auxi}}+\auxi f'\left(\auxi\right)\right)\Biggr\}\frac{\auxii^{a}\left(\gamma_{a}\bar{\lambda}\right)^{\alpha}\bar{\lambda}_{\beta}}{(\lambda\bar{\lambda})}
\end{eqnarray}
In a final effort we sort the terms in the curly brackets to terms
that have an $f'$ and those that have not and simplify the expressions.
It turns out that many terms cancel, in particular the curly bracket
before$\frac{\auxii^{a}\left(\gamma_{a}\bar{\lambda}\right)^{\alpha}\bar{\auxii}_{b}(\lambda\gamma^{b})_{\beta}}{(\lambda\bar{\lambda})}$
vanishes completely, as well as the one before $\frac{\auxii^{a}(\gamma_{a}\bar{\lambda})^{\alpha}\bar{\lambda}_{\beta}}{(\lambda\bar{\lambda})}$
and one ends up with the expression in (\ref{LinProjGenLam}) of the
proposition. \rembreak\rem{ Note that the somewhat unexpected (at
least when comparing to\\
 the toy-model) term $\frac{\bar{\auxii}_{c}(\gamma^{c}\gamma_{b}\lambda)^{\alpha}\auxii^{d}(\bar{\lambda}\gamma^{b}\gamma_{d})_{\beta}}{(\lambda\bar{\lambda})}$
can be rewritten as:
\begin{eqnarray*}
\lqn{\frac{\bar{\auxii}_{c}(\gamma^{c}\gamma_{b}\lambda)^{\alpha}\auxii^{d}(\bar{\lambda}\gamma^{b}\gamma_{d})_{\beta}}{(\lambda\bar{\lambda})}=}\\
 & = & -\frac{\bar{\auxii}_{c}(\gamma^{c}\gamma_{b}\lambda)^{\alpha}\auxii^{d}(\bar{\lambda}\gamma_{d}\gamma^{b})_{\beta}}{(\lambda\bar{\lambda})}=\\
 & \stackrel{Fierz}{=} & \frac{\bar{\auxii}_{c}(\gamma^{c}\gamma_{b})^{\alpha}\tief{\beta}\auxii^{d}(\bar{\lambda}\gamma_{d}\gamma^{b}\lambda)}{(\lambda\bar{\lambda})}+\frac{\bar{\auxii}_{c}(\overbrace{\gamma^{c}\gamma_{b}\gamma_{d}}^{\gamma_{b}\gamma_{d}\gamma^{c}+2\delta_{b}^{c}\gamma_{d}-2\gamma_{b}\delta_{d}^{c}}\bar{\lambda})^{\alpha}\auxii^{d}(\lambda\gamma^{b})_{\beta}}{(\lambda\bar{\lambda})}=\\
 & = & \frac{\bar{\auxii}_{c}(\gamma^{c}\gamma_{b})^{\alpha}\tief{\beta}\auxii^{d}(\bar{\lambda}\overbrace{\gamma_{d}\gamma^{b}}^{-\gamma^{b}\gamma_{d}\lqn{{\scriptstyle +2\delta_{d}^{b}}}}\lambda)}{(\lambda\bar{\lambda})}+2\frac{\auxii^{d}(\gamma_{d}\bar{\lambda})^{\alpha}\bar{\auxii}_{b}(\lambda\gamma^{b})_{\beta}}{(\lambda\bar{\lambda})}-\frac{4\auxi(\gamma_{b}\bar{\lambda})^{\alpha}(\lambda\gamma^{b})_{\beta}}{(\lambda\bar{\lambda})}=\\
 & = & 2\bar{\auxii}_{c}\auxii^{b}(\gamma^{c}\gamma_{b})^{\alpha}\tief{\beta}+2\frac{\auxii^{d}(\gamma_{d}\bar{\lambda})^{\alpha}\bar{\auxii}_{b}(\lambda\gamma^{b})_{\beta}}{(\lambda\bar{\lambda})}-\frac{4\auxi(\gamma_{b}\bar{\lambda})^{\alpha}(\lambda\gamma^{b})_{\beta}}{(\lambda\bar{\lambda})}
\end{eqnarray*}
}\rembreak

Next we plug (\ref{Proj-inv}) and (\ref{Proj-modulus-repeated})
into (\ref{linprojgen}):
\begin{eqnarray}
\lqn{\rest_{(f)\bot}^{\alpha\beta}(\rho,\bar{\rho})=-\tfrac{1}{2}\tfrac{f\left(\auxi\right)}{1+\sqrt{1-\auxi}}\auxii^{a}\gamma_{a}^{\alpha\beta}+}\nonumber \\
 &  & \hspace{-0.5cm}-2\auxi f'\left(\auxi\right)\tfrac{\left(\frac{1+\sqrt{1-\auxi}}{2f(\auxi)\sqrt{1-\auxi}}\lambda^{\alpha}+\frac{1}{4\os{}{\bar{f}}(\auxi)\sqrt{1-\auxi}}\auxii^{a}\left(\gamma_{a}\bar{\lambda}\right)^{\alpha}\right)\left(\frac{1+\sqrt{1-\auxi}}{2f(\auxi)\sqrt{1-\auxi}}\lambda^{\beta}+\frac{1}{4\os{}{\bar{f}}(\auxi)\sqrt{1-\auxi}}\auxii^{b}\left(\bar{\lambda}\gamma_{b}\right)^{\alpha}\right)}{\frac{\left(1+\sqrt{1-\auxi}\right)}{2\abs{f(\auxi)}^{2}\left(1-\auxi\right)}(\lambda\bar{\lambda})}+\nonumber \\
 &  & \hspace{-0.5cm}+f'\left(\auxi\right)\tfrac{\left(\frac{1+\sqrt{1-\auxi}}{2f(\auxi)\sqrt{1-\auxi}}\lambda^{\alpha}+\frac{1}{4\os{}{\bar{f}}(\auxi)\sqrt{1-\auxi}}\auxii^{a}\left(\gamma_{a}\bar{\lambda}\right)^{\alpha}\right)\auxii^{b}\bigl(\frac{1+\sqrt{1-\auxi}}{2\os{}{\bar{f}}(\auxi)\sqrt{1-\auxi}}(\bar{\lambda}\gamma_{b})^{\beta}+\frac{1}{4f(\auxi)\sqrt{1-\auxi}}\bar{\auxii}_{d}(\lambda\overbrace{{\scriptstyle \gamma^{d}\gamma_{b}}}^{\hspace{-0.5cm}-\gamma_{b}\gamma^{d}\lqn{{\scriptscriptstyle +2\delta_{b}^{d}}}})^{\beta}\bigr)}{\frac{\left(1+\sqrt{1-\auxi}\right)}{2\abs{f(\auxi)}^{2}\left(1-\auxi\right)}(\lambda\bar{\lambda})}+\nonumber \\
 &  & \hspace{-0.5cm}+\left(\tfrac{f\left(\auxi\right)}{2\sqrt{1-\auxi}\left(1+\sqrt{1-\auxi}\right)}+{\scriptstyle \left(1-\sqrt{1-\auxi}\right)}f'\left(\auxi\right)\right)\times\nonumber \\
 &  & \hspace{-0.5cm}\times\tfrac{\auxii^{a}\bigl(\frac{1+\sqrt{1-\auxi}}{2\os{}{\bar{f}}(\auxi)\sqrt{1-\auxi}}(\gamma_{a}\bar{\lambda})^{\alpha}+\frac{1}{4f(\auxi)\sqrt{1-\auxi}}\bar{\auxii}_{d}(\overbrace{{\scriptstyle \gamma_{a}\gamma^{d}}}^{\hspace{-0.5cm}-\gamma^{d}\gamma_{a}\lqn{{\scriptscriptstyle +2\delta_{a}^{d}}}}\lambda)^{\alpha}\bigr)\bigl(\frac{1+\sqrt{1-\auxi}}{2f(\auxi)\sqrt{1-\auxi}}\lambda^{\beta}+\frac{1}{4\os{}{\bar{f}}(\auxi)\sqrt{1-\auxi}}\auxii^{b}(\bar{\lambda}\gamma_{b})^{\alpha}\bigr)}{\frac{\left(1+\sqrt{1-\auxi}\right)}{2\abs{f(\auxi)}^{2}\left(1-\auxi\right)}(\lambda\bar{\lambda})}+\nonumber \\
 &  & \hspace{-0.5cm}-\tfrac{1}{2\left(1+\sqrt{1-\auxi}\right)}\left(\tfrac{f\left(\auxi\right)}{2\sqrt{1-\auxi}\left(1+\sqrt{1-\auxi}\right)}+f'\left(\auxi\right)\right)\times\nonumber \\
 &  & \hspace{-0.5cm}\times\tfrac{\!\!\auxii^{a}\!\bigl(\!\!\frac{1+\sqrt{1-\auxi}}{2\os{}{\bar{f}}(\auxi)\sqrt{1-\auxi}}(\gamma_{a}\bar{\lambda})^{\alpha}\!+\!\frac{1}{4f(\auxi)\sqrt{1-\auxi}}\bar{\auxii}_{d}(\overbrace{{\scriptstyle \gamma_{a}\gamma^{d}}}^{\hspace{-0.5cm}-\gamma^{d}\gamma_{a}\lqn{{\scriptscriptstyle +2\delta_{a}^{d}}}}\lambda)^{\alpha}\!\bigr)\auxii^{b}\!\bigl(\!\frac{1+\sqrt{1-\auxi}}{2\os{}{\bar{f}}(\auxi)\sqrt{1-\auxi}}\bar{(\lambda}\gamma_{b})^{\beta}\!+\!\frac{1}{4f(\auxi)\sqrt{1-\auxi}}\bar{\auxii}_{d}(\lambda\overbrace{{\scriptstyle \gamma^{d}\gamma_{b}}}^{\hspace{-0.5cm}-\gamma_{b}\gamma^{d}\lqn{{\scriptscriptstyle +2\delta_{b}^{d}}}})^{\beta}\!\bigr)\!\!}{\frac{\left(1+\sqrt{1-\auxi}\right)}{2\abs{f(\auxi)}^{2}\left(1-\auxi\right)}(\lambda\bar{\lambda})}\nonumber \\
 &  & {}
\end{eqnarray}
Again we change the order of the gamma-matrices as indicated above
the curly brackets for the same reason as before and use Fierz identities
of the form $(\gamma^{a}\lambda)_{\alpha}(\gamma_{a}\lambda)_{\beta}\propto(\lambda\gamma_{a}\lambda)\gamma_{\alpha\beta}^{a}=0$
(footnote \ref{fn:Fierz}) or $\auxii^{a}(\gamma_{a}\lambda)_{\alpha}=0$
(\ref{ProjZsquareZero}) and their complex conjugate counterparts.
Then we sort the terms by collecting first those of the form $\lambda\otimes\lambda$,
then $\lambda\otimes\bar{\lambda}$ and so on, and obtain (again before
simplifying)
\begin{eqnarray}
\lqn{\rest_{(f)\bot}^{\alpha\beta}(\rho,\bar{\rho})=}\nonumber \\
 & = & \biggl\{-\tfrac{\auxi f'\left(\auxi\right)\bar{f}(\auxi)\left(1+\sqrt{1-\auxi}\right)}{f(\auxi)}+\tfrac{f'\left(\auxi\right)\bar{f}(\auxi)\auxi}{f(\auxi)}+\tfrac{\auxi\bar{f}(\auxi)}{f(\auxi)\left(1+\sqrt{1-\auxi}\right)}\left(\auxi f'\left(\auxi\right)+\tfrac{f\left(\auxi\right)}{2\sqrt{1-\auxi}}\right)+\nonumber \\
 &  & -\tfrac{\auxi^{2}\bar{f}(\auxi)}{f(\auxi)\left(1+\sqrt{1-\auxi}\right)^{2}}\left(f'\left(\auxi\right)+\tfrac{f\left(\auxi\right)}{2\sqrt{1-\auxi}\left(1+\sqrt{1-\auxi}\right)}\right)\biggr\}\frac{\lambda^{\alpha}\lambda^{\beta}}{(\lambda\bar{\lambda})}+\nonumber \\
 &  & \biggl\{-\frac{1}{2}\auxi f'\left(\auxi\right)+\tfrac{f'\left(\auxi\right)\left(1+\sqrt{1-\auxi}\right)}{2}+\tfrac{\auxi}{2\left(1+\sqrt{1-\auxi}\right)^{2}}\left(\auxi f'\left(\auxi\right)+\tfrac{f\left(\auxi\right)}{2\sqrt{1-\auxi}}\right)+\nonumber \\
 &  & -\tfrac{\auxi}{2\left(1+\sqrt{1-\auxi}\right)}\left(f'\left(\auxi\right)+\tfrac{f\left(\auxi\right)}{2\sqrt{1-\auxi}\left(1+\sqrt{1-\auxi}\right)}\right)\biggr\}\frac{\lambda^{\alpha}\auxii^{b}\left(\bar{\lambda}\gamma_{b}\right)^{\alpha}}{(\lambda\bar{\lambda})}+\nonumber \\
 &  & \biggl\{-\frac{1}{2}\auxi f'\left(\auxi\right)+\tfrac{\auxi f'\left(\auxi\right)}{2\left(1+\sqrt{1-\auxi}\right)}+\tfrac{1}{2}\left(\auxi f'\left(\auxi\right)+\tfrac{f\left(\auxi\right)}{2\sqrt{1-\auxi}}\right)+\nonumber \\
 &  & -\tfrac{\auxi}{2\left(1+\sqrt{1-\auxi}\right)}\left(f'\left(\auxi\right)+\tfrac{f\left(\auxi\right)}{2\sqrt{1-\auxi}\left(1+\sqrt{1-\auxi}\right)}\right)\biggr\}\frac{\auxii^{a}\left(\gamma_{a}\bar{\lambda}\right)^{\alpha}\lambda^{\beta}}{(\lambda\bar{\lambda})}+\nonumber \\
 &  & \biggl\{-\tfrac{\auxi f'\left(\auxi\right)f(\auxi)}{4\os{}{\bar{f}}(\auxi)\left(1+\sqrt{1-\auxi}\right)}+\tfrac{f(\auxi)f'\left(\auxi\right)}{4\os{}{\bar{f}}(\auxi)}+\tfrac{f(\auxi)}{4\os{}{\bar{f}}(\auxi)\left(1+\sqrt{1-\auxi}\right)}\left(\auxi f'\left(\auxi\right)+\tfrac{f\left(\auxi\right)}{2\sqrt{1-\auxi}}\right)+\nonumber \\
 &  & -\tfrac{f(\auxi)}{4\os{}{\bar{f}}(\auxi)}\left(f'\left(\auxi\right)+\tfrac{f\left(\auxi\right)}{2\sqrt{1-\auxi}\left(1+\sqrt{1-\auxi}\right)}\right)\biggr\}\frac{\auxii^{a}\left(\gamma_{a}\bar{\lambda}\right)^{\alpha}\auxii^{b}\left(\bar{\lambda}\gamma_{b}\right)^{\alpha}}{(\lambda\bar{\lambda})}+\nonumber \\
 &  & -\tfrac{f\left(\auxi\right)}{2\left(1+\sqrt{1-\auxi}\right)}\auxii^{a}\gamma_{a}^{\alpha\beta}
\end{eqnarray}
Again, after collecting within the curly brackets the terms that contain
$f'$ and those that don't, and after simplifying, one arrives indeed
at (\ref{linprojgenlam}).$\hfill{\scriptstyle \square}$ \rembreak\rem{one
might do a backwards-consistency-check, starting with the \\
results (\ref{LinProjGenLam}) and (\ref{linprojgenlam}) and using
(\ref{ProjGeneral}) and (\ref{Proj-modulus}) to arrive back at (\ref{LinProjGen})
and (\ref{linprojgen}).} 

3. The fact that $\Pi_{(f)\bot}(\rho,\bar{\rho})$ and $\rest_{(f)\bot}(\rho,\bar{\rho})$
seen as linear endomorphisms map to spinors which are $\gamma$-orthogonal
to $\lambda^{\alpha}\equiv P_{(f)}^{\alpha}(\rho,\bar{\rho})$ as
claimed in (\ref{LinProjGamma-orth}) and (\ref{linprojgamma-orth})
follows directly from the fact that $P_{(f)}^{\alpha}(\rho,\bar{\rho})$
is a pure spinor according to (\ref{proj-prop1}). I.e. the variation
of (\ref{proj-prop1}) yields
\begin{eqnarray}
0 & = & \delta\left(P_{(f)}(\rho,\bar{\rho})\gamma^{m}P_{(f)}(\rho,\bar{\rho})\right)=\\
 & = & 2\delta\rho^{\beta}\partial_{\rho^{\beta}}P_{(f)}^{\alpha}(\rho,\bar{\rho})\left(\gamma^{m}P_{(f)}(\rho,\bar{\rho})\right)_{\alpha}+\nonumber \\
 &  & +2\delta\bar{\rho}_{\beta}\partial_{\bar{\rho}_{\beta}}P_{(f)}^{\alpha}(\rho,\bar{\rho})\left(\gamma^{m}P_{(f)}(\rho,\bar{\rho})\right)_{\alpha}
\end{eqnarray}
For this to hold for all variations $\delta\rho,\delta\bar{\rho}$,
we necessarily need (\ref{LinProjGamma-orth}) and (\ref{linprojgamma-orth}).
One can also easily verify these equations directly by using the form
(\ref{LinProjGenLam}) and (\ref{linprojgenlam}) of the projector-matrices.$\hfill{\scriptstyle \square}$ 

4. On the constraint surface of spinors $\rho^{\alpha}=\lambda^{\alpha}$
with $\lambda\gamma^{m}\lambda=0$ we have $\auxii^{a}=0$ and $\auxi=0$
(\ref{xzvanishforpure}). Plugging this into either (\ref{LinProjGen}),
(\ref{linprojgen}) or (\ref{LinProjGenLam}) and (\ref{linprojgenlam})
and assuming that $f$ and $f'$ are non-singular at $\auxi=0$, one
obtains immediately the claimed results (\ref{Proj-matrix-onshell})
and (\ref{proj-matrix-on-shell}). The remaining matrix on the constraint
surface $\Pi_{\bot}\equiv\one-\tfrac{1}{2}\frac{(\gamma_{a}\bar{\lambda})\otimes(\lambda\gamma^{a})}{(\lambda\bar{\lambda})}$
finally is indeed idempotent:
\begin{eqnarray}
\Pi_{\bot}^{2} & = & \one-\frac{(\gamma_{a}\bar{\lambda})\otimes(\lambda\gamma^{a})}{(\lambda\bar{\lambda})}+\tfrac{1}{4}\frac{(\lambda\overbrace{\gamma^{a}\gamma_{b}}^{-\gamma_{b}\gamma^{a}\lqn{{\scriptstyle +2\delta_{b}^{a}}}}\bar{\lambda})(\gamma_{a}\bar{\lambda})\otimes(\lambda\gamma^{b})}{(\lambda\bar{\lambda})^{2}}=\\
 & \stackrel{{\rm Fierz}}{=} & \one-\tfrac{1}{2}\frac{(\gamma_{a}\bar{\lambda})\otimes(\lambda\gamma^{a})}{(\lambda\bar{\lambda})}=\Pi_{\bot}\label{PiSurfIdem}
\end{eqnarray}
This last property is certainly already well-known, as $\Pi_{\bot}$
is the transposed of the projection matrix that maps the antighost
$\omega_{z\alpha}$ to its gauge invariant part (\cite{Oda:2001zm,Oda:2004bg,Berkovits:2010zz}).
$\hfill{\scriptstyle \square}$

5. The projection-properties (\ref{LinProjProp}) and (\ref{linprojprop})
directly follow from derivatives of the projection property $P_{(f)}=P_{(f)}\circ P_{(f)}$
(\ref{proj-prop2}) :
\begin{eqnarray}
\partial_{\rho^{\beta}}P_{(f)}^{\alpha}(\rho,\bar{\rho}) & = & \partial_{\rho^{\beta}}P_{(f)}^{\alpha}\left(P_{(f)}(\rho,\bar{\rho}),\bar{P}_{(f)}(\rho,\bar{\rho})\right)=\\
 & = & \partial_{\rho^{\beta}}P_{(f)}^{\gamma}(\rho,\bar{\rho})\bei{\partial_{\tilde{\rho}^{\gamma}}P_{(f)}^{\alpha}(\tilde{\rho},\bar{\tilde{\rho}})}{\tilde{\rho}=P_{(f)}(\rho,\bar{\rho})}+\nonumber \\
 &  & +\partial_{\rho^{\beta}}\bar{P}_{(f)\bot\gamma}(\rho,\bar{\rho})\bei{\partial_{\bar{\tilde{\rho}}_{\gamma}}P_{(f)}^{\alpha}(\tilde{\rho},\bar{\tilde{\rho}})}{\tilde{\rho}=P_{(f)}(\rho,\bar{\rho})}\\
 & = & \Pi_{(f)\bot\gamma}^{\alpha}\left(P_{(f)}(\rho,\bar{\rho}),\bar{P}_{(f)}(\rho,\bar{\rho})\right)\Pi_{(f)\bot\beta}^{\gamma}(\rho,\bar{\rho})+\nonumber \\
 &  & +\underbrace{\rest_{(f)\bot}^{\alpha\gamma}\left(P_{(f)}(\rho,\bar{\rho}),\bar{P}_{(f)}(\rho,\bar{\rho})\right)}_{=0\quad\mbox{\tiny\eqref{proj-matrix-on-shell}}}\bar{\rest}_{(f)\bot\gamma\beta}(\rho,\bar{\rho})
\end{eqnarray}
The last term vanishes (as indicated) according to $(\ref{proj-matrix-on-shell})$.
This then proves (\ref{LinProjProp}). In the same way we can write
\begin{eqnarray}
\partial_{\bar{\rho}_{\beta}}P_{(f)}^{\alpha}(\rho,\bar{\rho}) & = & \partial_{\bar{\rho}_{\beta}}P_{(f)}^{\alpha}\left(P_{(f)}(\rho,\bar{\rho}),\bar{P}_{(f)}(\rho,\bar{\rho})\right)=\\
 & = & \partial_{\bar{\rho}_{\beta}}P_{(f)}^{\gamma}(\rho,\bar{\rho})\bei{\partial_{\tilde{\rho}^{\gamma}}P_{(f)}^{\alpha}(\tilde{\rho},\bar{\tilde{\rho}})}{\tilde{\rho}=P_{(f)}(\rho,\bar{\rho})}+\nonumber \\
 &  & +\partial_{\bar{\rho}_{\beta}}\bar{P}_{(f)\bot\gamma}(\rho,\bar{\rho})\bei{\partial_{\bar{\tilde{\rho}}_{\gamma}}P_{(f)}^{\alpha}(\tilde{\rho},\bar{\tilde{\rho}})}{\tilde{\rho}=P_{(f)}(\rho,\bar{\rho})}=\\
 & = & \Pi_{(f)\bot\gamma}^{\alpha}\left(P_{(f)}(\rho,\bar{\rho}),\bar{P}_{(f)}(\rho,\bar{\rho})\right)\rest_{(f)\bot}^{\gamma\beta}(\rho,\bar{\rho})+\nonumber \\
 &  & +\underbrace{\rest_{(f)\bot}^{\alpha\gamma}\left(P_{(f)}(\rho,\bar{\rho}),\bar{P}_{(f)}(\rho,\bar{\rho})\right)}_{=0\quad\mbox{\tiny\eqref{proj-matrix-on-shell}}}\bar{\Pi}_{(f)\bot\gamma}\hoch{\beta}(\rho,\bar{\rho})
\end{eqnarray}
 which proves (\ref{linprojprop}). $\hfill{\scriptstyle \square}$

6. The trace of (\ref{LinProjGen}) is given by: 
\begin{eqnarray}
\lqn{\Pi_{(f)\bot\alpha}^{\alpha}(\rho,\bar{\rho})=}\nonumber \\
 & = & 16f\left(\auxi\right)+f'\left(\auxi\right)\frac{(\rho\gamma^{b}\rho)\bar{\auxii}_{b}}{\rho\bar{\rho}}-2\auxi f'\left(\auxi\right)+\nonumber \\
 &  & -\tfrac{1}{1+\sqrt{1-\auxi}}\!\!\left(\!\! f\left(\auxi\right)\delta_{b}^{a}\!+\!\tfrac{1}{2}\biggl(\tfrac{f\left(\auxi\right)}{2\sqrt{1-\auxi}\left(1+\sqrt{1-\auxi}\right)}+f'\left(\auxi\right)\biggr)\auxii^{a}\bar{\auxii}_{b}\!\!\right)\!\!\frac{(\rho\overbrace{\gamma^{b}\gamma_{a}}^{-\gamma_{a}\lqn{{\scriptstyle \gamma^{b}+2\delta_{a}^{b}}}}\bar{\rho})}{\rho\bar{\rho}}+\nonumber \\
 &  & +\tfrac{1}{1+\sqrt{1-\auxi}}\left(\tfrac{f\left(\auxi\right)}{2\sqrt{1-\auxi}}+\auxi f'\left(\auxi\right)\right)\frac{\auxii^{a}(\bar{\rho}\gamma_{a}\bar{\rho})}{\rho\bar{\rho}}=\\
 & \stackrel{\mbox{\small\eqref{zsquarezero}}}{=} & 16f\left(\auxi\right)+\nonumber \\
 &  & \negthickspace\negthickspace-\tfrac{1}{1+\sqrt{1-\auxi}}\!\!\left(\!\! f\!\left(\auxi\right)\!\Bigl(\!10+\tfrac{\auxi}{\sqrt{1-\auxi}\left(1+\sqrt{1-\auxi}\right)}-\tfrac{\auxi}{\sqrt{1-\auxi}}\Bigr)-2\auxi(\auxi-1)f'\!\left(\auxi\right)\!\!\right)\!\!=\qquad\quad\\
 & = & \left(16-\tfrac{9+\sqrt{1-\auxi}}{1+\sqrt{1-\auxi}}\right)f\left(\auxi\right)+\tfrac{2\auxi(\auxi-1)}{1+\sqrt{1-\auxi}}f'\left(\auxi\right)=\\
 & = & \left(11-\tfrac{4(1-\sqrt{1-\auxi})}{1+\sqrt{1-\auxi}}\right)f\left(\auxi\right)-2(1-\auxi)\left(1-\sqrt{1-\auxi}\right)f'\left(\auxi\right)
\end{eqnarray}
It as a nice consistency check that the same result is obtained by
taking the trace of (\ref{LinProjGenLam}). Note that this result
is in general%
\footnote{\label{fn:differentialEqForf} For the toy model in the appendix,
the choice $f=1$ yields a proper projection matrix also off the constraint
surface (see (\ref{Pi1quad}) on page \pageref{Pi1quad})), whose
trace is the same as on the surface. This is here not the case for
$f=1$. In order to find an $f$ such that $\Pi_{(f)\bot}^{2}=\Pi_{(f)\bot}$
even off the surface, we would need at least that $\tr\Pi_{(f)\bot}=11$
off the surface which gives a differential equation 
\[
\left(11-\tfrac{4(1-\sqrt{1-\auxi})}{1+\sqrt{1-\auxi}}\right)f\left(\auxi\right)-2(1-\auxi)\left(1-\sqrt{1-\auxi}\right)f'\left(\auxi\right)\stackrel{!}{=}11
\]
This differential equation is slightly easier to solve in the parametrization
$\Auxi$ of equation (\ref{newx}) where it turns into 
\[
(11-4\Auxi)\fa(\Auxi)-(1-\Auxi)\Auxi\fa'(\Auxi)\stackrel{!}{=}11
\]
The homogeneous equation $\frac{\fa'(\Auxi)}{\fa(\Auxi)}=\frac{(11-4\Auxi)}{(1-\Auxi)\Auxi}$
is solved by $\fa(\Auxi)=C\cdot\frac{\Auxi^{11}}{(1-\Auxi)^{7}}$
with some constant $C$. In order to solve the inhomogeneous equation,
one can promote $C$ to a function $C(\Auxi)$. Plugging the ansatz
\[
\fa(\Auxi)=C(\Auxi)\cdot\tfrac{\Auxi^{11}}{(1-\Auxi)^{7}}
\]
back into the differential equation yields $C'(\Auxi)=-\frac{11(1-\Auxi)^{6}}{\Auxi^{12}}$
which can be integrated to 
\[
C(\Auxi)=\frac{1}{210\Auxi^{5}}-\frac{1-\Auxi}{42\Auxi^{6}}+\frac{(1-\Auxi)^{2}}{14\Auxi^{7}}-\frac{(1-\Auxi)^{3}}{6\Auxi^{8}}+\frac{(1-\Auxi)^{4}}{3\Auxi^{9}}-\frac{3\,(1-\Auxi)^{5}}{5\Auxi^{10}}+\frac{(1-\Auxi)^{6}}{\Auxi^{11}}+{\rm const\qquad\fussend}
\]
\frem{In the original parametrization the homogeneous equation reads
$\frac{f'\left(\auxi\right)}{f\left(\auxi\right)}=\frac{7+15\sqrt{1-\auxi}}{2(1-\auxi)\auxi}$
and is solved by $f(\auxi)=C\cdot(\frac{\auxi}{1-\auxi})^{\frac{7}{2}}(\frac{1-\sqrt{1-\auxi}}{1+\sqrt{1-\auxi}})^{\frac{15}{2}}$
(using $\frac{\partial}{\partial x}\log(\frac{x}{1-x})=\frac{1}{x\left(1-x\right)}$).
Plugging the ansatz 
\[
f(\auxi)=C(\auxi)\cdot(\frac{\auxi}{1-\auxi})^{\frac{7}{2}}(\frac{1-\sqrt{1-\auxi}}{1+\sqrt{1-\auxi}})^{\frac{15}{2}}
\]
back into the differential equation yields
\begin{eqnarray*}
\tilde{C}' & = & -\frac{11(1-\auxi)^{\frac{5}{2}}\left(1+\sqrt{1-\auxi}\right)^{16}}{2\auxi^{12}}
\end{eqnarray*}
}%
} not equal to $11$ for $\auxi\neq0$. For $\auxi=0$ it clearly reduces
to $11$ for all $f$. This completes the proof of proposition \ref{prop:linProj}.\end{proof}

\subsection{Proof of proposition 3}

\label{app:proof3}\begin{proof}[Proof of proposition \ref{prop:hermiticity}]1.
From the explicit form of $\Pi_{(h)\bot}$ and $\rest_{(h)\bot}$
in (\ref{HermLinProjLam}) and (\ref{hermlinprojlam}), it is obvious
that $\Pi_{(h)\bot}$ is Hermitian and $\rest_{(h)\bot}$ is symmetric.
So what remains to show that $h$ is the only function for which this
is the case. 

To do so, let us assume that $f$ is any function defined and differentiable
at least on $I=[0,b[$ (with $b\leq1$) for which we have Hermiticity
in this region (i.e. for all $\rho^{\alpha}$ for which $\auxi\in I$).
So starting from (\ref{LinProjGenLam}), we build the difference of
$\Pi_{(f)\bot}$ and its Hermitian conjugate and require it to vanish
\begin{eqnarray}
0 & \stackrel{!}{=} & \Pi_{(f)\bot}(\rho,\bar{\rho})-\Pi_{(f)\bot}^{\dagger}(\rho,\bar{\rho})=\\
 & = & \left(f(\auxi)-\bar{f}(\auxi)\right)\underbrace{\left(\one-\tfrac{1}{2}\frac{(\gamma^{a}\bar{\lambda})\otimes(\lambda\gamma_{a})}{(\lambda\bar{\lambda})}\right)}_{\Pi_{\bot}^{\dagger}=\Pi_{\bot}}+\nonumber \\
 &  & -\left(\tfrac{\bar{f}(\auxi)}{2\left(1+\sqrt{1-\auxi}\right)}-\tfrac{\bar{f}(\auxi)f'\left(\auxi\right)\left(1-\auxi\right)}{f(\auxi)}\right)\frac{\lambda\otimes\bar{\auxii}_{c}\left(\lambda\gamma^{c}\right)}{(\lambda\bar{\lambda})}+\nonumber \\
 &  & +\left(\tfrac{f(\auxi)}{2\left(1+\sqrt{1-\auxi}\right)}-\tfrac{f(\auxi)\bar{f}'\left(\auxi\right)\left(1-\auxi\right)}{\bar{f}(\auxi)}\right)\frac{\auxii^{c}\left(\gamma_{c}\bar{\lambda}\right)\otimes\bar{\lambda}}{(\lambda\bar{\lambda})}+\nonumber \\
 &  & -{\scriptstyle 2\left(1-\auxi\right)\left(1-\sqrt{1-\auxi}\right)}\left(f'\left(\auxi\right)-\bar{f}'\left(\auxi\right)\right)\underbrace{\frac{\lambda\otimes\bar{\lambda}}{(\lambda\bar{\lambda})}}_{\Pi_{\lambda}^{\dagger}=\Pi_{\lambda}}+\nonumber \\
 &  & -\tfrac{\left(f\left(\auxi\right)-\bar{f}\left(\auxi\right)\right)}{8\left(1+\sqrt{1-\auxi}\right)^{2}}\underbrace{\frac{\bar{\auxii}_{c}(\gamma^{c}\gamma_{b}\lambda)\otimes\auxii^{d}(\bar{\lambda}\gamma^{b}\gamma_{d})}{(\lambda\bar{\lambda})}}_{\bar{\auxii}_{c}(\gamma^{c}\gamma_{b}\Pi_{\lambda}\gamma^{b}\gamma_{d})\auxii^{d}}\label{hermproof}
\end{eqnarray}
A priori this has to vanish for all $\rho^{\alpha}$ for which the
projection matrices are defined (so for which $\auxi\in I$). This
is according to (\ref{Proj-inv}) and (\ref{ProjZsquareZero}) equivalent
to saying that it has to vanish for all pure spinors $\lambda^{\alpha}$
and for all $\auxii^{a}$ with $\auxii^{a}(\gamma_{a}\lambda)_{\alpha}=0$
and with $\auxi\in I$ and $f(\auxi)\neq0$ (the last one is in order
for (\ref{Proj-inv}) to be well-defined). For simplicity let us first
assume that 
\begin{equation}
f(\auxi)\neq0\quad\forall\auxi\in I\quad({\rm assumption})\label{h-has-no-zeros-assump}
\end{equation}
We will at the end of this proof relax this assumption. Note that
the constraint $\auxii^{a}(\gamma_{a}\lambda)_{\alpha}=0$ is no constraint
on the modulus $\auxi$, because to this constraint always exist solutions
of the form $\auxii^{a}=(\lambda\gamma^{a}\chi)$ with some arbitrary
spinor $\chi^{\alpha}$, and these solutions can be rescaled to obtain
any $\auxi$ one wants. It is therefore necessary (not sufficient)
that (\ref{hermproof}) has to hold for any pure spinor $\lambda^{\alpha}$
and for all $\auxi\in I$. 

Although it seems very plausible for a generic $\lambda^{\alpha}$,
it is not completely obvious if the 5 matrices $\one-\frac{(\gamma^{a}\bar{\lambda})\otimes(\lambda\gamma_{a})}{2(\lambda\bar{\lambda})},\frac{\lambda\otimes\bar{\auxii}_{c}\left(\lambda\gamma^{c}\right)}{(\lambda\bar{\lambda})}$,$\frac{\auxii^{c}\left(\gamma_{c}\bar{\lambda}\right)\otimes\bar{\lambda}}{(\lambda\bar{\lambda})}$,
$\frac{\lambda\otimes\bar{\lambda}}{(\lambda\bar{\lambda})}$ and
$\frac{\bar{\auxii}_{c}(\gamma^{c}\gamma_{b}\lambda)\otimes\auxii^{d}(\bar{\lambda}\gamma^{b}\gamma_{d})}{(\lambda\bar{\lambda})}$
are all linearly independent, because there are many ways of rewriting
them using Fierz identities. If they are independent, its coefficients
have to vanish separately. From the first line of (\ref{hermproof})
this would already imply reality of the function $f$, i.e. $\bar{f}\stackrel{!}{=}f$,
which in turn also would make vanish the last two lines of (\ref{hermproof})
and we would be left with a differential equation from the second
(or equivalently from the third) line. 

In order to be on the safe side, however, we could expand (\ref{hermproof})
in a basis from which we know about linear Independence, namely $\{\one,\gamma^{e_{1}e_{2}},\gamma^{e_{1}e_{2}e_{3}e_{4}}\}$.
An alternative way is to perform all kind of contractions of (\ref{hermproof})
until we have enough conditions on $h$ to really make the full equation
(\ref{hermproof}) hold.  We will follow the latter path. Note that
in the following we will use the identities $(\lambda\gamma^{c}\lambda)=(\bar{\lambda}\gamma^{c}\bar{\lambda})=0$
and $\auxii^{c}(\gamma_{c}\lambda)=\bar{\auxii}_{c}(\gamma^{c}\bar{\lambda})=0$
wherever possible without always mentioning this. Let us start with
the most obvious contraction, namely taking the trace of (\ref{hermproof}):
\begin{eqnarray}
0 & \stackrel{!}{=} & 11\left(f(\auxi)-\bar{f}(\auxi)\right)-{\scriptstyle 2\left(1-\auxi\right)\left(1-\sqrt{1-\auxi}\right)}\left(f'\left(\auxi\right)-\bar{f}'\left(\auxi\right)\right)+\nonumber \\
 &  & -\tfrac{\left(f\left(\auxi\right)-\bar{f}\left(\auxi\right)\right)}{8\left(1+\sqrt{1-\auxi}\right)^{2}}\frac{\bar{\auxii}_{c}(\bar{\lambda}\gamma^{b}\overbrace{\gamma_{d}\gamma^{c}}^{\gamma_{d}\hoch c+\delta_{d}^{c}}\gamma_{b}\lambda)\auxii^{d}}{(\lambda\bar{\lambda})}
\end{eqnarray}
Using the identity 
\begin{equation}
\gamma^{b}\gamma^{e_{1}\ldots e_{l}}\gamma_{b}=(-)^{l}(10-2l)\gamma^{e_{1}\ldots e_{l}}\quad\forall l\in\{0,\ldots,5\}\label{sandwich}
\end{equation}
we can rewrite $\gamma^{b}(\gamma_{d}\hoch c+\delta_{d}^{c})\gamma_{b}=6\gamma_{d}\hoch c+10\delta_{d}^{c}=-6\gamma^{c}\gamma_{d}+16\delta_{d}^{c}$.
Having in mind that $\auxii^{c}\bar{\auxii}_{c}=2\auxi$ and $\frac{\auxi}{\left(1+\sqrt{1-\auxi}\right)^{2}}=\frac{1-\sqrt{1-\auxi}}{1+\sqrt{1-\auxi}}$,
we arrive at 
\begin{eqnarray}
\underline{\tr(\ref{hermproof}):}\quad0 & \stackrel{!}{=} & \left(f(\auxi)-\bar{f}(\auxi)\right)\left(11-4\tfrac{1-\sqrt{1-\auxi}}{1+\sqrt{1-\auxi}}\right)+\nonumber \\
 &  & -2\left(1-\auxi\right)\left(1-\sqrt{1-\auxi}\right)\left(f'\left(\auxi\right)-\bar{f}'\left(\auxi\right)\right)\label{hermproof-trace}
\end{eqnarray}
This is the first condition on $f(\auxi$). 

Next let us contract (\ref{hermproof}) $\bar{\lambda}$ from the
left and $\lambda$ from the right, which yields
\begin{equation}
\underline{\left(\bar{\lambda}(\ref{hermproof})\lambda\right):}\quad0\stackrel{!}{=}\left(f(\auxi)-\bar{f}(\auxi)\right)(\lambda\bar{\lambda})-{\scriptstyle 2\left(1-\auxi\right)\left(1-\sqrt{1-\auxi}\right)}\left(f'\left(\auxi\right)-\bar{f}'\left(\auxi\right)\right)(\lambda\bar{\lambda})
\end{equation}
Using (\ref{hermproof-trace}) one can eliminate $f'$ and obtains
\begin{equation}
0\stackrel{!}{=}-\left(f(\auxi)-\bar{f}(\auxi)\right)\Bigl(10-4\underbrace{\tfrac{1-\sqrt{1-\auxi}}{1+\sqrt{1-\auxi}}}_{\leq1}\Bigr)(\lambda\bar{\lambda})
\end{equation}
Because of the indicated inequality, the factor $10-4\tfrac{1-\sqrt{1-\auxi}}{1+\sqrt{1-\auxi}}$
can never become zero. Further more the above condition has to hold
for all pure spinors $\lambda^{\alpha}$, in particular those with
$(\lambda\bar{\lambda})\neq0$. This finally shows 
\begin{equation}
\bar{f}(\auxi)\stackrel{!}{=}f(\auxi)\quad\forall\auxi\in I
\end{equation}
This was what we suspected from the beginning (right after (\ref{hermproof})),
but now we are sure. 

Finally let us plug the above reality result into (\ref{hermproof})
and contract the equation from the right with $\lambda$ and with
$\bar{\auxii}_{d}(\lambda\gamma^{d})$ from the left. This yields
\begin{eqnarray}
0 & \stackrel{!}{=} & \left(\tfrac{f(\auxi)}{2\left(1+\sqrt{1-\auxi}\right)}-f'\left(\auxi\right)\left(1-\auxi\right)\right)\auxii^{c}\bar{\auxii}_{d}(\lambda\overbrace{\gamma^{d}\gamma_{c}}^{-\gamma_{c}\gamma^{d}\lqn{{\scriptstyle +2\delta_{c}^{d}}}}\bar{\lambda})=\\
 & = & \left(\tfrac{f(\auxi)}{2\left(1+\sqrt{1-\auxi}\right)}-f'\left(\auxi\right)\left(1-\auxi\right)\right)4\auxi(\lambda\bar{\lambda})
\end{eqnarray}
Again this should hold for all pure spinors $\lambda^{\alpha}$, so
including those with $(\lambda\bar{\lambda})\neq0$. This is the case
if and only if either $\auxi=0$  or
\begin{eqnarray}
\tfrac{f'\left(\auxi\right)}{f(\auxi)} & = & \tfrac{1}{2\left(1-\auxi\right)\left(1+\sqrt{1-\auxi}\right)}\quad\forall\auxi\in I/\{0\}\label{h-DGL}
\end{eqnarray}
As we have assumed continuity of $f'$ at $\auxi=0$ in the proposition,
the equation will also hold for $\auxi=0$. We now have already extracted
all information from (\ref{hermproof}), which is clear when plugging
the above equation (together with $\bar{f}=f$) back into (\ref{hermproof})
and arriving at $0\stackrel{!}{=}0.$ Using now that $f(0)=1$, this
can be finally uniquely integrated to $f(\auxi)=h(\auxi)=\frac{1+\sqrt{1-\auxi}}{2\sqrt{1-\auxi}}$,
as it was defined in (\ref{h-def}). Together with its derivative
$h'(\auxi)=\frac{1}{4\sqrt{1-\auxi}^{3}}$ (\ref{hprime}), this function
indeed obeys the differential equation (\ref{h-DGL}) and the boundary
condition $h(0)=1$ is met. So already Hermiticity of $\Pi_{(f)\bot}$fixes
$f$ uniquely to be $h$, so that no further conditions may come from
symmetry of $\rest_{(f)\bot}$. And indeed, $\rest_{(h)\bot}$ with
$h$ of (\ref{h-def}) is obviously symmetric in (\ref{hermlinprojlam}),
as was noted already at the beginning of the proof. 

In (\ref{h-has-no-zeros-assump}) we had made the assumption that
$f(\auxi)\neq0$ for all $\auxi\in I=[0,b[$. Now assume that there
is a zero of $f$ at $\auxi_{0}\in I/\{0\}$ (at $\auxi=0$ we necessarily
have $f(0)=1$). Then the above proof of $f=h$ holds only for the
interval $[0,\auxi_{0}[$, but because of continuity it has to hold
also for $\auxi_{0}$. And $h=\frac{1+\sqrt{1-\auxi}}{2\sqrt{1-\auxi}}$
does not have any zeros on $[0,1[$ which disproves the assumption
that $f$ had a zero at $\auxi_{0}\in I.\qquad$ \hfill~$\quad{\scriptstyle \square}$ 

2. When $\Pi_{(h)\bot}$ and $\rest_{(h)\bot}$ are blocks of a Hermitian
matrix, it means that $\partial_{\rho^{\alpha}}P_{(h)}^{\beta}-\partial_{\bar{\rho}_{\beta}}\bar{P}_{(h)\alpha}=0$
as well as $\bar{\partial}^{[\alpha}P_{(h)}^{\beta]}=0$ (with $\bar{\partial}^{\alpha}\equiv\partl{\bar{\rho}_{\alpha}}$).
This means that $P_{(h)}$ regarded as a 1-form $\bs P\equiv P_{(h)}^{\alpha}\de\bar{\rho}_{\alpha}+\bar{P}_{(h)\alpha}\de\rho^{\alpha}$
is closed and thus locally exact, i.e. $\de\bs P=0$. As this is a
flat vector space, there are no global obstructions and one can find
a potential for the projector. And indeed the potential $\Phi\equiv\tfrac{(\rho\bar{\rho})}{2}(1+\sqrt{1-\auxi})$
of equation (\ref{Potential}) has the following derivatives
\begin{eqnarray}
\partial_{\bar{\rho}_{\alpha}}\Phi(\rho,\bar{\rho}) & = & \tfrac{\rho^{\alpha}}{2}(1+\sqrt{1-\auxi})-\tfrac{(\rho\bar{\rho})}{2}\partial_{\bar{\rho}_{\alpha}}\auxi\cdot\tfrac{1}{2\sqrt{1-\auxi}}=\\
 & \stackrel{\mbox{\tiny\eqref{xbyrhoderivative}}}{=} & \tfrac{\rho^{\alpha}}{2}\underbrace{\Bigl(1+\sqrt{1-\auxi}+\tfrac{\auxi}{\sqrt{1-\auxi}}\Bigr)}_{\frac{1+\sqrt{1-\auxi}}{\sqrt{1-\auxi}}}-\tfrac{\auxii^{a}(\gamma_{a}\bar{\rho})^{\alpha}}{4\sqrt{1-\auxi}}
\end{eqnarray}
This result indeed agrees with $P_{(h)}^{\alpha}(\rho,\bar{\rho})$
of equation (\ref{hermProjector}). Via complex conjugation we finally
obtain also $\bar{P}_{(h)\alpha}=\partial_{\rho^{\alpha}}\Phi$.\hfill $\quad{\scriptstyle \square}$ 

3. The absolute value squared of $P_{(h)}^{\alpha}(\rho,\bar{\rho})$
is given by\global\long\def\smallref#1{{\scriptstyle #1}}
 
\begin{eqnarray}
\lqn{P_{(h)}^{\alpha}(\rho,\bar{\rho})\bar{P}_{(h)\alpha}(\rho,\bar{\rho})=}\nonumber \\
 & \stackrel{\mbox{\tiny\eqref{hermProjector}}}{=} & \left(\tfrac{1+\sqrt{1-\auxi}}{2\sqrt{1-\auxi}}\rho^{\alpha}-\tfrac{\auxii^{a}(\bar{\rho}\gamma_{a})^{\alpha}}{4\sqrt{1-\auxi}}\right)\left(\tfrac{1+\sqrt{1-\auxi}}{2\sqrt{1-\auxi}}\bar{\rho}_{\alpha}-\tfrac{(\gamma^{b}\rho)_{\alpha}\bar{\auxii}_{b}}{4\sqrt{1-\auxi}}\right)=\\
 & = & \tfrac{\left(1+\sqrt{1-\auxi}\right)^{2}}{4(1-\auxi)}(\rho\bar{\rho})-\tfrac{1+\sqrt{1-\auxi}}{8\left(1-\auxi\right)}\underbrace{(\rho\gamma^{b}\rho)}_{(\rho\bar{\rho})\auxii^{b}\:\mbox{\tiny\eqref{xzDefRep}}}\bar{\auxii}_{b}+\nonumber \\
 &  & -\tfrac{1+\sqrt{1-\auxi}}{8\left(1-\auxi\right)}\auxii^{a}\underbrace{(\bar{\rho}\gamma_{a}\bar{\rho})}_{(\rho\bar{\rho})\bar{\auxii}_{a}\:\mbox{\tiny\eqref{xzDefRep}}}+\tfrac{1}{16\left(1-\auxi\right)}\auxii^{a}(\bar{\rho}\underbrace{\gamma_{a}\gamma^{b}}_{-\gamma^{b}\gamma_{a}+2\delta_{a}^{b}}\rho)\bar{\auxii}_{b}=\\
 & \stackrel{\mbox{\tiny\eqref{zsquarezero}}}{=} & (\rho\bar{\rho})\Bigl\{\!\tfrac{\left(1+\sqrt{1-\auxi}\right)^{2}}{4(1-\auxi)}\!-\!\tfrac{1+\sqrt{1-\auxi}}{8\left(1-\auxi\right)}\auxii^{b}\bar{\auxii}_{b}\!-\!\tfrac{1+\sqrt{1-\auxi}}{8\left(1-\auxi\right)}\auxii^{a}\bar{\auxii}_{a}\!+\!\tfrac{1}{8\left(1-\auxi\right)}\auxii^{a}\bar{\auxii}_{a}\!\Bigr\}\qquad
\end{eqnarray}
Using $\auxii^{a}\bar{\auxii}_{a}=2\auxi$ \eqref{xzDefRep} and simplifying
a bit, the expression in the curly bracket becomes $\tfrac{1}{2}(1+\sqrt{1-\auxi})$
and thus (via the definition \eqref{Potential} of $\Phi$) indeed
yields $P_{(h)}^{\alpha}(\rho,\bar{\rho})\bar{P}_{(h)\alpha}(\rho,\bar{\rho})=\Phi(\rho,\bar{\rho})$.\rem{This
resembles very much a Kähler potential. Then $\Pi_{(h)\bot\beta}^{\alpha}\de\bar{\rho}_{\alpha}\wedge\de\rho^{\beta}$
would be the corresponding Kähler form.:$\bs{\partial}\bar{\bs{\partial}}\Phi=\bs{\partial}\left(P_{(h)}^{\alpha}\de\bar{\rho}_{\alpha}\right)=\Pi_{(h)\bot\beta}^{\alpha}\de\rho^{\beta}\de\bar{\rho}_{\alpha}$
But degenerate?}\rembreak\rem{Here hidden in a \LyX{}-note the efforts
of describing the behaviour of the complete projection for $\auxi\to1$.
Probably no limit exists.} \end{proof}\providecommand{\href}[2]{#2}\begingroup\raggedright

\endgroup 
\end{document}